\providecommand{\U}[1]{\protect\rule{.1in}{.1in}}
\newtheorem{theorem}{Theorem}
\newtheorem{claim}[theorem]{Claim}
\newtheorem{corollary}[theorem]{Corollary}
\newtheorem{definition}[theorem]{Definition}
\newtheorem{lemma}[theorem]{Lemma}
\newtheorem{proposition}[theorem]{Proposition}
\newenvironment{proof}[1][Proof]{\noindent\textbf{#1.} }{\ \rule{0.5em}{0.5em}}
\begin{document}

\title{Gentle Measurement of Quantum States and Differential Privacy}
\author{Scott Aaronson\thanks{University of Texas at Austin. \ Email:
aaronson@utexas.edu. \ Supported by a Vannevar Bush Fellowship from the US
Department of Defense, a Simons Investigator Award, and the Simons
\textquotedblleft It from Qubit\textquotedblright\ collaboration.}
\and Guy N. Rothblum\thanks{Weizmann Institute of Science. \ Email:
rothblum@alum.mit.edu. \ Supported by ISF grant no. 5219/17.}}
\date{}
\maketitle

\begin{abstract}
In \textit{differential privacy (DP)}, we want to query a database about $n$
users, in a way that \textquotedblleft leaks at most $\varepsilon$ about any
individual user,\textquotedblright\ even conditioned on any outcome of the
query. \ Meanwhile, in \textit{gentle measurement}, we want to measure $n$
quantum states, in a way that \textquotedblleft damages the states by at most
$\alpha$,\textquotedblright\ even conditioned on any outcome of the
measurement. \ In both cases, we can achieve the goal by techniques like
deliberately adding noise to the outcome before returning it. \ This paper
proves a new and general connection between the two subjects. \ Specifically,
we show that on products of $n$ quantum states, any measurement that is
$\alpha$-gentle for small $\alpha$\ is also $O\left(  \alpha\right)  $-DP, and
any product measurement that is $\varepsilon$-DP is also $O\left(
\varepsilon\sqrt{n}\right)  $-gentle.

Illustrating the power of this connection, we apply it to the recently studied
problem of \textit{shadow tomography}. \ Given an unknown $d$-dimensional
quantum state $\rho$, as well as known two-outcome measurements $E_{1}%
,\ldots,E_{m}$, shadow tomography asks us to estimate $\Pr\left[  E_{i}\text{
accepts }\rho\right]  $, for \textit{every} $i\in\left[  m\right]  $, by
measuring few copies of $\rho$. \ Using our connection theorem, together with
a quantum analog of the so-called \textit{private multiplicative weights}
algorithm of Hardt and Rothblum, we give a protocol to solve this problem
using $O\left(  \left(  \log m\right)  ^{2}\left(  \log d\right)  ^{2}\right)
$\ copies of $\rho$, compared to Aaronson's previous bound of $\widetilde{O}%
\left(  \left(  \log m\right)  ^{4}\left(  \log d\right)  \right)  $. \ Our
protocol has the advantages of being \textit{online} (that is, the $E_{i}$'s
are processed one at a time), gentle, and conceptually simple.

Other applications of our connection include new \textit{lower} bounds for
shadow tomography from lower bounds on DP, and a result on the safe use of
estimation algorithms as subroutines inside larger quantum algorithms.

\end{abstract}
\tableofcontents

\newpage

\section{Introduction\label{INTRO}}

This paper is about a new mathematical connection between two
concepts---\textit{gentle measurement} in quantum mechanics, and
\textit{differential privacy} in classical computer science---and the
applications of this connection to the design of new quantum measurement
procedures and algorithms. \ Since the paper is meant to be accessible to
researchers in both fields (and beyond), we begin by saying a few words about
each of the concepts separately.

\subsection{Gentle Measurement\label{GENTLE}}

In quantum mechanics, \textit{measurement} is, famously, an inherently
destructive process. \ For example, if we measure a qubit $\alpha\left\vert
0\right\rangle +\beta\left\vert 1\right\rangle $ in the $\left\{  \left\vert
0\right\rangle ,\left\vert 1\right\rangle \right\}  $\ basis, we
\textquotedblleft force the qubit to decide\textquotedblright\ whether to be
$\left\vert 0\right\rangle $\ (with probability $\left\vert \alpha\right\vert
^{2}$) or $\left\vert 1\right\rangle $\ (with probability $\left\vert
\beta\right\vert ^{2}$). \ The qubit's state then \textquotedblleft
collapses\textquotedblright\ to whichever choice it made. \ There's no way to
measure again, unless of course we happen to have (or know how to prepare) a
second qubit in the same state.\footnote{This destructiveness is not unique to
quantum mechanics: it has a close analogue in classical Bayesian conditioning,
where a probability distribution can \textquotedblleft
collapse\textquotedblright\ to a single point when we make an observation.
\ But in classical probability, the \textquotedblleft
collapse\textquotedblright\ is purely internal and mental, in the sense that
we could undo it by simply forgetting the observation. \ In quantum mechanics,
by contrast, collapse causes an \textit{objective} change to the measured
system, one that could also be detected by someone else who later measured the
system.}

Even in quantum mechanics, though, not all measurements on all states are
destructive. \ For example, if a qubit happens to be in the $\left\vert
0\right\rangle $\ state\ already, then measuring in the $\left\{  \left\vert
0\right\rangle ,\left\vert 1\right\rangle \right\}  $\ basis\ causes no damage
at all. \ And if the qubit is the state $\left\vert \psi\right\rangle
=\sqrt{1-\varepsilon^{2}}\left\vert 0\right\rangle +\varepsilon\left\vert
1\right\rangle $ for small $\varepsilon$, then measuring in the $\left\{
\left\vert 0\right\rangle ,\left\vert 1\right\rangle \right\}  $\ basis causes
only minimal damage, since the result is almost always that the qubit
\textquotedblleft snaps\textquotedblright\ to $\left\vert 0\right\rangle
\approx\left\vert \psi\right\rangle $. \ More generally, the principle is this:

\begin{quotation}
\noindent A measurement $M$ applied to a state $\left\vert \psi\right\rangle
$\ necessarily severely damages $\left\vert \psi\right\rangle $ if, and only
if, the outcome of $M$ is highly unpredictable even to someone who already
knows $\left\vert \psi\right\rangle $.
\end{quotation}

This principle, which can be quantified in various ways, is called
\textit{information/disturbance tradeoff}: if $M$ creates lots of new (random)
information, then it must also cause lots of disturbance to $\left\vert
\psi\right\rangle $, and vice versa.

A corollary is that, if someone who knew $\left\vert \psi\right\rangle $ could
usually predict the measurement outcome in advance, then applying $M$\ need
not damage $\left\vert \psi\right\rangle $ by much. \ Note that this corollary
does not describe only trivial or uninteresting measurements, since in general
the measurer does \textit{not} know $\left\vert \psi\right\rangle $ in
advance---that's why she's measuring it!\footnote{And also, even if she
\textit{did} know a description of $\left\vert \psi\right\rangle $, she might
still find predicting the outcome of a measurement on $\left\vert
\psi\right\rangle $\ to be computationally intractable.}

Indeed, so-called \textit{gentle measurements}, which can be limited in how
much damage they cause, have found numerous applications in experimental
physics, the foundations of quantum mechanics, and quantum computing
theory.\footnote{Physicists more often refer to \textquotedblleft weak
measurement,\textquotedblright\ a related but not identical concept, which
typically means that the measurement returns very little information about the
state (in this paper, we'll call such measurements \textquotedblleft%
$\varepsilon$-trivial\textquotedblright). \ All weak measurements can be
implemented gently, and we'll show in Lemma \ref{mixtrick}\ that the only
measurements that are gentle on \textit{all} states are weak. \ But
measurements that are gentle on large sets of interesting states (such as
product states) can be far from weak, a point that will be crucial for us.}
\ Experimentalists, for example, know how to perform a measurement on a large
number of identically prepared particles, in a way that reveals the particles'
quantum state to high accuracy while causing very little damage.\footnote{With
a single particle, this is of course impossible.} \ More theoretically, gentle
measurement has also played a central role in proposals for
\textit{publicly-verifiable quantum money} that can be verified many times,
\textit{quantum software} that can be evaluated on many inputs, and so forth
(see \cite{aar:qcopy,aar:shadow}). \ Gentle measurement is also needed in work
on the nonabelian hidden subgroup problem \cite{ehk}, and on quantum advice
complexity classes like $\mathsf{BQP/qpoly}$ (see \cite{aar:adv,aar:qmaqpoly}).

Let's now define a bit more formally what we'll mean, in this paper, by a
quantum measurement being \textquotedblleft gentle.\textquotedblright

\begin{definition}
[Gentle Measurements]Given a set $S$ of quantum mixed states in some Hilbert
space, an implementation of a measurement $M$,\footnote{In this paper, by an
\textquotedblleft implementation\textquotedblright\ of a measurement $M$, we
mean a specification from which, given a state $\rho$, one can calculate not
only the probabilities of the various outcomes $y$, but \textit{also} the
post-measurement states $\rho_{M\rightarrow y}$.} and a parameter $\alpha
\in\left[  0,1\right]  $, we define $M$ to be $\mathbf{\alpha}$\textbf{-gentle
on} $S$\ if for all states $\rho\in S$, and all possible outcomes $y$\ of
applying $M$ to $\rho$, we have%
\begin{equation}
\left\Vert \rho_{M\rightarrow y}-\rho\right\Vert _{\operatorname*{tr}}%
\leq\alpha. \label{eq:alpha}%
\end{equation}
Here $\left\Vert \cdot\right\Vert _{\operatorname*{tr}}$\ represents trace
distance, the standard distance metric on quantum states, while $\rho
_{M\rightarrow y}$\ represents the new, \textquotedblleft
collapsed\textquotedblright\ state assuming that the measurement outcome was
$y$. \ (For a review of these and other quantum information concepts, see
Sections \ref{QI} and \ref{DIST}.)

More generally, we say $M$ is $\left(  \alpha,\delta\right)  $-gentle on $S$
if for all states $\rho\in S$, inequality (\ref{eq:alpha}) holds with
probability at least $1-\delta$\ over the possible outcomes $y$\ of applying
$M$\ to $\rho$. \ We recover $\alpha$-gentleness by setting $\delta=0$.

The most common choices for $S$ will be the set of product states $\rho
=\rho_{1}\otimes\cdots\otimes\rho_{n}$, and the set of all states.

If a measurement $M$ is specified by its output probabilities only
(technically, as a \textquotedblleft POVM\textquotedblright), then we say that
$M$ is $\alpha$-gentle\ if and only if there \textit{exists} an $\alpha
$-gentle implementation of it.
\end{definition}

As an example, suppose we have $n$ qubits in a pure product state:%
\[
\left\vert \psi\right\rangle =\left(  \alpha_{1}\left\vert 0\right\rangle
+\beta_{1}\left\vert 1\right\rangle \right)  \otimes\cdots\otimes\left(
\alpha_{n}\left\vert 0\right\rangle +\beta_{n}\left\vert 1\right\rangle
\right)  .
\]
Then consider the measurement $M$ that simply returns the total Hamming
weight. \ This measurement is \textit{not} $\alpha$-gentle for any nontrivial
$\alpha$. \ So for example, if we apply $M$ to the equal superposition
$\left(  \frac{\left\vert 0\right\rangle +\left\vert 1\right\rangle }{\sqrt
{2}}\right)  ^{\otimes n}$, we'll collapse the superposition over possible
Hamming weights---from a Gaussian wavepacket (as the physicists might call
it)\ of width $\Theta\left(  \sqrt{n}\right)  $ centered at $n/2$, all the way
down to a single random Hamming weight.

By contrast, now consider a measurement $L_{\sigma}$\ that returns the Hamming
weight, \textit{plus} a random noise term $\eta$\ of average magnitude
$\sigma\gg\sqrt{n}$. \ As an example, we could take this noise to follow a
Laplace distribution:%
\begin{equation}
\Pr\left[  \eta=k\right]  =\frac{1}{Z}e^{-\left\vert k\right\vert /\sigma},
\label{laplaceeq}%
\end{equation}
where%
\[
Z=\frac{2}{1-e^{-1/\sigma}}-1\approx2\sigma-1
\]
for large $\sigma$. \ We can implement the measurement $L_{\sigma}$\ as
follows. \ Given $\left\vert \psi\right\rangle $, which we now think of as a
superposition $\sum_{X\in\left\{  0,1\right\}  ^{n}}\alpha_{X}\left\vert
X\right\rangle $ over $n$-bit strings, first prepare alongside $\left\vert
\psi\right\rangle $ the state%
\[
\left\vert \eta\right\rangle :=\sum_{k=-\infty}^{\infty}\sqrt{\Pr\left[
\eta=k\right]  }\left\vert k\right\rangle .
\]
(In practice, we would of course impose a cutoff on $\left\vert k\right\vert
$.) \ Next, perform the unitary transformation%
\[
\sum_{X\in\left\{  0,1\right\}  ^{n}}\sum_{k=-\infty}^{\infty}\alpha_{X}%
\sqrt{\Pr\left[  \eta=k\right]  }\left\vert X\right\rangle \left\vert
k\right\rangle \rightarrow\sum_{X\in\left\{  0,1\right\}  ^{n}}\sum
_{k=-\infty}^{\infty}\alpha_{X}\sqrt{\Pr\left[  \eta=k\right]  }\left\vert
X\right\rangle \left\vert \left\vert X\right\vert +k\right\rangle .
\]
Finally, measure the $\left\vert \left\vert X\right\vert +k\right\rangle
$\ register in the standard basis and output the result.

It turns out that this noisy measurement $L_{\sigma}$ is $O\left(  \sqrt
{n}/\sigma\right)  $-gentle.\footnote{While there are other ways to prove that
$L_{\sigma}$ is $O\left(  \sqrt{n}/\sigma\right)  $-gentle, the nicest proof
we know will deduce it as an immediate corollary of this paper's main
results.} \ Intuitively, this is because the various Hamming weights that are
well-represented in the \textquotedblleft Gaussian
wavepacket\textquotedblright\ $\left\vert \psi\right\rangle $---e.g., in the
example $\left\vert \psi\right\rangle =\left(  \frac{\left\vert 0\right\rangle
+\left\vert 1\right\rangle }{\sqrt{2}}\right)  ^{\otimes n}$,\ those Hamming
weights $w$ such that $\left\vert w-\frac{n}{2}\right\vert =O\left(  \sqrt
{n}\right)  $---lead to probability distributions over measurement outcomes
that mostly overlap. \ In other words, when we observe an outcome of the form
$\left\vert X\right\vert +k$, the intrinsic variation in $\left\vert
X\right\vert $\ within the superposition is dominated by the variation in $k$.

\subsection{Differential Privacy\label{DP}}

Differential privacy (DP) is a young subfield of computer science---younger
than quantum computing, actually---that's seen tremendous growth since its
beginnings around 2006 \cite{dmns,dworkroth,vadhan:dp}. \ Though as we'll see,
DP's concepts turn out to have much broader applicability, the original
motivating problem is as follows. \ Suppose that a hospital (or bank, or
social media site) has a database of sensitive personal records. \ The
hospital wants to let medical researchers query its database in such a way that

\begin{enumerate}
\item[(1)] the researchers can learn as much accurate statistical information
as possible about the patient population (e.g., how many of them have colon
cancer), but

\item[(2)] each patient has a mathematical guarantee that, by opting to
participate in the database, she's exposing to the researchers
\textquotedblleft only a negligible amount\textquotedblright\ of data about
herself that would otherwise be private.
\end{enumerate}

\noindent The question is, how should we design the queries to balance these
two apparently conflicting demands?

More formally, call two databases $X,X^{\prime}$\ \textit{neighbors} if they
differ only in a single entry $x_{i}$. \ Then here is the key definition:

\begin{definition}
[Differential Privacy \cite{dmns}]\label{dpdef}Given a randomized algorithm
$A$ that queries a database $X$, as well as a parameter $\varepsilon\geq0$, we
define $A$ to be $\mathbf{\varepsilon}$\textbf{-DP} if for all databases
$X,X^{\prime}$\ that are neighbors, and all possible outputs $y$ of $A$, we
have%
\[
\Pr\left[  A\left(  X\right)  =y\right]  \leq e^{\varepsilon}\Pr\left[
A\left(  X^{\prime}\right)  =y\right]  .
\]
Here the probabilities are over the internal randomness used by $A$.
\end{definition}

In place of $e^{\varepsilon}$, one could also use the more intuitive
$1+\varepsilon$. \ However, the choice of $e^{\varepsilon}$\ has the
advantages that it composes nicely and is symmetric under inversions.

As an example---which should look familiar!---suppose the databases $X$ are
$n$-bit strings, and consider the algorithm that simply returns the Hamming
weight $\left\vert X\right\vert $. \ This algorithm is \textit{not}
$\varepsilon$-DP for any $\varepsilon$, since flipping just a single bit of
$X$ can change the probability of an output (namely, the new Hamming weight)
from $0$ to $1$. \ By contrast, now consider the algorithm $L_{\sigma}$\ that
returns the Hamming weight $\left\vert X\right\vert $, \textit{plus} a Laplace
noise term $\eta$\ that's distributed according to equation (\ref{laplaceeq}).
\ For any two neighboring databases $X,X^{\prime}$, and any possible output
$y$, we have%
\begin{equation}
\frac{\Pr\left[  L_{\sigma}\left(  X\right)  =y\right]  }{\Pr\left[
L_{\sigma}\left(  X^{\prime}\right)  =y\right]  }=\frac{e^{-\left\vert
y-\left\vert X\right\vert \right\vert /\sigma}}{e^{-\left\vert y-\left\vert
X^{\prime}\right\vert \right\vert /\sigma}}\leq e^{1/\sigma}. \label{lsigmaeq}%
\end{equation}
So we see that $L_{\sigma}$\ is $\frac{1}{\sigma}$-DP. \ Yet, as long as
$\sigma$\ is not too enormous, the output $\left\vert X\right\vert +\eta
$\ still provides a useful estimate of $\left\vert X\right\vert $.

Requiring multiplicative closeness in the probabilities of every output $y$
might seem overly strong. \ But if we relaxed the definition to an additive
one, we'd need to admit the algorithm that simply chooses a user $i\in\left[
n\right]  $\ uniformly at random and publishes all of her data. \ This
algorithm is manifestly not \textquotedblleft private,\textquotedblright\ and
yet it satisfies a strong additive guarantee: if user $i$ changes her data,
that will affect the probability distribution over outputs by at most
$\frac{1}{n}$\ in variation distance. \ On the other hand, one can check that
this algorithm is not $\varepsilon$-DP\ for any finite $\varepsilon$.

DP has been applied in deployed systems, for example at Apple and Google; see
for example \cite{tkbww} for discussion. \ The concept has also found
application to other problems, not obviously related to privacy---for example,
adaptive data analysis (for more see Section \ref{RELATED}). \ But what does
DP have to do with quantum information in general, or gentle measurement in particular?

\subsection{The Connection\label{CONNECTION}}

Given two quantum mixed states $\rho,\sigma$\ on $n$ registers each, call
them\ \textit{neighbors} if it's possible to reach either $\sigma$\ from
$\rho$, or $\rho$\ from $\sigma$,\ by performing a general quantum operation
(a so-called superoperator) on a single register only. \ In the special case
where $\rho=\rho_{1}\otimes\cdots\otimes\rho_{n}$\ and $\sigma=\sigma
_{1}\otimes\cdots\otimes\sigma_{n}$ are both product states, this reduces to
saying: $\rho$\ and $\sigma$\ are neighbors if and only if $\rho_{i}\neq
\sigma_{i}$ for at most one $i$.

Using this notion, we can easily generalize the definition of differential
privacy from Section \ref{DP}\ to the quantum setting:

\begin{definition}
[Quantum Differential Privacy]Given a set $S$\ of quantum mixed states each on
$n$ registers, a measurement $M$, and a parameter $\varepsilon\geq0$, we
define $M$ to be $\mathbf{\varepsilon}$\textbf{-DP on} $S$\ if for all states
$\rho,\sigma\in S$\ that are neighbors, and all possible outputs $y$ of $M$,
we have%
\begin{equation}
\Pr\left[  M\left(  \rho\right)  =y\right]  \leq e^{\varepsilon}\Pr\left[
M\left(  \sigma\right)  =y\right]  . \label{eq:dp}%
\end{equation}
Here the probabilities are over the intrinsic randomness of the measurement outcome.

More generally, we say $M$ is $\left(  \varepsilon,\delta\right)  $-DP on $S$
if for all neighboring states $\rho,\sigma\in S$, inequality (\ref{eq:dp})
holds with probability at least $1-\delta$\ over the possible outcomes $y$\ of
applying $M$\ to $\rho$. \ We recover $\varepsilon$-DP by setting $\delta
=0$.\footnote{This is a slightly nonstandard definition of $\left(
\varepsilon,\delta\right)  $-DP, but can be related to the standard definition
by a nontrivial result. \ See e.g.\ Vadhan \cite[Lemma 1.5]{vadhan:dp}.}

The most common choices for $S$ will be the set of product states $\rho
=\rho_{1}\otimes\cdots\otimes\rho_{n}$, and the set of all states.
\end{definition}

Note that unlike with gentleness, the property of being $\varepsilon$-DP
depends only on the output probabilities, and not at all on the
post-measurement states (i.e., on the \textquotedblleft
implementation\textquotedblright\ of the measurement).

Perhaps the first question we should ask is: \textit{are} there any nontrivial
quantum measurements that satisfy the above definition? \ Indeed there are.

Recall the DP algorithm $L_{\sigma}$ from Section \ref{DP}, which returns the
Hamming weight $\left\vert X\right\vert $\ of an $n$-bit input database $X$,
plus Laplace noise $\eta$\ of average magnitude $\sigma$. \ We can promote
$L_{\sigma}$\ to a quantum measurement on $n$-qubit states, by implementing it
using the procedure described in Section \ref{GENTLE}. \ We then have the following:

\begin{proposition}
\label{lsigma}$L_{\sigma}$ is $\frac{1}{\sigma}$-DP on the set of all
$n$-qubit states.
\end{proposition}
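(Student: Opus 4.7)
The plan is to reduce the quantum DP question to a classical one via the Laplace-noise property, using a coupling of Hamming-weight distributions.

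First I would write down the output distribution of $L_{\sigma}$ explicitly. Tracing through the implementation from Section \ref{GENTLE}, a direct calculation shows that for any $n$-qubit state $\rho$,
\[
\Pr\left[L_{\sigma}(\rho) = y\right] = \sum_{w=0}^{n} p_{w}(\rho) \Pr\left[\eta = y - w\right],
\]
where $p_{w}(\rho) = \operatorname{Tr}[\Pi_{w} \rho]$ and $\Pi_{w}$ is the projector onto the span of $n$-bit strings of Hamming weight $w$. (The off-diagonal terms in $\rho$ disappear because we trace out the input register before reading the output.) So the whole question reduces to comparing the classical distributions $p_{w}(\rho)$ and $p_{w}(\rho')$ for neighboring $\rho, \rho'$.

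Next I would establish the key coupling lemma: if $\rho' = (\operatorname{id} \otimes \mathcal{E}_{i})(\rho)$ for some CPTP map $\mathcal{E}_{i}$ on register $i$, then the Hamming-weight distributions $\{p_{w}(\rho)\}$ and $\{p_{w}(\rho')\}$ admit a joint distribution $\pi(w, w')$ with $|w - w'| \leq 1$ almost surely. Because $\mathcal{E}_{i}$ is trace-preserving, the reduced state on the other $n-1$ registers is unchanged, hence the standard-basis marginal on those registers agrees: $\sum_{y_{i}} \langle Y_{-i}, y_{i} | \rho' | Y_{-i}, y_{i} \rangle = \sum_{x_{i}} \langle Y_{-i}, x_{i} | \rho | Y_{-i}, x_{i} \rangle$. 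Thus the coupling is: sample $X_{-i}$ once from this common marginal, then independently sample $x_{i}$ and $x_{i}'$ from the two conditional distributions on register $i$. The resulting Hamming weights $|X_{-i}| + x_{i}$ and $|X_{-i}| + x_{i}'$ differ by at most $1$ since $x_{i}, x_{i}' \in \{0,1\}$.

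Finally, I would combine the coupling with the Laplace inequality $\Pr[\eta = y - w] \leq e^{|w-w'|/\sigma} \Pr[\eta = y - w']$ (the same calculation as in equation (\ref{lsigmaeq})):
\[
\Pr[L_{\sigma}(\rho) = y] = \sum_{w, w'} \pi(w, w') \Pr[\eta = y - w] \leq e^{1/\sigma} \sum_{w, w'} \pi(w, w') \Pr[\eta = y - w'] = e^{1/\sigma} \Pr[L_{\sigma}(\rho') = y],
\]
which is the desired $\frac{1}{\sigma}$-DP bound.

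The only step that requires genuine care is the coupling lemma for arbitrary (possibly entangled) states; everything else is routine. The subtlety is that we cannot talk about the ``value'' of register $i$ independently of the rest, but we do not have to: trace-preservation of $\mathcal{E}_{i}$ gives us equality of marginals on the complementary registers, which is exactly what the coupling needs. Once this is in hand, the Laplace-noise bound works identically to the classical case treated in Section \ref{DP}.
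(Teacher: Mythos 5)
Your proposal is correct and takes essentially the same approach as the paper. Both proofs begin by reducing to the diagonal/Hamming-weight marginal of $\rho$ (you use weight projectors $\Pi_w$, the paper uses full computational-basis strings $X$, but these are interchangeable here) and then invoke the single-step Laplace inequality (\ref{lsigmaeq}). The one substantive difference is that you spell out, via the coupling on the common $-i$ marginal, exactly why the diagonal of $\rho' = (\operatorname{id}\otimes\mathcal{E}_i)(\rho)$ is obtainable from that of $\rho$ by a stochastic kernel that changes at most the $i$-th entry, even when $\rho$ is entangled; the paper asserts this in one sentence (``we map each database $X$ to a distribution over neighbors $X'$ of $X$'') without noting that trace-preservation of $\mathcal{E}_i$ is what makes the marginals on the complementary registers agree, which is the crux of the matter. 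Your version is thus a more explicit rendering of the same argument rather than a genuinely different route.
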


\begin{proof}
Since $L_{\sigma}$\ only involves measuring the Hamming weight in the
computational basis, for any $n$-qubit state $\rho$\ we can write%
\[
\Pr\left[  L_{\sigma}\left(  \rho\right)  =y\right]  =\sum_{X\in\left\{
0,1\right\}  ^{n}}\rho_{X,X}\Pr\left[  L_{\sigma}\left(  X\right)  =y\right]
.
\]
Also, if we act on a single register of $\rho$, and then measure in the
computational basis (which by the above, we can do without loss of
generality), we map each database $X$ to a distribution over neighbors
$X^{\prime}$\ of $X$. \ The proposition now follows from convexity together
with equation (\ref{lsigmaeq}).
\end{proof}

Stepping back, we've seen that simply measuring the Hamming weight\ of an
$n$-qubit state is neither gentle nor private. \ And yet the same
fix---namely, adding random noise to the measurement outcome before returning
it---makes the measurement both gentle \textit{and} private. \ Is this
convergence, between gentle quantum measurement and differential privacy, just
a coincidence?

Our main result asserts that it's not a coincidence: there's a strong two-way
connection between the two notions.

\begin{theorem}
[Main Result]\label{main}For all quantum measurements $M$ on $n$ registers:

\begin{enumerate}
\item[(1)] If $M$ is $\alpha$-gentle on product states for $\alpha\leq\frac
{1}{4.01}$, then $M$ is $O\left(  \alpha\right)  $-DP on product
states.\footnote{Indeed, it suffices for $\alpha$\ to be bounded below
$\frac{1}{4}$\ by any fixed constant (which then affects the multiplier in the
$O\left(  \alpha\right)  $). \ Similar remarks apply wherever constants like
$\frac{1}{4.01}$ appear in this paper.}

\item[(2)] If $M$ is $\varepsilon$-DP on product states, and is a product
measurement,\footnote{That is, if $M$ can be implemented by first applying a
classical algorithm to the outcomes of separate POVM measurements on the $n$
registers, and then uncomputing the outcomes of those $n$ measurements.} then
$M$ is $O\left(  \varepsilon\sqrt{n}\right)  $-gentle on product
states.\footnote{On non-product states, $M$ will still produce the correct
output probabilities, but it need not be gentle.}
\end{enumerate}
\end{theorem}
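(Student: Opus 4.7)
For part~(1), my plan is to exploit that the midpoint $\tau := (\rho+\sigma)/2$ of two neighboring product states is itself a product state (replace the differing register~$i$ by $(\rho_i+\sigma_i)/2$), so $\alpha$-gentleness applies to each of $\rho$, $\sigma$, and $\tau$. Writing $p_y := \Pr[M(\rho)=y]$ and $q_y := \Pr[M(\sigma)=y]$, the Bayesian identity
\[
\tau_{M\rightarrow y} = \frac{p_y\,\rho_{M\rightarrow y} + q_y\,\sigma_{M\rightarrow y}}{p_y+q_y},
\]
together with three applications of gentleness and the triangle inequality, should distill to the clean bound $|p_y - q_y|\cdot\|\rho - \sigma\|_{\operatorname{tr}} \le 4\alpha(p_y+q_y)$. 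In the ``extremal'' case $\rho_i\perp\sigma_i$ (so $\|\rho-\sigma\|_{\operatorname{tr}}=2$), this gives $p_y/q_y \le (1+2\alpha)/(1-2\alpha)$, and the hypothesis $\alpha<\tfrac{1}{4.01}$ is exactly what turns this into the desired $e^{O(\alpha)}$ bound. To extend to general neighbors I will use the Jordan decomposition $\sigma_i - \rho_i = A - B$ to write $\rho_i = (1-t)\pi_0 + t\pi_1$ and $\sigma_i = (1-t)\pi_0 + t\pi_2$ for a common state~$\pi_0$ and orthogonal states~$\pi_1, \pi_2$; linearity of the output probability in register~$i$'s state transfers the extremal-case bound to every neighbor at the cost of only a constant factor.

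For part~(2), my plan is to exhibit a \emph{coherent} implementation of the product measurement that is gentle on product states---the naive implementation that classically records each local outcome $s_i$ is visibly not gentle, since it collapses $|+\rangle^{\otimes n}$ into a classical mixture. Concretely, I will dilate each local POVM $\{E_{i,s_i}\}$ via Stinespring so that $s_i$ is coherently encoded in a fresh ancilla, then coherently compute $y = f(s_1,\ldots,s_n)$ into a final register and measure only that register. The key DP-driven observation is that for a product measurement, a Bayesian calculation relates $\Pr[y\mid s_i=a]$ to $\Pr[M(\rho')=y]$ on a suitable one-register neighbor $\rho'$ of $\rho$, so $\varepsilon$-DP yields $\Pr[s_i=a\mid y]/\Pr[s_i=a]\in[e^{-\varepsilon},e^{\varepsilon}]$ coordinatewise. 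Each marginal posterior therefore differs from its prior by $O(\varepsilon^2)$ in KL divergence; data processing lifts this to $D(\tilde\rho_i^y\,\Vert\,\rho_i) \le O(\varepsilon^2)$ for the reduced state $\tilde\rho_i^y$ on register~$i$ after observing~$y$, and additivity of relative entropy on product states combined with quantum Pinsker yields $\|\bigotimes_i\tilde\rho_i^y - \rho\|_{\operatorname{tr}} \le O(\varepsilon\sqrt n)$.

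The hardest step in part~(2) will be closing the gap between $\bigotimes_i\tilde\rho_i^y$ and the actual post-measurement state of the coherent implementation, which in general is a mixture $\sum_{s:f(s)=y}\Pr[s\mid y]\bigotimes_i\rho_{i,s_i}$ whose joint $s$-posterior is not the product of its $s_i$-marginals. I expect the Stinespring-based implementation to admit an approximate ``rewinding'' that coherently decouples the ancillas conditional on the observed $y$; because DP controls each per-register posterior by $e^{\pm\varepsilon}$, the rewinding error should again aggregate to $O(\varepsilon\sqrt n)$ by the same Pinsker-plus-additivity reasoning. Failing a clean decoupling, I will fall back on hybrid arguments that swap one register at a time into product form, absorbing the residual joint-versus-product correlation while preserving the $\sqrt n$ scaling.
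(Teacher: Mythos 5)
For part~(1), your midpoint argument is essentially the paper's proof of Lemma~\ref{mixtrick}: set $\tau=(\rho+\sigma)/2$, use the Bayesian identity for $\tau_{M\to y}$, and apply gentleness three times with the triangle inequality to get $|p_y-q_y|\cdot\|\rho-\sigma\|_{\operatorname{tr}}\le 4\alpha(p_y+q_y)$. The gap is your extension to general neighbors. The decomposition $\rho_i=(1-t)\pi_0+t\pi_1$, $\sigma_i=(1-t)\pi_0+t\pi_2$ with a common positive semidefinite $\pi_0$ and $\pi_1\perp\pi_2$ does \emph{not} always exist for quantum states, unlike the classical optimal-coupling picture for total variation. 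Take $\rho_i=|0\rangle\langle 0|$ and $\sigma_i=|+\rangle\langle+|$: the positive part $A$ of the Jordan decomposition of $\rho_i-\sigma_i$ satisfies $\rho_i-A\not\succeq 0$, so no valid common $\pi_0$ exists. The fix is cheap but different: once the multiplicative bound holds for all \emph{orthogonal} pure states of register~$i$, it holds between the extreme eigenvalues of the effective one-register POVM element, hence by convexity for all pairs of states on that register. This is exactly the paper's Proposition~\ref{eigprop}; you should substitute it for the Jordan step.

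For part~(2) the gaps are more substantive. Your coherent Stinespring setup is the right starting point, but you then mischaracterize the post-measurement state. Conditioned on outcome $y$, the joint state of the $n$ registers and the $s$-ancillas is a \emph{pure} state $|\phi_y\rangle$, not the classical mixture $\sum_{s:f(s)=y}\Pr[s\mid y]\,\bigotimes_i\rho_{i,s_i}$; that mixture is what you get by classically measuring the $s_i$'s, which is precisely the non-gentle implementation you set out to avoid. Because the state is pure, the trace distance is governed entirely by the overlap, and the paper (Lemmas~\ref{dpgentleprojective}, \ref{dpgentlepovm}) computes it exactly: $\langle\phi|\phi_y\rangle=\sum_X\sqrt{\Pr_{\mathcal D}[X]\Pr_{\mathcal D_y}[X]}=1-H^2(\mathcal D,\mathcal D_y)$, with the Stinespring garbage appearing identically in both $|\phi\rangle$ and $|\phi_y\rangle$ and hence cancelling in the inner product. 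The ``decoupling'' or ``rewinding'' you flag as the hardest step is not actually present; the joint-versus-product discrepancy in the $s$-posterior is absorbed into $H^2(\mathcal D,\mathcal D_y)$ automatically.

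Your route to the KL bound also fails as stated. Passing from the classical per-marginal posterior KL to $D(\tilde\rho_i^y\|\rho_i)=O(\varepsilon^2)$ via ``data processing'' runs the inequality backwards: data processing makes classical KL a \emph{lower} bound on quantum relative entropy. And additivity of relative entropy on product states cannot be applied, because $\mathcal D_y$ is generically not a product distribution even when $\mathcal D$ is. What actually works (Lemma~\ref{dpgentleclassical}) is the chain rule for KL applied to the \emph{joint} posterior: $\operatorname{KL}(\mathcal D_y,\mathcal D)=\sum_i\operatorname{E}[\operatorname{KL}(\mathcal T_i,\mathcal D_i)]$, where $\mathcal T_i$ is the conditional posterior of $x_i$ given $y$ \emph{and} $x_1,\dots,x_{i-1}$. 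DP plus the fact that $\mathcal D$ is a product \emph{prior} shows each $\mathcal T_i$ is within a multiplicative $e^{\varepsilon}$ factor of $\mathcal D_i$, so each term is at most $2\varepsilon^2$ by Claim~\ref{vadhanclaim}, giving $\operatorname{KL}(\mathcal D_y,\mathcal D)\le 2\varepsilon^2 n$. Feeding this into the inner-product formula via $H\le\sqrt{\operatorname{KL}}$ (Proposition~\ref{hkl}) yields the $O(\varepsilon\sqrt n)$ bound directly, with no residual error term to absorb.
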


Again, here a \textquotedblleft measurement\textquotedblright\ $M$ corresponds
to a specification of output probabilities; for $M$ to be $\alpha$-gentle
means that there \textit{exists} an $\alpha$-gentle\ implementation of $M$.

Intuitively, it's far from obvious that gentleness and differential privacy
should be connected in this way. \ After all, the definition of $\alpha
$-gentleness makes no reference to the notion of \textquotedblleft
neighboring\textquotedblright\ states.\ \ Conversely, the definition of
$\varepsilon$-DP is exclusively concerned with output probabilities, and makes
no reference to post-measurement states. \ Our goal is to explain why
gentleness and DP \textit{are} connected in this way, and to explore the
consequences of the connection.

We'll see some applications of Theorem \ref{main}\ shortly, in Sections
\ref{APP} and \ref{APPST}. \ Before we do so, however, let's make a few
comments about the theorem statement.

At first glance, part (2) of the theorem seems weaker than part
(1)---especially because of the $\sqrt{n}$\ blowup in parameters---but it's
the part that carries many of the interesting implications. \ In Section
\ref{EXAMPLES}, we'll show that the $\sqrt{n}$\ blowup is unavoidable.
\ Indeed the measurement $L_{\sigma}$, with $\sigma=\Theta\left(  \sqrt
{n}\right)  $, already demonstrates this.

By contrast, the condition that $M$ is a product measurement\ is \textit{not}
clearly necessary; one of the central open problems we leave is whether that
condition can be removed. \ In Appendix \ref{NOTPRODMEAS}, we'll give examples
of quantum DP measurements that can't be approximated by any product or (we
conjecture) even LOCC measurements. \ However, all the examples we currently
know of such measurements are extremely artificial.

The restriction to product states might seem strange, but it's provably
unavoidable if we want Theorem \ref{main} to say anything about nontrivial
measurements. \ As we'll show in Section \ref{BASIC}, there \textit{is} a
counterpart of Theorem \ref{main} for states that could have arbitrary
correlation or entanglement among the registers. \ It turns out, however, that
if a measurement $M$ is $\alpha$-gentle on \textit{all} states for $\alpha
\ll\frac{1}{4}$, then $M$ is close to trivial (i.e., it barely depends on the
input state at all). \ And conversely, if $M$ is $\varepsilon$-DP, then the
best we can deduce is that $M$ is $O\left(  \varepsilon n\right)  $-gentle on
all states, rather than $O\left(  \varepsilon\sqrt{n}\right)  $-gentle.
\ While that might sound like a merely quantitative gap, the trouble is,
again, that the only measurements that are $\varepsilon$-DP\ for
$\varepsilon\ll\frac{1}{n}$\ are close to trivial. \ By contrast, plenty of
interesting measurements are $\varepsilon$-DP\ for $\varepsilon\ll\frac
{1}{\sqrt{n}}$.

One might wonder whether our reductions between privacy and gentleness
preserve computational efficiency. \ In one direction---turning gentleness
into privacy---the answer is clearly yes, since an $\alpha$-gentle measurement
\textit{is} $O\left(  \alpha\right)  $-DP; nothing further needs to be done.
\ However, in the other direction---turning privacy into gentleness---we can
implement a gentle measurement $M$ efficiently only if we have an efficient
algorithm to \textquotedblleft QSample\textquotedblright\ $M$'s output
distribution on a given input. \ QSampling is a term coined in 2003 by
Aharonov and Ta-Shma \cite{at}, which just means that we can efficiently
prepare a superposition over outputs of the form%
\[
\sum_{y}\sqrt{\Pr\left[  y\right]  }\left\vert y\right\rangle ,
\]
which is not entangled with any \textquotedblleft garbage\textquotedblright%
\ dependent on $y$. \ In practice, most DP algorithms that we know about
\textit{do} give rise to efficient QSampling procedures, but this property
doesn't follow automatically from a DP algorithm's being efficient. \ In
Section \ref{COMPUT}, we'll explore the issue of computational efficiency
further, and give nontrivial conditions under which efficiency is preserved.

\subsection{Applications\label{APP}}

Can we \textit{exploit} the connection between gentle measurement and
differential privacy to port results from one field to the other, as was done
with the connections between communication complexity and circuit lower
bounds, cryptography and learning, etc.? \ The second main contribution of
this paper is to use Theorem \ref{main}, together with previous work in DP, to
derive new results in quantum measurement theory and quantum
algorithms.\footnote{Some of these applications could also have been obtained
by \textquotedblleft brute force\textquotedblright\ (e.g., directly designing
and analyzing the desired gentle measurements), but the connection to DP will
both guide us to the correct statements, and enable the simplest proofs of
them that we know. Meanwhile, our applications to so-called \textit{shadow
tomography} of quantum states, described in Section \ref{APPST}, will make
essential use of sophisticated algorithms and lower bounds from the DP
literature.}

As a tiny warmup application, notice that $L_{\sigma}$, the Laplace
noise\ measurement from Section \ref{DP}, is a \textquotedblleft product
measurement,\textquotedblright\ in the sense that it can be implemented via an
algorithm that measures each register separately. \ And thus, by combining
part (2) of Theorem \ref{main} with Proposition \ref{lsigma}, we immediately
obtain the following:

\begin{corollary}
[Gentleness of Laplace Noise Measurement]\label{lsigmagentle}$L_{\sigma}$\ is
$O\left(  \sqrt{n}/\sigma\right)  $\ -gentle on product states.
\end{corollary}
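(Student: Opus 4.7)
The plan is to obtain Corollary \ref{lsigmagentle} as an essentially immediate consequence of part (2) of Theorem \ref{main}, using Proposition \ref{lsigma} to supply the differential-privacy hypothesis. What I need to verify about $L_{\sigma}$ is exactly two things: (a) that it is $(1/\sigma)$-DP on product states, and (b) that it is a product measurement in the sense of footnote 8.

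Condition (a) comes for free: Proposition \ref{lsigma} asserts that $L_{\sigma}$ is $(1/\sigma)$-DP on the set of \emph{all} $n$-qubit states, and product states form a subset of these, so the DP guarantee is inherited without loss.

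For condition (b) I would argue directly from the implementation of $L_{\sigma}$ spelled out in Section \ref{GENTLE}. That implementation applies the unitary $|X\rangle|k\rangle \mapsto |X\rangle \, ||X|+k\rangle$ and then measures the last register. Since this unitary is diagonal in the computational basis on the $n$ input qubits, it commutes with a standard-basis measurement on each qubit individually. Hence an equivalent implementation is: coherently measure each qubit in the computational basis (copying the outcome $x_i$ to an ancilla via CNOT), classically sample $\eta$ from the Laplace distribution, output $y = x_1 + \cdots + x_n + \eta$, and finally uncompute the $x_i$ ancillas by reversing the CNOTs. This fits the template of footnote 8 exactly (separate per-register POVMs, a classical post-processing step that outputs $y$, then uncomputation of the per-register outcomes), so $L_{\sigma}$ is indeed a product measurement.

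Applying part (2) of Theorem \ref{main} with $\varepsilon = 1/\sigma$ then yields that $L_{\sigma}$ is $O(\sqrt{n}/\sigma)$-gentle on product states, which is the claim. There is no genuine obstacle here beyond the bookkeeping in step (b); the substantive content of the corollary is supplied by Theorem \ref{main}, and the Laplace-noise measurement merely has to be recognized as an instance to which it applies.
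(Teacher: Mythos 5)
Your proof is correct and follows exactly the route the paper itself takes: observe that $L_{\sigma}$ is a product measurement (it measures each qubit in the computational basis and post-processes classically), invoke Proposition \ref{lsigma} for the $\frac{1}{\sigma}$-DP hypothesis, and apply part (2) of Theorem \ref{main}. Your step (b) is slightly more elaborate than needed—the definition of ``product measurement'' concerns only the output distribution, so one need not exhibit a specific uncomputing circuit, only note that the outcome probabilities arise from per-register computational-basis POVMs followed by a classical algorithm; Theorem \ref{main} then supplies the existence of a gentle implementation.
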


As far as we know, proving Corollary \ref{lsigmagentle} directly would require
a laborious calculation.

Here is another application. \ In the early days of quantum computing, Bennett
et al.\ \cite{bbbv} observed that a quantum algorithm can safely invoke other
quantum algorithms for decision problems as subroutines inside of a
superposition---or in terms of complexity classes, that $\mathsf{BQP}%
^{\mathsf{BQP}}=\mathsf{BQP}$. \ The proof uses amplification, to push down
the subroutine's error probability, combined with \textit{uncomputing}, to
eliminate any \textquotedblleft garbage\textquotedblright\ that the subroutine
leaves entangled with its input. \ However, this straightforward uncomputing
strategy no longer works for subroutines whose purpose is to \textit{estimate
an expectation value to within} $\pm\varepsilon$ (say, the acceptance
probability of a quantum circuit).

In Section \ref{COMPUT}, we'll point out one simple solution to this problem:
namely, \textit{run the subroutine }$n^{O\left(  1\right)  }$\textit{\ times
in parallel, then estimate the desired expectation values by simulating gentle
measurements on the resulting states.} \ If we implement this idea using the
Laplace noise measurement $L_{\sigma}$, then Corollary \ref{lsigmagentle}
yields the following:

\begin{theorem}
\label{promisebqpthm}Without loss of generality, a $\mathsf{BQP}$\ algorithm
can at any point estimate $\Pr\left[  C\text{ accepts}\right]  $\ to within
$\pm\frac{1}{n^{O\left(  1\right)  }}$, on any superposition containing
descriptions of quantum circuits $C$, while maintaining the superposition's coherence.
\end{theorem}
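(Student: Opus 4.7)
The plan is to realize the ``run in parallel, then gently estimate'' strategy flagged in Section~\ref{COMPUT}, using the Laplace-noise measurement $L_\sigma$ as the gentle estimator, so that coherence preservation drops out of Corollary~\ref{lsigmagentle} (really the product-register version of it implied by Theorem~\ref{main}(2)).

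In detail, starting from an arbitrary superposition $\sum_C \alpha_C |C\rangle|\phi_C\rangle$, I would first adjoin $N$ fresh blocks of $k$ qubits initialized to $|0^k\rangle$ and, controlled on the $|C\rangle$ register, apply $C$ to each block using a universal-circuit simulator. This produces $\sum_C \alpha_C |C\rangle|\phi_C\rangle \otimes (C|0^k\rangle)^{\otimes N}$. Next, I would apply $L_\sigma$ coherently to the $N$ designated ``accept'' qubits (one per block) using the dilation unitary from Section~\ref{GENTLE}, which writes a noisy Hamming weight $|y\rangle$ into a fresh register $A$. Finally, I would uncompute the copies by applying $(C^{-1})^{\otimes N}$ controlled on $|C\rangle$, returning the blocks approximately to $|0^{kN}\rangle$, and post-process the content of $A$ into the estimate $y/N$ of $p_C := \Pr[C\text{ accepts}]$.

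For the analysis, fix a branch labeled by $C$: the $N$ blocks are a product state across the block-registers, so the measurement ``measure the first qubit, combine via Hamming weight plus Laplace noise'' is a product measurement, still $\tfrac{1}{\sigma}$-DP by the same argument as Proposition~\ref{lsigma}, and hence $O(\sqrt{N}/\sigma)$-gentle on that branch by Theorem~\ref{main}(2). Set $\alpha := O(\sqrt{N}/\sigma)$. Because the $|C\rangle$ branches are orthogonal and the $L_\sigma$ dilation touches only registers disjoint from $|C\rangle$, the per-branch bound lifts to the whole superposition: writing the post-$L_\sigma$ state as $\sum_C \alpha_C|C\rangle\sum_y \sqrt{p_{C,y}}\,|y\rangle|\psi_{C,y}^{\text{post}}\rangle$ with each $|\psi_{C,y}^{\text{post}}\rangle$ within trace distance $\alpha$ of $(C|0^k\rangle)^{\otimes N}$, expanding each $|\psi_{C,y}^{\text{post}}\rangle$ as its component along $(C|0^k\rangle)^{\otimes N}$ plus an orthogonal error of norm $O(\alpha)$, and applying $(C^{-1})^{\otimes N}$, yields a global state within trace distance $O(\alpha)$ of the ideal $\sum_C \alpha_C|C\rangle|\phi_C\rangle|0^{kN}\rangle|e_C\rangle$, where $e_C$ is the random estimate.

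Parameter selection is routine: to get $|e_C - p_C|\leq\varepsilon$ with high probability, both the sampling spread $1/\sqrt{N}$ and the Laplace contribution $\sigma/N$ must be $O(\varepsilon)$, while preserving coherence to error $O(\varepsilon)$ needs $\sqrt{N}/\sigma\leq\varepsilon$; these are jointly met by $N=\Theta(1/\varepsilon^4)$ and $\sigma=\Theta(1/\varepsilon^3)$, both polynomial in $n$ whenever $\varepsilon=1/n^{O(1)}$. The only subtle step is the lifting from per-branch gentleness to global coherence preservation, and it goes through here precisely because $L_\sigma$ is applied \emph{within} each $|C\rangle$-branch rather than on the $|C\rangle$ register itself: the off-diagonal $|C\rangle\langle C'|$ terms are transformed by the same dilation unitary as the diagonal ones and inherit the same $O(\alpha)$ perturbation, rather than being decohered by a joint measurement on~$C$.
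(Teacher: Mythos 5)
Your proposal is correct and follows essentially the same route as the paper: adjoin $N$ fresh blocks, controlled-apply $C$ to each, run the coherent (QSampled) Laplace-noise measurement $L_\sigma$ on the accept qubits, invoke Corollary~\ref{lsigmagentle} to bound the damage, and uncompute. The lifting from per-branch gentleness to the full superposition that you flag as the "only subtle step" is also implicit in the paper's argument; it goes through because the $L_\sigma$ dilation unitary is $C$-independent and, as in Lemma~\ref{dpgentleprojective}, the overlaps $\langle\psi_C|\psi_{C,y}^{\text{post}}\rangle$ are real and nonnegative, so per-branch fidelity loss translates directly into a global fidelity bound (worth making explicit, since arbitrary phases in those overlaps would break the lift). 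One small quantitative gap: you tie the per-call coherence damage $\alpha=\sqrt{N}/\sigma$ to the accuracy target $\varepsilon$, and you set the Laplace contribution $\sigma/N$ to $O(\varepsilon)$ outright. The paper decouples these: it takes $\alpha=\Theta(1/T)$ to survive $T=n^{O(1)}$ queries, and it suppresses the exponential Laplace tails by requiring $K\eta\le\varepsilon$ with $K=O(\log\frac{1}{\alpha})$, so that the accuracy guarantee holds with amplitude $1-O(\alpha)$ rather than merely "with high probability." Your $N=\Theta(1/\varepsilon^4)$, $\sigma=\Theta(1/\varepsilon^3)$ choice leaves $\Theta(1)$ amplitude on estimates outside $\pm\varepsilon$ and $\Theta(T\varepsilon)$ total drift over $T$ calls, so the parameters should be retuned to $\alpha=\Theta(1/T)$ and $\eta=\Theta(\varepsilon/\log(1/\alpha))$, giving $\ell=\Theta(\log^2(1/\alpha)/(\alpha^2\varepsilon^2))$ as in the paper.
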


While it's possible to prove Theorem \ref{promisebqpthm} \textquotedblleft
bare-handedly,\textquotedblright\ without knowing about the connection between
gentleness and DP,\ the point is that the floodgates are now open.\ \ Given a
quantum algorithm $P$, which is run as a subroutine inside a larger quantum
algorithm $Q$, there are many things that $Q$ might want to know about $P$'s
output behavior, beyond just additive estimates for specific probabilities.
\ Whatever the details, Theorem \ref{main} reduces the task to designing a
suitable efficient DP algorithm, or finding such an algorithm in the
literature. \ Gentleness then follows automatically.

\subsection{Shadow Tomography\label{APPST}}

In Section \ref{SHADOW}, we present our \textquotedblleft
flagship\textquotedblright\ application for the connection between gentleness
and DP: a new quantum measurement procedure, called \textit{Quantum Private
Multiplicative Weights (QPMW)}, which achieves parameters and properties that
weren't previously known.

QPMW addresses a task that Aaronson \cite{aarbados}, in 2016, called
\textit{shadow tomography}. \ Here we're given $n$\ copies of an unknown
$d$-dimensional mixed state $\rho$. \ We're also given known two-outcome
measurements $E_{1},\ldots,E_{m}$. \ Our goal is to learn $\Pr\left[
E_{i}\left(  \rho\right)  \text{ accepts}\right]  $\ to within an additive
error of $\pm\varepsilon$, for \textit{every} $i\in\left[  m\right]  $, with
high success probability (say, at least $2/3$), by carefully measuring the
$\rho$'s. \ Setting aside computational difficulty, how many copies of $\rho
$\ are information-theoretically necessary for this?

At one extreme of parameters, and suppressing the dependence on $\varepsilon$,
it's clear that $n=\widetilde{O}\left(  m\right)  $\ copies of $\rho
$\ suffice, since we could just apply each $E_{i}$\ to different copies. \ At
a different extreme, it's also clear that $n=\widetilde{O}\left(
d^{2}\right)  $\ copies suffice---or not \textquotedblleft
clear,\textquotedblright\ but it follows from celebrated recent work by
O'Donnell and Wright \cite{owright} and (independently) Haah et
al.\ \cite{hhjwy}, who showed that $\widetilde{O}\left(  d^{2}\right)
$\ copies of $\rho$\ are necessary and sufficient for full \textit{quantum
state tomography}: that is, reconstructing the entire state $\rho$\ to
suitable precision.

But what if we only want to learn the \textquotedblleft
shadow\textquotedblright\ that $\rho$\ casts on the measurements $E_{1}%
,\ldots,E_{m}$? \ Aaronson \cite{aarbados} raised the question of whether
shadow tomography might be possible using a number of copies $n$ that scales
only polylogarithmically in both $m$ and $d$---so in particular, that's
polynomial even if $m$ and $d$\ are both exponential. \ While this seemed
overly ambitious, Aaronson was unable to rule it out;\ and indeed, last year
he showed:

\begin{theorem}
[Aaronson \cite{aar:shadow}]\label{oldshadow}There exists an explicit
procedure to perform shadow tomography using%
\[
n=\widetilde{O}\left(  \frac{\left(  \log m\right)  ^{4}\left(  \log d\right)
}{\varepsilon^{4}}\right)
\]
copies of $\rho$. \ Here the $\widetilde{O}$\ hides factors of $\log\log m$,
$\log\log d$, and $\log\frac{1}{\varepsilon}$.
\end{theorem}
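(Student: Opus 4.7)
The plan is to use an online learning framework in which we maintain a hypothesis state $\omega_t$, starting from the maximally mixed state $I/d$, that represents our current best guess for $\rho$, and update it only when some measurement $E_i$ exposes $\omega_t$ as a bad predictor. At each step $t$, I would gently use the remaining copies of $\rho$ to test whether there exists an $i \in [m]$ with $|\Pr[E_i \text{ accepts } \omega_t] - \Pr[E_i \text{ accepts } \rho]| > \varepsilon/4$; if no such $i$ exists we output $\Pr[E_i \text{ accepts } \omega_t]$ as the answer for every $i$, and if such an $i$ does exist we locate one by a gentle binary search and use it to update $\omega_t$ via a matrix-multiplicative-weights (matrix-exponential) rule.

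First, I would establish a mistake bound: starting from $\omega_0 = I/d$ and applying the matrix-exponential update whenever a distinguishing $E_i$ is produced, the procedure terminates after at most $T = O((\log d)/\varepsilon^2)$ updates. This is a standard quantum-relative-entropy potential argument against $\rho$, closely mirroring the mistake bound in Hazan-style matrix online learning (and in the Hardt--Rothblum private multiplicative weights algorithm that the authors will use again in the main body).

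Second, I would implement the existential test ``is there a bad $i$'' as a single two-outcome measurement that accepts iff some $E_i$ disagrees with $\omega_t$ by $\varepsilon/4$, boosted by gentle amplitude estimation on $\rho^{\otimes k}$. A binary search using $O(\log m)$ such amplified tests then localises a violating index, after which a few further copies of $\rho$ yield an $O(\varepsilon)$-accurate estimate of $\Pr[E_i \text{ accepts } \rho]$ to feed into the update step. The key quantitative requirement is that each test be gentle enough on the remaining stash of copies that, summed across all $T$ rounds and all $O(\log m)$ queries per round, the residual state stays within $\varepsilon$ of $\rho^{\otimes n}$ in trace distance, so that downstream rounds still operate on essentially fresh input.

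The main obstacle will be pushing gentleness through the amplification. Each quantum-OR/binary-search test must be realised so that its post-measurement state has trace distance at most roughly $\varepsilon/(T \log m)$ from the pre-measurement state; by the information/disturbance intuition, this forces the amplitude-estimation accuracy to be inverse polynomial in $\log m$ and $\log d$, and correspondingly blows up the per-test cost to $\widetilde{O}((\log m)^2/\varepsilon^2)$ copies (via a Marriott--Watrous-style controlled-amplification tailored to preserve the input). Multiplying the $O((\log d)/\varepsilon^2)$ updates by the $O(\log m)$ tests per update and the per-test cost, and absorbing polylogarithmic overhead from Chernoff bounds for the individual probability estimates, yields the advertised $n = \widetilde{O}((\log m)^4 (\log d)/\varepsilon^4)$.
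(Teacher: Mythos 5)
This theorem is not proved in the paper you are reading---it is simply cited from Aaronson's earlier work \cite{aar:shadow}---so the comparison is to the proof in that reference. Your sketch does capture the broad architecture of that proof: a matrix-multiplicative-weights online learner maintaining a hypothesis $\omega_t$ with an $O((\log d)/\varepsilon^2)$ mistake bound, a per-round search for a measurement $E_i$ on which $\omega_t$ and $\rho$ disagree, and a binary search built out of applications of the Quantum OR Bound of Harrow--Lin--Montanaro to amplified two-outcome measurements. What you call ``gentle amplitude estimation'' and ``Marriott--Watrous-style controlled amplification'' are indeed the kind of primitives Aaronson uses; the key technical lemma is the Quantum OR Bound, which would be worth naming explicitly since it is the ingredient that makes the whole search possible without destroying the batch of copies being searched.

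Two substantive issues remain. First, the arithmetic doesn't close: your three stated factors, $O((\log d)/\varepsilon^2)$ updates, $O(\log m)$ tests per update, and a per-test cost of $\widetilde{O}((\log m)^2/\varepsilon^2)$, multiply to $\widetilde{O}((\log m)^3(\log d)/\varepsilon^4)$, one full $\log m$ factor short of the stated bound. Attributing that remaining $\log m$ to ``Chernoff overhead'' is too vague---a Chernoff bound would only contribute $\log m$ if you were insisting on $1/\mathrm{poly}(m)$ failure probability per binary-search step, and that would itself need to be justified by the gentleness/union-bound accounting rather than asserted. The honest statement is that the per-search cost in \cite{aar:shadow} is $\widetilde{O}((\log m)^4/\varepsilon^2)$, and pinning down where each power of $\log m$ comes from is precisely the delicate part of that argument.

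Second, and more importantly, your proposal misstates the gentleness requirement. You ask that the damage summed over all $T$ update rounds and all $O(\log m)$ tests per round leave the entire stash within $\varepsilon$ of $\rho^{\otimes n}$. That is a much stronger property than Aaronson's procedure has or needs: his procedure consumes a fresh block of copies for each of the $T$ iterations and only requires gentleness \emph{within} a single gentle search, so that the $O(\log m)$ Quantum-OR tests of one binary search can run on the same block. As the present paper is at pains to point out, the procedure of \cite{aar:shadow} is \emph{not} gentle in the global sense, and achieving global gentleness is exactly what the QPMW algorithm of this paper contributes---at a different parameter cost, $\widetilde{O}((\log m)^2(\log d)^2/\varepsilon^8)$. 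If you genuinely wanted the state to stay within $\varepsilon$ of $\rho^{\otimes n}$ after all $T\cdot O(\log m)$ measurements, the per-measurement gentleness target would be $\varepsilon/(T\log m) = \widetilde{O}(\varepsilon^3/(\log d\,\log m))$, and you would need to re-derive the sample cost from scratch; there is no reason to expect the $(\log m)^4(\log d)/\varepsilon^4$ figure to survive that stronger demand.
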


Shortly afterward, Brand\~{a}o et al.\ \cite{bkllsw} gave a different shadow
tomography procedure, based on semidefinite programming, which achieved the
same sample complexity as Aaronson's but was more efficient computationally.

However, these developments left several questions open:

\begin{enumerate}
\item[(1)] What is the true sample complexity of shadow tomography? \ The best
lower bound in \cite{aar:shadow}\ is that $\Omega\left(  \frac{\min\left\{
d^{2},\log m\right\}  }{\varepsilon^{2}}\right)  $\ copies are needed.

\item[(2)] The procedures of \cite{aar:shadow,bkllsw}\ destroy the copies of
$\rho$\ in the process of measuring them. \ Is there a shadow tomography
procedure that's also \textit{gentle}?

\item[(3)] The procedures of \cite{aar:shadow,bkllsw}\ require the full list
$E_{1},\ldots,E_{m}$\ to be known in advance. \ Is there a shadow tomography
procedure that's \textit{online}---i.e., that receives the measurements one by
one, and estimates each $\Pr\left[  E_{i}\text{ accepts }\rho\right]
$\ immediately\ after receiving $E_{i}$?
\end{enumerate}

In Section \ref{SHADOW}, by exploiting our connection between gentleness and
DP, and by quantizing a known classical DP algorithm called Private
Multiplicative Weights \cite{hardtrothblum}, we prove a new shadow tomography
theorem that addresses all of the above questions.

\begin{theorem}
[Quantum PMW]\label{newshadow}There exists an explicit procedure, Quantum
Private Multiplicative Weights (QPMW), that performs shadow tomography with
success probability $1-\beta$\ using%
\[
n=O\left(  \frac{\left(  \log^{2}m + \log\frac{1}{\delta}\right)  \cdot
\log^{2}d \cdot\log\frac{1}{\beta}}{\varepsilon^{8}}\right)
\]
copies of $\rho$,\ and which is also online and $\left(  \varepsilon
,\delta\right)  $-gentle.
\end{theorem}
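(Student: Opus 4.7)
The plan is to construct a Quantum Private Multiplicative Weights (QPMW) algorithm---a quantum lifting of the Hardt-Rothblum PMW mechanism---and then appeal to part (2) of Theorem \ref{main} to convert its differential-privacy guarantee into the claimed gentleness. Thinking of the $n$ copies of $\rho$ as the ``users'' of a database and the $E_i$'s as adaptive linear queries, QPMW will be an online, product measurement on $\rho^{\otimes n}$ that is $(\varepsilon', \delta)$-DP on product states for $\varepsilon' = \Theta(\varepsilon/\sqrt n)$, so that Theorem \ref{main}(2) yields $(O(\varepsilon), \delta)$-gentleness as a black-box consequence.

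The measurement maintains a classical hypothesis $\omega_t$ on $\mathbb{C}^d$, initialized to $I/d$. When $E_t$ arrives it (i) computes the prediction $c_t = \mathrm{Tr}(E_t \omega_t)$ classically on the hypothesis; (ii) obtains a Laplace-noised empirical estimate $\tilde p_t \approx \mathrm{Tr}(E_t \rho)$ via the per-register $L_\sigma$-style implementation of Proposition \ref{lsigma} (generalized to arbitrary two-outcome $E_t$ in place of Hamming weight); (iii) runs the AboveThreshold / sparse-vector DP mechanism to test whether $|c_t - \tilde p_t| \le \varepsilon/2$: if so, it reports $c_t$ and leaves $\omega_t$ unchanged (an ``easy'' query, contributing negligibly to the DP budget); otherwise (iv) it reports a rerandomized $\tilde p_t$ and performs the matrix multiplicative-weights update $\omega_{t+1} \propto \exp(\log \omega_t + \eta\,\mathrm{sign}(\tilde p_t - c_t)\, E_t)$. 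The algorithm is structured so that each register is touched by a single Laplace-noised POVM measurement whose outcomes are then fed into adaptive classical post-processing, making the whole procedure a product measurement in the sense required by Theorem \ref{main}(2).

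I would analyze QPMW in three parts. \emph{Correctness}: a standard matrix-MW potential argument with potential $-\mathrm{Tr}(\rho \log \omega_t)$ caps the number of ``hard'' (step (iv)) queries by $K = O(\log d / \varepsilon^2)$, and a Laplace concentration bound ensures that the $m$ noisy estimates are all $\varepsilon/4$-accurate with aggregate confidence $1-\beta$. \emph{Privacy}: Proposition \ref{lsigma} gives a per-block DP guarantee; the sparse-vector mechanism disposes of the $m$ easy-query thresholds essentially for free, while advanced composition over the $K$ hard updates delivers overall $(\varepsilon', \delta)$-DP on product states for $\varepsilon'$ of order $\varepsilon/\sqrt n$. \emph{Gentleness}: Theorem \ref{main}(2) immediately promotes this to $(O(\varepsilon'\sqrt n), \delta) = (O(\varepsilon), \delta)$-gentleness. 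Balancing all the parameter constraints then yields the stated $n = O((\log^2 m + \log(1/\delta))\log^2 d \log(1/\beta)/\varepsilon^8)$, where the $\log^2 m$ comes from sparse-vector plus advanced composition on the query stream, the $\log^2 d$ from the matrix-MW regret bound paired with composition over the updates, and the $\varepsilon^{-8}$ from the accumulated Laplace-scale tunings compounded with the $\sqrt n$-blowup of Theorem \ref{main}(2).

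The main obstacle is the joint calibration of the Laplace noise magnitude, the per-measurement sample size, and the MW learning rate so that simultaneously (a) each noisy estimate is accurate enough to drive the MW argument, (b) the aggregate DP parameter is as small as $\Theta(\varepsilon/\sqrt n)$ despite composing over $K = O(\log d/\varepsilon^2)$ hard queries, and (c) the resulting implementation remains a product measurement on $\rho^{\otimes n}$, since the DP-to-gentleness direction of Theorem \ref{main} is only known under that hypothesis. The only substantively new technical ingredient beyond the classical PMW analysis is verifying that the sparse-vector and matrix-MW regret arguments survive when exact classical query answers are replaced by the Laplace-noised quantum estimates produced by Proposition \ref{lsigma}; once the parameters are in place, the final conversion to gentleness is an immediate invocation of Theorem \ref{main}.
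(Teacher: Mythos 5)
Your high-level plan---quantize PMW, establish per-round differential privacy, then appeal to Theorem \ref{main}(2)---is the natural starting point, and your description of the mechanism (check-for-update with Laplace noise and threshold, matrix-MW update on hard rounds, $K = O(\log d/\varepsilon^2)$ update bound) matches the paper's QPMW. But the central claim of your proposal, that ``the final conversion to gentleness is an immediate invocation of Theorem \ref{main},'' is precisely the step the paper explains cannot work, and this is where the genuine difficulty lives.

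The problem is that Theorem \ref{main}(2) applies to a \emph{single} product measurement, whereas QPMW applies an adaptive \emph{sequence} of measurements, each conditioned on all previous outcomes, to a state that is progressively damaged. Applying Theorem \ref{main}(2) per round gives each $\mathbf{CheckForUpdate}$ measurement damage $O(1/(\sqrt{n}\mu))$; summing over $m$ rounds gives $O(m/(\sqrt{n}\mu))$, which is useless when $m$ is exponential. The classical fix---that ``easy'' (no-update) rounds cost essentially nothing in privacy---does not transport to the quantum setting as a free byproduct of DP, because trace-distance damage is additive rather than multiplicative, and because conditioning on ``no update'' still deforms the state. The paper instead proves (Claim \ref{claim:noupdate_gentle}) that a no-update round damages the \emph{original} state by only $O(\lambda_t/(\sqrt{n}\mu))$ where $\lambda_t$ is that round's update probability; proving this requires a delicate KL-divergence argument conditioned on Hamming-weight shells $\Delta_j$, using subgaussian concentration, and is not a consequence of any per-round DP bound. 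On top of this, the accumulated damage has to be controlled with the Damage Lemma (Lemma \ref{facepalmlemma}), which compares probabilities under the real (damaged) and ideal (fresh-copy) state, together with an epoch decomposition in which each epoch triggers an update with constant probability---none of which is in your outline. You also invoke advanced composition, which the paper explicitly does not have in the quantum setting (it is raised as an open problem, and Appendix \ref{CANTCOMPOSE} gives a counterexample showing that even basic composition of quantum DP algorithms can fail outside a restricted regime). So the skeleton is right, but the hardest third of the argument---why the cumulative damage over $m$ adaptive rounds stays bounded by $O(\alpha)$---is missing and cannot be supplied by a black-box appeal to the DP-implies-gentleness theorem.
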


Most notably, QPMW is both \textit{online} and \textit{gentle}; the previous
procedures \cite{aar:shadow,bkllsw} were neither. \ Because of its simplicity
and its online nature, QPMW seems better suited than its predecessors to
potential experimental realization.

Meanwhile, compared to Theorem \ref{oldshadow}, Theorem \ref{newshadow}
improves the dependence on $m$ from $\left(  \log m\right)  ^{4}$\ to $\left(
\log m\right)  ^{2}$. \ The dependence on $d$\ and$\ 1/\varepsilon$\ is worse,
but we conjecture that this is an artifact of our analysis, and that porting
so-called \textquotedblleft advanced composition\textquotedblright%
\ \cite{drv}\ to the quantum setting would ameliorate the situation. \ The
running time of QPMW is roughly $O\left(  mL\right)  +d^{O\left(  1\right)  }%
$, where $L$\ is the time needed to implement a single $E_{i}$; this improves
on the $O\left(  mL\right)  +d^{O\left(  \log\log d\right)  }$\ running time
of Aaronson's procedure, and matches an improvement from $d^{O\left(  \log\log
d\right)  }$ to $d^{O\left(  1\right)  }$\ in Brand\~{a}o et
al.\ \cite{bkllsw}.

It's hard to give a simple intuition for the improvement in $m$-dependence
from $O\left(  \log^{4}m\right)  $\ to $O\left(  \log^{2}m\right)  $.
\ Loosely, though, gentleness (derived from DP) lets QPMW be online, and being
online lets QPMW avoid the \textquotedblleft gentle search
procedure,\textquotedblright\ a key subroutine in Aaronson's earlier procedure
\cite{aar:shadow} that was responsible for the $\log^{4}m$\ factor. \ In any
case, we wish to stress that quantitative improvements in sample complexity
are not the main point here.\ \ The point, rather, is that the connection
between DP and gentleness leads to an entirely new approach to shadow tomography.

The DP/gentleness connection turns out to be useful not just for
\textit{upper} bounds on the sample complexity of shadow tomography, but also
lower bounds. \ In Section \ref{LBGENTLE}, we'll combine a recent lower bound
on DP algorithms \cite{buv}\ with \textit{part (1)} of Theorem \ref{main}%
\ (i.e., the gentleness implies DP direction), to deduce a new lower bound on
the sample complexity of gentle shadow tomography, where here
\textquotedblleft gentle\textquotedblright\ means \textquotedblleft gentle on
all product state inputs.\textquotedblright\ \ We'll also use recent work from
adaptive data analysis \cite{nsssu}\ to observe a lower bound on the sample
complexity of \textit{online} shadow tomography---showing that, for the latter
task, QPMW's sample complexity is optimal up to polynomial factors.

Finally, in Section \ref{COMPUTST}, we prove lower bounds on the
\textit{computational} complexity of gentle and online shadow tomography, by
deducing them as corollaries of recent lower cryptographic bounds for
differential privacy and adaptive data analysis
\cite{Ullman16,HardtU14,steinkeullman}. \ Assuming the existence of a one-way
function that takes $2^{\Omega\left(  n\right)  }$\ time to invert, these
lower bounds say that any algorithm for online or gentle shadow tomography
needs $d^{\Omega\left(  1\right)  }$\ time, so in that respect the QPMW
procedure is optimal for those tasks.

We stress that all our lower bounds for shadow tomography---both
information-theoretic and computational---are obtained by using this paper's
machinery to port known classical results to our setting. \ Thus, all of the
lower bounds apply equally well to the \textquotedblleft classical special
case\textquotedblright\ of shadow tomography, where we are trying to learn
properties of a probability distribution $\mathcal{D}$\ given independent
samples from $\mathcal{D}$, and none of them yet say anything specific to
quantum mechanics.

\subsection{Techniques\label{TECHNIQUES}}

\textbf{Relating Gentleness to DP.} \ In the proof of our main result---i.e.,
the connection between gentleness and differential privacy---the easy
direction is that gentleness implies DP. \ This direction produces only
constant loss in parameters, and does not even have much to do with quantum
mechanics. \ We consider the contrapositive: if a measurement $M$\ is
\textit{not} DP, then there are two neighboring states, call them $\rho$\ and
$\sigma$, as well as a measurement outcome $y$, such that $\Pr\left[  M\left(
\rho\right)  \text{ outputs }y\right]  $\ and $\Pr\left[  M\left(
\sigma\right)  \text{ outputs }y\right]  $ differ by a large multiplicative
factor. \ But in that case, we can study what happens if we apply $M$ to the
equal mixture $\frac{\rho+\sigma}{2}$, and then condition on outcome $y$.
\ Here we can use Bayes' theorem to show that the post-measurement state will
\textit{not} be close to $\frac{\rho+\sigma}{2}$---intuitively, because it
will have \textquotedblleft more $\rho$\ than $\sigma$\textquotedblright\ or
vice versa. \ Therefore $M$ is not gentle on product states (for if $\rho
$\ and $\sigma$\ are neighbors and are themselves product states, then
$\frac{\rho+\sigma}{2}$\ is a product state).

The harder direction is to show that $\varepsilon$-DP implies $O\left(
\varepsilon\sqrt{n}\right)  $-gentleness (for product states, and at least for
a restricted class of measurements). \ We work up to this result in a sequence
of steps. \ The first is to prove a purely classical analogue: namely, any
$\varepsilon$-DP classical algorithm is $2\varepsilon\sqrt{n}$-gentle\ on
product \textit{distributions}, $\mathcal{D}=\mathcal{D}_{1}\times\cdots
\times\mathcal{D}_{n}$---and indeed, the posterior distribution $\mathcal{D}%
_{y}$, conditioned on some output $y$, has KL-divergence at most
$2\varepsilon^{2}n$\ from $\mathcal{D}$. \ While this step has echoes in
earlier work on adaptive data analysis
\cite{dwork:adaptive,dwork:adaptive2,rrst}\ (see Section \ref{RELATED}), we
provide our own proof for completeness. \ Our proof uses the $\varepsilon$-DP
property of $A$, together with the fact that $\mathcal{D}$\ is a product
distribution, to show that, if we reveal a sample from $\mathcal{D}_{y}$\ a
single register at a time, from the $1^{st}$ to the $n^{th}$, then the
expected KL-divergence from $\mathcal{D}$\ increases by at most $2\varepsilon
^{2}$\ per register, and is therefore at most $2\varepsilon^{2}n$\ overall.

The second step is to prove an analogous result if the classical algorithm $A$
is applied, not to a sample from the distribution $\mathcal{D}$, but in
superposition to each component of the \textquotedblleft
QSampling\textquotedblright\ state%
\[
\left\vert \psi\right\rangle :=\sum_{x}\sqrt{\Pr_{\mathcal{D}}\left[
x\right]  }\left\vert x\right\rangle .
\]
To prove this, we let $\left\vert \psi_{y}\right\rangle $\ be the
post-measurement state conditioned on outcome $y$, and then upper-bound the
trace distance,%
\[
\left\Vert |\psi\rangle\langle\psi|-|\psi_{y}\rangle\langle\psi_{y}%
|\right\Vert _{\operatorname*{tr}}=\sqrt{1-\left\vert \left\langle \psi
|\psi_{y}\right\rangle \right\vert ^{2}},
\]
in terms of the square root of the KL-divergence between $\mathcal{D}_{y}$ and
$\mathcal{D}$, which we previously showed was $O\left(  \varepsilon\sqrt
{n}\right)  $. \ (To do that, in turn, we use the \textit{Hellinger distance}
between $\mathcal{D}_{y}$ and $\mathcal{D}$\ as an intermediate measure.)

The last step is to generalize from algorithms $A$ that act separately on each
computational basis state to measurements $M$ that can apply a separate POVM
to each register, and also from pure product states to mixed product states.
\ We achieve these generalizations using standard manipulations in quantum
information. \ We expect that further generalizations are possible with more
work.\bigskip

\textbf{Shadow Tomography.} \ The analysis of Quantum Private Multiplicative
Weights\textit{ }(QPMW), our new online, gentle procedure for shadow
tomography, is our technically most demanding result. \ The QPMW procedure
itself is relatively simple,\footnote{Indeed, QPMW is arguably simpler than
previous shadow tomography procedures, especially because it completely avoids
the use of the so-called Quantum OR Bound of Harrow, Lin, and Montanaro
\cite{hlm}. \ QPMW could, in fact, be used to give an independent proof of the
OR Bound, one where the procedure would moreover be gentle (albeit, possibly
with worse sample complexity).} and is directly inspired by an analogous
procedure from classical differential privacy, the so-called Private
Multiplicative Weights (PMW) algorithm of Hardt and Rothblum
\cite{hardtrothblum} from 2010.

Given a database $X\in\left[  d\right]  ^{n}$, of $n$\ records $x_{1}%
,\ldots,x_{n}$\ drawn independently from some underlying probability
distribution $\mathcal{D}$, the goal of PMW is to answer an enormous number of
statistical queries about $\mathcal{D}$, possibly as many as $\exp\left(
n\right)  $\ of them, in a way that preserves the overall differential privacy
of $X$. \ Here the queries need to be answered one by one, as they arrive, and
could be chosen by an adaptive adversary.

PMW achieves this by maintaining, at all times, a current hypothesis
$\mathcal{H}$\ about $\mathcal{D}$. \ Whenever a new query arrives, the first
thing PMW does is to check whether $\mathcal{H}$\ and $X$ lead to
approximately the same answer for that query. \ If the answers are equal to
within some threshold, then PMW simply answers the query using $\mathcal{H}$,
without looking further at $X$. \ Only if $\mathcal{H}$\ and $X$\ disagree
substantially does PMW query $X$\ a second time---both to learn the correct
answer to the current query, and to use that answer to update the hypothesis
$\mathcal{H}$. \ For \textit{both} types of queries, PMW uses the standard DP
trick of adding a small amount of Laplace noise to any statistics gathered
from $X$, before using those statistics for anything else.

It's clear, by construction, that this strange two-pronged approach will
always return approximately correct answers, with high probability. \ But why
does it help in preserving privacy? \ The privacy analysis depends on proving
three facts:

\begin{enumerate}
\item[(1)] Each query leads to only a negligible loss in privacy (say,
$\sim1/\exp\left(  n\right)  $), \textit{unless} it has an appreciably large
probability of triggering an update.

\item[(2)] Even when an update \textit{is} triggered, the loss in privacy is
still modest, say $\sim1/\sqrt{n}$.

\item[(3)] The number of updates is always extremely small, say $O\left(  \log
d\right)  $. \ This is true for \textquotedblleft the usual multiplicative
weights reasons.\textquotedblright
\end{enumerate}

Once one understands the connection between privacy and gentleness, it's
natural to wonder whether a quantum analogue of PMW might let one apply a huge
sequence of measurements $E_{1}\ldots,E_{m}$, one at a time, to a small
collection of identical quantum states $\rho^{\otimes n}$ (where, say,
$n\leq\left(  \log m\right)  ^{O\left(  1\right)  }$), in a way that yields
accurate estimates of $\Pr\left[  E_{i}\left(  \rho\right)  \text{
accepts}\right]  $\ for every $i$, without destroying the states in the
process or even damaging them too much. \ This, of course, is precisely the
problem of (gentle, online) shadow tomography.

In Section \ref{SHADOW}, we prove that indeed this is possible. \ Our QPMW
algorithm is just the \textquotedblleft obvious\textquotedblright\ quantum
generalization of PMW. \ That is, QPMW at all times maintains a current
hypothesis, $\sigma$, about the unknown quantum state $\rho$. \ Initially
$\sigma$\ is the maximally mixed state $\mathbb{I}/d$.\ \ Whenever a new
measurement $E_{t}$\ arrives, QPMW first checks whether%
\[
\Pr\left[  E_{t}\left(  \sigma\right)  \text{ accepts}\right]  \approx
\Pr\left[  E_{t}\left(  \rho\right)  \text{ accepts}\right]  ,
\]
with the check being done using a thresholded version of the Laplace noise
measurement\ from Section \ref{GENTLE}. \ If the answer is yes, then QPMW
simply returns $\Pr\left[  E_{t}\left(  \sigma\right)  \text{ accepts}\right]
$ as its estimate for $\Pr\left[  E_{t}\left(  \rho\right)  \text{
accepts}\right]  $, without measuring the actual quantum states any further.
\ Only if the answer is no does QPMW measure a second time---both to learn an
accurate estimate for $\Pr\left[  E_{t}\left(  \rho\right)  \text{
accepts}\right]  $, and to use that estimate to update its hypothesis $\sigma
$. \ This second measurement also involves the deliberate addition of Laplace noise.

Intuitively, the reason why we might expect this to work is that each round of
PMW leaks very little privacy---and by our central connection between DP and
gentleness, that suggests that we can implement each round of QPMW in a way
that damages the states very little. \ However, formalizing this requires
substantial new ideas, which are not contained in the classical analyses of PMW.

Of course, if we had a sufficiently general theorem about privacy implying
gentleness, then perhaps everything we needed would follow immediately from
that theorem, combined with the privacy of PMW. \ However, our existing
implication---applying, as it does, only to product measurements on product
states, and saying nothing about adaptively chosen \textit{sequences} of
measurements---will force us to work harder.

The core difficulty concerns what, before, we called step (1) in the analysis
of PMW: namely, the connection between loss in privacy and the probability of
triggering an update. \ We note that while, by construction, the answer in
each round is close to the answer on the current state in the registers, we
need the answer to be accurate with respect to the \emph{original} state
$\rho$. \ The algorithm's gentleness plays an essential role in proving
accuracy: it's only because of gentleness that we know that the state in the
registers hasn't been corrupted, and that the algorithm's answers are accurate
with respect to the original state. \ We further note that, since we want to
handle many measurements, and the damage from these measurements will
accumulate, we truly need to show that the overwhelming majority of
measurements result in only negligible damage.

The original analysis \cite{hardtrothblum} conditioned on so-called
\textquotedblleft borderline rounds,\textquotedblright\ which are rounds that
have a reasonable probability of triggering an update, and argued that the
privacy loss in other rounds was zero. \ In the quantum setting, however, this
is a non-starter: so long as there is some probability of an update, the
damage is never zero. \ Instead, we show how to bound the damage each
no-update round would cause to the original state as a function of the
probability that it could have triggered an update. \ Thus, rounds that are
likely to trigger an update (of which there are few) can cause damage, but
rounds that are unlikely to trigger an update (of which there are many) each
cause very little damage once we condition on \textquotedblleft no
update.\textquotedblright\ \ Since the number of updates is bounded, this is a
promising start. \ Bounding the damage as a function of the probability of an
update requires a delicate analysis, leveraging the differential privacy of
the Laplace measurement and the fact that we have a product state in the
registers, which induces a Gaussian distribution on the answers before noise
is added to each measurement (see Claim \ref{claim:noupdate_gentle}).

In the classical setting, once we bound the privacy loss per round, we can
apply composition theorems to bound the loss across rounds. \ Crucially, this
composition maintains \textit{multiplicative} guarantees on the closeness of
probabilities. \ But damage to quantum states (in the trace distance metric,
for example) is additive, not multiplicative. \ Indeed, even if the amplitudes
in a quantum state $\left\vert \psi\right\rangle $\ were to change by only
small multiplicative amounts, that could easily turn into an additive change
when we rotate $\left\vert \psi\right\rangle $\ to a different basis---a
phenomenon with no classical analog. \ So once $\left\vert \psi\right\rangle
$\ becomes even slightly corrupted, why doesn't this sever the multiplicative
connection between damage and the probability of an update---thereby
preventing the necessary updates from happening, and allowing $\left\vert
\psi\right\rangle $\ to become corrupted even further, and so forth, until
inaccurate answers are returned?

We address these worries using several tools. \ The first is a
\textquotedblleft Damage Lemma,\textquotedblright\ Lemma \ref{facepalmlemma},
which tightly connects the probability of an update being triggered in the
\textquotedblleft real\textquotedblright\ world, where the state
$\rho^{\otimes n}$\ is slightly damaged by each measurement round, to its
probability of being triggered in the \textquotedblleft
ideal\textquotedblright\ world, where the algorithm gets a fresh copy of
$\rho^{\otimes n}$\ at each round. \ This lemma is quite general and might
find other uses. \ With this lemma in place, we divide the execution of the
QPMW algorithm into epochs, where each epoch has a constant probability of
triggering an update. \ By the connection between damage and update
probabilities, this means that the sum of the damage incurred by an
\textquotedblleft ideal\textquotedblright\ execution would be bounded, and by
the Damage Lemma the total damage in the \textquotedblleft
real\textquotedblright\ execution remains bounded as well. \ Since, moreover,
each epoch triggers an update with constant probability, and the number of
updates is bounded, the number of epochs will be bounded too. \ This gives us
a bound on the total damage to the state, and is crucial both for proving
gentleness \textit{and} for proving accuracy.\bigskip

\textbf{Other Results.} \ The paper's other results are proved using a variety
of techniques. \ In Appendix \ref{SEPVSENT}, for example, we show that any
measurement that's $0$-DP on product states (i.e., accepts all product states
with the same probability) is actually $0$-DP on \textit{all} states, and
hence trivial. \ Though simple, this result makes essential use of the fact
that the separable mixed states have positive density within the set of all
mixed states, and would be \textit{false} if amplitudes were reals rather than
complex numbers. \ Since most results in quantum information are insensitive
to the distinction between real and complex quantum mechanics, it's noteworthy
to find an exception.

To prove, in Appendix \ref{CANTCOMPOSE}, a weak form of composition for
quantum DP algorithms, we use the same \textquotedblleft Damage
Lemma\textquotedblright\ (Lemma \ref{facepalmlemma}) that we used for the
analysis of QPMW. \ In that appendix, however, we also construct an example,
involving DP measurements in two incompatible bases, that shows why any
composition theorem for quantum DP will come with caveats that weren't needed classically.

To prove, in Section \ref{EXAMPLES}, that our \textquotedblleft DP implies
gentleness\textquotedblright\ implications are asymptotically optimal, we use
the Laplace noise measurement $L_{\sigma}$\ as a separating example. \ When
$\sigma=\Theta\left(  n\right)  $, we get a measurement that's $O\left(
\frac{1}{n}\right)  $-DP,\ but not $\alpha$-gentle on arbitrary states for any
$\alpha=o\left(  1\right)  $. \ When $\sigma=\Theta\left(  \sqrt{n}\right)  $,
we get a measurement that's $O(\frac{1}{\sqrt{n}})$-DP, but not $\alpha
$-gentle on product states for any $\alpha=o\left(  1\right)  $.

\subsection{Related Work\label{RELATED}}

To our knowledge, this paper is the first to make the connection between
gentle measurement of quantum states and differential privacy. \ Nevertheless,
there were two previous papers that tried to combine quantum information and
differential privacy in other ways; there was a previous study of gentle
tomography;\ and there was a celebrated (purely classical) connection between
differential privacy and so-called \textit{adaptive data analysis}, which in
some ways foreshadowed our connection between DP and gentle
measurement.\bigskip

\textbf{Quantum information and DP.} \ Senekane et al.\ \cite{senekane}
discuss first applying a classical DP algorithm to classical data, and then
encoding the output into a quantum state for use in a quantum machine learning
algorithm. \ Naturally this composition preserves DP, but the DP and quantum
aspects don't seem to interact much.

Zhou and Ying \cite{zhouying}\ define and study an interesting notion of
\textquotedblleft quantum DP,\textquotedblright\ which however is very
different from ours. \ Given an algorithm $A$ that takes a quantum state as
input and produces another quantum state as output, they define $A$ to be
$\left(  \Delta,\varepsilon,\delta\right)  $-DP if for all states $\rho
,\sigma$ with trace distance at most $\Delta$, and all $2$-outcome
measurements $M$,%
\[
\Pr\left[  M\left(  A\left(  \rho\right)  \right)  \text{ accepts}\right]
\leq e^{\varepsilon}\Pr\left[  M\left(  A\left(  \sigma\right)  \right)
\text{ accepts}\right]  +\delta.
\]
In other words: unlike us, Zhou and Ying don't consider $A$'s behavior on two
databases that differ in a single entry (but which could have arbitrarily
large trace distance)---only on two states that are actually close \textit{as
quantum states}. \ For them, essentially, a DP algorithm is a quantum channel
that converts \textquotedblleft mere\textquotedblright\ closeness in trace
distance into a stronger, multiplicative kind of closeness between quantum
states. \ Zhou and Ying's main results are that

\begin{enumerate}
\item[(1)] the standard depolarizing and amplitude-damping channels (i.e.,
just adding noise to a quantum state, like in the simplest models of
decoherence) are DP in their sense, and

\item[(2)] their notion of quantum DP satisfies many composition theorems,
including advanced composition.
\end{enumerate}

These results are interesting and non-obvious, but only tangentially related
to what we do.\bigskip

\textbf{Gentle tomography.} \ Bennett, Harrow, and Lloyd \cite{bhl} studied
the task of \textquotedblleft gentle quantum state
tomography\textquotedblright---that is, recovering a full description of a
quantum state $\rho$ from identical copies $\rho^{\otimes n}$,\ without
appreciably damaging the $\rho^{\otimes n}$'s. \ Their notion of
\textquotedblleft gentleness\textquotedblright\ was very similar to ours. \ To
achieve the task, they gave a protocol that, like many of our protocols,
deliberately adds noise to the measurement outcomes before returning them
(although they used a randomized binning strategy rather than Laplace noise).
\ They did not make a connection to differential privacy, and also did not
consider shadow tomography, or any other tasks besides full
tomography.\bigskip

\textbf{DP and adaptive data analysis.} \ Perhaps the work that most clearly
anticipated ours, at a technical level, had nothing to do with quantum
information at all. \ Dwork et al.\ \cite{dwork:adaptive} studied the problem
of \textit{adaptive data analysis}: given a dataset, drawn i.i.d.\ from an
underlying distribution, the goal is accurately to answer a long sequence of
adaptively chosen statistical queries or analyses. \ Each query can be chosen
as a function of the answers to all previous queries. \ Accuracy is measured
with respect to the underlying distribution, rather than the specific dataset
drawn, and the goal is to avoid overfitting. \ A sequence of works
\cite{dwork:adaptive,dwork:adaptive2,bassily} showed that differentially
private mechanisms are particularly well-suited to this application, and can
be used to guarantee adaptive accuracy automatically.

Let $X$ be the dataset, with $n$ entries drawn i.i.d.\ from a distribution
$\mathcal{D}$. \ A priori, before any queries are answered, an observer's view
of the dataset $X$ is that it is a draw from the distribution $\mathcal{D}%
^{n}$. \ As queries are answered, this view might change. \ One way to prevent
overfitting is to guarantee that the query answers do not change the
observer's view much:\ i.e., that the a-posteriori view of $X$'s distribution,
conditioned on the observed answers, is almost unchanged. \ This can be
interpreted as \textquotedblleft classical gentleness.\textquotedblright\ \ At
a technical level, our results use the fact that in the above scenario, if we
run a classical DP algorithm $A$ on the database $X$, then conditioning on
$A(X)$ outputting any particular value $y$ results in a bounded change to the
prior (see Lemma \ref{dpgentleclassical}). \ We note that a similar result
follows from the work of Dwork et al.\ \cite{dwork:adaptive2} and Rogers et
al.\ \cite{rrst} (their results are phrased in terms of the so-called
\textquotedblleft max information\textquotedblright).

While there are technical and conceptual connections, the setting of quantum
measurement or shadow tomography (even without gentleness) presents altogether
different challenges from the adaptive data analysis setting. \ Most notably,
as we discussed in Section \ref{GENTLE}, running an algorithm on a quantum
state can collapse the state. \ This is a physical phenomenon, not just a
change in a particular observer's prior and posterior views as was the case
classically. \ In particular, quantum measurements that collapse the state
cannot be forgotten or undone. \ Restricting our attention to computing the
average of two-outcome measurements over $n$ registers, this difference is
best illustrated by the fact that, in the quantum setting, computing accurate
answers to a large collection of \textit{non-adaptive} measurements is already
a challenging task (even without requiring gentleness). \ In the classical
setting, on the other hand, if the measurements are specified non-adaptively
then the na\"{\i}ve algorithm that simply outputs the empirical mean for each
measurement performs quite well; the only challenge is answering an adaptively
specified sequence of measurements.

\section{Preliminaries\label{PRELIM}}

\subsection{Classical Probability Theory\label{PROB}}

Given two probability distributions $\mathcal{D}=\left(  p_{x}\right)  _{x}%
$\ and $\mathcal{D}^{\prime}=\left(  q_{x}\right)  _{x}$, we'll use all three
of the following measures of distance between them:%
\begin{align*}
\left\Vert \mathcal{D}-\mathcal{D}^{\prime}\right\Vert  &  :=\frac{1}{2}%
\sum_{x}\left\vert p_{x}-q_{x}\right\vert \text{ \ \ (trace distance)}\\
\operatorname{KL}\left(  \mathcal{D},\mathcal{D}^{\prime}\right)   &
:=\sum_{x}p_{x}\ln\frac{p_{x}}{q_{x}}\text{ \ \ (Kullback-Leibler
divergence)}\\
H^{2}\left(  \mathcal{D},\mathcal{D}^{\prime}\right)   &  :=1-\sum_{x}%
\sqrt{p_{x}q_{x}}\text{ \ \ (squared Hellinger distance)}%
\end{align*}
Interestingly, Hellinger distance was invented in 1909, prior to the discovery
of quantum mechanics, and is used for purely classical purposes in probability
theory. \ But, as it involves the square roots of probabilities, it might be
said to have a \textquotedblleft secret affinity\textquotedblright\ for
quantum mechanics that occasionally reveals itself, as it will in this paper.

\begin{proposition}
[Pinsker's Inequality]\label{pinsker}$\left\Vert \mathcal{D}-\mathcal{D}%
^{\prime}\right\Vert \leq\sqrt{2\operatorname{KL}\left(  \mathcal{D}%
,\mathcal{D}^{\prime}\right)  }$.
\end{proposition}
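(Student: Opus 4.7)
The plan is to prove Pinsker's inequality by reducing to the two-point (Bernoulli) case and then handling that base case with a short calculus argument. This is the standard Csiszár--Kullback--Pinsker route and it fits nicely into the preliminaries style.

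First I would set $A := \{x : p_x \geq q_x\}$, and let $p := \sum_{x \in A} p_x$ and $q := \sum_{x \in A} q_x$. A direct computation shows $\left\Vert \mathcal{D}-\mathcal{D}'\right\Vert = p - q$, since the positive part of $p_x - q_x$ is concentrated on $A$ and summing $|p_x - q_x|/2$ yields exactly $p - q$. Next, I would invoke the data-processing (monotonicity) property of KL-divergence under the coarse-graining map that sends $x \mapsto \mathbf{1}[x \in A]$: this gives $\operatorname{KL}(\mathcal{D},\mathcal{D}') \geq \operatorname{KL}(\operatorname{Ber}(p),\operatorname{Ber}(q))$. If the monotonicity of KL under deterministic partitions is not taken for granted in this paper, I would prove it in one line via the log-sum inequality applied to $(p_x)_{x \in A}$ against $(q_x)_{x \in A}$ (and likewise on $A^c$). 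Combining these two reductions, it suffices to prove the scalar inequality
\[
2(p-q)^2 \;\leq\; p\ln\frac{p}{q} + (1-p)\ln\frac{1-p}{1-q}
\]
for all $p, q \in [0,1]$.

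To prove this scalar inequality, I would fix $q$ and define
\[
f(p) \;:=\; p\ln\frac{p}{q} + (1-p)\ln\frac{1-p}{1-q} - 2(p-q)^2.
\]
A direct computation gives $f(q)=0$ and $f'(q) = \ln\frac{q}{q} - \ln\frac{1-q}{1-q} - 4(q-q) = 0$. The second derivative is $f''(p) = \frac{1}{p} + \frac{1}{1-p} - 4 = \frac{1}{p(1-p)} - 4$, which is nonnegative on $[0,1]$ because $p(1-p) \leq 1/4$. Hence $f$ is convex with a minimum at $p=q$ where it vanishes, so $f(p) \geq 0$ for all $p \in [0,1]$, which is exactly the desired Bernoulli inequality.

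The main obstacle, if any, is just a bookkeeping one: cleanly justifying the data-processing step in a way consistent with the (very minimal) classical preliminaries already introduced. If one prefers to avoid even naming data processing, an alternative route is to prove the inequality $|p_x - q_x| \leq \sqrt{2 p_x \ln(p_x/q_x) \cdot (\text{something})}$ termwise, but this gets messier than just doing the partition-into-$A$ reduction; so I would stick with the reduction above. The scalar calculus step is routine once one notices the identity $\frac{1}{p(1-p)} \geq 4$, which is the whole engine of the proof.
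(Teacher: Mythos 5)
The paper states Pinsker's inequality in its preliminaries as a standard fact and does not prove it, so there is no in-paper proof to compare against. Your argument is the classical Csisz\'ar--Kullback--Pinsker route and it is correct: the coarse-graining onto $A=\{x:p_x\geq q_x\}$ together with the paper's convention (where $\|\cdot\|$ already carries the $\tfrac{1}{2}$ prefactor) does give $\|\mathcal{D}-\mathcal{D}'\|=p-q$; data processing for KL under a deterministic binary coarse-graining follows from the log-sum inequality as you say; and the Bernoulli case follows from $f(q)=f'(q)=0$ and $f''(p)=\tfrac{1}{p(1-p)}-4\geq 0$. In fact the scalar bound you prove, $2(p-q)^2\leq\operatorname{KL}(\operatorname{Ber}(p),\operatorname{Ber}(q))$, yields the sharp form $\|\mathcal{D}-\mathcal{D}'\|\leq\sqrt{\tfrac{1}{2}\operatorname{KL}(\mathcal{D},\mathcal{D}')}$, which is a factor of $2$ stronger inside the square root than the (already sufficient) $\sqrt{2\operatorname{KL}}$ claimed in the proposition, so the statement follows a fortiori. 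The only cosmetic slip is in the line computing $f'(q)$: the terms $+1$ and $-1$ from differentiating $p\ln p$ and $(1-p)\ln(1-p)$ are silently dropped, but they cancel, so the conclusion $f'(q)=0$ is unaffected.
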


The following is less well-known, but we'll need it as well:

\begin{proposition}
[{e.g.\ \cite[p.\ 99]{reiss}}]\label{hkl}$H\left(  \mathcal{D},\mathcal{D}%
^{\prime}\right)  \leq\sqrt{\operatorname{KL}\left(  \mathcal{D}%
,\mathcal{D}^{\prime}\right)  }$.
\end{proposition}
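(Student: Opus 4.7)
The plan is to give a short direct proof relating the Kullback--Leibler divergence to the squared Hellinger distance via the elementary inequality $-\ln y \geq 1 - y$ (valid for all $y > 0$). Squaring the claimed inequality, what we really want is $H^{2}(\mathcal{D},\mathcal{D}') \leq \operatorname{KL}(\mathcal{D},\mathcal{D}')$, and in fact I expect to prove the slightly stronger bound $2H^{2}(\mathcal{D},\mathcal{D}') \leq \operatorname{KL}(\mathcal{D},\mathcal{D}')$, from which Proposition~\ref{hkl} follows immediately.

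First I would rewrite the KL-divergence in a form that exposes square roots of probabilities, matching the expression appearing in $H^{2}$. Using $\ln(p_{x}/q_{x}) = -2\ln\sqrt{q_{x}/p_{x}}$, we get
\[
\operatorname{KL}(\mathcal{D},\mathcal{D}') \;=\; \sum_{x} p_{x}\ln\frac{p_{x}}{q_{x}} \;=\; -2\sum_{x} p_{x}\ln\sqrt{\frac{q_{x}}{p_{x}}}.
\]
Next I would apply $-\ln y \geq 1-y$ with $y = \sqrt{q_{x}/p_{x}}$ termwise (restricting the sum to $x$ with $p_{x} > 0$, and using the standard conventions $0\ln 0 = 0$ and $\sqrt{0 \cdot q_x} = 0$ for the other terms). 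This yields
\[
\operatorname{KL}(\mathcal{D},\mathcal{D}') \;\geq\; 2\sum_{x} p_{x}\!\left(1 - \sqrt{\frac{q_{x}}{p_{x}}}\right) \;=\; 2\sum_{x} p_{x} - 2\sum_{x}\sqrt{p_{x}q_{x}} \;=\; 2\!\left(1 - \sum_{x}\sqrt{p_{x}q_{x}}\right) \;=\; 2H^{2}(\mathcal{D},\mathcal{D}').
\]
Finally I would take square roots of both sides of $H^{2}(\mathcal{D},\mathcal{D}') \leq \operatorname{KL}(\mathcal{D},\mathcal{D}')/2 \leq \operatorname{KL}(\mathcal{D},\mathcal{D}')$ to obtain the proposition.

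There is no real obstacle here; the only minor care required is bookkeeping of the support, i.e.\ handling $x$ for which $p_{x} = 0$ or $q_{x} = 0$. If $p_{x} = 0$ the term contributes nothing to either side by convention, and if $q_{x} = 0$ while $p_{x} > 0$ then $\operatorname{KL}(\mathcal{D},\mathcal{D}') = +\infty$ and the inequality is trivial. So the argument is essentially one line of calculus once the algebraic rewriting is in place.
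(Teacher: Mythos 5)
Your proof is correct, and it is in fact slightly stronger than the stated proposition: you establish $2H^{2}(\mathcal{D},\mathcal{D}') \leq \operatorname{KL}(\mathcal{D},\mathcal{D}')$, whereas the paper only needs $H^{2} \leq \operatorname{KL}$. The paper itself gives no proof here; it simply cites Reiss, so there is no internal argument to compare against. Your derivation via $-\ln y \geq 1-y$ applied to $y = \sqrt{q_{x}/p_{x}}$ is the standard textbook route, and the bookkeeping on the support (dropping terms with $p_{x}=0$, and observing $\operatorname{KL} = +\infty$ whenever $q_{x}=0$ but $p_{x}>0$) is handled correctly. One small remark: the factor-of-$2$ improvement is genuine but happens not to help in this paper, since Lemma \ref{dpgentleprojective} ultimately passes through the chain $\sqrt{2}\,H \leq \sqrt{2\operatorname{KL}}$ and the paper's weaker statement already gives the same $O(\varepsilon\sqrt{n})$ bound; but it is good form to record the tighter constant when it comes for free.
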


\subsection{Quantum Information Basics\label{QI}}

In the following sections, we'll briefly review some standard notation and
definitions from quantum information. \ More details can be found, for
example, in Nielsen and Chuang \cite{nc}.

A $d$-dimensional \textit{pure state} is a unit vector in $\mathbb{C}^{d}$,
which we write in ket notation as%
\[
\left\vert \psi\right\rangle =\sum_{i=1}^{d}\alpha_{i}\left\vert
i\right\rangle .
\]
Here $\left\vert 1\right\rangle ,\ldots,\left\vert d\right\rangle $\ is an
orthonormal basis for $\mathbb{C}^{d}$,\ and the $\alpha_{i}$'s are complex
numbers called \textit{amplitudes} satisfying $\left\vert \alpha
_{1}\right\vert ^{2}+\cdots+\left\vert \alpha_{d}\right\vert ^{2}=1$. \ The
state $\left\vert \psi\right\rangle $\ is also called a \textit{superposition}
over the basis states $\left\vert 1\right\rangle ,\ldots,\left\vert
d\right\rangle $, which we can think of as the possible classical states of
the system.\footnote{Note that any linear combination of the basis states
$\left\vert 1\right\rangle ,\ldots,\left\vert d\right\rangle $,\ and not just
$\left\vert 1\right\rangle ,\ldots,\left\vert d\right\rangle $ themselves, is
called a \textquotedblleft pure state.\textquotedblright} \ We also denote by
$\left\langle \psi\right\vert $\ the conjugate transpose of $\left\vert
\psi\right\rangle $\ (thus, $\left\vert \psi\right\rangle $\ is a column
vector while $\left\langle \psi\right\vert $\ is a row vector). \ The
unit-norm condition can then be written succinctly as $\left\langle \psi
|\psi\right\rangle =1$; and more generally, the complex inner product between
$\left\vert \psi\right\rangle $\ and $\left\vert \varphi\right\rangle $\ can
be written $\left\langle \psi|\varphi\right\rangle $.

In the special case $d=2$, we call $\left\vert \psi\right\rangle $\ a
\textit{qubit}, and typically label the orthonormal basis vectors by
$\left\vert 0\right\rangle $\ and $\left\vert 1\right\rangle $. \ It's also
convenient to give standard names to the following two superpositions of
$\left\vert 0\right\rangle $\ and $\left\vert 1\right\rangle $:%
\[
\left\vert +\right\rangle :=\frac{\left\vert 0\right\rangle +\left\vert
1\right\rangle }{\sqrt{2}},~~~~\left\vert -\right\rangle :=\frac{\left\vert
0\right\rangle -\left\vert 1\right\rangle }{\sqrt{2}}.
\]

The reader might be familiar with two types of operations that we can apply to
pure states. \ First, given any unitary matrix $U$, we can map $\left\vert
\psi\right\rangle $\ to $U\left\vert \psi\right\rangle $. \ Second, we can
\textit{measure} $\left\vert \psi\right\rangle $\ in the $\left\vert
1\right\rangle ,\ldots,\left\vert d\right\rangle $\ basis. \ Doing so returns
the outcome $\left\vert i\right\rangle $\ with probability $\left\vert
\alpha_{i}\right\vert ^{2}=\left\vert \left\langle \psi|i\right\rangle
\right\vert ^{2}$. \ Furthermore, the state $\left\vert \psi\right\rangle
$\ then \textquotedblleft collapses\textquotedblright\ to $\left\vert
i\right\rangle $.

More generally, we could measure $\left\vert \psi\right\rangle $\ with respect
to \textit{any} orthonormal basis $\left\vert v_{1}\right\rangle
,\ldots,\left\vert v_{d}\right\rangle $, which is equivalent to first applying
a unitary $U$ that maps each $\left\vert v_{i}\right\rangle $\ to $\left\vert
i\right\rangle $, then measuring in the $\left\vert 1\right\rangle
,\ldots,\left\vert d\right\rangle $\ basis, and finally applying $U^{\dagger}%
$, where $\dagger$\ denotes conjugate transpose. \ This returns the outcome
$\left\vert v_{i}\right\rangle $ with probability $\left\vert \left\langle
\psi|v_{i}\right\rangle \right\vert ^{2}$, whereupon the state collapses to
$\left\vert v_{i}\right\rangle $. \ A measurement of this type is called a
\textit{projective} measurement.

\subsection{Mixed States, Superoperators, Quantum Operations, and
POVMs\label{MIXED}}

In general, we may have ordinary probabilistic uncertainty about which quantum
superposition we have. \ This leads us to \textit{mixed states}, the most
general kind of state in quantum mechanics. \ Formally, a $d$-dimensional
mixed state $\rho$\ is a$\ d\times d$\ positive semidefinite matrix\ that
satisfies $\operatorname{Tr}\left(  \rho\right)  =1$. \ Equivalently, $\rho$
is a convex combination of outer products of pure states with themselves
(without loss of generality, at most $d$ pure states):%
\[
\rho=\sum_{i=1}^{d}p_{i}|\psi_{i}\rangle\langle\psi_{i}|,
\]
where $p_{1},\ldots,p_{d}\geq0$\ and $p_{1}+\cdots+p_{d}=1$. \ This can be
interpreted as a probability distribution wherein each $\left\vert \psi
_{i}\right\rangle $\ occurs with probability $p_{i}$, though note that
different distributions can give rise to the same $\rho$. \ In the special
case where $\rho=|\psi\rangle\langle\psi|$\ has rank $1$, it represents a pure
state (i.e., a superposition). \ In the special case where $\rho$\ is
diagonal, it represents a classical probability distribution over $\left\vert
1\right\rangle ,\ldots,\left\vert d\right\rangle $. \ The \textit{maximally
mixed state}, $\frac{\mathbb{I}}{d}$ where $\mathbb{I}$\ is the identity
matrix, corresponds to the uniform distribution over $\left\vert
1\right\rangle ,\ldots,\left\vert d\right\rangle $, and has the unique
property of being unaffected by unitary transformations.

We can restate the basic rules of quantum mechanics in terms of mixed states,
as follows. \ First, a unitary transformation $U$ maps $\rho$\ to $U\rho
U^{\dagger}$. \ Second, a measurement of $\rho$\ in the $\left\vert
1\right\rangle ,\ldots,\left\vert d\right\rangle $\ basis returns the outcome
$\left\vert i\right\rangle $ with probability $\rho_{ii}=\left\langle
i|\rho|i\right\rangle $, whereupon $\rho$\ collapses to $\left\vert
i\right\rangle \left\langle i\right\vert $. \ Likewise, a measurement in the
$\left\vert v_{1}\right\rangle ,\ldots,\left\vert v_{d}\right\rangle $\ basis
returns $\left\vert v_{i}\right\rangle $ with probability $\left\langle
v_{i}|\rho|v_{i}\right\rangle $, whereupon $\rho$\ collapses to $|v_{i}%
\rangle\langle v_{i}|$.

More generally, a \textit{superoperator} $S$, the most general
(norm-preserving) mapping from mixed states to mixed states allowed by quantum
mechanics, maps $\rho$\ to the mixed state%
\[
S\left(  \rho\right)  =\sum_{i=1}^{k}B_{i}\rho B_{i}^{\dagger},
\]
where $B_{1},\ldots,B_{k}$\ can be any matrices satisfying%
\[
\sum_{i=1}^{k}B_{i}^{\dagger}B_{i}=\mathbb{I}.
\]
Here $\rho$\ and $S\left(  \rho\right)  $\ do not even need to have the same
dimension. \ Superoperators encompass unitary transformations, measurements,
and other interactions with an external environment in a single formalism.

Even more generally still, if we have $S$\ as above where $B_{1},\ldots,B_{k}%
$\ only satisfy%
\[
\sum_{i=1}^{k}B_{i}^{\dagger}B_{i}\preceq\mathbb{I},
\]
then we call $S$ a \textit{quantum operation}.\footnote{In the literature,
these are also called \textquotedblleft non-trace-increasing completely
positive maps.\textquotedblright} \ If $S$ is a quantum operation, then
$S\left(  \rho\right)  $\ is Hermitian and positive semidefinite, but it might
not be a normalized mixed state, because its trace might be less than $1$.
\ Quantum operations are useful for capturing the effects of superoperators
when we additionally \textit{condition} on some event happening (e.g., a
measurement outcome being \textquotedblleft accept\textquotedblright). \ The
event's probability is then $\operatorname{Tr}\left(  S\left(  \rho\right)
\right)  $, and the final mixed state conditioned on the event is
$\frac{S\left(  \rho\right)  }{\operatorname{Tr}\left(  S\left(  \rho\right)
\right)  }$.

Quantum operations act linearly on mixed states, in the sense that%
\[
S\left(  p\rho+q\sigma\right)  =pS\left(  \rho\right)  +qS\left(
\sigma\right)  .
\]

Although any measurement can be represented by a superoperator, when
discussing measurements it's convenient to use a related formalism called
\textquotedblleft POVMs\textquotedblright\ (Positive Operator Valued
Measures). \ POVMs capture all measurements allowed by quantum mechanics,
including those whose implementations might involve ancillary systems besides
the ones being measured. \ In this formalism, a \textit{measurement} $M$ is
given by a list of $d\times d$\ positive semidefinite matrices $E_{1}%
,\ldots,E_{k}$, which satisfy $E_{1}+\cdots+E_{k}=\mathbb{I}$. \ The rule is:%
\[
\Pr\left[  M\left(  \rho\right)  \text{ returns outcome }i\right]
=\operatorname{Tr}\left(  E_{i}\rho\right)  .
\]
Importantly, specifying the $E_{i}$'s doesn't uniquely determine the
post-measurement states (i.e., what happens to $\rho$\ if the outcome is $i$).
\ Thus, by an \textit{implementation}\ of the measurement $M$, in this paper
we'll mean a list of $d\times d$ matrices $B_{1},\ldots,B_{k}$, which satisfy
$B_{i}^{\dagger}B_{i}=E_{i}$. \ For a given implementation, if the measurement
outcome is $i$, then the post-measurement state is%
\[
\frac{B_{i}\rho B_{i}^{\dagger}}{\operatorname{Tr}\left(  B_{i}\rho
B_{i}^{\dagger}\right)  }.
\]
Note that the mapping%
\[
\rho\rightarrow\sum_{i=1}^{k}B_{i}\rho B_{i}^{\dagger}%
\]
is a superoperator, that each individual mapping $\rho\rightarrow B_{i}\rho
B_{i}^{\dagger}$\ is a quantum operation,\ and that $\operatorname{Tr}\left(
B_{i}\rho B_{i}^{\dagger}\right)  =\operatorname{Tr}\left(  E_{i}\rho\right)
$ is the probability of outcome $i$.

In the special case of two-outcome POVMs $\left(  E_{1},E_{2}\right)  $, we'll
sometimes identify the POVM itself with the \textquotedblleft
accept\textquotedblright\ outcome $E_{1}$, treating the \textquotedblleft
reject\textquotedblright\ outcome $E_{2}=\mathbb{I}-E_{1}$\ as implied.

\subsection{Separable and Entangled\label{SEPENT}}

A pure state $\left\vert \psi\right\rangle $\ on $n$ registers is called a
\textit{product state} if it can be written as a tensor product,%
\[
\left\vert \psi\right\rangle =\left\vert \psi_{1}\right\rangle \otimes
\cdots\otimes\left\vert \psi_{n}\right\rangle .
\]
Any pure state that cannot be so written is called \textit{entangled}. \ A
famous example of an entangled pure state is the \textit{Bell pair},
$\frac{\left\vert 00\right\rangle +\left\vert 11\right\rangle }{\sqrt{2}}$.

A mixed state $\rho$\ is likewise called a product state if it can be written
as a tensor product%
\[
\rho=\rho_{1}\otimes\cdots\otimes\rho_{n}.
\]
Also, $\rho$\ is called \textit{separable} if it can be written as a convex
combination of product states, and \textit{entangled} if it can't be. \ Unlike
a pure state, a mixed state can be separable but non-product, meaning that it
has classical correlations but no entanglement, as with the example
$\frac{|00\rangle\langle00|+|11\rangle\langle11|}{2}$ (i.e., $00$\ and
$11$\ with equal probabilities).

A measurement $M$ on an $n$-register state is called \textit{product} if there
exist POVMs $M_{1},\ldots,M_{n}$\ such that $M$ can be implemented as follows:

\begin{itemize}
\item For each $i\in\left[  n\right]  $, apply $M_{i}$\ to the $i^{th}$ register.

\item Return some function of the $n$ classical measurement outcomes, possibly
together with auxiliary randomness.
\end{itemize}

In the special case where $M_{1},\ldots,M_{n}$\ are all projective
measurements, we call $M$ a \textit{product-of-projectives}.

More generally, we call $M$ \textit{mixture-of-products} if the POVMs
$M_{1},\ldots,M_{n}$\ can be chosen randomly, from some correlated probability
distribution, in advance of applying them.

More generally still, we call $M$ \textit{LOCC}---the acronym stands for Local
Operations and Classical Communication---if $M$ can be implemented by applying
a POVM to some register $i_{1}\in\left[  n\right]  $, then (depending on the
outcome) applying another POVM to some register $i_{2}\in\left[  n\right]  $,
and so on, then finally returning some function of the classical measurement
outcomes, possibly together with auxiliary randomness. \ Here we allow any
finite, adaptively chosen sequence of POVMs, which could include repeated
POVMs applied to the same register.

Let us stress that, even if a measurement happens to be product, or
mixture-of-products, or LOCC, if we want to implement the measurement
\textit{gently}, we might need to apply a quantum circuit that acts on all $n$
registers coherently. \ This is because, if we measure the registers
separately, we'll generate \textit{garbage}---i.e., information about the
state besides the final measurement outcome---that might destroy gentleness.
\ Only if we've taken care to do everything in coherent superposition,
simulating the \textquotedblleft measurements\textquotedblright\ on the
individual registers (and the computations on the outcomes of those
measurements) using ancilla qubits, can we later uncompute the garbage. \ This
is likely to be a significant challenge for experimental implementation of
gentle measurements like the ones discussed in this paper, since coherent
measurements across $n$ registers are much harder than incoherent ones to
realize in practice. \ On the other hand, this issue makes no difference for
DP, since even if the garbage isn't uncomputed, it need not be revealed to the
end user.\footnote{Or to say it another way, the definition of quantum DP
talks only about the probabilities of outcomes, not about the post-measurement
states.}

\subsection{Distance Between Quantum States\label{DIST}}

Given a Hermitian matrix $A$, its \textit{trace norm} is defined as%
\[
\left\Vert A\right\Vert _{\operatorname*{tr}}:=\frac{1}{2}\sum_{i=1}%
^{d}\left\vert \lambda_{i}\right\vert ,
\]
where $\lambda_{1},\ldots,\lambda_{d}$\ are the eigenvalues of $A$. \ In
particular, given two mixed states $\rho$\ and $\sigma$, their \textit{trace
distance} is defined as $\left\Vert \rho-\sigma\right\Vert
_{\operatorname*{tr}}$.

Trace distance is a metric on mixed states---i.e., it's reflexive, symmetric,
and satisfies the triangle inequality. \ It's equal to%
\[
\max_{M}\left(  \Pr\left[  M\left(  \rho\right)  \text{ accepts}\right]
-\Pr\left[  M\left(  \sigma\right)  \text{ accepts}\right]  \right)  ,
\]
where the maximum is taken over all possible two-outcome measurements $M$.
\ As such, trace distance generalizes the total variation distance between
classical probability distributions, reducing to the latter when $\rho$\ and
$\sigma$\ are both diagonal matrices.

We'll find the following facts useful.

\begin{proposition}
[{Contractivity of Trace Norm \cite[p.\ 406]{nc}}]\label{contractive}Let
$S$\ be any quantum operation, and let $A$\ be a Hermitian matrix. \ Then%
\[
\left\Vert S\left(  A\right)  \right\Vert _{\operatorname*{tr}}\leq\left\Vert
A\right\Vert _{\operatorname*{tr}}.
\]

\end{proposition}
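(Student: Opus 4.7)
The plan is to reduce the claim to the Jordan (positive/negative) decomposition of the Hermitian matrix $A$. First I would write $A = A_{+} - A_{-}$, where $A_{+}, A_{-} \succeq 0$ have orthogonal supports, so that $\|A\|_{\operatorname*{tr}} = \frac{1}{2}(\operatorname{Tr}(A_{+}) + \operatorname{Tr}(A_{-}))$ follows immediately from the eigenvalue definition of the trace norm (the eigenvalues of $A_{+}$ are the positive eigenvalues of $A$, the eigenvalues of $A_{-}$ are the absolute values of the negative ones, and the two sets have disjoint support).

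Since $S$ is a quantum operation, it is in particular completely positive, so $S(A_{+})$ and $S(A_{-})$ are both positive semidefinite. Thus $S(A) = S(A_{+}) - S(A_{-})$ exhibits $S(A)$ as a difference of two positive operators, though in general the two summands no longer have orthogonal supports. Applying the triangle inequality for the trace norm then yields
\[
\|S(A)\|_{\operatorname*{tr}} \leq \|S(A_{+})\|_{\operatorname*{tr}} + \|S(A_{-})\|_{\operatorname*{tr}} = \tfrac{1}{2}\bigl(\operatorname{Tr}(S(A_{+})) + \operatorname{Tr}(S(A_{-}))\bigr),
\]
using that for any $X \succeq 0$ all eigenvalues are nonnegative, so $\|X\|_{\operatorname*{tr}} = \tfrac{1}{2}\operatorname{Tr}(X)$.

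Finally, I would invoke that $S$ is trace non-increasing. By definition the Kraus operators $B_{1},\ldots,B_{k}$ of $S$ satisfy $\sum_{i} B_{i}^{\dagger} B_{i} \preceq \mathbb{I}$, so the cyclic property of the trace gives, for any $X \succeq 0$,
\[
\operatorname{Tr}(S(X)) = \sum_{i}\operatorname{Tr}(B_{i} X B_{i}^{\dagger}) = \operatorname{Tr}\!\Bigl(X \sum_{i} B_{i}^{\dagger} B_{i}\Bigr) \leq \operatorname{Tr}(X).
\]
Applying this with $X = A_{+}$ and $X = A_{-}$ and chaining it with the previous display produces $\|S(A)\|_{\operatorname*{tr}} \leq \tfrac{1}{2}(\operatorname{Tr}(A_{+}) + \operatorname{Tr}(A_{-})) = \|A\|_{\operatorname*{tr}}$.

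The only subtle point to flag is that $S(A_{+})$ and $S(A_{-})$ need not have orthogonal supports, so the middle step is genuinely an inequality rather than an equality; this is precisely what makes the contractivity strict in typical cases where the Kraus operators ``mix'' the positive and negative parts of $A$. Otherwise the proof is a short chain of three standard ingredients---Jordan decomposition of $A$, complete positivity of $S$, and trace non-increase of $S$---and I do not anticipate any real obstacle beyond being careful about the paper's factor-of-$\tfrac{1}{2}$ normalization in the definition of $\|\cdot\|_{\operatorname*{tr}}$.
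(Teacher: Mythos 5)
Your proof is correct. The paper itself does not supply a proof of this proposition---it simply cites Nielsen and Chuang---so there is nothing internal to compare against, but what you've written is precisely the standard textbook argument: Jordan decomposition $A = A_{+} - A_{-}$ with orthogonal supports, complete positivity of $S$ to keep $S(A_{\pm}) \succeq 0$, the triangle inequality for $\|\cdot\|_{\operatorname*{tr}}$, and the trace non-increasing property $\operatorname{Tr}(S(X)) \leq \operatorname{Tr}(X)$ for $X \succeq 0$ from $\sum_i B_i^{\dagger}B_i \preceq \mathbb{I}$. You also handled the paper's factor-of-$\tfrac{1}{2}$ convention consistently, which is the one place this could have gone wrong. (Nielsen and Chuang's own version uses the variational characterization of trace distance via a maximizing projector rather than the triangle inequality, but the two routes are interchangeable and rest on the same three ingredients.)
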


So in particular, for any two mixed states $\rho$\ and $\sigma$ and any
quantum operation $S$, we have%
\[
\left\Vert S\left(  \rho\right)  -S\left(  \sigma\right)  \right\Vert
_{\operatorname*{tr}}=\left\Vert S\left(  \rho-\sigma\right)  \right\Vert
_{\operatorname*{tr}}\leq\left\Vert \rho-\sigma\right\Vert
_{\operatorname*{tr}}.
\]
As an especially useful example, a superoperator that \textquotedblleft traces
out\textquotedblright\ (discards) part of its input state can never increase
trace distance.

\begin{proposition}
[Convexity of Trace Norm]\label{convextd}For all Hermitian matrices $A,B$ and
$p,q\geq0$,%
\[
\left\Vert pA+qB\right\Vert _{\operatorname*{tr}}\leq p\left\Vert A\right\Vert
_{\operatorname*{tr}}+q\left\Vert B\right\Vert _{\operatorname*{tr}}.
\]

\end{proposition}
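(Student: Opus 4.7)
The plan is to reduce the statement to standard facts about the trace norm viewed as a norm on Hermitian matrices: subadditivity plus positive homogeneity of a seminorm. Since $p,q \geq 0$, homogeneity is the easy part, so the real content is the triangle inequality $\|A+B\|_{\operatorname*{tr}} \leq \|A\|_{\operatorname*{tr}} + \|B\|_{\operatorname*{tr}}$ for Hermitian $A,B$. I would prove this via a variational characterization rather than by manipulating eigenvalues directly.

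First, I would establish the following lemma: for any Hermitian $M$ with eigenvalues $\lambda_1,\ldots,\lambda_d$,
\[
\sum_{i=1}^d |\lambda_i| \;=\; \max\bigl\{\operatorname{Tr}(UM) : U = U^\dagger,\ -\mathbb{I} \preceq U \preceq \mathbb{I}\bigr\}.
\]
This follows by spectral decomposition: writing $M = V\Lambda V^\dagger$, one has $\operatorname{Tr}(UM) = \operatorname{Tr}((V^\dagger U V)\Lambda) = \sum_i (V^\dagger U V)_{ii}\,\lambda_i$. Since $V^\dagger U V$ has diagonal entries in $[-1,1]$, this is at most $\sum_i |\lambda_i|$, with equality achieved by choosing $V^\dagger U V = \operatorname{diag}(\operatorname{sgn}(\lambda_1),\ldots,\operatorname{sgn}(\lambda_d))$. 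By the definition in the paper, $\|M\|_{\operatorname*{tr}} = \tfrac{1}{2}\sum_i|\lambda_i|$, so the variational formula becomes $\|M\|_{\operatorname*{tr}} = \tfrac{1}{2}\max_U \operatorname{Tr}(UM)$ over admissible $U$.

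Second, I would plug in $M = pA + qB$ and apply linearity of trace. For any admissible $U$,
\[
\operatorname{Tr}\bigl(U(pA+qB)\bigr) \;=\; p\operatorname{Tr}(UA) + q\operatorname{Tr}(UB) \;\leq\; p\cdot 2\|A\|_{\operatorname*{tr}} + q\cdot 2\|B\|_{\operatorname*{tr}},
\]
where each bound uses the variational formula applied to $A$ and to $B$ separately (together with $p,q\geq 0$). Taking the maximum over $U$ on the left and then dividing by $2$ yields the claim.

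There is no real obstacle here: the only nontrivial step is the variational characterization, and even that is a direct consequence of the spectral theorem. An alternative route would be to invoke the standard fact that the Schatten $1$-norm is a genuine norm on Hermitian matrices and conclude by the triangle inequality plus positive homogeneity, but the variational proof has the virtue of being fully self-contained and of also matching the dual-norm perspective that is implicit in the trace-distance formula $\|\rho-\sigma\|_{\operatorname*{tr}} = \max_M\bigl(\Pr[M(\rho)\text{ accepts}]-\Pr[M(\sigma)\text{ accepts}]\bigr)$ given earlier in the section.
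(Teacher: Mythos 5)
The paper states this proposition without proof, treating it as a standard fact about the Schatten-$1$ norm, so there is no paper argument to compare against. Your proof is correct and self-contained: the variational characterization $\sum_i|\lambda_i| = \max\{\operatorname{Tr}(UM) : U=U^\dagger,\ -\mathbb{I}\preceq U\preceq\mathbb{I}\}$ is the standard dual-norm formulation, and combining it with linearity of the trace and nonnegativity of $p,q$ gives exactly the claimed inequality (with the paper's extra factor of $\tfrac{1}{2}$ correctly tracked). The alternative you mention --- simply citing subadditivity and positive homogeneity of the Schatten-$1$ norm --- is what the paper implicitly does; your route buys a from-scratch derivation and makes transparent the connection to the operational formula $\|\rho-\sigma\|_{\operatorname*{tr}}=\max_M(\Pr[M(\rho)\text{ accepts}]-\Pr[M(\sigma)\text{ accepts}])$ quoted just above in the same section.
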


The triangle inequality for trace distance is just a special case of the
above. \ As another useful special case, for all mixed states $\rho
,\sigma,\rho^{\prime},\sigma^{\prime}$ and probabilities $p$,%

\[
\left\Vert \left(  p\rho+\left(  1-p\right)  \sigma\right)  -\left(
p\rho^{\prime}+\left(  1-p\right)  \sigma^{\prime}\right)  \right\Vert
_{\operatorname*{tr}}\leq p\left\Vert \rho-\rho^{\prime}\right\Vert
_{\operatorname*{tr}}+\left(  1-p\right)  \left\Vert \sigma-\sigma^{\prime
}\right\Vert _{\operatorname*{tr}}.
\]

Finally, trace distance $\left\Vert \rho-\sigma\right\Vert
_{\operatorname*{tr}}$\ takes an especially simple form if $\rho=|\psi
\rangle\langle\psi|$\ and $\sigma=|\phi\rangle\langle\phi|$\ are both pure states.

\begin{proposition}
\label{puretd}For all $\left\vert \psi\right\rangle ,\left\vert \phi
\right\rangle $,%
\[
\left\Vert |\psi\rangle\langle\psi|-|\phi\rangle\langle\phi|\right\Vert
_{\operatorname*{tr}}=\sqrt{1-\left\vert \left\langle \psi|\phi\right\rangle
\right\vert ^{2}}.
\]

\end{proposition}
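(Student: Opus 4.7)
The plan is to reduce the computation to a $2\times 2$ problem, exploiting the fact that $A := |\psi\rangle\langle\psi| - |\phi\rangle\langle\phi|$ is Hermitian with rank at most $2$ and support contained in $\mathrm{span}(|\psi\rangle,|\phi\rangle)$. Since the trace norm depends only on the eigenvalues of $A$, and the eigenvalues outside this two-dimensional subspace are all zero, it suffices to diagonalize $A$ restricted to that subspace.

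To carry this out, I would first dispose of the trivial case $|\langle\psi|\phi\rangle|=1$, in which $|\phi\rangle\langle\phi|=|\psi\rangle\langle\psi|$ and both sides equal $0$. Otherwise, set $c:=\langle\psi|\phi\rangle$ and pick a unit vector $|\psi^{\perp}\rangle$, orthogonal to $|\psi\rangle$, and a global phase so that
\[
|\phi\rangle = c\,|\psi\rangle + s\,|\psi^{\perp}\rangle,\qquad s := \sqrt{1-|c|^{2}}\ge 0.
\]
In the orthonormal basis $\{|\psi\rangle,|\psi^{\perp}\rangle\}$, the operator $A$ is represented by the $2\times 2$ Hermitian matrix
\[
A = \begin{pmatrix} 1-|c|^{2} & -c s \\ -\bar{c} s & -s^{2} \end{pmatrix}.
\]

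Next, I would read off the two invariants of this matrix: $\operatorname{Tr}(A) = (1-|c|^{2}) - s^{2} = 0$ and $\det(A) = -s^{2}(1-|c|^{2}) - |c|^{2}s^{2} = -s^{2}$. A traceless $2\times 2$ Hermitian matrix with determinant $-s^{2}$ has eigenvalues $\pm s$. Plugging into the paper's definition of the trace norm gives
\[
\|A\|_{\operatorname{tr}} = \tfrac{1}{2}\bigl(|s|+|-s|\bigr) = s = \sqrt{1-|c|^{2}} = \sqrt{1-|\langle\psi|\phi\rangle|^{2}},
\]
which is the desired identity.

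There is no real obstacle here: the only subtlety is being careful that the phase freedom in choosing $|\psi^{\perp}\rangle$ is used to make $s$ real and nonnegative (so that the off-diagonal entries and the eigenvalue computation come out cleanly), and that one remembers the factor of $\tfrac{1}{2}$ in the paper's normalization of $\|\cdot\|_{\operatorname{tr}}$, which is exactly what turns $|s|+|s|=2s$ into $s$.
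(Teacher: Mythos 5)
Your proof is correct and complete. Note that the paper itself states Proposition \ref{puretd} as a standard background fact and gives no proof, so there is no paper argument to compare against; your reduction to the $2\times 2$ Hermitian matrix on $\mathrm{span}(|\psi\rangle,|\psi^\perp\rangle)$, the trace/determinant computation yielding eigenvalues $\pm\sqrt{1-|c|^2}$, and the correct use of the paper's $\tfrac{1}{2}\sum_i|\lambda_i|$ normalization are exactly the standard way to establish this identity, and you handle the degenerate case $|\langle\psi|\phi\rangle|=1$ and the phase choice making $s\ge 0$ cleanly.
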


\subsection{Additivity of Damage\label{DAMAGE}}

In this section, we prove the extremely useful fact that, if we apply quantum
operations to a quantum state $\rho$ in succession, then we can bound the
total damage caused to $\rho$\ in trace distance by the sum of the damages
that each operation \textit{would} cause were it applied to $\rho
$\ individually. \ This fact is related to the so-called \textquotedblleft
Quantum Union Bound\textquotedblright\ (see \cite{aar:qmaqpoly,wilde}), but
it's both simpler to state and easier to prove.

\begin{lemma}
\label{damagelem}Let $\rho$\ be a mixed state, and let $S$ be any quantum
operation. \ Suppose $\left\Vert \rho^{\prime}-\rho\right\Vert
_{\operatorname*{tr}}\leq\varepsilon$, and let $\left\Vert S\left(
\rho\right)  -\rho\right\Vert _{\operatorname*{tr}}\leq\delta$. \ Then
$\left\Vert S\left(  \rho^{\prime}\right)  -\rho\right\Vert
_{\operatorname*{tr}}\leq\varepsilon+\delta$.
\end{lemma}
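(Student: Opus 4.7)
The plan is to prove this by a single application of the triangle inequality for the trace norm, combined with the contractivity of the trace norm under quantum operations (Proposition \ref{contractive}) and the linearity of $S$.

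First I would write
\[
\|S(\rho') - \rho\|_{\operatorname{tr}} \;\leq\; \|S(\rho') - S(\rho)\|_{\operatorname{tr}} \;+\; \|S(\rho) - \rho\|_{\operatorname{tr}},
\]
using the triangle inequality (which is a special case of Proposition \ref{convextd}). The second term is at most $\delta$ directly from the hypothesis.

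For the first term I would use that quantum operations act linearly on Hermitian matrices, so that $S(\rho') - S(\rho) = S(\rho' - \rho)$, where $\rho' - \rho$ is Hermitian (as the difference of two mixed states). Then, applying Proposition \ref{contractive} to $\rho' - \rho$, we get
\[
\|S(\rho' - \rho)\|_{\operatorname{tr}} \;\leq\; \|\rho' - \rho\|_{\operatorname{tr}} \;\leq\; \varepsilon.
\]
Combining the two bounds yields $\|S(\rho') - \rho\|_{\operatorname{tr}} \leq \varepsilon + \delta$, as desired.

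I do not anticipate any real obstacle here; the only subtlety worth flagging is that Proposition \ref{contractive} must apply to Hermitian matrices that are not necessarily normalized mixed states (since $\rho' - \rho$ has trace zero and is generally not positive semidefinite), but that proposition is in fact stated for arbitrary Hermitian $A$, so no extra work is required.
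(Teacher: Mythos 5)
Your proof is correct and is essentially identical to the paper's: both decompose $S(\rho') - \rho$ as $S(\rho' - \rho) + (S(\rho) - \rho)$ (yours via triangle inequality first, then linearity; the paper's via linearity first, then triangle inequality, but the algebra is the same), and both invoke Proposition \ref{contractive} to bound $\|S(\rho'-\rho)\|_{\operatorname{tr}}$. Your remark that contractivity applies to arbitrary Hermitian $A$, not just density matrices, is the right thing to check and is exactly what the paper relies on.
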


\begin{proof}
We have%
\begin{align*}
\left\Vert S\left(  \rho^{\prime}\right)  -\rho\right\Vert
_{\operatorname*{tr}}  &  =\left\Vert S\left(  \rho+\left(  \rho^{\prime}%
-\rho\right)  \right)  -\rho\right\Vert _{\operatorname*{tr}}\\
&  =\left\Vert S\left(  \rho^{\prime}-\rho\right)  +\left(  S\left(
\rho\right)  -\rho\right)  \right\Vert _{\operatorname*{tr}}\\
&  \leq\left\Vert \rho^{\prime}-\rho\right\Vert _{\operatorname*{tr}%
}+\left\Vert S\left(  \rho\right)  -\rho\right\Vert _{\operatorname*{tr}}\\
&  \leq\varepsilon+\delta.
\end{align*}
Here the second line used the linearity of quantum operations, and the third
used the triangle inequality for trace distance as well as Proposition
\ref{contractive} (i.e., the fact that applying a quantum operation can never
increase the trace norm).
\end{proof}

Lemma \ref{damagelem}\ has the following immediate corollary.

\begin{corollary}
\label{damagecor}Let $\rho$ be a mixed state and let $S_{1},\ldots,S_{m}$ be
quantum operations. \ Suppose that for all $i$, we have%
\[
\left\Vert S_{i}\left(  \rho\right)  -\rho\right\Vert _{\operatorname*{tr}%
}\leq\varepsilon_{i}.
\]
Then%
\[
\left\Vert S_{m}\left(  S_{m-1}\left(  \cdots S_{1}\left(  \rho\right)
\right)  \right)  -\rho\right\Vert _{\operatorname*{tr}}\leq\varepsilon
_{1}+\cdots+\varepsilon_{m}.
\]

\end{corollary}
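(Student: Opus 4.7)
The plan is to deduce the corollary from Lemma \ref{damagelem} by a straightforward induction on $m$, the number of quantum operations applied. The base case $m=1$ is simply the hypothesis $\|S_1(\rho) - \rho\|_{\operatorname*{tr}} \leq \varepsilon_1$.

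For the inductive step, suppose the bound has been established after $m-1$ operations: writing $\rho' := S_{m-1}(S_{m-2}(\cdots S_1(\rho)))$, we have by the inductive hypothesis that $\|\rho' - \rho\|_{\operatorname*{tr}} \leq \varepsilon_1 + \cdots + \varepsilon_{m-1}$. Now I would apply Lemma \ref{damagelem} directly, taking $\rho$ as the reference state, $\rho'$ as the perturbed state (with the role of $\varepsilon$ played by $\varepsilon_1 + \cdots + \varepsilon_{m-1}$), and $S_m$ as the quantum operation (with the role of $\delta$ played by $\varepsilon_m$, which is valid by the hypothesis $\|S_m(\rho) - \rho\|_{\operatorname*{tr}} \leq \varepsilon_m$). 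The lemma then yields
\[
\|S_m(\rho') - \rho\|_{\operatorname*{tr}} \leq (\varepsilon_1 + \cdots + \varepsilon_{m-1}) + \varepsilon_m,
\]
which is exactly the desired conclusion $\|S_m(S_{m-1}(\cdots S_1(\rho))) - \rho\|_{\operatorname*{tr}} \leq \varepsilon_1 + \cdots + \varepsilon_m$.

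There is no real obstacle here: the entire content is packaged in Lemma \ref{damagelem}, whose proof already combined linearity of quantum operations, the triangle inequality for trace distance, and contractivity (Proposition \ref{contractive}). The point of the corollary is just to iterate that lemma. I note that it is \emph{not} necessary for the intermediate states $S_i(S_{i-1}(\cdots S_1(\rho)))$ to satisfy any bound against their own predecessors; we only ever compare back to the original $\rho$, and the hypotheses of the corollary are stated with respect to $\rho$ as well, so the induction threads through cleanly without accumulating any hidden error terms.
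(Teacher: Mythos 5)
Your proof is correct and follows essentially the same route as the paper's: induction on $m$, applying Lemma \ref{damagelem} at each step with the original $\rho$ as the fixed reference state. Your closing remark—that the hypotheses compare each $S_i(\rho)$ only to $\rho$, not to intermediate states—is exactly the observation that makes the induction thread through, and the paper's proof relies on it implicitly in the same way.
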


\begin{proof}
Suppose by induction on $m$ that%
\[
\left\Vert S_{m-1}\left(  S_{m-2}\left(  \cdots S_{1}\left(  \rho\right)
\right)  \right)  -\rho\right\Vert _{\operatorname*{tr}}\leq\varepsilon
_{1}+\cdots+\varepsilon_{m-1}.
\]
Then%
\begin{align*}
\left\Vert S_{m}\left(  S_{m-1}\left(  \cdots S_{1}\left(  \rho\right)
\right)  \right)  -\rho\right\Vert _{\operatorname*{tr}}  &  \leq\left\Vert
S_{m-1}\left(  S_{m-2}\left(  \cdots S_{1}\left(  \rho\right)  \right)
\right)  -\rho\right\Vert _{\operatorname*{tr}}+\left\Vert S_{m}\left(
\rho\right)  -\rho\right\Vert _{\operatorname*{tr}}\\
&  \leq\varepsilon_{1}+\cdots+\varepsilon_{m}.
\end{align*}
by Lemma \ref{damagelem}.
\end{proof}

Corollary \ref{damagecor} is the reason why \textquotedblleft gentleness
composes\textquotedblright: that is, applying an $\alpha_{1}$-gentle
measurement to a state $\rho$, followed by an $\alpha_{2}$-gentle measurement,
yields an overall $\left(  \alpha_{1}+\alpha_{2}\right)  $-gentle measurement.
\ By contrast, it's not clear to what extent DP composes in the quantum
setting, because of the interaction between the DP requirement and damage to
the state. \ For more about this issue see Appendix \ref{CANTCOMPOSE}.

Note that, by simply specializing Corollary \ref{damagecor} to diagonal $\rho
$\ and classical operations $S_{1},\ldots,S_{m}$, we obtain an analogous
statement for classical variation distance.

In Section \ref{SHADOW}, when we analyze our shadow tomography protocol, we'll
also need a lemma that upper-bounds the damage caused by a sequence of
measurements \textit{conditional on the measurements all accepting}---or
equivalently, by a sequence of quantum operations where we normalize the final
result. \ Fortunately, the formalism of quantum operations and trace norm can
accommodate this case as well.

\begin{lemma}
[Damage Lemma]\label{facepalmlemma}Let $\rho$ be a mixed state. \ For all
$i\in\left[  m\right]  $, let $S_{i}$\ be a quantum operation, which
\textquotedblleft accepts\textquotedblright\ a state $\sigma$\ with
probability $\operatorname{Tr}\left(  S_{i}\left(  \sigma\right)  \right)
>0$, and yields the post-measurement state $\frac{S_{i}\left(  \sigma\right)
}{\operatorname{Tr}\left(  S_{i}\left(  \sigma\right)  \right)  }$ when it
does. \ Suppose that for all $i\in\left[  m\right]  $, we have%
\[
\left\Vert \frac{S_{i}\left(  \rho\right)  }{\operatorname{Tr}\left(
S_{i}\left(  \rho\right)  \right)  }-\rho\right\Vert _{\operatorname*{tr}}%
\leq\varepsilon_{i}.
\]
Let $p_{i}:=\operatorname{Tr}\left(  S_{i}\left(  \rho\right)  \right)  $\ be
the probability that $S_{i}$\ accepts the \textquotedblleft
ideal\textquotedblright\ state $\rho$, and let%
\[
q_{i}:=\frac{\operatorname{Tr}\left(  S_{i}\left(  \cdots S_{1}\left(
\rho\right)  \right)  \right)  }{\operatorname{Tr}\left(  S_{i-1}\left(
\cdots S_{1}\left(  \rho\right)  \right)  \right)  }%
\]
be the probability that $S_{i}$ accepts the state that it actually receives,
if $S_{1},\ldots,S_{i-1}$\ are first applied to $\rho$\ and if we condition on
their accepting. \ Given any subset $T\subseteq\left[  m\right]  $, let%
\[
p_{T}:=%
{\displaystyle\prod\limits_{i\in T}}
p_{i},~~~~~q_{T}:=%
{\displaystyle\prod\limits_{i\in T}}
q_{i}.
\]
Then for all $T$,%
\[
\left\vert p_{T}-q_{T}\right\vert \leq\frac{\varepsilon_{1}+\cdots
+\varepsilon_{m}}{q_{\left[  m\right]  \setminus T}}.
\]
Also,%
\[
\left\Vert \frac{S_{m}\left(  \cdots S_{1}\left(  \rho\right)  \right)
}{\operatorname{Tr}\left(  S_{m}\left(  \cdots S_{1}\left(  \rho\right)
\right)  \right)  }-\rho\right\Vert _{\operatorname*{tr}}\leq\frac
{2}{q_{\left[  m\right]  }}\left(  \varepsilon_{1}+\cdots+\varepsilon
_{m}\right)  .
\]

\end{lemma}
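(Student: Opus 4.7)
The main idea is to analyze the \emph{unnormalized} post-measurement state $\tilde{\rho}_i := S_i(\cdots S_1(\rho))$, which has trace $q_{[i]}$, and compare it to the scaled ``ideal'' state $p_{[i]}\rho$, where $p_{[i]} := \prod_{j\leq i} p_j$. Working with the unnormalized state converts a delicate normalization question into a clean linear damage analysis, from which both stated inequalities can be extracted. This mirrors the strategy of Lemma~\ref{damagelem} but tracks subnormalization explicitly.

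The core step is an induction on $i$ establishing that
$$\left\|\tilde{\rho}_i - p_{[i]}\rho\right\|_{\operatorname*{tr}} \;\leq\; \sum_{j=1}^i p_{[j]}\, \varepsilon_j \;\leq\; \sum_{j=1}^i \varepsilon_j.$$
The inductive step is driven by the algebraic identity
$$\tilde{\rho}_i - p_{[i]}\rho \;=\; S_i\bigl(\tilde{\rho}_{i-1} - p_{[i-1]}\rho\bigr) \;+\; p_{[i-1]}\bigl(S_i(\rho) - p_i\rho\bigr),$$
after which the triangle inequality (Proposition~\ref{convextd}), contractivity of trace norm under quantum operations (Proposition~\ref{contractive}), and the hypothesis rewritten as $\|S_i(\rho) - p_i\rho\|_{\operatorname*{tr}} = p_i\varepsilon_i$ close the induction.

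From the core bound, the $T=[m]$ case of the first inequality follows by taking the trace of both sides: since $q_{[m]} - p_{[m]} = \operatorname{Tr}(\tilde{\rho}_m - p_{[m]}\rho)$ and $|\operatorname{Tr}(X)|$ is controlled by $\|X\|_{\operatorname*{tr}}$ for Hermitian $X$, we get $|q_{[m]}-p_{[m]}| = O(\sum\varepsilon_j)$. The second inequality then follows from the decomposition
$$\rho^{(m)} - \rho \;=\; \frac{\tilde{\rho}_m - q_{[m]}\rho}{q_{[m]}} \;=\; \frac{(\tilde{\rho}_m - p_{[m]}\rho) + (p_{[m]} - q_{[m]})\rho}{q_{[m]}},$$
where the first numerator term is bounded by the core induction and the second by the $T=[m]$ case just established; the triangle inequality and $\|\rho\|_{\operatorname*{tr}} = \tfrac{1}{2}$ combine to produce the constant $2$ in the stated bound. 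For the first inequality with general $T$, I would use a hybrid argument, interpolating between $q_{[m]} = q_T\, q_{[m]\setminus T}$ and $p_T\, q_{[m]\setminus T}$ by swapping one factor $q_i$ for $p_i$ at a time as $i$ ranges over $T$, bounding each swap by applying the core-induction template to an appropriate sub-sequence of the operations $\{S_j\}$.

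\textbf{Main obstacle.} The trickiest step will be the general-$T$ case of the first inequality. The quantity $q_{[m]\setminus T}$ is only a formal product of conditional probabilities defined with respect to the full sequence $S_1,\ldots,S_m$, so it lacks a clean probabilistic interpretation as the acceptance probability of some ``reduced'' experiment (which would allow directly reusing the core induction). The hybrid argument must therefore carefully account for how damage accumulates as factors are swapped, ensuring the total error stays $O(\sum\varepsilon_j)$ rather than scaling multiplicatively with $|T|$; a secondary bookkeeping difficulty is tracking the factors of $\tfrac{1}{2}$ arising from the paper's convention $\|\rho\|_{\operatorname*{tr}} = \tfrac{1}{2}$, so as to land on the precise constants stated.
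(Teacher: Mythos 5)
Your core induction for the $T=[m]$ case is essentially identical to what the paper does: it expands $\tilde\rho_i - p_{[i]}\rho$ recursively using $S_i(\rho) = p_i(\rho + E_i)$ with $E_i := S_i(\rho)/p_i - \rho$ (so $\|E_i\|_{\operatorname*{tr}}\leq\varepsilon_i$), and the triangle inequality plus contractivity close the induction. Your derivation of the second displayed inequality from the core bound is also the paper's argument. One small warning on the constants: you write that ``$|\operatorname{Tr}(X)|$ is controlled by $\|X\|_{\operatorname*{tr}}$,'' but with the paper's convention $\|X\|_{\operatorname*{tr}} = \frac{1}{2}\sum|\lambda_i|$ the generic bound is $|\operatorname{Tr}(X)|\leq 2\|X\|_{\operatorname*{tr}}$, which would cost you a factor of $2$ on $|p_{[m]}-q_{[m]}|$. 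The tight constant is recovered by exploiting the structure of the error terms: each $E_j$ is \emph{traceless}, so writing $E_j = E_j^+ - E_j^-$ with $\operatorname{Tr}(E_j^\pm)=\|E_j\|_{\operatorname*{tr}}$ and using that quantum operations are positivity- and trace-non-increasing gives $|\operatorname{Tr}(S_m(\cdots S_{j+1}(E_j)))|\leq\|E_j\|_{\operatorname*{tr}}\leq\varepsilon_j$ with no extra factor of $2$.

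The real gap is exactly where you flag it: the general-$T$ case. The hybrid argument you sketch will not close with the stated bound. The natural estimate on a single swap is $|q_{t_j}-p_{t_j}|\lesssim \sum_{i<t_j}\varepsilon_i\big/q_{[t_j-1]}$ (coming from the second inequality applied to the prefix), and the factors multiplying it in the hybrid chain are all at most $1$, so after summing the $|T|$ swaps you land on something like $\sum_j \sum_{i<t_j}\varepsilon_i\big/q_{[t_j-1]}$. This has the wrong denominator ($q_{[t_j-1]}$ can be much smaller than $q_{[m]\setminus T}$, since it includes the $q_i$'s with $i\in T$), and it also picks up an extra factor of $|T|$. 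The paper avoids any per-index comparison of $q_{t_j}$ to $p_{t_j}$: it defines modified operations $S_i' := S_i$ for $i\in T$ and $S_i' := S_i/q_i$ for $i\notin T$, so that by construction $\operatorname{Tr}(S_m'(\cdots S_1'(\rho))) = q_T$, and then redoes the \emph{single} expansion from your core induction on the primed chain. The non-contractive steps $S_i/q_i$ ($i\notin T$) each cost at most a factor $1/q_i$ on the trace norm of the error terms, which aggregates to the single overall factor $1/q_{[m]\setminus T}$, and taking the trace (using the traceless-$E_j$ observation above) then bounds $|q_T - p_T|$ directly. The moral is that the $1/q_{[m]\setminus T}$ in the denominator is a worst-case loss of contractivity along the whole chain, not a sum of per-step probability discrepancies, and you cannot recover it by swapping scalar factors one at a time.
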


\begin{proof}
For all $i\in\left[  m\right]  $, let%
\[
E_{i}:=\frac{S_{i}\left(  \rho\right)  }{\operatorname{Tr}\left(  S_{i}\left(
\rho\right)  \right)  }-\rho.
\]
Then by hypothesis, $\left\Vert E_{i}\right\Vert _{\operatorname*{tr}}%
\leq\varepsilon_{i}$. \ Also,%
\[
S_{i}\left(  \rho\right)  =p_{i}\left(  \rho+E_{i}\right)  .
\]
We can now write:%
\begin{align*}
S_{1}\left(  \rho\right)   &  =p_{1}\left(  \rho+E_{1}\right)  ,\\
S_{2}\left(  S_{1}\left(  \rho\right)  \right)   &  =p_{1}\left(  S_{2}\left(
\rho\right)  +S_{2}\left(  E_{1}\right)  \right) \\
&  =p_{1}\left(  p_{2}\left(  \rho+E_{2}\right)  +S_{2}\left(  E_{1}\right)
\right)  ,\\
S_{3}\left(  S_{2}\left(  S_{1}\left(  \rho\right)  \right)  \right)   &
=p_{1}\left(  p_{2}\left(  S_{3}\left(  \rho\right)  +S_{3}\left(
E_{2}\right)  \right)  +S_{3}\left(  S_{2}\left(  E_{1}\right)  \right)
\right) \\
&  =p_{1}\left(  p_{2}\left(  p_{3}\left(  \rho+E_{3}\right)  +S_{3}\left(
E_{2}\right)  \right)  +S_{3}\left(  S_{2}\left(  E_{1}\right)  \right)
\right)  ,
\end{align*}
and so on until%
\begin{align*}
S_{m}\left(  \cdots S_{1}\left(  \rho\right)  \right)  =  &  ~p_{\left[
m\right]  }\rho+p_{\left[  m\right]  }E_{m}+p_{\left[  m-1\right]  }%
S_{m}\left(  E_{m-1}\right) \\
&  +p_{\left[  m-2\right]  }S_{m}\left(  S_{m-1}\left(  E_{m-2}\right)
\right)  +\cdots+p_{\left[  1\right]  }S_{m}\left(  \cdots S_{2}\left(
E_{1}\right)  \right)  .
\end{align*}
More generally, suppose we define%
\[
S_{i}^{\prime}\left(  \sigma\right)  :=\left\{
\begin{tabular}
[c]{ll}%
$S_{i}\left(  \sigma\right)  $ & if $i\in T$\\
$\frac{S_{i}\left(  \sigma\right)  }{q_{i}}$ & otherwise,
\end{tabular}
\ \right.
\]
so that%
\[
\operatorname{Tr}\left(  S_{m}^{\prime}\left(  \cdots S_{1}^{\prime}\left(
\rho\right)  \right)  \right)  =\frac{q_{1}\cdots q_{m}}{\prod_{i\notin
T}q_{i}}=q_{T}%
\]
is just the probability that $S_{i}$\ accepts for all $i\in T$, if
$S_{1},\ldots,S_{m}$ are applied in sequence. \ Then repeating the
manipulations above gives us the following modified equation, in which all the
products of $p_{i}$'s\ are restricted to range only over $i\in T$:%
\begin{align*}
S_{m}^{\prime}\left(  \cdots S_{1}^{\prime}\left(  \rho\right)  \right)  =  &
~p_{T}\left(  \rho+E_{m}\right)  +p_{T\cap\left[  m-1\right]  }S_{m}^{\prime
}\left(  E_{m-1}\right) \\
&  +p_{T\cap\left[  m-2\right]  }S_{m}^{\prime}\left(  S_{m-1}^{\prime}\left(
E_{m-2}\right)  \right)  +\cdots+p_{T\cap\left[  1\right]  }S_{m}^{\prime
}\left(  \cdots S_{2}^{\prime}\left(  E_{1}\right)  \right)  .
\end{align*}
Hence%
\begin{align*}
\left\Vert S_{m}^{\prime}\left(  \cdots S_{1}^{\prime}\left(  \rho\right)
\right)  -p_{T}\rho\right\Vert _{\operatorname*{tr}}\leq &  ~p_{T}\left\Vert
E_{m}\right\Vert _{\operatorname*{tr}}+p_{T\cap\left[  m-1\right]  }\left\Vert
S_{m}^{\prime}\left(  E_{m-1}\right)  \right\Vert _{\operatorname*{tr}}%
+\cdots+p_{T\cap\left[  1\right]  }\left\Vert S_{m}^{\prime}\left(  \cdots
S_{2}^{\prime}\left(  E_{1}\right)  \right)  \right\Vert _{\operatorname*{tr}%
}\\
\leq &  \left\Vert E_{m}\right\Vert _{\operatorname*{tr}}+\left\Vert
S_{m}^{\prime}\left(  E_{m-1}\right)  \right\Vert _{\operatorname*{tr}}%
+\cdots+\left\Vert S_{m}^{\prime}\left(  \cdots S_{2}^{\prime}\left(
E_{1}\right)  \right)  \right\Vert _{\operatorname*{tr}}\\
\leq &  \varepsilon_{m}+\frac{\varepsilon_{m-1}}{q_{\left\{  m\right\}
\setminus T}}+\cdots+\frac{\varepsilon_{1}}{q_{\left\{  2,\ldots,m\right\}
\setminus T}}\\
\leq &  \frac{\varepsilon_{1}+\cdots+\varepsilon_{m}}{q_{\left[  m\right]
\setminus T}},
\end{align*}
where the second line used the fact that all the products of $p_{i}$'s are
upper-bounded by $1$. \ This means that%
\[
\left\vert q_{T}-p_{T}\right\vert =\left\vert \operatorname{Tr}\left(
S_{m}^{\prime}\left(  \cdots S_{1}^{\prime}\left(  \rho\right)  \right)
\right)  -p_{T}\right\vert \leq\frac{\varepsilon_{1}+\cdots+\varepsilon_{m}%
}{q_{\left[  m\right]  \setminus T}},
\]
thereby proving the first part of the lemma.

For the second part, let us take the special case $T=\left[  m\right]  $.
\ Then $q_{\left[  m\right]  \setminus T}=1$, and the inequalities above
reduce to%
\begin{align*}
\left\Vert S_{m}\left(  \cdots S_{1}\left(  \rho\right)  \right)  -p_{\left[
m\right]  }\rho\right\Vert _{\operatorname*{tr}}  &  \leq\varepsilon
_{1}+\cdots+\varepsilon_{m},\\
\left\vert q_{\left[  m\right]  }-p_{\left[  m\right]  }\right\vert  &
\leq\varepsilon_{1}+\cdots+\varepsilon_{m}.
\end{align*}
So the triangle inequality gives%
\begin{align*}
\left\Vert S_{m}\left(  \cdots S_{1}\left(  \rho\right)  \right)  -q_{\left[
m\right]  }\rho\right\Vert _{\operatorname*{tr}}  &  \leq\left\Vert
S_{m}\left(  \cdots S_{1}\left(  \rho\right)  \right)  -p_{\left[  m\right]
}\rho\right\Vert _{\operatorname*{tr}}+\left\Vert p_{\left[  m\right]  }%
\rho-q_{\left[  m\right]  }\rho\right\Vert _{\operatorname*{tr}}\\
&  \leq\varepsilon_{1}+\cdots+\varepsilon_{m}+\left\vert p_{\left[  m\right]
}-q_{\left[  m\right]  }\right\vert \\
&  \leq2\left(  \varepsilon_{1}+\cdots+\varepsilon_{m}\right)  .
\end{align*}
Hence%
\[
\left\Vert \frac{S_{m}\left(  \cdots S_{1}\left(  \rho\right)  \right)
}{q_{\left[  m\right]  }}-\rho\right\Vert _{\operatorname*{tr}}\leq\frac
{2}{q_{\left[  m\right]  }}\left(  \varepsilon_{1}+\cdots+\varepsilon
_{m}\right)  .
\]

\end{proof}

As we'll show in Appendix \ref{CANTCOMPOSE}, Lemma \ref{facepalmlemma} implies
a limited sort of composition for quantum DP algorithms. \ Namely, we can
sequentially compose $k$ quantum DP algorithms and have the result remain
accurate and DP, so long as the total damage incurred to the quantum state (in
trace distance) is always small compared to the joint probability of the
observed outcomes $y_{1},\ldots,y_{k}$. \ We can sometimes ensure the latter
property, in turn, by using our main result, the connection between DP and gentleness.

Note that we can combine Lemmas \ref{facepalmlemma} and \ref{damagelem}, to
say that, even if we apply a final superoperator $S_{m+1}$\ \textit{after}
applying the quantum operations $S_{1},\ldots,S_{m}$\ and then conditioning on
their results, the total damage to our initial state $\rho$\ is at most
$\left\Vert S_{m+1}\left(  \rho\right)  -\rho\right\Vert _{\operatorname*{tr}%
}$\ plus the damage bound from Lemma \ref{facepalmlemma}. \ (This wouldn't be
true if we'd composed in the opposite order, since conditioning could amplify
earlier damage to $\rho$\ by an $O\left(  \frac{1}{p}\right)  $\ factor.)
\ This fact will also be used in Section \ref{SHADOW}.

\subsection{Pure vs. Mixed States\label{PUREMIXED}}

We now prove two propositions to show that, when considering differential
privacy and gentle measurements, we can restrict attention to pure states
without loss of generality; our conclusions will then automatically carry over
to mixed states.

\begin{proposition}
\label{mixeddp}If $M$ is $\varepsilon$-DP on pure product states,\ then $M$ is
$\varepsilon$-DP on mixed product states as well. \ Likewise, if $M$ is
$\varepsilon$-DP\ on all pure states, then $M$ is $\varepsilon$-DP\ on all
mixed states.
\end{proposition}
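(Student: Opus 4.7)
I would exploit the linearity of $\Pr[M(\rho)=y]=\operatorname{Tr}(E_y\rho)$ in the input state, combined with convex decompositions of mixed states into pure states. The key step is to produce, for any mixed neighbors $\rho$ and $\sigma$, pure-state decompositions $\rho=\sum_a p_a|\Phi_a\rangle\langle\Phi_a|$ and $\sigma=\sum_a p_a|\Psi_a\rangle\langle\Psi_a|$ with \emph{identical} weights $\{p_a\}$ (possibly with repeated components) such that each matched pair $(|\Phi_a\rangle,|\Psi_a\rangle)$ is itself a pair of pure-state neighbors. Given such matched decompositions, applying pure-state DP pairwise gives $\Pr[M(|\Phi_a\rangle)=y]\le e^\varepsilon\Pr[M(|\Psi_a\rangle)=y]$, and averaging with weights $p_a$ yields $\Pr[M(\rho)=y]\le e^\varepsilon\Pr[M(\sigma)=y]$.

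\textbf{Product case.} Write $\rho=\rho_1\otimes\cdots\otimes\rho_n$ and its neighbor $\sigma=\sigma_1\otimes\rho_2\otimes\cdots\otimes\rho_n$, assumed to differ only on register $1$. Decompose each single-register state into pure states: $\rho_i=\sum_{a_i}p_{i,a_i}|\psi_{i,a_i}\rangle\langle\psi_{i,a_i}|$ and $\sigma_1=\sum_b q_b|\phi_b\rangle\langle\phi_b|$. For each fixed tail index $(a_2,\ldots,a_n)$, the pure product states $|\psi_{1,a_1}\rangle\otimes\bigotimes_{i\ge 2}|\psi_{i,a_i}\rangle$ and $|\phi_b\rangle\otimes\bigotimes_{i\ge 2}|\psi_{i,a_i}\rangle$ share identical tail factors, so every such pair is a pair of pure-product neighbors. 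The pure-product DP hypothesis applied to each pair, together with the identities $\sum_b q_b=\sum_{a_1}p_{1,a_1}=1$, upgrades the pairwise inequalities to the averaged fiber inequality $\sum_{a_1}p_{1,a_1}\Pr[M(|\psi_{1,a_1}\rangle\otimes\cdots)=y]\le e^\varepsilon\sum_b q_b\Pr[M(|\phi_b\rangle\otimes\cdots)=y]$, and finally summing over the tails with weights $\prod_{i\ge 2}p_{i,a_i}$ gives the conclusion.

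\textbf{General case and obstacle.} The same convex-averaging strategy is what I would try, but producing the matched decomposition is substantially more delicate for entangled mixed states. Given $\sigma=S(\rho)$ with $S$ a single-register superoperator, my plan is to use a Stinespring dilation $U$ of $S$ on registers $\{1,B\}$ together with a purification $|\Xi\rangle$ of $\rho$ on $\{1,\ldots,n,A\}$: then $(U\otimes I)(|\Xi\rangle\otimes|0\rangle_B)$ is a purification of $\sigma$ on $\{1,\ldots,n,A,B\}$, and expanding both purifications in a common orthonormal basis of the combined ancilla $A\otimes B$ produces decompositions of $\rho$ and $\sigma$ with matched weights. The Hughston--Jozsa--Wootters freedom in choosing this basis should allow one to align the components so that each matched pair has equal reduced state on registers $2,\ldots,n$, qualifying them as pure-state neighbors. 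The main difficulty is precisely this alignment: in the product case it is automatic from the tensor structure, whereas for genuinely entangled $\rho,\sigma$ it requires careful use of the Stinespring/Uhlmann machinery, and this is where I expect the bulk of the work to lie.
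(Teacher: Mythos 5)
Your argument for the first part (the product-state case) is correct, and it is a genuine, valid alternative to the paper's own proof of that part. The paper instead argues by optimization: it notes that $\Pr[M(\rho)=y]=\operatorname{Tr}(E_y\rho)$ is affine in each register, so the ratio $\operatorname{Tr}(E_y\rho)/\operatorname{Tr}(E_y\sigma)$---a quotient of two affine functionals---is extremized over the convex body of density operators on any one register at an extreme point, i.e.\ a pure state; iterating register by register puts every factor into pure form while preserving the neighbor relation. Your explicit decomposition-and-averaging argument reaches the same conclusion and is perfectly fine. One small remark: the ``matched weights'' framing in your opening plan is stronger than what you actually use; your product-case argument never needs a bijection with equal weights, only that for each $(a_1,b)$ and each tail the pair is a pure-product neighbor, combined with $\sum_{a_1}p_{1,a_1}=\sum_b q_b=1$.

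For the second part (arbitrary states), however, your proposed route has a concrete gap, and I would steer you away from it rather than ask you to fill in the details. Two problems arise. First, the weights will not match: if $|\Xi\rangle\otimes|0\rangle_B$ and $|\Xi'\rangle=(U_{1B}\otimes I)(|\Xi\rangle\otimes|0\rangle_B)$ are purifications of $\rho$ and $\sigma$, expanding both in a common orthonormal basis of the ancilla $A\otimes B$ gives pure-state decompositions whose weights are governed by the Schmidt spectra of these two vectors across the cut $\{1,\dots,n\}\,|\,AB$. Because $U_{1B}$ acts partly on register $1$ (on the ``system'' side of the cut), it changes those Schmidt coefficients; indeed $\rho$ and $\sigma$ generally have different spectra, so no choice of ancilla basis can equalize the weights. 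Second, and more fundamentally, the alignment you hope for is impossible in general: two pure $n$-register states are neighbors with respect to register $1$ precisely when their reduced states on registers $2,\dots,n$ coincide (one direction because trace-preserving maps on register $1$ preserve $\operatorname{Tr}_1$, the other by the Hughston--Jozsa--Wootters theorem, which makes equal-$\operatorname{Tr}_1$ pure states differ by a unitary on register $1$). But if $B_j$ is a non-unitary Kraus operator of $S$ acting on register $1$, then the normalized component $B_j|\Phi\rangle/\lVert B_j\Phi\rVert$ has reduced state $\operatorname{Tr}_1\bigl((B_j^{\dagger}B_j\otimes I)\,|\Phi\rangle\langle\Phi|\bigr)/\lVert B_j\Phi\rVert^2$ on registers $2,\dots,n$, which need not equal $\operatorname{Tr}_1(|\Phi\rangle\langle\Phi|)$. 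So the pure pieces of $\sigma$ simply fail to be neighbors of the pure pieces of $\rho$, and no amount of HJW freedom can repair this, since HJW only rotates among decompositions of a \emph{fixed} density operator and cannot change which reduced states the components have. The paper sidesteps all of this entirely: for the second part it again uses only that $\Pr[M(\cdot)=y]$ is linear and that the density operators form a convex set with pure extreme points, maximizing the numerator and minimizing the denominator directly rather than attempting to pair off pure components. The lesson is that the decomposition-matching strategy, while natural and adequate for the product case, is the wrong tool for the general case; the linearity/extreme-point argument is both shorter and free of the obstruction you correctly sensed.
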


\begin{proof}
Suppose we seek to maximize the ratio%
\[
\frac{\Pr\left[  M\left(  \rho\right)  =y\right]  }{\Pr\left[  M\left(
\sigma\right)  =y\right]  }%
\]
over product states $\rho=\rho_{1}\otimes\cdots\otimes\rho_{n}$\ and
$\sigma=\sigma_{1}\otimes\cdots\otimes\sigma_{n}$\ that differ only on the
$i^{th}$\ register. \ Then holding the other $n-1$\ registers fixed, we're
maximizing over $\rho_{i}$\ and minimizing over $\sigma_{i}$. \ By convexity,
the maximum and minimum will both always be achieved by pure states. \ A
second appeal to convexity then shows that the maximum ratio is also achieved
when the other $n-1$\ registers\ are set to pure states as well.

For the second part, the argument is the same, except that we simply maximize
$\Pr\left[  M\left(  \rho\right)  =y\right]  $\ over \textit{all} $\rho$, and
minimize $\Pr\left[  M\left(  \sigma\right)  =y\right]  $\ over all $\sigma$.
\end{proof}

\begin{proposition}
\label{mixedgentle}If the measurement $M$ is $\alpha$-gentle on pure product
states, then $M$ is $\alpha$-gentle on mixed product states as well.
\ Likewise, if $M$ is $\alpha$-gentle\ on all pure states, then $M$ is
$\alpha$-gentle\ on all mixed states.
\end{proposition}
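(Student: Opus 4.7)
The plan is to reduce the mixed case to the pure case via a convex decomposition. For the product-state half, fix an implementation $\{B_y\}$ of $M$ that witnesses $\alpha$-gentleness on pure product states, and let $\rho = \rho_1 \otimes \cdots \otimes \rho_n$ be a mixed product state. Spectrally decompose each factor, $\rho_i = \sum_k \lambda_{i,k} |v_{i,k}\rangle\langle v_{i,k}|$, and take tensor products to write $\rho = \sum_K \lambda_K |V_K\rangle\langle V_K|$ as a convex combination of pairwise orthogonal pure product states $|V_K\rangle = \bigotimes_i |v_{i,k_i}\rangle$ with $\lambda_K = \prod_i \lambda_{i,k_i}$. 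For the ``all states'' half, drop the tensor structure and simply spectrally decompose $\rho$ into an orthonormal mixture of pure states.

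Writing $p_K := \|B_y|V_K\rangle\|^2$ and $p := \sum_K \lambda_K p_K = \operatorname{Tr}(B_y^\dagger B_y \rho)$, linearity gives $\rho_{M \to y} = B_y \rho B_y^\dagger/p = \sum_K (\lambda_K p_K/p)\,(|V_K\rangle\langle V_K|)_{M \to y}$. I would then split
\[
\rho_{M \to y} - \rho \;=\; \sum_K \frac{\lambda_K p_K}{p}\Bigl[(|V_K\rangle\langle V_K|)_{M \to y} - |V_K\rangle\langle V_K|\Bigr] \;+\; \sum_K \frac{\lambda_K(p_K - p)}{p}\,|V_K\rangle\langle V_K|.
\]
The first sum is a convex combination of operators each bounded in trace norm by $\alpha$ (by the hypothesis of pure $\alpha$-gentleness), so by convexity of the trace norm (Proposition~\ref{convextd}) its trace norm is at most $\alpha$. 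The goal is then to show that the full expression is bounded by $\alpha$ as well.

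The main obstacle will be the second ``Bayesian reweighting'' term, which arises because the mixture weights for $\rho$ (namely $\lambda_K$) differ from those for $\rho_{M \to y}$ (namely $\lambda_K p_K/p$) once one conditions on outcome $y$. A na\"{\i}ve triangle inequality would add an unwanted $\frac{1}{2p}\sum_K \lambda_K |p_K - p|$, which need not vanish. The real work is either to exhibit genuine cancellation between the two sums---perhaps by repackaging them into a single convex combination of signed pure-state differences before taking the trace norm---or to use the pure-gentleness hypothesis itself to bound the spread of the $p_K$ tightly enough that the reweighting is absorbed into $\alpha$. An alternative route is to purify each register, $\rho_i = \operatorname{Tr}_{C_i}|\Psi_i\rangle\langle\Psi_i|$, apply $M \otimes I_C$ to the pure product state $|\Psi\rangle = \bigotimes_i |\Psi_i\rangle$, and invoke contractivity under partial trace (Proposition~\ref{contractive}); the obstacle reappears there as having to show that $M \otimes I_C$ inherits gentleness from $M$ on the enlarged pure product states.
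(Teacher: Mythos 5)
Your alternative route---purify each register, apply the pure-product hypothesis to the enlarged pure product state, then trace out the ancillas and invoke contractivity (Proposition~\ref{contractive})---is exactly the paper's proof. The direct-decomposition route you try first is \emph{not} the paper's, and you are correct that it stalls: the ``Bayesian reweighting'' sum $\sum_K \lambda_K(p_K-p)/p\,|V_K\rangle\langle V_K|$ has trace norm $\tfrac{1}{2p}\sum_K\lambda_K|p_K-p|$, which measures the spread of the conditional acceptance probabilities $p_K$ and is not controlled by pure-state gentleness alone.

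What you flag as an ``obstacle'' in the purification route deserves more weight than you give it, because it is precisely the step that the paper's proof elides. The paper writes ``By assumption, after we apply $M$ and condition on some outcome $y$ \ldots $\|\,|\psi_y\rangle\langle\psi_y| - |\psi\rangle\langle\psi|\,\|_{\operatorname{tr}}\le\alpha$'' where $|\psi\rangle=|\psi_1\rangle\otimes\cdots\otimes|\psi_n\rangle$ is the purified state. But $|\psi\rangle$ is a pure product state of the \emph{enlarged} registers, acted on by $\{B_y\otimes I_{A}\}$, whereas the stated hypothesis concerns pure product states of the original $n$ $d$-dimensional registers acted on by $\{B_y\}$. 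These are not the same condition. Indeed, using Proposition~\ref{puretd} and $\langle\psi|\psi_y\rangle=\operatorname{Tr}(B_y\rho)/\sqrt{\operatorname{Tr}(B_y^{\dagger}B_y\rho)}$ (which depends only on the reduced state $\rho=\operatorname{Tr}_A|\psi\rangle\langle\psi|$), $\alpha$-gentleness of $\{B_y\otimes I_A\}$ on enlarged pure product states is equivalent to $|\operatorname{Tr}(B_y\rho)|^2\ge(1-\alpha^2)\operatorname{Tr}(B_y^{\dagger}B_y\rho)$ for all \emph{mixed} product $\rho$, while the hypothesis delivers this only for pure product $\rho$; the pure-to-mixed lift is exactly what one is trying to prove. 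So both of your routes end at the same missing step, and the paper's ``by assumption'' papers over it rather than closing it. In the paper's actual applications (Lemmas~\ref{dpgentleprojective} and~\ref{dpgentlepovm}), the KL-divergence argument does in fact yield gentleness on pure product states of arbitrary register dimension, which is the stronger form of the hypothesis that makes the purification step legitimate; but that stronger form is neither what the proposition states nor what its proof invokes.
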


\begin{proof}
Fix an implementation of $M$; the same implementation that achieves gentleness
on pure states will also achieve gentleness on mixed states.

Suppose we apply $M$ to the product state $\rho=\rho_{1}\otimes\cdots
\otimes\rho_{n}$. \ As a first step, we can purify $\rho_{1},\ldots,\rho_{n}%
$\ to $\left\vert \psi_{1}\right\rangle ,\ldots,\left\vert \psi_{n}%
\right\rangle $\ respectively by adding registers to them. \ Then $M$ can be
seen as acting on the pure state $\left\vert \psi\right\rangle =\left\vert
\psi_{1}\right\rangle \otimes\cdots\otimes\left\vert \psi_{n}\right\rangle $,
and simply ignoring these added registers. \ By assumption, after we apply $M$
and condition on some outcome $y$, we're left with a post-measurement state
$\left\vert \psi_{y}\right\rangle $\ such that%
\[
\left\Vert |\psi_{y}\rangle\langle\psi_{y}|-|\psi\rangle\langle\psi
|\right\Vert _{\operatorname*{tr}}\leq\alpha.
\]
Likewise, let $\rho_{y}$\ be the post-measurement state if we apply $M$ to
$\rho$\ and then condition on outcome $y$. \ Observe that $\rho_{y}$\ is
\textit{also} the mixed state obtained by starting from $\left\vert \psi
_{y}\right\rangle $\ and then tracing out the added registers. \ So by
Proposition \ref{contractive}, we have $\left\Vert \rho_{y}-\rho\right\Vert
_{\operatorname*{tr}}\leq\alpha$\ as well.

For the second part, the argument is the same, except that we purify $\rho
$\ as a whole rather than $\rho_{1},\ldots,\rho_{n}$\ separately.
\end{proof}

\section{Basic Relations Among DP, Gentleness, and\ Triviality\label{BASIC}}

In this section, we prove our first connection between the differential
privacy and the gentleness of quantum measurements:

\begin{theorem}
\label{warmup}If a measurement $M$ is $\varepsilon$-DP on all states,\ then
$M$ is $O\left(  \varepsilon n\right)  $-gentle on all states. \ Conversely,
if $M$ is $\alpha$-gentle on all states for $\alpha\leq\frac{1}{4.01}$, then
$M$\ is $O\left(  \alpha\right)  $-DP\ on all states.
\end{theorem}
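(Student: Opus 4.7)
The strategy is to show that $\varepsilon$-DP on all states forces each POVM element $E_{y}$ to be close to a scalar multiple of the identity, whereupon the square-root implementation $B_{y}:=\sqrt{E_{y}}$ gives an $O(\varepsilon n)$-gentle implementation. Any two $n$-register states $\rho,\tau$ can be joined by a chain of $n$ single-register-superoperator steps (replace register $1$, then $2$, etc.), so applying the DP hypothesis along this chain yields $\Pr[M(\rho)=y]\le e^{\varepsilon n}\Pr[M(\tau)=y]$ for \emph{all} $\rho,\tau$, not just neighbors. Setting $p_{y}:=\operatorname{Tr}(E_{y})/d$, this forces every eigenvalue of $E_{y}$ to lie in $[e^{-\varepsilon n}p_{y},\,e^{\varepsilon n}p_{y}]$, so we may write $E_{y}=p_{y}\mathbb{I}+\Delta_{y}$ with $\|\Delta_{y}\|_{\mathrm{op}}=O(\varepsilon n)\cdot p_{y}$ in the only nontrivial regime $\varepsilon n\ll1$. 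A Taylor expansion then gives $\sqrt{E_{y}}=\sqrt{p_{y}}\bigl(\mathbb{I}+\Delta_{y}/(2p_{y})+O((\varepsilon n)^{2})\bigr)$; applied to a pure state $|\psi\rangle$ and invoking Proposition~\ref{puretd}, the post-measurement trace distance is $O(\varepsilon n)$, and Proposition~\ref{mixedgentle} extends this to mixed states.

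\textbf{Plan for part~(b): $\alpha$-gentle implies $O(\alpha)$-DP, when $\alpha\le1/4.01$.} I would exploit the ``all states'' hypothesis by probing the spectrum of $E_{y}$ directly with its own eigenvectors, which are valid test states even though they need not be neighbors of one another. Fix an outcome $y$; let $\lambda_{+}\ge\lambda_{-}$ be the largest and smallest eigenvalues of $E_{y}$, with orthonormal eigenvectors $|v_{+}\rangle,|v_{-}\rangle$. Set $\rho:=|v_{+}\rangle\!\langle v_{+}|$, $\sigma:=|v_{-}\rangle\!\langle v_{-}|$, and $\tau:=(\rho+\sigma)/2$. For any $\alpha$-gentle implementation $B_{y}$ of $M$'s $y$-branch, linearity of quantum operations gives
\[
\tau_{y}\;=\;\frac{B_{y}\tau B_{y}^{\dagger}}{\operatorname{Tr}(B_{y}\tau B_{y}^{\dagger})}\;=\;p\,\rho_{y}+(1-p)\,\sigma_{y},\qquad p\;:=\;\frac{\lambda_{+}}{\lambda_{+}+\lambda_{-}}.
\]
Applying $\alpha$-gentleness separately to $\rho$, $\sigma$, and $\tau$, then combining via convexity of the trace norm (Proposition~\ref{convextd}) and the triangle inequality, gives $\|p\rho+(1-p)\sigma-\tau\|_{\operatorname*{tr}}\le 2\alpha$. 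Since $\rho\perp\sigma$, the left-hand side equals $|p-\tfrac12|\cdot\|\rho-\sigma\|_{\operatorname*{tr}}=|p-\tfrac12|$, and rearranging yields $\lambda_{+}/\lambda_{-}\le(1+4\alpha)/(1-4\alpha)$. For any neighboring $\rho',\sigma'$ we then conclude
\[
\frac{\Pr[M(\rho')=y]}{\Pr[M(\sigma')=y]}\;=\;\frac{\operatorname{Tr}(E_{y}\rho')}{\operatorname{Tr}(E_{y}\sigma')}\;\le\;\frac{\lambda_{+}}{\lambda_{-}}\;\le\;\frac{1+4\alpha}{1-4\alpha}\;=\;e^{O(\alpha)},
\]
which is exactly $O(\alpha)$-DP; keeping the right-hand side bounded forces $\alpha$ strictly below $\tfrac14$, explaining the $1/4.01$ threshold.

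\textbf{Main obstacle.} The conceptual heart of part~(b) is the realization that the ``all states'' hypothesis lets one bypass any delicate question of whether a DP-violating pair of neighbors can be taken orthogonal, by instead applying gentleness directly to the eigenvectors of $E_{y}$. This move is exactly what fails in the product-state version of the theorem---where arbitrary eigenvectors of $E_{y}$ may not be product states---and is the reason Theorem~\ref{main} has to pay a $\sqrt n$ factor in that setting. For part~(a), the only technical care needed is tracking constants in the Taylor expansion of $\sqrt{E_{y}}$ (valid precisely when $\varepsilon n$ is small, which is the only regime in which the $O(\varepsilon n)$ conclusion is nontrivial), so that renormalizing the post-measurement state does not lose the factor.
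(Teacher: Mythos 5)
Your strategy is essentially the paper's. For part~(a) you first prove near-triviality (the eigenvalues of each $E_y$ are confined to a multiplicative band) and then implement with $B_y=\sqrt{E_y}$; this is Proposition~\ref{dptriv} followed by Lemma~\ref{trivgentle}. For part~(b) you use the mixture trick, which is Lemma~\ref{mixtrick}. Your part~(b) is correct, and in fact slightly cleaner than the paper's version: you specialize the mixture trick directly to the two extreme eigenvectors of $E_y$, whereas the paper bounds the ratio for arbitrary perfectly distinguishable states and then invokes Proposition~\ref{eigprop} to transfer the bound to the eigenvalue spread; these are the same underlying move, just packaged differently.

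There is a real gap in part~(a). The chain ``replace register $1$, then $2$, etc.'' cannot take $\rho$ to $\tau$ in $n$ single-register steps when $\tau$ is entangled: local superoperators never create entanglement, so such a chain can only arrive at a product of $\tau$'s marginals, never at $\tau$ itself. The correct fix (Proposition~\ref{dptriv}) routes \emph{both} $\rho$ and $\tau$ down to a single fixed state, the maximally mixed state $\mathbb{I}/d^n$, in $n$ steps each, and splices the two halves at the middle via the two-sided DP inequality, yielding $\Pr[M(\rho)=y]\le e^{2\varepsilon n}\Pr[M(\tau)=y]$---a factor of two worse in the exponent than what you claimed, but still $O(\varepsilon n)$. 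With that repair your eigenvalue bound and the $\sqrt{E_y}$ implementation downstream are fine.

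The ``main obstacle'' paragraph also misattributes the $\sqrt n$ in Theorem~\ref{main}. Gentleness on product states implies $O(\alpha)$-DP on product states with no dimensional blowup (Lemma~\ref{gentledpprod}): the argument applies the mixture trick to two \emph{neighboring} product states, whose equal mixture is again a product state, so eigenvectors of $E_y$ are never needed in that direction. The $\sqrt n$ arises in the \emph{opposite} direction---DP implies gentleness on product states---and comes from a KL-divergence that grows linearly in $n$ (Lemma~\ref{dpgentleclassical}) combined with Pinsker's inequality, not from any question of whether a DP-violating pair of neighbors can be taken orthogonal.
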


Unfortunately, Theorem \ref{warmup} is weaker than it might look, since as
we'll see, it relates DP to gentleness only in a regime where $M$ is
\textquotedblleft nearly trivial.\textquotedblright\ \ Later, we'll restrict
our attention to product states, which will lead to a much more interesting
connection between DP and gentleness. \ Nevertheless, Theorem \ref{warmup}
serves as an instructive warmup to our main results, and the tools used to
prove it will later be reused.

Note that all the results in this section also have classical analogues---we
simply need to replace \textquotedblleft all (mixed) states\textquotedblright%
\ by \textquotedblleft all probability distributions\textquotedblright\ in
each definition and statement---and those classical analogues might be of
interest as well.

Let's first define formally what we mean by a measurement being
\textquotedblleft nearly trivial.\textquotedblright

\begin{definition}
[Triviality]Given a set $S$ of mixed states, a measurement $M$, and a
parameter $\varepsilon\geq0$, we say $M$ is $\varepsilon$\textbf{-trivial on}
$S$ if for all states $\rho,\sigma\in S$, and all possible outcomes $y$ of
$M$, we have%
\[
\Pr\left[  M\left(  \rho\right)  \text{ outputs }y\right]  \leq e^{\varepsilon
}\Pr\left[  M\left(  \sigma\right)  \text{ outputs }y\right]  .
\]
For $M$ to be $\varepsilon$-trivial, full stop, means that $M$ is
$\varepsilon$-trivial on the set of all states.
\end{definition}

In particular, $M$ is $0$-trivial if and only if $M$'s output probabilities
are completely independent of $\rho$. \ Note also that $\varepsilon$-trivial
immediately implies $\varepsilon$-DP. \ Like $\varepsilon$-DP (but unlike
$\alpha$-gentleness), the definition of $\varepsilon$-triviality depends only
on the outcome probabilities, and not on the post-measurement states.

The following proposition gives a slightly weaker condition that already
suffices for a measurement to be $\varepsilon$-trivial.

\begin{proposition}
\label{eigprop}Given a measurement $M$ and parameter $\varepsilon\geq0$,
suppose that for every two orthogonal pure states $\left\vert \psi
\right\rangle $ and $\left\vert \phi\right\rangle $, and every possible
outcome $y$ of $M$, we have%
\begin{equation}
\Pr\left[  M\left(  \left\vert \psi\right\rangle \right)  \text{ outputs
}y\right]  \leq e^{\varepsilon}\Pr\left[  M\left(  \left\vert \phi
\right\rangle \right)  \text{ outputs }y\right]  .\nonumber
\end{equation}
Then $M$ is $\varepsilon$-trivial.
\end{proposition}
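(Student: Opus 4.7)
The plan is to pass from output probabilities to POVM elements. Let $E_y$ be the POVM operator corresponding to outcome $y$, so that $\Pr[M(\rho)\text{ outputs }y] = \operatorname{Tr}(E_y \rho)$ for every mixed state $\rho$. Then $\varepsilon$-triviality of $M$ is equivalent to the statement that, for every outcome $y$,
\[
\max_{\rho} \operatorname{Tr}(E_y \rho) \;\le\; e^{\varepsilon} \min_{\sigma} \operatorname{Tr}(E_y \sigma),
\]
where $\rho$ and $\sigma$ range over all density matrices. So I would fix $y$ and work entirely with the single Hermitian positive-semidefinite operator $E_y$.

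Next I would use linearity and the extreme-point structure of the set of density matrices. Since $\rho \mapsto \operatorname{Tr}(E_y \rho)$ is a linear functional on the convex set of density matrices, its maximum and minimum are attained at extreme points, i.e., at pure states. By the spectral theorem applied to $E_y$, those extrema are exactly the top and bottom eigenvalues: $\max_{\rho} \operatorname{Tr}(E_y \rho) = \lambda_{\max}(E_y)$, achieved at any unit top eigenvector $|v_{\max}\rangle$, and $\min_{\sigma} \operatorname{Tr}(E_y \sigma) = \lambda_{\min}(E_y)$, achieved at any unit bottom eigenvector $|v_{\min}\rangle$.

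The crucial point that makes the hypothesis applicable is that $E_y$, being Hermitian, admits an orthonormal eigenbasis. Thus $|v_{\max}\rangle$ and $|v_{\min}\rangle$ can be chosen orthogonal (the degenerate case $\lambda_{\max} = \lambda_{\min}$ makes $M$ trivially $0$-trivial at outcome $y$, so we may assume $\lambda_{\max} > \lambda_{\min}$, in which case the two eigenvectors live in different eigenspaces and are automatically orthogonal). Applying the hypothesis of the proposition to this orthogonal pair of pure states yields $\lambda_{\max}(E_y) \le e^{\varepsilon}\,\lambda_{\min}(E_y)$, and hence, for all mixed states $\rho, \sigma$,
\[
\operatorname{Tr}(E_y \rho) \;\le\; \lambda_{\max}(E_y) \;\le\; e^{\varepsilon} \lambda_{\min}(E_y) \;\le\; e^{\varepsilon} \operatorname{Tr}(E_y \sigma),
\]
which is exactly $\varepsilon$-triviality at outcome $y$. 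Since $y$ was arbitrary, $M$ is $\varepsilon$-trivial.

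There is no real obstacle here: the whole argument is a short application of the spectral theorem plus the fact that the extreme points of the density-matrix simplex are pure states. The only point requiring a touch of care is the orthogonality of the extremal eigenvectors, which is precisely what the Hermiticity of $E_y$ supplies for free.
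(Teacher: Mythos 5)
Your proof is correct and takes essentially the same route as the paper: both rest on the spectral theorem for the POVM element $E_y$, using the fact that eigenvectors for distinct eigenvalues are orthogonal so that the hypothesis forces all eigenvalues of $E_y$ to lie within a factor $e^{\varepsilon}$ of one another, and then passing from pure to mixed states by convexity. Your write-up is a bit more explicit about identifying the extrema of $\rho \mapsto \operatorname{Tr}(E_y\rho)$ with the top and bottom eigenvalues and about the degenerate case, but the substance is identical.
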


\begin{proof}
Let $E_{1}+\cdots+E_{k}=\mathbb{I}$\ be the POVM elements of $M$. \ Assume
without loss of generality that the outcome $y$ corresponds to the element
$E=E_{1}$. \ Then by assumption,%
\begin{equation}
\left\langle \psi|E|\psi\right\rangle \leq e^{\varepsilon}\left\langle
\phi|E|\phi\right\rangle \label{ee}%
\end{equation}
for all orthogonal $\left\vert \psi\right\rangle ,\left\vert \phi\right\rangle
$. \ But this means that all of $E$'s eigenvalues must be within an
$e^{\varepsilon}$ multiplicative factor of each other. \ So (\ref{ee}) holds
for \textit{all} $\left\vert \psi\right\rangle ,\left\vert \phi\right\rangle
$, not just all orthogonal $\left\vert \psi\right\rangle ,\left\vert
\phi\right\rangle $. \ By convexity, we then have%
\[
\operatorname{Tr}\left(  E\rho\right)  \leq e^{\varepsilon}\operatorname{Tr}%
\left(  E\sigma\right)
\]
for all $\rho,\sigma$\ as well.
\end{proof}

Using Proposition \ref{eigprop}, we now show that gentleness on all states
implies near-triviality.

\begin{lemma}
\label{mixtrick}Suppose $M$ is $\alpha$-gentle on all states. \ Then $M$ is
$\ln\left(  \frac{1+4\alpha}{1-4\alpha}\right)  $-trivial---so in particular,
$O\left(  \alpha\right)  $-trivial, provided $\alpha\leq\frac{1}{4.01}$.
\end{lemma}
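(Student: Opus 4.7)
The plan is to reduce to Proposition \ref{eigprop}, so I only need to bound the ratio $\Pr[M(|\psi\rangle) = y] / \Pr[M(|\phi\rangle) = y]$ for orthogonal pure states $|\psi\rangle, |\phi\rangle$ and any outcome $y$. Set $p := \Pr[M(|\psi\rangle) = y]$ and $q := \Pr[M(|\phi\rangle) = y]$, and consider the equal mixture $\mu := \tfrac{1}{2}(|\psi\rangle\langle\psi| + |\phi\rangle\langle\phi|)$. Apply the $\alpha$-gentle implementation of $M$ to all three states: $|\psi\rangle$, $|\phi\rangle$, and $\mu$. Let $\rho_{\psi,y}$, $\rho_{\phi,y}$, $\mu_y$ denote the respective post-measurement states conditioned on outcome $y$. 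Gentleness gives trace distance at most $\alpha$ from each input to its post-measurement state.

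Next I would use linearity of quantum operations to write $\mu_y$ as a convex combination: the probability of $y$ on $\mu$ is $\tfrac{p+q}{2}$, and $\mu_y = \frac{p \,\rho_{\psi,y} + q \,\rho_{\phi,y}}{p+q}$. By convexity of trace distance (Proposition \ref{convextd}) applied to the same convex combination of the unchanged pure states, $\mu_y$ lies within trace distance $\alpha$ of $\nu := \frac{p\,|\psi\rangle\langle\psi| + q\,|\phi\rangle\langle\phi|}{p+q}$. Combining this with $\|\mu_y - \mu\|_{\operatorname*{tr}} \leq \alpha$ via the triangle inequality yields $\|\mu - \nu\|_{\operatorname*{tr}} \leq 2\alpha$.

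Now comes the calculation that turns this trace-distance bound into a ratio bound, which is the only step that could be slightly error-prone (though it is routine). Because $|\psi\rangle$ and $|\phi\rangle$ are orthogonal, $\mu - \nu$ is diagonal in the $\{|\psi\rangle,|\phi\rangle\}$ basis with eigenvalues $\pm \tfrac{q-p}{2(p+q)}$, so (using the paper's $\tfrac{1}{2}\sum|\lambda_i|$ convention) $\|\mu - \nu\|_{\operatorname*{tr}} = \tfrac{|p-q|}{2(p+q)}$. Hence $|p-q| \leq 4\alpha(p+q)$. Assuming $p \geq q$ without loss of generality, rearrange to get $p(1 - 4\alpha) \leq q(1 + 4\alpha)$, i.e., $p/q \leq (1+4\alpha)/(1-4\alpha) = e^{\varepsilon}$ with $\varepsilon = \ln\!\big(\tfrac{1+4\alpha}{1-4\alpha}\big)$. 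Proposition \ref{eigprop} then certifies that $M$ is $\varepsilon$-trivial, and a Taylor expansion of $\ln\!\big(\tfrac{1+4\alpha}{1-4\alpha}\big) = 8\alpha + O(\alpha^3)$ for $\alpha \leq 1/4.01$ gives the claimed $O(\alpha)$ bound.

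The main obstacle, if any, is conceptual rather than computational: one must notice that applying gentleness to the single ``test'' mixture $\mu$, together with gentleness on its two pure components, already pins down how much the conditional posterior weights $p/(p+q)$ and $q/(p+q)$ can deviate from the prior weights $1/2, 1/2$. Everything else is bookkeeping.
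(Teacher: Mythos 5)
Your proof is correct and follows essentially the same route as the paper's: form the equal mixture $\mu$ (the paper's $\xi$), apply gentleness to it and its two components, and observe that the posterior weight $p/(p+q)$ cannot drift too far from the prior weight $1/2$. The only difference is bookkeeping — you split the bound as $\|\mu-\nu\|\le\|\mu-\mu_y\|+\|\mu_y-\nu\|$ and then compute the eigenvalues of $\mu-\nu$ directly using orthogonality, whereas the paper isolates the term $\delta(\rho-\sigma)$ from the identity $\xi-\xi_y=\frac{p}{p+q}(\rho-\rho_y)+\frac{q}{p+q}(\sigma-\sigma_y)-\delta(\rho-\sigma)$ and uses $\|\rho-\sigma\|_{\operatorname*{tr}}=1$; the resulting inequality $\frac{p-q}{2(p+q)}\le 2\alpha$ is identical.
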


\begin{proof}
Given mixed states $\rho$\ and $\sigma$, let's first consider the special case
where $\rho$ and $\sigma$\ are perfectly distinguishable (that is, $\left\Vert
\rho-\sigma\right\Vert _{\operatorname*{tr}}=1$). \ For any outcome $y$, let
$p:=\Pr\left[  M\left(  \rho\right)  \text{ outputs }y\right]  $\ and
$q:=\Pr\left[  M\left(  \sigma\right)  \text{ outputs }y\right]  $, and assume
without loss of generality that $p\geq q$ and $p>0$. \ Also, fix an $\alpha
$-gentle implementation of\ $M$. \ Let $\rho_{y}$\ and $\sigma_{y}$\ be the
post-measurement states for $\rho$ and $\sigma$ respectively, if the outcome
of applying $M$ is $y$. \ Now consider the mixed state $\xi:=\frac{\rho
+\sigma}{2}$. \ \textit{Its} post-measurement state is%
\[
\xi_{y}=\frac{p\rho_{y}+q\sigma_{y}}{p+q}.
\]
So let $\delta:=\frac{p}{p+q}-\frac{1}{2}$. \ Then%
\begin{align*}
\xi-\xi_{y}  &  =\frac{1}{2}\rho+\frac{1}{2}\sigma-\frac{p}{p+q}\rho_{y}%
-\frac{q}{p+q}\sigma_{y}\\
&  =\frac{p}{p+q}\left(  \rho-\rho_{y}\right)  +\frac{q}{p+q}\left(
\sigma-\sigma_{y}\right)  -\delta\left(  \rho-\sigma\right)  .
\end{align*}
So by the triangle inequality,%
\begin{align*}
\delta\left\Vert \rho-\sigma\right\Vert _{\operatorname*{tr}}  &
\leq\left\Vert \xi-\xi_{y}\right\Vert _{\operatorname*{tr}}+\frac{p}%
{p+q}\left\Vert \rho-\rho_{y}\right\Vert _{\operatorname*{tr}}+\frac{q}%
{p+q}\left\Vert \sigma-\sigma_{y}\right\Vert _{\operatorname*{tr}}\\
&  \leq2\alpha,
\end{align*}
since $\left\Vert \xi-\xi_{y}\right\Vert _{\operatorname*{tr}}$\ and
$\left\Vert \rho-\rho_{y}\right\Vert _{\operatorname*{tr}}$\ and $\left\Vert
\sigma-\sigma_{y}\right\Vert _{\operatorname*{tr}}$\ are all at most $\alpha
$\ by our gentleness assumption. \ Furthermore, by assumption, $\left\Vert
\rho-\sigma\right\Vert _{\operatorname*{tr}}=1$. \ Thus we simply get
$\delta\leq2\alpha$. \ Or%
\[
\frac{\Pr\left[  M\left(  \rho\right)  \text{ outputs }y\right]  }{\Pr\left[
M\left(  \sigma\right)  \text{ outputs }y\right]  }=\frac{p}{q}=\frac{\frac
{1}{2}+\delta}{\frac{1}{2}-\delta}\leq\frac{1+4\alpha}{1-4\alpha}.
\]
And by Proposition \ref{eigprop}, if the above holds for perfectly
distinguishable states $\rho,\sigma$\ (so in particular, for orthogonal pure
states), then it holds for all $\rho,\sigma$ as well. \ Hence $M$ is
$\ln\left(  \frac{1+4\alpha}{1-4\alpha}\right)  $-trivial.
\end{proof}

An immediate corollary of Lemma \ref{mixtrick} is this:

\begin{corollary}
\label{gentledpall}If $M$ is $\alpha$-gentle on all states, then $M$\ is
$\ln\left(  \frac{1+4\alpha}{1-4\alpha}\right)  $-DP\ on all states.
\end{corollary}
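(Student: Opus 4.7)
The plan is simply to chain Lemma \ref{mixtrick} with the observation—made in the paragraph immediately following the definition of $\varepsilon$-triviality—that $\varepsilon$-triviality implies $\varepsilon$-DP. First I would apply Lemma \ref{mixtrick} to the hypothesis that $M$ is $\alpha$-gentle on all states, which directly yields that $M$ is $\ln\bigl(\frac{1+4\alpha}{1-4\alpha}\bigr)$-trivial on all states. Then I would recall that $\varepsilon$-triviality bounds the ratio $\Pr[M(\rho)=y] / \Pr[M(\sigma)=y]$ by $e^{\varepsilon}$ for \emph{every} pair of states $(\rho,\sigma)$, whereas $\varepsilon$-DP only demands this bound for pairs that happen to be neighbors. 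Since neighboring states form a subset of all pairs of states, the triviality bound immediately specializes to the DP bound with the same parameter.

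There is no real obstacle here: the corollary is a purely formal packaging of Lemma \ref{mixtrick} together with the trivial inclusion ``$\varepsilon$-trivial $\Rightarrow$ $\varepsilon$-DP.'' Its interest lies not in the proof but in the contrast it sets up with the product-state setting: gentleness on \emph{all} states is so restrictive that it forces $M$ to be near-trivial (and hence DP almost vacuously), while the analogous statement for gentleness on product states alone—part (1) of Theorem \ref{main}—will require genuine work and is where the substantive content of the paper's main connection resides.
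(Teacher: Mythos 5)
Your proposal is exactly the paper's argument: Lemma \ref{mixtrick} upgrades $\alpha$-gentleness on all states to $\ln\left(\frac{1+4\alpha}{1-4\alpha}\right)$-triviality, and the paper explicitly notes (right after defining triviality) that $\varepsilon$-triviality implies $\varepsilon$-DP since neighbors form a subset of all pairs. The paper labels the corollary ``immediate'' for precisely this reason, so nothing is missing.
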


Indeed, since the reasoning applied independently to each measurement outcome
$y$, we get the following stronger conclusion, which will be useful when we
analyze shadow tomography:

\begin{corollary}
\label{gentledpall2}If $M$ is $\left(  \alpha,\delta\right)  $-gentle on all
states, then $M$\ is $\left(  \ln\left(  \frac{1+4\alpha}{1-4\alpha}\right)
,\delta\right)  $-DP\ on all states.
\end{corollary}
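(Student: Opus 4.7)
The plan is to revisit the proof of Lemma \ref{mixtrick} and observe that its entire argument is really per-outcome, so the $(\alpha,\delta)$-version follows once we carefully account for which outcomes are "good."

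First I would follow Lemma \ref{mixtrick}'s reduction: by Proposition \ref{eigprop}, it suffices to bound the ratio of output probabilities for every pair of perfectly distinguishable states $\rho,\sigma$ with $\|\rho-\sigma\|_{\operatorname*{tr}}=1$. I fix such a pair, form the mixture $\xi:=(\rho+\sigma)/2$, and for a fixed outcome $y$ set $p:=\Pr[M(\rho){=}y]$, $q:=\Pr[M(\sigma){=}y]$, $\delta_y:=\tfrac{p}{p+q}-\tfrac12$, together with the corresponding post-measurement states $\rho_y,\sigma_y,\xi_y$. The identical algebraic manipulation in Lemma \ref{mixtrick} yields the per-outcome inequality
\[
\delta_y \;\le\; \|\xi-\xi_y\|_{\operatorname*{tr}} + \tfrac{p}{p+q}\|\rho-\rho_y\|_{\operatorname*{tr}} + \tfrac{q}{p+q}\|\sigma-\sigma_y\|_{\operatorname*{tr}}.
\]
Declare $y$ \emph{good} if all three of these trace distances are at most $\alpha$; for every good $y$ we immediately get $\delta_y\le 2\alpha$, equivalently $p/q \le (1+4\alpha)/(1-4\alpha)$, i.e.\ the DP inequality holds at $y$. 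So the set of outcomes that violate DP is contained in $B(\rho)\cup B(\sigma)\cup B(\xi)$, where $B(\tau):=\{y:\|\tau-\tau_y\|_{\operatorname*{tr}}>\alpha\}$ is the gentleness-failure set of state $\tau$.

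Next I would translate the $(\alpha,\delta)$-gentleness hypothesis through. For each $\tau\in\{\rho,\sigma,\xi\}$, gentleness gives $\Pr_{M(\tau)}[B(\tau)]\le\delta$. The bound $\Pr_{M(\rho)}[B(\rho)]\le\delta$ is direct; the bound $\Pr_{M(\rho)}[B(\xi)]\le 2\delta$ follows from $p\le p+q=2\Pr[M(\xi){=}y]$; and for $y\in B(\sigma)$, either $y$ is already DP-bad (and hence counted in what we are trying to bound anyway) or $p\le e^\varepsilon q$, which lets us transfer $\Pr_{M(\sigma)}[B(\sigma)]\le\delta$ into a bound on its $M(\rho)$-mass among DP-good outcomes. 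Finally, the extension from perfectly distinguishable states back to all $\rho,\sigma$ via Proposition \ref{eigprop} is outcome-wise, so the per-$y$ DP conclusion survives it.

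The main obstacle is the last bookkeeping step: the failure sets for $\sigma$ and $\xi$ naturally live under $M(\sigma)$ and $M(\xi)$, but the DP definition measures failure probability under $M(\rho)$, and those measures can be far apart. The author's parenthetical "the reasoning applied independently to each measurement outcome $y$" signals that they view the failure sets for $\rho$, $\sigma$, $\xi$ as essentially the same set of "bad outcomes" to be charged off against a single $\delta$; a fully rigorous accounting either absorbs a small constant factor into $\delta$ or exploits the fact that any $y$ violating DP is by definition heavy under $M(\rho)$ relative to $M(\sigma)$, which lets the $\sigma$- and $\xi$-failures be reabsorbed into the $\rho$-failure budget.
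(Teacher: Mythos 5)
Your overall reading is right: the paper offers no proof of this corollary beyond the one-line remark that the argument of Lemma~\ref{mixtrick} is per-outcome, and your reconstruction does trace through that argument outcome by outcome. The bookkeeping concern you raise at the end is genuine, and the paper glosses over it. But there are two places where your proposed resolution is not actually correct, and one of them matters.

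First, the \textquotedblleft reabsorption\textquotedblright\ idea for $B(\sigma)$ goes the wrong way. For $y\in B(\sigma)$ that is also DP-bad, you have $p_y>e^{\varepsilon}q_y$, which is a \emph{lower} bound on $p_y$ in terms of $q_y$; you cannot use $\sum_{y\in B(\sigma)}q_y\le\delta$ to upper-bound $\sum_{y\in B(\sigma)}p_y$, which is what $\Pr_{M(\rho)}[B(\sigma)]$ actually is. A state $\sigma$ on which $M$ is ungentle precisely on a set of $y$'s that are heavy under $M(\rho)$ is not ruled out. The clean fix is to avoid $B(\sigma)$ altogether: for perfectly distinguishable $\rho,\sigma$ let $P$ be the projector onto $\rho$'s support, so $\operatorname{Tr}(P\rho)=1$, $\operatorname{Tr}(P\sigma)=0$, $\operatorname{Tr}(P\xi)=\tfrac12$. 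If $y\notin B(\rho)$ then $\operatorname{Tr}(P\rho_y)\ge 1-\alpha$, and if $y\notin B(\xi)$ then $\operatorname{Tr}(P\xi_y)\le\tfrac12+\alpha$. Since $\operatorname{Tr}(P\xi_y)=\frac{p\operatorname{Tr}(P\rho_y)+q\operatorname{Tr}(P\sigma_y)}{p+q}\ge\frac{p(1-\alpha)}{p+q}$, one gets $p(1-\alpha)\le(\tfrac12+\alpha)(p+q)$, i.e.\ $p/q\le\frac{1+2\alpha}{1-4\alpha}$, using only the $\rho$- and $\xi$-failure sets. This removes $B(\sigma)$ but still leaves $\Pr_{M(\rho)}[B(\rho)\cup B(\xi)]\le\delta+2\delta=3\delta$, so the honest conclusion is $(O(\alpha),O(\delta))$-DP rather than the paper's claimed $(\cdot,\delta)$. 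That constant slack is harmless for the applications (Theorem~\ref{sqrtlogdlb} uses $\delta=n^{-(1+\tau)}$, so $O(\delta)$ is still $o(1/n)$), but your instinct that the $\delta$ in the statement cannot be taken at face value is correct.

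Second, your closing claim that \textquotedblleft the extension from perfectly distinguishable states back to all $\rho,\sigma$ via Proposition~\ref{eigprop} is outcome-wise, so the per-$y$ DP conclusion survives\textquotedblright\ deserves more scrutiny than you give it. Proposition~\ref{eigprop} works by showing that \emph{every} pair of orthogonal pure states bounds the spread of $E_y$'s eigenvalues, which is a statement quantified over all pairs. In the $(\alpha,\delta)$ setting the gentleness-failure set $B(\cdot)$ depends on which states you feed in, so for a non-orthogonal neighboring pair $(\rho,\sigma)$ the extremal eigenvectors of $E_y$ (which you would want to plug into the distinguishable-state argument) vary with $y$, and you cannot union-bound over uncountably many test states. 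The classical route around this, writing $\rho=(1-t)\tau+t\hat\rho$ and $\sigma=(1-t)\tau+t\hat\sigma$ with $\hat\rho\perp\hat\sigma$, fails for quantum states. So the extension step is not automatically outcome-wise; it is a real gap, one the paper leaves implicit as well. If you want a fully rigorous statement, the safest course is to prove it for perfectly distinguishable pairs (as above), which already suffices to reprove Corollary~\ref{gentledpall} in the $\delta=0$ case via Proposition~\ref{eigprop}, and then either restrict the $(\alpha,\delta)$ version to that regime or spell out how the extension survives the state-dependent failure sets.
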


Notice that the central gambit in the proof of Lemma \ref{mixtrick}, namely
defining $\xi:=\frac{\rho+\sigma}{2}$, generally maps product states to
non-product states. \ It turns out that this is inherent: Lemma \ref{mixtrick}%
\ does \textit{not} have an analogue that assumes only gentleness on product
states. \ Or rather: if we assume only gentleness on product states, then we
can deduce DP (and will do so, in Lemma \ref{gentledpprod}), but will not be
able to deduce triviality. \ And this is to be expected, since there
\textit{are} nontrivial DP algorithms, and indeed our main result (Theorem
\ref{main}) shows that these algorithms lead to measurements that are gentle
on product states.

We next prove a converse to Lemma \ref{mixtrick}: that near-triviality implies gentleness.

\begin{lemma}
[Trivial$\Longrightarrow$Gentle]\label{trivgentle}Suppose $M$\ is
$\varepsilon$-trivial. \ Then $M$ is $\left(  e^{\varepsilon}-1\right)
$-gentle on all states---so in particular, $O\left(  \varepsilon\right)
$-gentle, provided (say) $\varepsilon\leq1$.
\end{lemma}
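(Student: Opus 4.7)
The plan is to use the Lüders (square-root) implementation $B_y := \sqrt{E_y}$ of the POVM $\{E_y\}$ that defines $M$, and show that for every state $\rho$ and every outcome $y$ with $\Pr[M(\rho) = y] > 0$, the post-measurement state
\[
\rho_{M \to y} \;=\; \frac{\sqrt{E_y}\,\rho\,\sqrt{E_y}}{\operatorname{Tr}(E_y \rho)}
\]
lies within trace distance $e^\varepsilon - 1$ of $\rho$. The first ingredient is a spectral consequence of $\varepsilon$-triviality: each POVM element $E_y$ has all of its eigenvalues within a multiplicative factor of $e^\varepsilon$. This is immediate from the definition applied to any two eigenstates $\left\vert v_i \right\rangle, \left\vert v_j \right\rangle$ of $E_y$: the outcome-$y$ probabilities are just the eigenvalues $\lambda_i, \lambda_j$, so $\lambda_i \leq e^\varepsilon \lambda_j$.

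Setting $\lambda := \lambda_{\min}(E_y)$, I would exploit this by writing $\sqrt{E_y/\lambda} = \mathbb{I} + K$, where $K$ is positive semidefinite with $\|K\|_\infty \leq e^{\varepsilon/2} - 1$ (since $E_y/\lambda$ has spectrum in $[1, e^\varepsilon]$). Expanding
\[
\sqrt{E_y}\,\rho\,\sqrt{E_y} \;=\; \lambda\bigl(\rho + K\rho + \rho K + K\rho K\bigr), \qquad \operatorname{Tr}(E_y \rho) \;=\; \lambda\bigl(1 + 2\kappa_1 + \kappa_2\bigr),
\]
where $\kappa_1 := \operatorname{Tr}(K\rho)$ and $\kappa_2 := \operatorname{Tr}(K^2 \rho)$, the scalar $\lambda$ cancels in the ratio, and $\rho_{M \to y} - \rho$ becomes the ratio of the Hermitian trace-zero operator $K\rho + \rho K + K\rho K - (2\kappa_1 + \kappa_2)\rho$ to the positive denominator $1 + 2\kappa_1 + \kappa_2 \geq 1$.

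It remains to bound the numerator's trace norm. Using the triangle inequality together with the Hölder-type bound $\|AB\|_{\operatorname{tr}} \leq \|A\|_\infty \|B\|_{\operatorname{tr}}$ yields $\|K\rho\|_{\operatorname{tr}}, \|\rho K\|_{\operatorname{tr}} \leq \|K\|_\infty/2$ and $\|K\rho K\|_{\operatorname{tr}} \leq \|K\|_\infty^2/2$, while the normalization correction is bounded via $\kappa_1 \leq \|K\|_\infty$ and $\kappa_2 \leq \|K\|_\infty^2$. Summing and rearranging collapses everything to $2\|K\|_\infty + \|K\|_\infty^2 = (1+\|K\|_\infty)^2 - 1$, and substituting $\|K\|_\infty \leq e^{\varepsilon/2} - 1$ gives exactly $e^\varepsilon - 1$. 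The only delicate point -- and the place where I expect to have to be most careful -- is the bookkeeping in this last step: the constants come out tight only because the linear and quadratic contributions from the normalization term $2\kappa_1 + \kappa_2$ precisely match those from $K\rho, \rho K, K\rho K$, and because of the fortunate identity $(1 + (e^{\varepsilon/2}-1))^2 - 1 = e^\varepsilon - 1$. A less careful accounting (e.g., separately bounding the $\kappa_i \rho$ terms by $1$) would instead produce the looser $2(e^\varepsilon - 1)$ or $O(\varepsilon)$ with a worse constant.
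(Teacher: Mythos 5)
Your proof is correct and reaches the same bound $e^{\varepsilon}-1$, but via a genuinely different route. Both you and the paper use the L\"{u}ders implementation $B_y=\sqrt{E_y}$ and the same spectral observation (that $\varepsilon$-triviality forces every $E_y$'s eigenvalues to lie within a multiplicative $e^{\varepsilon}$ of one another). From there, the paper reduces to pure states by appealing to Proposition \ref{mixedgentle}, writes $\sqrt{D}=c(I+\delta V)$ with $\|V\|_\infty\leq 1$ and $\delta\leq 1-e^{-\varepsilon/2}$, invokes the pure-state trace-distance formula of Proposition \ref{puretd} together with the standard fact that trace distance is at most Euclidean distance, and only then lifts back to mixed states. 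You instead stay entirely in the density-matrix picture: centering at the \emph{minimum} eigenvalue so that $K=\sqrt{E_y/\lambda}-\mathbb{I}$ is positive semidefinite (rather than at the midpoint as the paper effectively does), expanding $\rho_{M\to y}-\rho$, and bounding the trace norm of the numerator directly by the triangle inequality and $\|AB\|_{\operatorname{tr}}\leq\|A\|_\infty\|B\|_{\operatorname{tr}}$. The positivity of $K$ is the feature that makes the arithmetic close exactly: it gives $\kappa_1,\kappa_2\geq 0$ (so the denominator is $\geq 1$) and makes the normalization correction $2\kappa_1+\kappa_2$ cancel against the $K\rho,\rho K,K\rho K$ contributions to yield $(1+\|K\|_\infty)^2-1$. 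Your approach avoids both the purification reduction and the pure-state inner-product formula, so it is somewhat more self-contained and elementary, while the paper's approach has the advantage of reusing machinery (Propositions \ref{puretd} and \ref{mixedgentle}) already set up for other results.
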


\begin{proof}
Again, let $E_{1}+\cdots+E_{k}=\mathbb{I}$\ be the POVM elements of $M$, and
recall that we can use any solutions to the equations $E_{i}=A_{i}^{\dagger
}A_{i}$\ to define the possible post-measurement states after $M$ is applied.
\ Without loss of generality, focus on $E=E_{1}$ and $A=A_{1}$.

Since $M$ is $\varepsilon$-trivial, all of $E$'s eigenvalues must be within an
$e^{\varepsilon}$ multiplicative factor of each other. \ Also, since $E$ is
Hermitian, we can diagonalize it as $U^{\dagger}DU$, where $U$ is unitary and
$D$ is a diagonal matrix of $E$'s eigenvalues. \ Let's make the choice
$A:=U^{\dagger}\sqrt{D}U$. \ Then for some constant $0<c<1$, we can write
$\sqrt{D}$ as $c\left(  I+\delta V\right)  $, where $\delta\leq
1-e^{-\varepsilon/2}$, and $V$ is a diagonal matrix whose entries are all at
most $1$ in absolute value.

Let $\left\vert \psi\right\rangle $\ be a pure state to which $M$ is applied,
and assume $\left\langle \psi|E|\psi\right\rangle >0$. \ Then conditioning on
outcome $E$ leads to the post-measurement state%
\[
\frac{U^{\dagger}\sqrt{D}U\left\vert \psi\right\rangle }{\left\Vert
U^{\dagger}\sqrt{D}U\left\vert \psi\right\rangle \right\Vert }.
\]
Therefore the post-measurement state is%
\[
\frac{cU^{\dagger}\left(  I+\delta V\right)  U\left\vert \psi\right\rangle
}{\left\Vert cU^{\dagger}\left(  I+\delta V\right)  U\left\vert \psi
\right\rangle \right\Vert }=\frac{\left\vert \psi\right\rangle +\delta
U^{\dagger}VU\left\vert \psi\right\rangle }{\left\Vert \left\vert
\psi\right\rangle +\delta U^{\dagger}VU\left\vert \psi\right\rangle
\right\Vert }%
\]
By Proposition \ref{puretd}, the trace distance between this state and
$\left\vert \psi\right\rangle $\ is at most the Euclidean distance, which in
turn is at most
\[
\frac{1+\delta}{1-\delta}-1\leq\frac{2-e^{-\varepsilon/2}}{e^{-\varepsilon/2}%
}-1\leq e^{\varepsilon}-1.
\]
Thus, we've given an implementation of $M$ that is $\left(  e^{\varepsilon
}-1\right)  $-gentle on pure states. \ By Proposition \ref{mixedgentle}, this
implies that $M$ is $\left(  e^{\varepsilon}-1\right)  $-gentle on mixed
states as well.
\end{proof}

Finally, we prove that if a measurement is $\varepsilon$-DP \textit{for
sufficiently small} $\varepsilon$, then it's nearly trivial.

\begin{proposition}
[Sufficiently DP Is Trivial]\label{dptriv}If $M$ is $\varepsilon$-DP on all
states,\ then $M$ is $2\varepsilon n$-trivial on all states.
\end{proposition}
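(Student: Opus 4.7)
The plan is to show that any two mixed states $\rho,\sigma$ on $n$ registers can be linked by a chain of at most $2n$ pairwise neighbors, and then apply the $\varepsilon$-DP hypothesis once along each edge of the chain. A natural choice of intermediate reference state to route through is the fixed pure product state $|0\rangle\langle 0|^{\otimes n}$.

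First, I would construct a chain $\rho = \rho^{(0)},\rho^{(1)},\ldots,\rho^{(n)} = |0\rangle\langle 0|^{\otimes n}$, where
\[
\rho^{(i)} := |0\rangle\langle 0|^{\otimes i} \otimes \operatorname{Tr}_{[1,i]}(\rho).
\]
The transition from $\rho^{(i-1)}$ to $\rho^{(i)}$ is implemented by the single-register superoperator on register $i$ that traces out register $i$ and prepares $|0\rangle\langle 0|$ in its place, so $\rho^{(i-1)}$ and $\rho^{(i)}$ are neighbors in the sense of Section \ref{CONNECTION}. Symmetrically, I build $\sigma = \sigma^{(0)},\sigma^{(1)},\ldots,\sigma^{(n)} = |0\rangle\langle 0|^{\otimes n}$ using the same trace-and-reset operation on each register. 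Because the definition of \emph{neighbor} only requires that \emph{one} of the two directions be achievable by a single-register superoperator, the reversed sequence $\sigma^{(n)},\sigma^{(n-1)},\ldots,\sigma^{(0)}$ is also a valid chain of neighbors.

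Concatenating the forward chain from $\rho$ with the reversed chain to $\sigma$ yields a sequence of $2n$ neighbor pairs linking $\rho$ to $\sigma$. For any outcome $y$ of $M$, iterating the $\varepsilon$-DP bound along each of the $2n$ edges gives
\[
\Pr[M(\rho)=y] \;\leq\; e^{2n\varepsilon}\,\Pr[M(\sigma)=y],
\]
which is exactly the statement that $M$ is $2\varepsilon n$-trivial on all states.

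The main subtlety---and really the reason the factor of $2$ appears rather than just $n$---is that we cannot travel directly from $|0\rangle\langle 0|^{\otimes n}$ to a correlated $\sigma$ by single-register channels, since such a channel acting on register $i$ can never create correlations between register $i$ and the other registers that were not already present. This obstruction is circumvented precisely by the built-in symmetry of the neighbor relation: we simply traverse the second half of the chain in the reverse direction, in which the required operation (collapsing one register of $\sigma$ to $|0\rangle$) is always available.
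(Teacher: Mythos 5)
Your proof is correct and follows essentially the same strategy as the paper's: collapse each register one at a time to a fixed reference state, and apply the $\varepsilon$-DP bound once per register on the way from $\rho$ to the reference and once per register on the way back to $\sigma$. The only cosmetic difference is that you reset each register to $|0\rangle\langle 0|$ whereas the paper resets to the maximally mixed state $\mathbb{I}/d$; either choice of product reference works and gives the same $e^{2\varepsilon n}$ bound.
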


\begin{proof}
Let $\rho,\sigma$ be any mixed states on $n$ registers. \ Also, let $S_{i}%
$\ be a superoperator that simply swaps out the $i^{th}$\ register for some
fixed state---say the maximally mixed state $\mathbb{I}/d$, if the registers
are $d$-dimensional.\footnote{Or if we preferred unitary transformations, we
could also achieve the same effect by (for example) applying a Haar-random
unitary $U$ to the $i^{th}$\ register, and then appealing to convexity.}
\ Then by applying all $n$ of the $S_{i}$'s to $\rho$\ or $\sigma$, one at a
time, we can map the entire input state to $\mathbb{I}/d^{n}$. \ Thus, for any
output possible $y$\ of $M$, if we repeatedly invoke the fact that $M$ is
$\varepsilon$-DP, once for each $S_{i}$,\ we find that%
\[
\Pr\left[  M\left(  \rho\right)  \text{ outputs }y\right]  \leq e^{\varepsilon
n}\Pr\left[  M\left(  \mathbb{I}/d^{n}\right)  \text{ outputs }y\right]  .
\]
Likewise,%
\[
\Pr\left[  M\left(  \sigma\right)  \text{ outputs }y\right]  \geq
e^{-\varepsilon n}\Pr\left[  M\left(  \mathbb{I}/d^{n}\right)  \text{ outputs
}y\right]  .
\]
Hence%
\[
\Pr\left[  M\left(  \rho\right)  \text{ outputs }y\right]  \leq
e^{2\varepsilon n}\Pr\left[  M\left(  \sigma\right)  \text{ outputs }y\right]
.
\]

\end{proof}

One can show, by a similar argument, that if $M$\ is $\varepsilon$-DP on
product states,\ then $M$ is $2\varepsilon n$-trivial on product states.
\ Again, though, this is only interesting in the regime $\varepsilon\ll
\frac{1}{n}$, whereas our results in Section \ref{MAIN}\ will be able to
handle measurements that are $\varepsilon$-DP on product states\ for
$\varepsilon$\ up to about $\frac{1}{\sqrt{n}}$.

Combining Lemma \ref{mixtrick}, Lemma \ref{dptriv}, and Lemma \ref{trivgentle}
now completes the proof of Theorem \ref{warmup}.

Again, the problem with Theorem \ref{warmup} is that, while it relates the
privacy of a measurement $M$\ to its gentleness, it does so only as an
\textquotedblleft accidental byproduct\textquotedblright\ of showing that
sufficiently private and sufficiently gentle measurements are both nearly
trivial. \ To get a more interesting connection between privacy and
gentleness, we'll need to restrict our attention to product states, as our
main result (Theorem \ref{main})\ does.

\section{Proof of Main Result\label{MAIN}}

In this section we prove Theorem \ref{main}, the two-way connection between
gentleness and differential privacy on product states. \ Unlike Theorem
\ref{warmup}, this connection will work even for measurements that are very
far from trivial.

\subsection{Gentleness Implies DP on Product States}

We'll start by proving the \textquotedblleft easy\textquotedblright%
\ direction: that gentleness on product states implies differential privacy on
product states. \ For this, we can reapply Lemma \ref{mixtrick} from the
previous section.

\begin{lemma}
[Gentleness$\Longrightarrow$DP on Product States]\label{gentledpprod}If $M$ is
$\alpha$-gentle on product states, then $M$ is $\ln\left(  \frac{1+4\alpha
}{1-4\alpha}\right)  $-DP on product states as well---so in particular,
$O\left(  \alpha\right)  $-DP on product states, provided $\alpha\leq\frac
{1}{4.01}$. \ Likewise, if $M$ is $\left(  \alpha,\delta\right)  $-gentle on
product states then $M$ is $\left(  \ln\left(  \frac{1+4\alpha}{1-4\alpha
}\right)  ,\delta\right)  $-DP on product states.
\end{lemma}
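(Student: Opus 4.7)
The plan is to reapply the proof of Lemma~\ref{mixtrick} in the product-state setting, leveraging the key observation that if $\rho = \rho_1 \otimes \rho_2 \otimes \cdots \otimes \rho_n$ and $\sigma = \sigma_1 \otimes \rho_2 \otimes \cdots \otimes \rho_n$ are neighboring product states, then the equal mixture
\[
\xi := \frac{\rho+\sigma}{2} = \frac{\rho_1+\sigma_1}{2} \otimes \rho_2 \otimes \cdots \otimes \rho_n
\]
is itself a product state. Hence the hypothesis that $M$ is $\alpha$-gentle on product states delivers all three of the inequalities $\|\rho-\rho_y\|_{\operatorname*{tr}}, \|\sigma-\sigma_y\|_{\operatorname*{tr}}, \|\xi-\xi_y\|_{\operatorname*{tr}} \leq \alpha$ that powered the Lemma~\ref{mixtrick} calculation.

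The one genuine wrinkle compared to Lemma~\ref{mixtrick} is that arbitrary neighboring product states need not be perfectly distinguishable, so the raw Lemma~\ref{mixtrick} conclusion $\delta \cdot \|\rho-\sigma\|_{\operatorname*{tr}} \leq 2\alpha$ (with $\delta := p/(p+q) - 1/2$, where $p$ and $q$ are the probabilities of outcome $y$ on $\rho$ and $\sigma$) is vacuous for close-by pairs. I would resolve this by first invoking Proposition~\ref{mixeddp} to reduce to pure product states, and then by a Proposition~\ref{eigprop}-style argument on the differing register to reduce further to orthogonal neighbors. Concretely, write $\rho = |\psi_1\rangle|\omega\rangle$ and $\sigma = |\phi_1\rangle|\omega\rangle$ for a common pure product $|\omega\rangle$ on registers $2,\ldots,n$, and let $E^{(\omega)} := (\mathbb{I}_1 \otimes \langle\omega|)\, E_y\, (\mathbb{I}_1 \otimes |\omega\rangle)$ be the induced outcome-$y$ operator on register~$1$. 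A multiplicative bound on the ratio $\langle\psi_1|E^{(\omega)}|\psi_1\rangle / \langle\phi_1|E^{(\omega)}|\phi_1\rangle$ for orthogonal pairs forces all eigenvalues of $E^{(\omega)}$ into an $e^{\varepsilon}$ window, and so the bound extends to arbitrary pairs on register~$1$. Hence it suffices to prove the DP bound for orthogonal pure neighbors.

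For orthogonal neighbors, $\|\rho-\sigma\|_{\operatorname*{tr}} = 1$, and the triangle inequality applied to
\[
\xi - \xi_y \;=\; \tfrac{p}{p+q}(\rho-\rho_y) + \tfrac{q}{p+q}(\sigma-\sigma_y) - \delta(\rho-\sigma)
\]
yields $\delta \leq 2\alpha$, whence $p/q \leq (1+4\alpha)/(1-4\alpha)$, which is exactly $\ln\bigl((1+4\alpha)/(1-4\alpha)\bigr)$-DP on product states. The $(\alpha,\delta)$-gentle version follows by the same per-outcome reasoning as in Corollary~\ref{gentledpall2}: the multiplicative bound at each $y$ uses only the gentleness inequalities at that specific $y$, so the conclusion holds with probability at least $1-\delta$. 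The main obstacle is the orthogonality reduction via the induced-POVM argument; once that is in place, everything else is a verbatim rerun of Lemma~\ref{mixtrick}.
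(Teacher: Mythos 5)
Your proof is correct and follows essentially the same route as the paper's. The paper traces out the $n-1$ unchanged registers (via Proposition~\ref{contractive}) and then applies Lemma~\ref{mixtrick} as a black box to the resulting single-register measurement; Lemma~\ref{mixtrick} internally performs precisely the orthogonal-pair reduction via Proposition~\ref{eigprop} that you re-derive with your induced operator $E^{(\omega)}$, so your argument is simply a slightly unpacked version of the paper's proof.
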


\begin{proof}
Let $\rho=\rho_{1}\otimes\cdots\otimes\rho_{n}$\ and $\sigma=\sigma_{1}%
\otimes\cdots\otimes\sigma_{n}$\ be two product states that differ only on the
$i^{th}$\ register. \ Also, fix an implementation of $M$ that is $\alpha
$-gentle on product states. \ Then for any outcome $y$, let $\rho_{y}$\ and
$\sigma_{y}$\ be the post-measurement states for\ $\rho$\ and $\sigma
$\ respectively assuming that $M$ returns outcome $y$, and let $\rho_{y,i}%
$\ and $\sigma_{y,i}$\ be the restrictions (i.e., partial traces) of $\rho
_{y}$\ and $\sigma_{y}$\ respectively to the $i^{th}$\ register. \ Then by
Proposition \ref{contractive}, together with the assumption of $\alpha
$-gentleness, we have%
\[
\left\Vert \rho_{y,i}-\rho_{i}\right\Vert _{\operatorname*{tr}}\leq\left\Vert
\rho_{y}-\rho\right\Vert _{\operatorname*{tr}}\leq\alpha,
\]
and likewise%
\[
\left\Vert \sigma_{y,i}-\sigma_{i}\right\Vert _{\operatorname*{tr}}%
\leq\left\Vert \sigma_{y}-\sigma\right\Vert _{\operatorname*{tr}}\leq\alpha.
\]
But now we can apply Lemma \ref{mixtrick}---which implies that, if we think of
$M$ as acting on the $i^{th}$\ register only, with the other $n-1$\ registers
held fixed, then $M$\ must be $\ln\left(  \frac{1+4\alpha}{1-4\alpha}\right)
$-trivial. \ Moreover, the preceding statement holds for \textit{all} $i$, and
\textit{all} settings of the other $n-1$\ registers. \ But that's simply
another way of saying that $M$\ is $\ln\left(  \frac{1+4\alpha}{1-4\alpha
}\right)  $-DP on product states.

The last part follows simply because the argument applies for each possible
output $y$\ independently.
\end{proof}

This proves part (1) of Theorem \ref{main}.

Note that as $\alpha$\ approaches $\frac{1}{4}$, the bound on the DP parameter
diverges. \ Certainly the DP parameter needs to diverge as $\alpha
$\ approaches $\frac{1}{2}$, since (for example) measuring a single qubit in
the $\left\{  \left\vert 0\right\rangle ,\left\vert 1\right\rangle \right\}  $
basis, outputting the result, and then replacing the qubit by the maximally
mixed state is $\frac{1}{2}$-gentle but preserves no privacy whatsoever. \ We
leave it as an open problem to close the gap between $\frac{1}{4}$\ and
$\frac{1}{2}$.

\subsection{DP Implies Gentleness On Product States}

For the other direction, we'll proceed in stages. \ We'll start by providing
that $\varepsilon$-DP implies $O\left(  \varepsilon\sqrt{n}\right)
$-gentleness for classical product distributions. \ Later we'll extend this
result to the quantum setting.

We'll need a claim, originally proved in \cite{drv}, that's found many uses in
classical DP.

\begin{claim}
[\cite{drv}]\label{vadhanclaim}Suppose two probability distributions,
$\mathcal{D}=\left(  p_{x}\right)  _{x\in\left[  d\right]  }$\ and
$\mathcal{D}^{\prime}=\left(  q_{x}\right)  _{x\in\left[  d\right]  }$,
satisfy%
\[
\left\vert \ln\frac{p_{x}}{q_{x}}\right\vert \leq\varepsilon
\]
for all $x\in\left[  d\right]  $. \ Then the KL-divergence,%
\[
\operatorname{KL}\left(  \mathcal{D},\mathcal{D}^{\prime}\right)  =\sum
_{x=1}^{d}p_{x}\ln\frac{p_{x}}{q_{x}},
\]
satisfies $\operatorname{KL}\left(  \mathcal{D},\mathcal{D}^{\prime}\right)
\leq2\varepsilon^{2}$.
\end{claim}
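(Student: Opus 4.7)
The plan is to bound the one-sided divergence $\operatorname{KL}(\mathcal{D}, \mathcal{D}')$ by the symmetrized quantity $\operatorname{KL}(\mathcal{D}, \mathcal{D}') + \operatorname{KL}(\mathcal{D}', \mathcal{D})$, which is non-negative anyway, and then to bound the symmetrized quantity pointwise. Concretely, I would write
\[
\operatorname{KL}(\mathcal{D}, \mathcal{D}') \;\leq\; \operatorname{KL}(\mathcal{D}, \mathcal{D}') + \operatorname{KL}(\mathcal{D}', \mathcal{D}) \;=\; \sum_{x=1}^{d} (p_x - q_x)\, \ln\frac{p_x}{q_x},
\]
using that $-\operatorname{KL}(\mathcal{D}', \mathcal{D}) \leq 0$, and that the two KL terms share the same log ratio up to a sign.

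The next step is to reparametrize. Set $u_x := \ln(p_x/q_x)$, so that $|u_x| \leq \varepsilon$ by hypothesis and $p_x = q_x e^{u_x}$. Then each summand becomes $q_x \cdot u_x (e^{u_x} - 1)$. Crucially, $u$ and $e^u - 1$ always share the same sign, so every summand is non-negative, and the elementary one-variable inequality $|u(e^u - 1)| \leq \varepsilon(e^\varepsilon - 1)$ on the interval $|u| \leq \varepsilon$ (the maximum is attained at $u = \varepsilon$ since $e^\varepsilon - 1 \geq 1 - e^{-\varepsilon}$) yields
\[
\operatorname{KL}(\mathcal{D}, \mathcal{D}') \;\leq\; \sum_x q_x \cdot u_x(e^{u_x} - 1) \;\leq\; \varepsilon(e^\varepsilon - 1) \sum_x q_x \;=\; \varepsilon(e^\varepsilon - 1).
\]
Finally, the elementary inequality $e^\varepsilon - 1 \leq 2\varepsilon$, valid for $\varepsilon$ in the regime of interest (certainly for $\varepsilon \leq 1$), converts this into the claimed $2\varepsilon^2$ bound.

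There is no real obstacle here; the only ``trick'' is the symmetrization step at the start, which turns an asymmetric quantity into one where the summands factor through $p_x - q_x$ and therefore naturally scale like the \emph{square} of the log-ratio rather than its first power. Without that step, one is stuck trying to Taylor-expand $p \ln(p/q)$ and hoping that the first-order term cancels globally, which it does (by $\sum_x (p_x - q_x) = 0$) but only after some bookkeeping; the symmetrization makes the cancellation manifest. The rest is a pointwise convex-analysis estimate on the single-variable function $u \mapsto u(e^u - 1)$, which can be done by inspection.
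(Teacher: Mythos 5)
Your proof is correct, and it is essentially the standard argument. The paper itself does not prove this claim but cites it to \cite{drv}; the symmetrization step
\[
\operatorname{KL}(\mathcal{D},\mathcal{D}') \;\le\; \operatorname{KL}(\mathcal{D},\mathcal{D}') + \operatorname{KL}(\mathcal{D}',\mathcal{D}) \;=\; \sum_x (p_x-q_x)\ln\frac{p_x}{q_x}
\]
followed by a pointwise bound on $u\mapsto u(e^u-1)$ is precisely the argument given there, so there is no genuine divergence in approach to report.

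One small bookkeeping remark: the final inequality $e^\varepsilon-1\le 2\varepsilon$ fails for $\varepsilon$ larger than roughly $1.25$, so as written your chain of inequalities only establishes the bound on a bounded range of $\varepsilon$. This costs you nothing, because for $\varepsilon\ge 1/2$ the trivial bound $\operatorname{KL}(\mathcal{D},\mathcal{D}')=\sum_x p_x\ln(p_x/q_x)\le\varepsilon\le 2\varepsilon^2$ already suffices; but it is worth saying explicitly rather than gesturing at ``the regime of interest,'' since the claim as stated carries no restriction on $\varepsilon$.
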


We can now prove a classical \textquotedblleft DP implies
gentleness\textquotedblright\ result. \ As noted previously, similar results
follow from the work of Dwork et al.\ \cite{dwork:adaptive2} and Rogers et
al.\ \cite{rrst} (phrased in terms of the so-called \textquotedblleft max
information\textquotedblright), but we provide a self-contained proof.

\begin{lemma}
[Classical DP$\Longrightarrow$Gentleness]\label{dpgentleclassical}Let $A$ be a
classical $\varepsilon$-DP algorithm, and let $\mathcal{D}$ be a product
distribution over databases $X$. \ Then for all possible outputs $y$\ of $A$,
the posterior distribution $\mathcal{D}_{y}$\ satisfies $\left\Vert
\mathcal{D}_{y}-\mathcal{D}\right\Vert \leq2\varepsilon\sqrt{n}$, and indeed
the stronger bound $\operatorname{KL}\left(  \mathcal{D}_{y},\mathcal{D}%
\right)  \leq2\varepsilon^{2}n$.
\end{lemma}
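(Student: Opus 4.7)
The plan is to decompose $\operatorname{KL}(\mathcal{D}_y,\mathcal{D})$ one register at a time via the chain rule for KL-divergence, and show that each conditional KL-term is at most $2\varepsilon^2$ using the DP hypothesis together with Claim \ref{vadhanclaim}. The trace-distance bound will then follow immediately from Pinsker's inequality (Proposition \ref{pinsker}).

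In more detail: let $\mathcal{D}_y$ denote the posterior distribution on $X=(x_1,\ldots,x_n)$ conditioned on $A(X)=y$. By the chain rule,
\[
\operatorname{KL}(\mathcal{D}_y,\mathcal{D})=\sum_{i=1}^n \mathbb{E}_{x_{<i}\sim\mathcal{D}_y}\!\left[\operatorname{KL}\bigl(\mathcal{D}_y(\,\cdot\mid x_{<i}),\,\mathcal{D}(\,\cdot\mid x_{<i})\bigr)\right].
\]
Since $\mathcal{D}$ is a product distribution, the inner conditional of $\mathcal{D}$ is just $\mathcal{D}_i$. For the posterior, Bayes' rule gives
\[
\mathcal{D}_y(x_i\mid x_{<i})=\frac{\mathcal{D}_i(x_i)\cdot\Pr[A(x_{<i},x_i,X_{>i})=y]}{\sum_{x_i'}\mathcal{D}_i(x_i')\cdot\Pr[A(x_{<i},x_i',X_{>i})=y]},
\]
where $X_{>i}\sim\mathcal{D}_{>i}$. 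The key observation is that the quantities $\Pr[A(x_{<i},x_i,X_{>i})=y]$, viewed as a function of $x_i$ with $x_{<i}$ fixed, are pairwise within a multiplicative factor of $e^{\pm\varepsilon}$ by the $\varepsilon$-DP hypothesis applied to a single-coordinate change (averaging over $X_{>i}$ preserves the bound by convexity). Since the denominator is a convex combination of such values, the ratio $\mathcal{D}_y(x_i\mid x_{<i})/\mathcal{D}_i(x_i)$ also lies in $[e^{-\varepsilon},e^{\varepsilon}]$, so $\bigl|\ln \mathcal{D}_y(x_i\mid x_{<i})/\mathcal{D}_i(x_i)\bigr|\le\varepsilon$ pointwise.

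Plugging this pointwise log-ratio bound into Claim \ref{vadhanclaim} gives $\operatorname{KL}\bigl(\mathcal{D}_y(\,\cdot\mid x_{<i}),\mathcal{D}_i\bigr)\le 2\varepsilon^2$ for every $x_{<i}$, and summing the $n$ terms from the chain rule yields $\operatorname{KL}(\mathcal{D}_y,\mathcal{D})\le 2\varepsilon^2 n$. Finally, Pinsker's inequality gives
\[
\|\mathcal{D}_y-\mathcal{D}\|\le\sqrt{2\operatorname{KL}(\mathcal{D}_y,\mathcal{D})}\le 2\varepsilon\sqrt{n}.
\]

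The main subtlety, and the step I would be most careful about, is the Bayes-rule manipulation that reduces the posterior conditional to a ratio of $\Pr[A=y\mid \cdot]$'s, together with the correct invocation of DP on the ``marginalized'' algorithm $x_i\mapsto\Pr[A(x_{<i},x_i,X_{>i})=y]$; one must check that averaging over $X_{>i}$ preserves the pointwise $e^{\pm\varepsilon}$ log-ratio bound (it does, since each term in the average already satisfies this, by independence of $X_{>i}$ from $x_{\le i}$ under $\mathcal{D}$ and DP on the single coordinate $i$). Every other step is a routine application of chain rule, Claim \ref{vadhanclaim}, and Pinsker.
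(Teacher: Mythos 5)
Your proof is correct and follows essentially the same route as the paper's. The paper also uses the chain-rule decomposition (writing the KL as a sum of expected per-coordinate KL terms), the same Bayes'-rule manipulation to express the posterior conditional as a ratio of $\Pr[y\mid\cdot]$ terms, and the same averaging-over-$X_{>i}$-and-convexity argument to get the pointwise $e^{\pm\varepsilon}$ bound before invoking Claim \ref{vadhanclaim} and Pinsker's inequality.
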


\begin{proof}
Fix any output $y$. \ We want to compare the prior distribution $\mathcal{D}%
=\mathcal{D}_{1}\times\cdots\times\mathcal{D}_{n}$\ over databases $X$ to the
posterior distribution $\mathcal{D}_{y}$, which is obtained by conditioning on
the event $A\left(  X\right)  =y$. \ To do this, consider a process wherein we
draw a database $X=\left(  x_{1},\ldots,x_{n}\right)  $\ from $\mathcal{D}%
_{y}$, by first drawing $x_{1}$\ from the marginal distribution over the first
entry conditioned on $A\left(  X\right)  =y$, then drawing $x_{2}$\ from the
marginal distribution over the second entry conditioned on $A\left(  X\right)
=y$\ and on $x_{1}$, and so on up to $x_{n}$.

Let's call the $i^{th}$ distribution above $\mathcal{T}_{i}$; note that
$\mathcal{T}_{i}$ depends both on $y$ and on $x_{1},\ldots,x_{i-1}$. \ These
are distributions over $\left[  d\right]  $, our \textquotedblleft data
universe.\textquotedblright\ \ We claim that, for every possible value
$v\in\left[  d\right]  $\ for $x_{i}$, the log-ratio between $v$'s
probabilities under $\mathcal{T}_{i}$\ and under $\mathcal{D}_{i}$\ must be
upper-bounded in magnitude by $\varepsilon$. \ To show this, let $W:=\left(
x_{1},\ldots,x_{i-1}\right)  $ and $Z:=\left(  x_{i+1},\ldots,x_{n}\right)  $.
\ Then%
\begin{align*}
\frac{\Pr_{\mathcal{T}_{i}}\left[  v\right]  }{\Pr_{\mathcal{D}_{i}}\left[
v\right]  }  &  =\frac{\Pr\left[  v~|~y,W\right]  }{\Pr\left[  v\right]  }\\
&  =\frac{\Pr\left[  v~|~y,W\right]  }{\Pr\left[  v~|~W\right]  }\\
&  =\frac{\Pr\left[  y~|~W,v\right]  }{\Pr\left[  y~|~W\right]  }\\
&  =\frac{\sum_{Z}\Pr\left[  Z~|~W,v\right]  \Pr\left[  y~|~W,v,Z\right]
}{\sum_{Z}\Pr\left[  Z~|~W\right]  \Pr\left[  y~|~W,Z\right]  }\\
&  =\frac{\sum_{Z}\Pr\left[  Z\right]  \Pr\left[  y~|~W,v,Z\right]  }{\sum
_{Z}\Pr\left[  Z\right]  \Pr\left[  y~|~W,Z\right]  }.
\end{align*}
Here the second and last lines used the assumption that $\mathcal{D}%
=\mathcal{D}_{1}\times\cdots\times\mathcal{D}_{n}$ is a product distribution.
\ Also, by differential privacy,%
\[
e^{-\varepsilon}\leq\frac{\Pr\left[  y~|~W,v,Z\right]  }{\Pr\left[
y~|~W,Z\right]  }\leq e^{\varepsilon}%
\]
for all $y,W,v,Z$. \ Therefore convex combinations of the above probabilities
are also within an\ $e^{\varepsilon}$\ multiplicative factor of one another,
so%
\[
\left\vert \ln\frac{\Pr_{\mathcal{T}_{i}}\left[  v\right]  }{\Pr
_{\mathcal{D}_{i}}\left[  v\right]  }\right\vert \leq\varepsilon.
\]

By Claim \ref{vadhanclaim}, this means that the \textit{expected} log-ratio
between $\Pr_{\mathcal{T}_{i}}\left[  v\right]  $\ and $\Pr_{\mathcal{D}_{i}%
}\left[  v\right]  $, with respect to $x_{i}$\ drawn from $\mathcal{T}_{i}$,
is upper-bounded by $2\varepsilon^{2}$:%
\[
\operatorname{KL}\left(  \mathcal{T}_{i},\mathcal{D}_{i}\right)
=\operatorname{E}_{v\sim\mathcal{T}_{i}}\left[  \ln\frac{\Pr_{\mathcal{T}_{i}%
}\left[  v\right]  }{\Pr_{\mathcal{D}_{i}}\left[  v\right]  }\right]
\leq2\varepsilon^{2}.
\]
Furthermore, the expected sum of the log-ratios---i.e., the KL-divergence
between $\mathcal{D}_{y}$\ and $\mathcal{D}$\ themselves---is just the sum of
the expected log-ratios:%
\[
\operatorname{KL}\left(  \mathcal{D}_{y},\mathcal{D}\right)  =\sum_{i=1}%
^{n}\operatorname{E}\left[  \operatorname{KL}\left(  \mathcal{T}%
_{i},\mathcal{D}_{i}\right)  \right]  \leq2\varepsilon^{2}n,
\]
where the expectations here are over the choices for the $\mathcal{T}_{i}$'s
(which, however, are irrelevant to the upper bound). \ So by Pinsker's
inequality (Proposition \ref{pinsker}),%
\[
\left\Vert \mathcal{D}_{y}-\mathcal{D}\right\Vert \leq\sqrt{2\operatorname{KL}%
\left(  \mathcal{D}_{y},\mathcal{D}\right)  }\leq2\varepsilon\sqrt{n}.
\]

\end{proof}

Having shown that $\varepsilon$-DP implies $O\left(  \varepsilon\sqrt
{n}\right)  $-gentleness for classical product distributions, we now begin the
task of extending the result to quantum product states.

\begin{lemma}
\label{dpgentleprojective}Suppose the measurement $M$ is $\varepsilon$-DP on
product states, and is a product-of-projectives (i.e., consists of a classical
algorithm applied to the outcomes of nonadaptive projective measurements on
the $n$ registers). \ Then $M$ is $O\left(  \varepsilon\sqrt{n}\right)
$-gentle on product states.
\end{lemma}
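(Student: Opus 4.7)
The plan follows the three-step roadmap sketched in Section \ref{TECHNIQUES}. First, by Proposition \ref{mixedgentle} it suffices to exhibit an $O(\varepsilon\sqrt n)$-gentle implementation on pure product states $|\psi\rangle = |\psi_1\rangle \otimes \cdots \otimes |\psi_n\rangle$, so I would strip off the basis and phase of each $|\psi_i\rangle$. Assume without loss of generality that each $P_i$ is a rank-$1$ projective measurement in some basis $\{|v_i^x\rangle\}_{x \in [d]}$ and write $|\psi_i\rangle = \sum_x |\alpha_i^x| e^{i\theta_i^x}|v_i^x\rangle$. Letting $U_i$ be the unitary $|v_i^x\rangle \mapsto |x\rangle$ and $D_i := \sum_x e^{i\theta_i^x}|x\rangle\langle x|$, we have $|\psi_i\rangle = U_i^\dagger D_i |\tilde\psi_i\rangle$ where $|\tilde\psi_i\rangle := \sum_x |\alpha_i^x|\,|x\rangle$ is the QSampling state of the measurement-outcome distribution $\mathcal{D}_i(x) = |\alpha_i^x|^2$. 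Since the POVM elements $E_y$ of $M$ are diagonal in the product $P_i$-eigenbasis, conjugating any implementation $\{B_y\}$ by $D^\dagger U$ (with $U := \bigotimes_i U_i$ and $D := \bigotimes_i D_i$) produces a valid implementation $\{U^\dagger D B_y D^\dagger U\}$ of $M$ whose trace-distance damage on $|\psi\rangle$ equals $\{B_y\}$'s damage on $|\tilde\psi\rangle := \bigotimes_i |\tilde\psi_i\rangle$ (unitary invariance of trace norm). So I reduce to the case where $|\psi\rangle = |\tilde\psi\rangle$ is the QSampling state of a product distribution $\mathcal{D}$ and $M$ is ``measure each register in the computational basis, then feed the outcome to the classical randomized algorithm $A$.''

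The second step is to implement $M$ on $|\tilde\psi\rangle$ by coherent QSampling of $A$ itself. For each $x \in [d]^n$ let $q_x(y) := \Pr[A(x) = y]$ and fix any unitary $W$ satisfying $W|x\rangle_X|0\rangle_Y = |x\rangle_X \sum_y \sqrt{q_x(y)}\,|y\rangle_Y$; such a $W$ exists information-theoretically (though perhaps not efficiently). My implementation appends $Y$ in state $|0\rangle$, applies $W$, and measures $Y$; this realizes the Kraus operators $B_y = \sum_x \sqrt{q_x(y)}\,|x\rangle\langle x|$, which satisfy $B_y^\dagger B_y = E_y$ as required. Applied to $|\tilde\psi\rangle = \sum_x \sqrt{p(x)}\,|x\rangle$, outcome $y$ occurs with probability $p(y) = \sum_x p(x) q_x(y)$ and leaves the input registers in the pure state
\[
|\tilde\psi_y\rangle = \frac{1}{\sqrt{p(y)}}\sum_x \sqrt{p(x)\,q_x(y)}\,|x\rangle = \sum_x \sqrt{p^y(x)}\,|x\rangle,
\]
which is exactly the QSampling state of the posterior distribution $\mathcal{D}_y$ of $x$ given $y$.

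Third, I would bound the damage using the Hellinger/KL bridge. By Proposition \ref{puretd}, $\||\tilde\psi\rangle\langle\tilde\psi| - |\tilde\psi_y\rangle\langle\tilde\psi_y|\|_{\operatorname{tr}} = \sqrt{1 - |\langle\tilde\psi|\tilde\psi_y\rangle|^2}$, and
\[
\langle\tilde\psi|\tilde\psi_y\rangle = \sum_x \sqrt{p(x)\,p^y(x)} = 1 - H^2(\mathcal{D}, \mathcal{D}_y).
\]
Since $M$ is $\varepsilon$-DP on product states and computational-basis strings are pure product states, reading off the definition shows that $A$ itself is a classical $\varepsilon$-DP algorithm in the sense of Definition \ref{dpdef}. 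Lemma \ref{dpgentleclassical} then gives $\operatorname{KL}(\mathcal{D}_y, \mathcal{D}) \le 2\varepsilon^2 n$ for every $y$, and Proposition \ref{hkl} upgrades this to $H^2(\mathcal{D}, \mathcal{D}_y) \le 2\varepsilon^2 n$. Hence $\langle\tilde\psi|\tilde\psi_y\rangle \ge 1 - 2\varepsilon^2 n$ and the trace distance is at most $O(\varepsilon\sqrt n)$, completing the proof after undoing the $U,D$ conjugation from the first step.

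The main obstacle, beyond the basis/phase bookkeeping, is implementing $M$ so that the post-measurement state is a pure QSampling state rather than a classical mixture: simply measuring each register in the computational basis, running $A$ on the outcomes, and leaving the measured $x$ in the registers would leave the input in the classical mixture $\sum_x p^y(x)|x\rangle\langle x|$, which is far from $|\tilde\psi\rangle$ in trace distance even when its diagonal matches perfectly. Coherent QSampling of $A$'s output distribution is precisely what prevents leaking any classical information about $x$ into an ancilla and thereby decohering the input registers.
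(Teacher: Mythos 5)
Your proposal is correct and takes essentially the same approach as the paper: reduce to pure product states, conjugate by local unitaries to the computational basis, implement $M$ by coherently QSampling $A$'s output distribution (Kraus operators $B_y = \sum_x \sqrt{q_x(y)}\,|x\rangle\langle x|$), identify the post-measurement state as the QSampling state of the posterior $\mathcal{D}_y$, and bound the trace distance via the inner-product/Hellinger identity, Proposition \ref{hkl}, and Lemma \ref{dpgentleclassical}. The only cosmetic difference is that you strip off the phases $e^{i\theta_i^x}$ with a diagonal unitary $D$ up front, whereas the paper keeps the phases $\alpha_X/|\alpha_X|$ in $|\psi_y\rangle$ and observes that they cancel when forming $\langle\psi|\psi_y\rangle$; these are the same calculation.
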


\begin{proof}
By Proposition \ref{mixedgentle}, it suffices to give an implementation of $M$
that is $O\left(  \varepsilon\sqrt{n}\right)  $-gentle on \textit{pure}
product states. \ Thus, let%
\[
\left\vert \psi\right\rangle =\left\vert \psi_{1}\right\rangle \otimes
\cdots\otimes\left\vert \psi_{n}\right\rangle =\sum_{X\in\left[  d\right]
^{n}}\alpha_{X}\left\vert X\right\rangle
\]
be a pure product state on $n$ registers. \ By applying suitable local
unitaries, we can assume without loss of generality that $M$ simply measures
each $\left\vert \psi_{i}\right\rangle $\ in the computational basis,
obtaining the string $\left\vert X\right\rangle =\left\vert x_{1}\right\rangle
\cdots\left\vert x_{n}\right\rangle $\ with probability $\left\vert \alpha
_{X}\right\vert ^{2}$. \ It then outputs a sample from some probability
distribution $\mathcal{M}_{X}$,\ depending on $X$, over the possible outputs
$y$.\ \ We need to show how to sample from $\mathcal{M}_{X}$\ in an $O\left(
\varepsilon\sqrt{n}\right)  $-gentle manner.

Our implementation is as follows: first map the state $\left\vert
\psi\right\rangle $\ to%
\[
\sum_{X\in\left[  d\right]  ^{n}:\Pr\left[  X\right]  >0,~y}\alpha
_{X}\left\vert X\right\rangle \sqrt{\Pr\left[  y|X\right]  }\left\vert
y\right\rangle .
\]
Note that, as long as we do not care about computational complexity, the above
mapping can always be implemented \textit{somehow}, although implementing it
efficiently requires an efficient algorithm for \textquotedblleft
QSampling\textquotedblright\ the probability distributions $\mathcal{M}_{X}$.
\ Next, measure the $\left\vert y\right\rangle $ register in the computational
basis, and condition on getting some particular result $y$.

Then by the rules of quantum mechanics and Bayes' rule, the state of the first
register is just%
\begin{align*}
\left\vert \psi_{y}\right\rangle :=  &  \frac{\sum_{X\in\left[  d\right]
^{n}:\Pr\left[  X\right]  >0}\alpha_{X}\sqrt{\Pr\left[  y|X\right]
}\left\vert X\right\rangle }{\sqrt{\sum_{X\in\left[  d\right]  ^{n}:\Pr\left[
X\right]  >0}\left\vert \alpha_{X}\right\vert ^{2}\Pr\left[  y|X\right]  }}\\
&  =\frac{\sum_{X\in\left[  d\right]  ^{n}:\Pr\left[  X\right]  >0}\alpha
_{X}\sqrt{\Pr\left[  y|X\right]  }\left\vert X\right\rangle }{\sqrt{\Pr\left[
y\right]  }}\\
&  =\sum_{X\in\left[  d\right]  ^{n}:\Pr\left[  X\right]  >0}\alpha_{X}%
\sqrt{\frac{\Pr\left[  y|X\right]  }{\Pr\left[  y\right]  }}\left\vert
X\right\rangle \\
&  =\sum_{X\in\left[  d\right]  ^{n}:\Pr\left[  X\right]  >0}\alpha_{X}%
\sqrt{\frac{\Pr\left[  X|y\right]  }{\Pr\left[  X\right]  }}\left\vert
X\right\rangle \\
&  =\sum_{X\in\left[  d\right]  ^{n}:\Pr\left[  X\right]  >0}\frac{\alpha_{X}%
}{\left\vert \alpha_{X}\right\vert }\sqrt{\Pr\left[  X|y\right]  }\left\vert
X\right\rangle .
\end{align*}
Let $\mathcal{D}$\ be the distribution over $X\in\left[  d\right]  ^{n}%
$\ defined by $\Pr_{\mathcal{D}}\left[  X\right]  =\left\vert \alpha
_{X}\right\vert ^{2}$; note that $\mathcal{D}$\ is a product distribution, to
which Lemma \ref{dpgentleclassical} applies. \ Also, let $\mathcal{D}_{y}$ be
$\mathcal{D}$\ conditioned on the event that $M$ outputs $y$. \ Then we see
above that $\left\vert \psi_{y}\right\rangle $ is a pure state that precisely
corresponds to $\mathcal{D}_{y}$---in the sense that, if we measure
$\left\vert \psi_{y}\right\rangle $ in the computational basis, we'll see a
sample from $\mathcal{D}_{y}$. \ The one complication is that $\left\vert
\psi_{y}\right\rangle $ has an additional set of degrees of freedom, namely
the unit-norm phases $\frac{\alpha_{X}}{\left\vert \alpha_{X}\right\vert }$.
\ However, even these phases go away when we calculate the inner products
$\left\langle \psi|\psi_{y}\right\rangle $ (which involve complex conjugates).
\ In more detail:%
\begin{align*}
\left\langle \psi|\psi_{y}\right\rangle  &  =\sum_{X\in\left[  d\right]
^{n}:\Pr\left[  X\right]  >0}\alpha_{X}^{\ast}\frac{\alpha_{X}}{\left\vert
\alpha_{X}\right\vert }\sqrt{\Pr\left[  X|y\right]  }\\
&  =\sum_{X\in\left[  d\right]  ^{n}}\left\vert \alpha_{X}\right\vert
\sqrt{\Pr\left[  X|y\right]  }\\
&  =\sum_{X\in\left[  d\right]  ^{n}}\sqrt{\Pr\left[  X\right]  \Pr\left[
X|y\right]  }\\
&  =1-H^{2}\left(  \mathcal{D},\mathcal{D}_{y}\right)  .
\end{align*}
Here $H^{2}\left(  \mathcal{D},\mathcal{D}_{y}\right)  $\ is the
\textit{squared Hellinger distance} between the probability distributions
$\mathcal{D}$\ and $\mathcal{D}_{y}$ (see Section \ref{PROB}). \ So in
particular, $\left\langle \psi|\psi_{y}\right\rangle $ strictly relates the
distributions $\mathcal{D}$ and $\mathcal{D}_{y}$, and has nothing further to
do with quantum mechanics.

We can now upper-bound the trace distance between $\left\vert \psi
\right\rangle $\ and $\left\vert \psi_{y}\right\rangle $---and hence, the
gentleness of $M$ on $\left\vert \psi\right\rangle $---by%
\begin{align*}
\left\Vert |\psi\rangle\langle\psi|-|\psi_{y}\rangle\langle\psi_{y}%
|\right\Vert _{\operatorname*{tr}}  &  =\sqrt{1-\left\vert \left\langle
\psi|\psi_{y}\right\rangle \right\vert ^{2}}\\
&  =\sqrt{1-\left(  1-H^{2}\left(  \mathcal{D},\mathcal{D}_{y}\right)
\right)  ^{2}}\\
&  =\sqrt{2H^{2}\left(  \mathcal{D},\mathcal{D}_{y}\right)  -H^{4}\left(
\mathcal{D},\mathcal{D}_{y}\right)  }\\
&  \leq\sqrt{2}H\left(  \mathcal{D},\mathcal{D}_{y}\right) \\
&  \leq\sqrt{2\operatorname{KL}\left(  \mathcal{D}_{y},\mathcal{D}\right)  }\\
&  \leq2\sqrt{2}\cdot\varepsilon\sqrt{n}.
\end{align*}
Here the first line used Proposition \ref{puretd}, the second-to-last line
used Proposition \ref{hkl}, and the last line used Lemma
\ref{dpgentleclassical}.
\end{proof}

Note that, if we'd upper-bounded the Hellinger distance $H\left(
\mathcal{D},\mathcal{D}_{y}\right)  $ by the square root of the variation
distance, $\left\Vert \mathcal{D}-\mathcal{D}_{y}\right\Vert =O\left(
\varepsilon\sqrt{n}\right)  $, we'd only get an upper bound of $O\left(
\sqrt{\varepsilon}n^{1/4}\right)  $, rather than the $O\left(  \varepsilon
\sqrt{n}\right)  $\ that we wanted. \ To avoid that loss, here we exploited
the fact that Lemma \ref{dpgentleclassical}\ upper-bounded the KL-divergence
rather than only the variation distance, and we also used Proposition
\ref{hkl}, which upper-bounds Hellinger distance directly in terms of
KL-divergence, bypassing variation distance.

We now prove the final lemma needed to complete the proof of Theorem
\ref{main}, by generalizing Lemma \ref{dpgentleprojective}\ from projective
measurements to POVMs.

\begin{lemma}
\label{dpgentlepovm}If $M$ is any product measurement that is $\varepsilon$-DP
on product states, then $M$ is $O\left(  \varepsilon\sqrt{n}\right)  $-gentle
on product states.
\end{lemma}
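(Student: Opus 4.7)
The plan is to reduce the POVM case to the product-of-projectives case (Lemma~\ref{dpgentleprojective}) via Naimark's dilation theorem. For each POVM $M_i$ on register $\mathcal{H}_i$, I introduce an ancilla register $\mathcal{K}_i$ initialized to $|0\rangle_{\mathcal{K}_i}$ and a unitary $U_i$ on $\mathcal{H}_i \otimes \mathcal{K}_i$ such that applying $U_i$ followed by measuring $\mathcal{K}_i$ in the computational basis implements the POVM $M_i$. Let $M'$ denote the extended measurement on $\bigotimes_i (\mathcal{H}_i \otimes \mathcal{K}_i)$ that applies $U = \bigotimes_i U_i$, measures each $\mathcal{K}_i$ projectively, and then applies the classical post-processing $f$. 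Then $M'$ is a product-of-projectives on the extended product-state structure (each ``extended register'' being $\mathcal{H}_i \otimes \mathcal{K}_i$), and it reproduces $M$'s output distribution on all states of the lifted form $\bigotimes_i (\rho_i \otimes |0\rangle\langle 0|_{\mathcal{K}_i})$.

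To give a gentle implementation of $M$ on a pure product state $|\psi\rangle = \bigotimes_i |\psi_i\rangle$, I would: (i) append the $\mathcal{K}$ ancillas in $|0\rangle$ and an output register $\mathcal{Y}$ in $|0\rangle$; (ii) apply $U$ to produce the lifted pure product state $|\tilde\psi\rangle = \bigotimes_i U_i(|\psi_i\rangle \otimes |0\rangle_{\mathcal{K}_i})$ on the extended system; (iii) coherently QSample $f$ into $\mathcal{Y}$ by the isometry $|\vec k\rangle_{\mathcal{K}} |0\rangle_{\mathcal{Y}} \mapsto \sum_y \sqrt{\Pr[f(\vec k) = y]}\, |\vec k\rangle_{\mathcal{K}} |y\rangle_{\mathcal{Y}}$; (iv) measure $\mathcal{Y}$ in the computational basis to obtain the outcome $y$; and (v) apply $U^\dagger$ to disentangle the ancillas. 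Applying Lemma~\ref{dpgentleprojective} to $M'$ at the extended pure product state $|\tilde\psi\rangle$ would bound the trace distance between $|\tilde\psi\rangle$ and the extended post-measurement state by $O(\varepsilon\sqrt{n})$. Unitary invariance of trace distance under $U^\dagger$, combined with contractivity under the partial trace over $\mathcal{K}$ and $\mathcal{Y}$, then transports this bound to $\left\Vert \rho_y - |\psi\rangle\langle\psi| \right\Vert_{\operatorname*{tr}} = O(\varepsilon\sqrt{n})$ on the original Hilbert space $\mathcal{H}$. Proposition~\ref{mixedgentle} finally lifts the gentleness conclusion from pure to mixed product states.

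The hard part will be verifying that Lemma~\ref{dpgentleprojective} genuinely applies to $M'$ at $|\tilde\psi\rangle$: as stated, the lemma's hypothesis requires $\varepsilon$-DP of $M'$ on all extended product states, whereas $M$'s hypothesized DP on original product states only yields DP of $M'$ on the lifted subclass of extended product states whose ancillas lie in $|0\rangle$. Resolving this will require inspecting the proof of Lemma~\ref{dpgentleclassical} applied to the product distribution $\mathcal{D}(\vec k) = \prod_i \langle \psi_i | M_{i,k_i} | \psi_i \rangle$ arising from $|\tilde\psi\rangle$, and arguing that the conditional-probability estimates actually invoked there can be supplied by $M$'s DP on appropriate original product states with a single register varied---so that the restricted DP hypothesis on the lifted subclass suffices. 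Once this classical step is justified, the remaining quantum-information manipulations (Naimark dilation, QSampling, contractivity, and Proposition~\ref{mixedgentle}) are routine.
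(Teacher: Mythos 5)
Your plan is structurally the paper's own proof: Naimark-dilate each $M_i$ to a projective measurement on the enlarged register $\mathcal{H}_i\otimes\mathcal{K}_i$, run the classical post-processing as a coherent QSample, condition on the output $y$, uncompute, and trace out the ancillas, with Proposition \ref{mixedgentle} handling the generalization to mixed states. The inert-garbage calculation $\langle\tilde\psi|\tilde\psi_y\rangle = \sum_X\sqrt{\Pr[X]\Pr[X\,|\,y]}$ is exactly the paper's as well.

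The issue you flag at the end, however, is a genuine obstacle, and the fix you sketch does not go through. Lemma \ref{dpgentleclassical} needs a \emph{pointwise} classical DP guarantee for the post-processing over the POVM-outcome alphabet: for every $y$ and every pair of outcome vectors $X, X'$ differing in one coordinate, $\Pr[A(X)=y]\leq e^{\varepsilon}\Pr[A(X')=y]$. In the projective case this follows from $M$'s DP on product states because each $X$ is realized deterministically by the computational-basis product state $|X\rangle\langle X|$; for a general POVM there is no product state that pins a POVM outcome deterministically, so varying a single register of a physical product state cannot supply the required pointwise estimates. A concrete witness: take $n=1$, the trivial POVM $\{I/2, I/2\}$, and $A(x)=x$. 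Then $M$ is $0$-DP, but $\mathcal{D}$ is uniform on $\{0,1\}$ and $\mathcal{D}_{y=0}$ is a point mass, so $\operatorname{KL}(\mathcal{D}_{y=0},\mathcal{D})=\ln 2$ and $\langle\tilde\psi|\tilde\psi_{y=0}\rangle = 1/\sqrt{2}$---nowhere near the claimed $1 - O(\varepsilon^{2}n) = 1$. The conclusion of the lemma is nonetheless true in this example (after $U^{\dagger}$ and tracing out $\mathcal{K}$ the reduced state on $\mathcal{H}$ is exactly $|\psi\rangle\langle\psi|$), but the contractivity step you rely on throws away the information needed to see this. The paper's own proof of Lemma \ref{dpgentlepovm} likewise invokes ``the same argument'' as Lemma \ref{dpgentleprojective} without explaining why the classical algorithm becomes $\varepsilon$-DP over the POVM-outcome alphabet, so this deserves more care than either the paper's text or your proposal gives it; closing it seems to require a genuinely new step, e.g.\ first peeling off the largest scalar multiple of the identity from each $E_{i,x}$ (which injects only state-independent noise implementable with zero damage) and reducing to what remains, or bounding the damage on the reduced state of $\mathcal{H}$ directly rather than through the lossy extended trace distance.
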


\begin{proof}
Again, by Proposition \ref{mixedgentle}, it suffices to restrict attention to
pure states. \ We will give a reduction to the situation already handled in
Lemma \ref{dpgentleprojective}. \ Suppose we start with the product state%
\[
\left\vert \psi\right\rangle :=\left\vert \psi_{1}\right\rangle \otimes
\cdots\otimes\left\vert \psi_{n}\right\rangle .
\]
Next we apply a POVM to each $\left\vert \psi_{i}\right\rangle $. \ This can
be modeled as follows: for each $i$, we apply a unitary transformation $U_{i}$
to $\left\vert \psi_{i}\right\rangle $ together with some ancilla qubits that
are initially in the state $\left\vert 0\cdots0\right\rangle $. \ This yields
a new state that we can write as%
\[
\left\vert \phi_{i}\right\rangle :=\sum_{x\in\left[  k\right]  }%
a_{x}\left\vert x\right\rangle \left\vert v_{ix}\right\rangle .
\]
Here $\left\vert x\right\rangle $ represents a classical computational basis
state that the POVM will measure, while $\left\vert v_{ix}\right\rangle $
represents \textquotedblleft garbage\textquotedblright: some normalized state
that depends only on $i$ and $x$, need not be in the computational basis, and
will not be measured.

So now we have%
\[
\left\vert \phi\right\rangle :=\left\vert \phi_{1}\right\rangle \otimes
\cdots\otimes\left\vert \phi_{n}\right\rangle .
\]
Next, we apply our classical algorithm to the basis states $\left\vert
x_{1}\right\rangle ,\ldots,\left\vert x_{n}\right\rangle $, and then we
condition on the algorithm outputting $y$. \ This yields a new state
$\left\vert \phi_{y}\right\rangle $. \ What can we say about the relation
between $\left\vert \phi\right\rangle $ and $\left\vert \phi_{y}\right\rangle
$?

Let's reorganize $\left\vert \phi\right\rangle $ by collecting $\left(
x_{1},\ldots,x_{n}\right)  $\ into a single register that we'll call $X$, and
also collecting all the $\left\vert v_{ij}\right\rangle $'s into a single
register that we'll call $\left\vert v_{X}\right\rangle $. \ We then have:%
\[
\left\vert \phi\right\rangle =\sum_{X\in\left[  k\right]  ^{n}}b_{X}\left\vert
X\right\rangle \left\vert v_{X}\right\rangle ,
\]
where $b_{X}=a_{x_{1}}\cdots a_{x_{n}}$, and $\left\vert b_{X}\right\vert
^{2}=\Pr\left[  X\right]  $ is just the probability of $X$ from the
perspective of the classical algorithm. \ By exactly the same reasoning as in
the proof of Lemma \ref{dpgentleprojective}, it follows that%
\[
\left\vert \phi_{y}\right\rangle =\sum_{X\in\left[  k\right]  ^{n}:\Pr\left[
X\right]  >0}\frac{b_{X}}{\left\vert b_{X}\right\vert }\sqrt{\Pr\left[
X|y\right]  }\left\vert X\right\rangle \left\vert v_{X}\right\rangle .
\]
Therefore,%
\[
\left\langle \phi|\phi_{y}\right\rangle =\sum_{X\in\left[  k\right]  ^{n}%
:\Pr\left[  X\right]  >0}b_{X}^{\ast}\frac{b_{X}}{\left\vert b_{X}\right\vert
}\sqrt{\Pr\left[  X|y\right]  }=\sum_{X\in\left[  k\right]  ^{n}}\sqrt
{\Pr\left[  X\right]  \Pr\left[  X|y\right]  }.
\]

So now we have exactly the same expression for the inner product that we had
in the proof of Lemma \ref{dpgentleprojective}. \ So we can use the same
argument to lower-bound the inner product by $1-\varepsilon^{2}n$, and to
upper-bound both the Hellinger distance and the trace distance between
$\left\vert \phi\right\rangle $\ and $\left\vert \phi_{y}\right\rangle $\ by
$O\left(  \varepsilon\sqrt{n}\right)  $.

Finally, recall that $\left\vert \phi\right\rangle $ was obtained by applying
a unitary transformation $U=U_{1}\otimes\cdots\otimes U_{n}$\ to $\left\vert
\psi\right\rangle $ together with some $\left\vert 0\right\rangle $ ancilla
qubits. \ Since inner products are unitarily invariant, this means that
$U^{\dagger}\left\vert \phi_{y}\right\rangle $ also has trace distance at most
$O\left(  \varepsilon\sqrt{n}\right)  $ from $U^{\dagger}\left\vert
\phi\right\rangle =\left\vert \psi\right\rangle \left\vert 0\cdots
0\right\rangle $. \ Hence we've implemented $M$ in an $O\left(  \varepsilon
\sqrt{n}\right)  $-gentle manner.
\end{proof}

Intuitively, what's going on is that the garbage register, $\left\vert
v_{X}\right\rangle $, is completely inert: it's there, but it has no effect on
the inner product.

Combining Lemma \ref{gentledpprod}\ with Lemma \ref{dpgentlepovm} now
completes the proof of Theorem \ref{main}.

\section{Separating Examples\label{EXAMPLES}}

In this section, we prove that the relationships between DP and gentleness
notions proved in the preceding two sections are essentially tight, by giving
examples of measurements that exhibit their optimality.

\subsection{Gentleness to DP\label{GDSEP}}

For all $\beta>0$, let $R_{\beta}$\ be the \textquotedblleft randomized
response\textquotedblright\ algorithm, which for each $i\in\left[  n\right]  $
separately, applies the POVM defined by the matrices%
\[
E_{\operatorname*{reject}}=\left(
\begin{array}
[c]{cc}%
\frac{1}{2}+\beta & 0\\
0 & \frac{1}{2}-\beta
\end{array}
\right)  ,~~~E_{\operatorname*{accept}}=\left(
\begin{array}
[c]{cc}%
\frac{1}{2}-\beta & 0\\
0 & \frac{1}{2}+\beta
\end{array}
\right)
\]
to the $i^{th}$ qubit and returns the result. \ In other words, the output of
$R_{\beta}$\ is an $n$-bit string, whose $i^{th}$\ bit has a\ bias of $\beta
$\ toward the value of the $i^{th}$\ qubit in the $\left\{  \left\vert
0\right\rangle ,\left\vert 1\right\rangle \right\}  $\ basis. \ The following
is immediate:

\begin{proposition}
\label{rbeta}$R_{\beta}$ is $\varepsilon$-DP for $\varepsilon=\ln\left(
\frac{1+2\beta}{1-2\beta}\right)  $, which is $O\left(  \beta\right)  $\ for
$\beta\leq\frac{1}{2.01}$, and is not $\varepsilon^{\prime}$-DP\ for any
$\varepsilon^{\prime}<\varepsilon$.
\end{proposition}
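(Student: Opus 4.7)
The plan is to reduce everything to a single-qubit calculation, then use the product structure of $R_\beta$. Since $R_\beta$ applies a separate POVM to each qubit and concatenates the $n$ classical bits, for any product state $\rho=\rho_1\otimes\cdots\otimes\rho_n$ and any output $y=y_1\cdots y_n\in\{0,1\}^n$ we have $\Pr[R_\beta(\rho)=y]=\prod_{i=1}^n \Pr[R_\beta(\rho_i)=y_i]$, where the single-qubit probabilities are read off from the given POVM: $\Pr[\text{reject}\mid\rho_i]=(\tfrac12+\beta)(\rho_i)_{00}+(\tfrac12-\beta)(\rho_i)_{11}$ and similarly for accept.

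First I would verify the DP upper bound. Take product neighbors $\rho$ and $\sigma$, which by definition agree except on a single register $i$. The product formula above makes all factors outside the $i$-th one cancel, so
\[
\frac{\Pr[R_\beta(\rho)=y]}{\Pr[R_\beta(\sigma)=y]}=\frac{\Pr[R_\beta(\rho_i)=y_i]}{\Pr[R_\beta(\sigma_i)=y_i]}.
\]
Both numerator and denominator are convex combinations of $\tfrac12+\beta$ and $\tfrac12-\beta$ with coefficients $(\rho_i)_{00},(\rho_i)_{11}\in[0,1]$ summing to $1$ (and likewise for $\sigma_i$), so each lies in the interval $[\tfrac12-\beta,\tfrac12+\beta]$. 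Therefore the ratio is at most $(\tfrac12+\beta)/(\tfrac12-\beta)=(1+2\beta)/(1-2\beta)=e^\varepsilon$, establishing that $R_\beta$ is $\varepsilon$-DP on product states (and in fact on all states, since the same argument works once we note the outcome probabilities depend only on the diagonal entries of $\rho_i$).

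Next I would handle the $O(\beta)$ estimate. Using the standard expansion $\ln\frac{1+x}{1-x}=2(x+\tfrac{x^3}{3}+\tfrac{x^5}{5}+\cdots)\leq \frac{2x}{1-x^2}$ valid for $x\in[0,1)$ with $x=2\beta$, we get $\varepsilon\leq \frac{4\beta}{1-4\beta^2}$. For $\beta\leq \tfrac{1}{2.01}$ the denominator is bounded below by a positive absolute constant, so $\varepsilon=O(\beta)$.

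Finally, for the tightness claim I would exhibit an explicit pair of neighbors attaining the bound. Let $\rho=|0\rangle\langle 0|^{\otimes n}$ and $\sigma=|1\rangle\langle 1|\otimes |0\rangle\langle 0|^{\otimes(n-1)}$; these are product states differing only on the first register. Taking the output $y=0^n$, every factor past the first equals $\tfrac12+\beta$ for both states and cancels, while the first factor gives $\tfrac12+\beta$ under $\rho$ and $\tfrac12-\beta$ under $\sigma$. The ratio is exactly $(1+2\beta)/(1-2\beta)=e^\varepsilon$, so no $\varepsilon'<\varepsilon$ can work. There is no real obstacle here; the only thing to be careful about is making sure the separating pair is actually a pair of product-state neighbors, which it plainly is.
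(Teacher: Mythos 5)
Your proof is correct and follows essentially the same approach as the paper's one-sentence argument (which simply observes that flipping the $i$-th input bit can at worst change the probability of the $i$-th output bit by a factor of $\frac{1/2+\beta}{1/2-\beta}$ while leaving the others alone); you have fleshed out the convexity step, the $O(\beta)$ estimate, and the tightness example, all of which the paper leaves implicit. One small caveat: your parenthetical claim that the same argument extends to \emph{all} states because ``the outcome probabilities depend only on the diagonal entries of $\rho_i$'' is not quite a proof as stated, since for an entangled $\rho$ the joint output distribution does not factor and does not depend only on the reduced states' diagonals. The conclusion is still correct (e.g., one can write $\Pr[R_\beta(\rho)=y]=\operatorname{Tr}(E_{y_i}\tau)$ with $\tau:=\operatorname{Tr}_{\neq i}[(\mathbb{I}_i\otimes \bigotimes_{j\neq i}E_{y_j})\rho]$, and observe that since $E_{y_i}$'s eigenvalues lie in $[\tfrac12-\beta,\tfrac12+\beta]$ and any superoperator $S_i$ on register $i$ preserves $\operatorname{Tr}(\tau)$, the ratio $\operatorname{Tr}(E_{y_i}\tau)/\operatorname{Tr}(E_{y_i}S_i(\tau))$ is still bounded by $e^\varepsilon$), but you would need an argument along these lines rather than the one you gesture at. Since the proposition is used downstream only via product states, this gap does not affect anything else.
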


\begin{proof}
Flipping the $i^{th}$\ input bit can at worst change the probability that the
$i^{th}$\ output bit assumes some value from $\frac{1}{2}-\beta$\ to $\frac
{1}{2}+\beta$\ (or vice versa),\ while leaving the other $n-1$\ output bits unchanged.
\end{proof}

We also have:

\begin{proposition}
\label{rbeta2}Suppose $n=1$\ (i.e., there is just one qubit).
\ Then$\ R_{\beta}$ is $2\beta$-gentle.
\end{proposition}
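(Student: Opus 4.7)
My plan is as follows. By Proposition \ref{mixedgentle}, it suffices to exhibit a $2\beta$-gentle implementation of $R_\beta$ on pure single-qubit states. The natural choice is to take the canonical square-root implementation $B_y := \sqrt{E_y}$ for each outcome $y \in \{\text{accept}, \text{reject}\}$; since both POVM elements are already diagonal in the computational basis, these matrices are simply $B_{\text{reject}} = \mathrm{diag}(a,b)$ and $B_{\text{accept}} = \mathrm{diag}(b,a)$, where I write $a := \sqrt{1/2+\beta}$ and $b := \sqrt{1/2-\beta}$.

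Next, I would observe that, because both $B_y$ are diagonal, the relative phase between the $|0\rangle$ and $|1\rangle$ components of the input is preserved under the action of $B_y$ followed by normalization. Hence, without loss of generality I may parametrize the input pure state as $|\psi\rangle = \sqrt{p}\,|0\rangle + \sqrt{1-p}\,|1\rangle$ for some $p \in [0,1]$, and focus on the "reject" outcome (the "accept" case is identical under the swap $a \leftrightarrow b$, which just relabels $p \leftrightarrow 1-p$). A direct computation gives the post-measurement state
\[
|\psi_{\text{reject}}\rangle = \frac{\sqrt{p}\,a\,|0\rangle + \sqrt{1-p}\,b\,|1\rangle}{\sqrt{pa^2 + (1-p)b^2}},
\]
and hence by Proposition \ref{puretd},
\[
\bigl\| |\psi\rangle\langle\psi| - |\psi_{\text{reject}}\rangle\langle\psi_{\text{reject}}| \bigr\|_{\operatorname*{tr}} = \sqrt{1 - f(p)}, \qquad f(p) := \frac{(pa + (1-p)b)^2}{pa^2 + (1-p)b^2}.
\]

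The remaining step is a one-variable optimization: minimize $f(p)$ over $p \in [0,1]$. Setting $f'(p) = 0$ and simplifying yields the critical point $p^* = b/(a+b)$, at which one checks
\[
f(p^*) = \frac{4ab}{(a+b)^2}, \qquad 1 - f(p^*) = \frac{(a-b)^2}{(a+b)^2}.
\]
Since $(a-b)(a+b) = a^2 - b^2 = 2\beta$ and $(a+b)^2 = 1 + 2ab = 1 + \sqrt{1 - 4\beta^2}$, the worst-case trace distance equals $\frac{2\beta}{1 + \sqrt{1 - 4\beta^2}}$, which is at most $2\beta$ for all $\beta \in [0, 1/2]$. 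This is the desired bound.

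I do not anticipate a real obstacle here: the statement is essentially a direct calculation once the canonical implementation is fixed, and the main subtlety is simply being careful about the calculus step to extract the tight constant $2\beta$ rather than the looser constant $\frac{4\beta}{1-2\beta}$ that one would get by naively invoking Lemma \ref{trivgentle} through $\varepsilon$-triviality. The improvement comes from exploiting the explicit diagonal structure of $E_{\text{accept}}, E_{\text{reject}}$ instead of going through the general trivial-implies-gentle reduction.
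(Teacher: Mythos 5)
Your proof is correct, and it takes a genuinely different route from the paper's. The paper works directly with an arbitrary $2\times 2$ density matrix $\rho$ and picks a very non-canonical implementation of $R_\beta$: with probability $1-2\beta$ emit a uniformly random bit without touching $\rho$ at all, and with probability $2\beta$ actually measure $\rho$ in the computational basis. Conditioning on the output then yields a post-measurement state that is an explicit convex combination of $\rho$ and $|0\rangle\langle 0|$, and the trace distance is bounded by a short direct calculation. Your proof instead fixes the canonical square-root implementation $B_y = \sqrt{E_y}$, reduces to pure states via Proposition \ref{mixedgentle}, uses Proposition \ref{puretd} to turn the trace distance into a fidelity, and then solves a one-variable optimization. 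The square-root route is more systematic (it is the implementation one would naturally reach for, and the same diagonal-phases-cancel observation underlies the proof of Lemma \ref{dpgentleprojective}), and it extracts the sharper worst-case constant $\frac{2\beta}{1+\sqrt{1-4\beta^2}}$ exactly; the paper's route is shorter because it never needs the pure-state reduction and has a post-measurement state whose distance from $\rho$ is visibly a scalar multiple of $\|\rho - |0\rangle\langle 0|\|_{\operatorname*{tr}}$. Both establish the claimed $2\beta$-gentleness, and your endpoint check ($f(0)=f(1)=1$, so the interior critical point $p^*=b/(a+b)$ is indeed the minimizer of $f$) closes the optimization correctly.
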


\begin{proof}
Given a qubit in state%
\[
\rho=\left(
\begin{array}
[c]{cc}%
a & b\\
b^{\ast} & c
\end{array}
\right)  ,
\]
here is one way to implement $R_{\beta}$: with probability $1-2\beta$, return
$0$ or $1$\ with equal probabilities. \ With probability $2\beta$, measure
$\rho$ in the $\left\{  \left\vert 0\right\rangle ,\left\vert 1\right\rangle
\right\}  $\ basis and return the result. \ Suppose without loss of generality
that $a>0$\ and we condition on the output being $\left\vert 0\right\rangle $.
\ Then the post-measurement state is%
\[
\sigma:=\frac{\left(  1-2\beta\right)  \rho+2\beta a|0\rangle\langle
0|}{1-2\beta+2\beta a}.
\]
The trace distance, $\left\Vert \rho-\sigma\right\Vert _{\operatorname*{tr}}$,
can thus be calculated explicitly as%
\begin{align*}
\frac{2\beta a}{1-2\beta+2\beta a}\left\Vert \rho-|0\rangle\langle
0|\right\Vert _{\operatorname*{tr}}  &  =\frac{2\beta a\sqrt{\left\vert
b\right\vert ^{2}+c^{2}}}{1-2\beta+2\beta a}\\
&  \leq2\beta\sqrt{\left\vert b\right\vert ^{2}+c^{2}}\\
&  \leq2\beta.
\end{align*}

\end{proof}

Combining Propositions \ref{rbeta}\ and \ref{rbeta2}, we get the following corollary:

\begin{corollary}
\label{lncor}For all $\alpha\in\left(  0,1\right)  $, there exists a
measurement that is $\alpha$-gentle on arbitrary states, but not $\varepsilon
$-DP\ for any $\varepsilon<\ln\left(  \frac{1+\alpha}{1-\alpha}\right)  $,
even on product states.
\end{corollary}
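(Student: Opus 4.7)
The plan is to instantiate the randomized response measurement $R_{\beta}$ from the preceding discussion with a carefully chosen $\beta$, and to restrict to the single-register case $n=1$. Specifically, given $\alpha\in(0,1)$, I would set $\beta:=\alpha/2$ and take the measurement $R_{\alpha/2}$ acting on $n=1$ qubit. Observe that when $n=1$, there is no meaningful difference between product states and arbitrary states, and every pair of single-qubit states constitutes a pair of neighbors, so differential privacy on product states (the weakest notion appearing in the corollary) coincides with the strongest standard notion.

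With this choice, Proposition \ref{rbeta2} applied with $\beta=\alpha/2$ immediately yields that $R_{\alpha/2}$ is $2\beta=\alpha$-gentle on arbitrary single-qubit states, which gives the positive (gentleness) half of the corollary. Then Proposition \ref{rbeta} applied with the same $\beta=\alpha/2$ shows that $R_{\alpha/2}$ fails to be $\varepsilon'$-DP for any $\varepsilon'$ strictly less than $\ln\!\left(\frac{1+2\beta}{1-2\beta}\right)=\ln\!\left(\frac{1+\alpha}{1-\alpha}\right)$, giving the lower bound on the DP parameter. Together these exactly furnish the statement of Corollary \ref{lncor}.

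There is essentially no main obstacle once the two prior propositions are in hand; the only mild subtlety worth remarking on in the write-up is to note explicitly that the hypothesis ``even on product states'' in the corollary is weaker (and hence easier to rule out) than failure of DP on all states, so the failure of $\varepsilon'$-DP exhibited by $R_{\alpha/2}$ on single-qubit neighbors suffices. It may also be worth noting that $\ln\!\left(\frac{1+\alpha}{1-\alpha}\right)=\Theta(\alpha)$ as $\alpha\to 0$, which shows that the constant hidden in Lemma \ref{gentledpprod}'s $O(\alpha)$ bound cannot be improved below this explicit function of $\alpha$, thereby certifying tightness of the gentleness-to-DP direction of Theorem \ref{main} up to the gap between $1/4$ and $1/2$ already discussed after Lemma \ref{gentledpprod}.
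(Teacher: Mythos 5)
Your proposal is correct and matches the paper's own proof: the paper likewise instantiates $R_{\alpha/2}$ and invokes Propositions \ref{rbeta} and \ref{rbeta2}, phrasing it as "$R_{\alpha/2}$ applied to the first qubit only," which is the same single-qubit measurement you describe (just trivially embedded in an $n$-register system). Your observation that for a single register the product-state and arbitrary-state notions of DP coincide is the only clarifying remark needed, and it is correct.
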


\begin{proof}
Consider $R_{\alpha/2}$\ applied to the first qubit only.
\end{proof}

This shows that Corollary \ref{gentledpall} and Lemma \ref{gentledpprod} are
both tight, up to the factor of $4$ in front of the $\alpha$.

\subsection{DP to Gentleness\label{DGSEP}}

We now prove that, when we showed that $\varepsilon$-DP on arbitrary states
implies $O\left(  \varepsilon n\right)  $-gentleness on arbitrary states
(Proposition \ref{dptriv}), and that $\varepsilon$-DP on product states
implies $O\left(  \varepsilon\sqrt{n}\right)  $-gentleness on product states
for product measurements (Theorem \ref{main}), the $n$ and $\sqrt{n}$\ factors
were both asymptotically tight.

Recall the measurement $L_{\sigma}$\ from Sections \ref{GENTLE}\ and
\ref{CONNECTION}, which takes as input an $n$-qubit state and returns the
total Hamming weight, plus a Laplace noise term $\eta$\ of average magnitude
$\sigma$. \ We showed, in Proposition \ref{lsigma}, that $L_{\sigma}$ is
$\frac{1}{\sigma}$-DP---and moreover, on \textit{all} $n$-qubit states, not
merely on product states. \ By contrast, we now observe that $L_{\sigma}$\ is
far from gentle on arbitrary states:

\begin{proposition}
[Optimality of $n$ Factor]\label{notgentle}$L_{n/2}$ is not $\frac{1}{3}%
$-gentle on $n$-qubit states.
\end{proposition}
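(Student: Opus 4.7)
The plan is to exhibit a specific $n$-qubit state---namely the ``Schr\"odinger cat'' state $|\psi\rangle = \frac{|0^n\rangle + |1^n\rangle}{\sqrt{2}}$---on which every implementation of $L_{n/2}$ must cause trace-distance damage exceeding $\frac{1}{3}$ with nonzero probability. The intuition is that $|\psi\rangle$ is a coherent superposition over the two maximally far-apart Hamming weights, and the Laplace noise with $\sigma = n/2$ is too small to hide a gap of size $n$: most outcomes $y$ will strongly favor one branch over the other, causing the superposition to ``pick a side.''

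Concretely, first I would reduce the claim to a purely classical statement about probability distributions. If we measure $|\psi\rangle$ in the computational basis we obtain the distribution $\mathcal{D} = \tfrac{1}{2}\delta_{0^n} + \tfrac{1}{2}\delta_{1^n}$, and if we first apply $L_{n/2}$, observe outcome $y$, and then measure the post-measurement state in the computational basis, we obtain the Bayesian posterior $\mathcal{D}_y$ with
\[
\mathcal{D}_y(0^n) \;=\; \frac{e^{-|y|/\sigma}}{e^{-|y|/\sigma} + e^{-|y-n|/\sigma}}, \qquad \mathcal{D}_y(1^n) = 1 - \mathcal{D}_y(0^n).
\]
By contractivity of trace norm under quantum operations (Proposition~\ref{contractive}), applied to the computational-basis measurement super\-operator, the quantum trace distance $\|\rho_y - |\psi\rangle\langle\psi|\|_{\operatorname*{tr}}$ between the pre- and post-measurement states is \emph{at least} the classical trace distance $\|\mathcal{D}_y - \mathcal{D}\|$. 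Crucially, this lower bound holds for \emph{every} implementation of $L_{n/2}$, since the output statistics (and hence $\mathcal{D}_y$) are determined by the POVM elements alone; the choice of Kraus operators $B_y = U_y\sqrt{E_y}$ only affects $\rho_y$ by a unitary rotation that is invisible after a computational-basis measurement.

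Next I would specialize to $y = 0$ and $\sigma = n/2$, which gives $\mathcal{D}_y(0^n) = 1/(1 + e^{-2})$, so
\[
\|\mathcal{D}_y - \mathcal{D}\| \;=\; \left|\tfrac{1}{2} - \tfrac{1}{1+e^{-2}}\right| \;=\; \tfrac{1-e^{-2}}{2(1+e^{-2})} \;>\; \tfrac{1}{3},
\]
a direct numerical check. Since $\Pr[L_{n/2}(|\psi\rangle) = 0] = \tfrac{1}{2Z}(1 + e^{-2}) > 0$, the outcome $y = 0$ occurs with positive probability, so no implementation of $L_{n/2}$ can satisfy $\|\rho_y - |\psi\rangle\langle\psi|\|_{\operatorname*{tr}} \le \tfrac{1}{3}$ for all outcomes.

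The only subtle step is the ``every implementation'' quantifier in the definition of gentleness for a POVM, but the contractivity argument in the first paragraph handles this uniformly: no matter how one chooses the unitary freedom in the Kraus operators, the classical posterior computed in the computational basis is fixed by the output probabilities, and this posterior already witnesses the damage. The rest is a one-line arithmetic verification.
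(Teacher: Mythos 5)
Your choice of the pure ``cat state'' $|\psi\rangle = \frac{|0^n\rangle + |1^n\rangle}{\sqrt{2}}$ introduces a genuine gap: the claim that the unitary freedom in the implementation is ``invisible after a computational-basis measurement'' is false, and as a consequence $|\psi\rangle$ cannot witness non-gentleness at all. Writing $B_y = U_y\sqrt{E_y}$, the post-measurement state is $\rho_y = U_y\,|\phi_y\rangle\langle\phi_y|\,U_y^\dagger$ where $|\phi_y\rangle = \sqrt{E_y}|\psi\rangle/\lVert\sqrt{E_y}|\psi\rangle\rVert$, and measuring $\rho_y$ in the computational basis gives the statistics of $U_y|\phi_y\rangle$, not of $|\phi_y\rangle$. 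These two can differ arbitrarily: $U_y$ rotates the posterior amplitudes wherever it likes, so your identification $M(\rho_y) = \mathcal{D}_y$ simply does not hold for general implementations, and the contractivity chain breaks at that step.

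Worse, on any \emph{pure} input the damage can always be driven to zero on that particular state: for each outcome $y$, pick $U_y$ to be any unitary with $U_y|\phi_y\rangle = |\psi\rangle$ (such a unitary exists since both are unit vectors, and the constraint $\sum_y B_y^\dagger B_y = \sum_y E_y = \mathbb{I}$ is automatically satisfied). This implementation of $L_{n/2}$ returns $\rho_y = |\psi\rangle\langle\psi|$ for every $y$, causing zero trace-distance damage on $|\psi\rangle$. More abstractly, for a pure input $\rho_y$ is always pure, so $\rho_y$ and $|\psi\rangle\langle\psi|$ have identical eigenvalue spectra and the spectrum gives no obstruction.

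The paper instead uses the \emph{classical mixture} $\rho = \tfrac12\bigl(|0^n\rangle\langle 0^n| + |1^n\rangle\langle 1^n|\bigr)$, and this choice is doing real work. Here $\rho_y = V_y\,\sigma_y\,V_y^\dagger$ for some unitary $V_y$, where $\sigma_y$ is the Bayesian posterior with eigenvalues $\{p_y, 1-p_y, 0, \dots\}$, while $\rho$ has eigenvalues $\{\tfrac12, \tfrac12, 0, \dots\}$. Since the eigenvalue spectra now disagree, Mirsky's inequality (trace norm dominates the $\ell_1$ distance between sorted spectra) gives $\lVert\rho_y - \rho\rVert_{\operatorname*{tr}} \geq |p_y - \tfrac12|$ for \emph{every} choice of $V_y$; equivalently, $\rho$ restricted to its support is maximally mixed and hence unitarily invariant, so the implementation's unitary freedom cannot help. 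That is what licenses the paper's remark that ``the question of how $L_\sigma$ is implemented is irrelevant,'' and it is exactly the feature your pure-state substitution destroys. Your numerical verification at $y=0$ is fine; replacing $|\psi\rangle$ by $\rho$ and invoking eigenvalue stability would repair the argument.
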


\begin{proof}
We consider $L_{\sigma}$\ applied to the mixture%
\[
\rho:=\frac{|0^{n}\rangle\langle0^{n}|+|1^{n}\rangle\langle1^{n}|}{2}.
\]
Note that the entire situation is classical, so the question of how
$L_{\sigma}$\ is implemented is irrelevant. \ Let the measurement outcome be
$y$; then%
\begin{align*}
\Pr\left[  y|0^{n}\right]   &  =\frac{1}{2\sigma}e^{-\left\vert y\right\vert
/\sigma},\\
\Pr\left[  y|1^{n}\right]   &  =\frac{1}{2\sigma}e^{-\left\vert y-n\right\vert
/\sigma}.
\end{align*}
So by Bayes' rule, the post-measurement state is%
\[
\rho_{y}=\frac{e^{-\left\vert y\right\vert /\sigma}|0^{n}\rangle\langle
0^{n}|+e^{-\left\vert y-n\right\vert /\sigma}|1^{n}\rangle\langle1^{n}%
|}{e^{-\left\vert y\right\vert /\sigma}+e^{-\left\vert y-n\right\vert /\sigma
}}.
\]
Suppose $y\leq0$. \ Then we can calculate:%
\begin{align*}
\left\Vert \rho_{y}-\rho\right\Vert _{\operatorname*{tr}}  &  =\left\vert
\frac{e^{-\left\vert y\right\vert /\sigma}}{e^{-\left\vert y\right\vert
/\sigma}+e^{-\left\vert y-n\right\vert /\sigma}}-\frac{1}{2}\right\vert \\
&  =\left\vert \frac{1}{1+e^{-n/\sigma}}-\frac{1}{2}\right\vert \\
&  =\frac{1}{2}\left(  \frac{1-e^{-n/\sigma}}{1+e^{-n/\sigma}}\right) \\
&  >\frac{1}{2}-e^{-n/\sigma}.
\end{align*}
If we now make the choice (say) $\sigma=\frac{n}{2}$, we find that this
exceeds $\frac{1}{3}$.
\end{proof}

It follows that, in going from DP on arbitrary states to gentleness on
arbitrary states, we need at least a factor of $n$ blowup in $\varepsilon$;
indeed this is true even for product-of-projectives measurements. \ Hence
Proposition \ref{dptriv}\ is essentially tight.

Likewise, in going from DP on product states to gentleness on product states,
we need at least a factor of $\sqrt{n}$\ blowup in $\varepsilon$, and this is
true even for product-of-projectives measurements. \ Hence Lemma
\ref{dpgentleprojective}\ is essentially tight. \ The example that shows this
is again $L_{\sigma}$, albeit this time with $\sigma=\sqrt{n}$:

\begin{proposition}
[Optimality of $\sqrt{n}$\ Factor]\label{notgentleprod}$L_{\sqrt{n}}$ is not
$\alpha$-gentle on $n$-qubit product states, for any $\alpha=o\left(
1\right)  $.
\end{proposition}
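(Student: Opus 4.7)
The plan is to leverage \emph{two} product states simultaneously to defeat any implementation of $L_{\sqrt n}$. Set $|\psi_1\rangle := |+\rangle^{\otimes n}$ and $|\psi_2\rangle := |\phi_\theta\rangle^{\otimes n}$, where $|\phi_\theta\rangle := \cos\theta\,|0\rangle+\sin\theta\,|1\rangle$ and $\theta := \pi/4 + c/\sqrt n$ for a small absolute constant $c>0$; these are product states with $\langle\psi_1|\psi_2\rangle = \cos^n(c/\sqrt n) \to e^{-c^2/2}$, a nonzero constant independent of $n$. The key invariance to exploit is that, for \emph{any} Kraus implementation $A_y = V_y\sqrt{E_y}$ of $L_{\sqrt n}$ (where $E_y = \sum_w p(y|w)\,P_w$ and $P_w$ projects onto Hamming-weight-$w$ strings), unitarity of $V_y$ yields
\[
\langle\psi_1^y|\psi_2^y\rangle \;=\; \frac{\langle\psi_1|E_y|\psi_2\rangle}{\sqrt{\Pr[y|\psi_1]\,\Pr[y|\psi_2]}},
\]
\emph{independently of} $V_y$. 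Hence if the implementation were $\alpha$-gentle on both $|\psi_1\rangle$ and $|\psi_2\rangle$ at every outcome $y$, then Proposition~\ref{puretd} together with the triangle inequality would force $|\langle\psi_1^y|\psi_2^y\rangle - \langle\psi_1|\psi_2\rangle| = O(\alpha)$ for every $y$.

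Expanding in the computational basis and using the binomial identity $\sum_{|X|=w}\langle\psi_1|X\rangle\langle X|\psi_2\rangle = 2^{-n/2}\binom{n}{w}(\cos\theta)^{n-w}(\sin\theta)^w$ gives the clean formula
\[
\langle\psi_1^y|\psi_2^y\rangle \;=\; \langle\psi_1|\psi_2\rangle \cdot \frac{Q(y)}{\sqrt{Q_1(y)\,Q_2(y)}},
\]
where $Q_j(y) := \sum_w P_j(w)\,p(y|w)$ is the outcome density under $|\psi_j\rangle$ (with $P_1 = \mathrm{Bin}(n,1/2)$ and $P_2 = \mathrm{Bin}(n,\sin^2\theta)$), and $Q(y) := \sum_w \lambda_w\,p(y|w)$ with $\lambda_w \propto \binom{n}{w}(\cos\theta)^{n-w}(\sin\theta)^w \propto \sqrt{P_1(w)P_2(w)}$ the Bhattacharyya (geometric-mean) distribution. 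Since $\langle\psi_1|\psi_2\rangle$ is bounded away from $0$, it suffices to exhibit a single outcome $y$ in the support at which the ratio $Q(y)/\sqrt{Q_1(y)Q_2(y)}$ differs from $1$ by a constant.

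The main obstacle is the final asymptotic estimate, which is where $\sigma = \sqrt n$ enters essentially. The distributions $P_1,P_2$ are approximately Gaussian with standard deviation $\Theta(\sqrt n)$ and means separated by $c\sqrt n$, so convolving with Laplace$(\sqrt n)$ noise still leaves $Q_1,Q_2$ as meaningfully shifted bumps of width $\Theta(\sqrt n)$, while $\lambda$ interpolates at their midpoint (so $Q$ is centered between them). Log-concavity of the Gaussian$\star$Laplace profile (by Pr\'ekopa--Leindler) already gives $Q(y)^2 \geq Q_1(y)\,Q_2(y)$ for free; a quantitative version, obtained either by direct integral estimation after rescaling $\tau := (y-n/2)/\sqrt n$ or by a local central limit theorem applied to the binomial envelopes, upgrades this to $Q(y)^2/(Q_1(y)\,Q_2(y)) = 1 + \Omega(c^2)$ uniformly over a constant-probability window of $y$. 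Combined with the formula above, this yields $|\langle\psi_1^y|\psi_2^y\rangle - \langle\psi_1|\psi_2\rangle| = \Omega(1)$ on a set of outcomes with positive probability, contradicting $\alpha = o(1)$ and completing the argument.
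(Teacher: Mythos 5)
Your proof is correct in outline, but it takes a genuinely different route from the paper's. The paper applies $L_{\sqrt{n}}$ to the maximally mixed product state $\mathbb{I}/2^{n}$, which is diagonal in the computational basis; since the whole situation is then classical, the implementation is irrelevant, and the damage caused by conditioning on an extreme outcome $y\leq 0$ can be read off from Bayes' rule by a short explicit calculation. You instead work with two pure product states $|+\rangle^{\otimes n}$ and $|\phi_\theta\rangle^{\otimes n}$ with constant Bhattacharyya overlap, and your key move is the observation that for any single-Kraus implementation $A_y = V_y\sqrt{E_y}$ (the paper's notion of ``implementation,'' Section~\ref{MIXED}), the post-measurement overlap $\langle\psi_1^y|\psi_2^y\rangle = \langle\psi_1|E_y|\psi_2\rangle/\sqrt{\Pr[y|\psi_1]\Pr[y|\psi_2]}$ is independent of $V_y$ and hence pinned down by the POVM alone. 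Combined with Proposition~\ref{puretd} and the triangle inequality, gentleness then forces the overlap to be nearly preserved at every $y$; the Bhattacharyya-mixture identity $\langle\psi_1^y|\psi_2^y\rangle = \langle\psi_1|\psi_2\rangle\,Q(y)/\sqrt{Q_1(y)Q_2(y)}$ reduces everything to one-dimensional estimates on convolutions of binomials with Laplace noise. This buys a nice reusable lemma (Kraus-invariance of pairwise overlap) and a stronger conclusion (the damage is $\Omega(1)$ over a constant-probability window of $y$, not just over the exponentially rare $y\leq 0$). What it costs is (i) a reliance on the single-Kraus definition of implementation, which the paper's diagonal-state argument sidesteps entirely, and (ii) the final quantitative step $Q(y)/\sqrt{Q_1(y)Q_2(y)} = 1 + \Omega(c^2)$, which you invoke by appeal to rescaling and a local CLT but do not carry out. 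That step is true (after rescaling $\tau = (y-n/2)/\sqrt{n}$, the ratio converges to $h(0)/h(c/2)$ for the fixed log-concave profile $h(\tau) = \int e^{-|\tau-u|}e^{-2u^2}\,du$, which exceeds $1$ for any constant $c>0$), but it is a non-trivial chunk of work that would need to be included for a complete proof; the paper's proof, by comparison, is fully self-contained in a few lines.
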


\begin{proof}
Let $\sigma=\sqrt{n}$, and consider $L_{\sigma}$ applied to the uniform
distribution $\mathbb{I}/2^{n}$. \ Again, since the entire situation is
classical, the question of how $L_{\sigma}$\ is implemented is irrelevant.
\ Let the measurement outcome be $y$; then by Bayes' rule, the
post-measurement state is%
\[
\rho_{y}=\frac{\sum_{X\in\left\{  0,1\right\}  ^{n}}e^{-\left\vert
y-\left\vert X\right\vert \right\vert /\sigma}|X\rangle\langle X|}{\sum
_{X\in\left\{  0,1\right\}  ^{n}}e^{-\left\vert y-\left\vert X\right\vert
\right\vert /\sigma}}.
\]
So suppose $y\leq0$, and assume without loss of generality that $n$ is odd.
\ Then we can calculate:%
\begin{align*}
\left\Vert \rho_{y}-\rho\right\Vert _{\operatorname*{tr}}  &  =\frac{1}{2}%
\sum_{X\in\left\{  0,1\right\}  ^{n}}\left\vert \frac{e^{-\left\vert
y-\left\vert X\right\vert \right\vert /\sigma}}{\sum_{Z\in\left\{
0,1\right\}  ^{n}}e^{-\left\vert y-\left\vert Z\right\vert \right\vert
/\sigma}}-\frac{1}{2^{n}}\right\vert \\
&  =\frac{1}{2}\sum_{k=0}^{n}\binom{n}{k}\left\vert \frac{e^{-\left\vert
y-k\right\vert /\sigma}}{\sum_{\ell=0}^{n}\binom{n}{\ell}e^{-\left\vert
y-\ell\right\vert /\sigma}}-\frac{1}{2^{n}}\right\vert \\
&  =\frac{1}{2}\sum_{k=0}^{n}\binom{n}{k}\left\vert \frac{e^{-k/\sigma}}%
{\sum_{\ell=0}^{n}\binom{n}{\ell}e^{-\ell/\sigma}}-\frac{1}{2^{n}}\right\vert
\\
&  \geq\frac{1}{2}\left\vert \sum_{k=0}^{\left(  n-1\right)  /2}\binom{n}%
{k}\frac{e^{-k/\sigma}}{\sum_{\ell=0}^{n}\binom{n}{\ell}e^{-\ell/\sigma}}%
-\sum_{k=\left(  n+1\right)  /2}^{n}\binom{n}{k}\frac{e^{-k/\sigma}}%
{\sum_{\ell=0}^{n}\binom{n}{\ell}e^{-\ell/\sigma}}\right\vert \\
&  =\frac{1}{2}\left\vert \sum_{k=0}^{\left(  n-1\right)  /2}\binom{n}{k}%
\frac{e^{-k/\sigma}-e^{-\left(  n-k\right)  /\sigma}}{\sum_{\ell=0}^{n}%
\binom{n}{\ell}e^{-\ell/\sigma}}\right\vert \\
&  \geq\frac{1}{2}\left\vert \sum_{k=0}^{n/2-\sqrt{n}}\binom{n}{k}%
\frac{e^{-k/\sigma}-e^{-\left(  n/2+\sqrt{n}\right)  /\sigma}}{\sum_{\ell
=0}^{n}\binom{n}{\ell}e^{-\ell/\sigma}}\right\vert \\
&  \geq\frac{1}{4}\left\vert \sum_{k=0}^{n/2-\sqrt{n}}\binom{n}{k}%
\frac{e^{-k/\sigma}}{\sum_{\ell=0}^{n}\binom{n}{\ell}e^{-\ell/\sigma}%
}\right\vert \\
&  \geq\frac{1}{4}\left\vert \sum_{k=0}^{n/2-\sqrt{n}}\binom{n}{k}\frac
{1}{2^{n}}\right\vert \\
&  =\Omega\left(  1\right)  .
\end{align*}

\end{proof}

\section{Shadow Tomography\label{SHADOW}}

Having developed the connection between DP and gentleness, we're now ready to
apply the connection to shadow tomography. \ First, in Section
\ref{ONLINE_LEARNING}, we review a recent algorithm of Aaronson et
al.\ \cite{achkn}\ for online learning of quantum states, which we'll need as
a central ingredient. \ Then, in Section \ref{ONLINESHADOW}, we present and
analyze our new \textit{Quantum Private Multiplicative Weights (QPMW)}
algorithm, which builds on the Private Multiplicative Weights (PMW) algorithm
of Hardt and Rothblum \cite{hardtrothblum}. \ QPMW proves Theorem
\ref{newshadow}: that is, it shows that it's possible to do shadow tomography
using only $O\left(  \left(  \log m\right)  ^{2}\left(  \log d\right)
^{2}/\varepsilon^{8}\right)  $\ copies of an unknown mixed state $\rho$, where
$m$\ is the number of known accept/reject measurements,\ $d$ is the dimension
of $\rho$, and $\varepsilon$\ is the accuracy with which we want to estimate
each measurement's acceptance probability---in a way that, moreover, is
\textit{online} (i.e., processes the measurements one at a time) and
$\varepsilon$-gentle (i.e., damages the copies of $\rho$\ by at most
$\varepsilon$\ in trace distance).

\subsection{Online Learning of Quantum States\label{ONLINE_LEARNING}}

Aaronson et al.\ \cite{achkn} recently defined and studied the problem of
\textit{online learning of quantum states}. \ Here we have an unknown
$d$-dimensional mixed state $\rho$, and a learner is presented with a sequence
$E_{1},E_{2},\ldots$ of two-outcome POVM measurements. \ For each measurement
$E_{t}$, the learner tries to anticipate $\operatorname{Tr}\left(  E_{t}%
\rho\right)  $, the probability that $E_{t}$\ accepts $\rho$, up to accuracy
$\pm\varepsilon$. \ Indeed, the learner maintains a \textquotedblleft
hypothesis state\textquotedblright\ $\sigma_{t}$, and on each measurement
$E_{t}$, if the hypothesis differs appreciably from the unknown state $\rho$
with respect to this measurement---that is,\ if%
\[
\left\vert \operatorname{Tr}\left(  E_{t}\rho\right)  -\operatorname{Tr}%
\left(  E_{t}\sigma_{t-1}\right)  \right\vert >\varepsilon
\]
---then we say that the learner was \textquotedblleft wrong,\textquotedblright%
\ and we allow it to \textit{update} its state by giving it an approximation
$b_{t}\in\left[  0,1\right]  $ to the correct answer, where (say) $\left\vert
\operatorname{Tr}\left(  E_{t}\rho\right)  -b_{t}\right\vert \leq
\frac{\varepsilon}{10}$. \ The learner's goal is to upper-bound the total
number of times that it's ever wrong, even assuming that the sequence of
$E_{t}$'s and $b_{t}$'s is chosen adaptively, by an adversary who sees the
learner's hypotheses.

Perhaps surprisingly, Aaronson et al.\ \cite{achkn} showed that the total
number of mistakes can be upper-bounded by $O\left(  \frac{\log d}%
{\varepsilon^{2}}\right)  $---so for example, only $O\left(  \frac
{n}{\varepsilon^{2}}\right)  $\ for a state of $n$ qubits (even though the
state space has dimension $2^{n}$).

We observe that the same bound holds even under a slight relaxation of the
update condition:\ namely, updates can also be triggered when the hypothesis
has error between $\frac{\varepsilon}{3}$ and $\varepsilon$. \ If an update is
triggered, then the learner again receives an $\frac{\varepsilon}{10}%
$-approximation to the correct answer.

\begin{theorem}
[{Variant: Online Learning of Quantum States \cite[Theorem 1]{achkn}}%
]\label{achknthm}There is an explicit procedure for online learning of quantum
states that makes at most $\ell\left(  d,\varepsilon\right)  =O\left(
\frac{\log d}{\varepsilon^{2}}\right)  $ updates, so long as updates never
occur when the hypothesis has error smaller than $\frac{\varepsilon}{3}$, and
updates always occur when the error is $\varepsilon$ or larger.
\end{theorem}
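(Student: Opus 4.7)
The plan is to use the matrix multiplicative weights (MMW) algorithm of \cite{achkn} essentially verbatim, and to verify that the mistake bound is robust to the slight loosening of the update rule. The learner maintains a hypothesis $\sigma_t$ initialized to $\sigma_0 := \mathbb{I}/d$; on any round $t$ where an update is triggered, it receives the feedback $b_t$ with $|b_t - \operatorname{Tr}(E_t\rho)| \leq \varepsilon/10$, and performs the exponentiated-gradient update
\[
\sigma_t \;\propto\; \exp\!\left(-\eta \sum_{s \leq t,\ s \text{ updated}} M_s\right),
\]
where $M_s := \operatorname{sign}(\operatorname{Tr}(E_s \sigma_{s-1}) - b_s) \cdot E_s$ and $\eta = \Theta(\varepsilon)$ is a fixed step size. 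On rounds where no update is triggered, $\sigma_t := \sigma_{t-1}$.

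The analysis tracks the quantum relative entropy $\Phi_t := S(\rho \Vert \sigma_t)$ as a potential function. Because $\sigma_0 = \mathbb{I}/d$ we have $\Phi_0 \leq \log d$, and $\Phi_t \geq 0$ throughout, so the total decrease of $\Phi$ over the execution is at most $\log d$. A standard Golden--Thompson calculation gives, for each update round,
\[
\Phi_{t-1} - \Phi_t \;\geq\; \eta \cdot \operatorname{Tr}\!\left(M_t(\sigma_{t-1} - \rho)\right) \;-\; O(\eta^2).
\]
By the update-trigger condition, on an update round we have $|\operatorname{Tr}(E_t\rho) - \operatorname{Tr}(E_t\sigma_{t-1})| \geq \varepsilon/3$; and since the feedback noise is at most $\varepsilon/10$, the sign chosen for $M_t$ agrees with the sign of the true gap, giving $\operatorname{Tr}(M_t(\sigma_{t-1}-\rho)) \geq \varepsilon/3 - \varepsilon/10 \geq \varepsilon/5$. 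With $\eta = \Theta(\varepsilon)$, this yields a per-update decrease $\Phi_{t-1} - \Phi_t = \Omega(\varepsilon^2)$, so the total number of updates is at most $O(\log d / \varepsilon^2)$.

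There is essentially no new obstacle beyond the argument in \cite{achkn}: the one thing to check is that the relaxed trigger (updates allowed whenever the hypothesis error is already as small as $\varepsilon/3$, rather than only when it reaches $\varepsilon$) still leaves enough margin over the $\varepsilon/10$ feedback slack for $M_t$ to point in a progress-making direction. Since $\varepsilon/3 - \varepsilon/10 = \Omega(\varepsilon)$, this is immediate, and only the numerical constants inside the $O(\cdot)$ need to change relative to \cite[Theorem~1]{achkn}. The requirement that an update \emph{must} occur whenever the error is $\geq \varepsilon$ is a promise about the algorithm's externally observable behavior and plays no role in the mistake-count bound itself.
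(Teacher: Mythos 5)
Your proposal is correct and matches what the paper intends: the paper does not actually re-prove this theorem, but simply cites \cite{achkn} and remarks that the relaxed update trigger (error $\geq \varepsilon/3$ rather than $\geq \varepsilon$) is harmless, and your MMW/relative-entropy potential argument is precisely the standard proof from \cite{achkn} with the constants adjusted, confirming that remark. One minor sharpening: since the sign of $\operatorname{Tr}(E_t\sigma_{t-1}) - b_t$ provably agrees with the sign of $\operatorname{Tr}(E_t\sigma_{t-1}) - \operatorname{Tr}(E_t\rho)$ (because $\varepsilon/3 > \varepsilon/10$), you actually get $\operatorname{Tr}(M_t(\sigma_{t-1}-\rho)) = |\operatorname{Tr}(E_t\sigma_{t-1}) - \operatorname{Tr}(E_t\rho)| \geq \varepsilon/3$, not merely $\geq \varepsilon/5$; the weaker bound you wrote is still enough, just conservative. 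You also correctly identify that the clause ``updates always occur when the error is $\geq \varepsilon$'' is not used in bounding the update count --- it is an accuracy promise used downstream in the QPMW analysis, not in this lemma.
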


We emphasize that when the error is in the range $[\frac{\varepsilon}%
{3},\varepsilon)$, updates may or may not occur.

Aaronson et al.\ \cite{achkn} actually gave two explicit procedures that
achieve the above bound: one based on online convex optimization, the other on
matrix multiplicative weights. \ Both procedures use an amount of computation
per measurement that's polynomial in $d$.

In this work, however, we'll be able simply to use Theorem \ref{achknthm} as a
black box. \ We'll view an online learning procedure as specified by its
initialization procedure, which outputs an initial hypothesis state
$\sigma_{0}\leftarrow\mathbf{OnlineLearn}(d)$, and an update procedure used to
update the hypothesis state $\sigma_{t}\leftarrow\mathbf{OnlineUpdate}%
(\sigma_{t-1},b_{t})$.

\subsection{Online Shadow Tomography\label{ONLINESHADOW}}

Our Quantum Private Multiplicative Weights (QPMW) algorithm for gentle online
shadow tomography is presented in Figure \ref{fig:OST}.

\begin{figure}[h]
\setlength{\fboxsep}{2mm}
\par
\begin{center}
\begin{boxedminipage}{\textwidth}
{\bf Parameters:} Intended number of queries
$m\in\mathbb{N},$ gentleness and accuracy parameters $\alpha, \varepsilon, \delta > 0$ and noise magnitude $\mu > 0$ (set in the proof below, see Equation \ref{eq:noise_magnitude}).\\
\noindent {\bf Input:} $n$ and a product state $\rho = \rho_1 \otimes \cdots \otimes \rho_n$, where the $\rho_i$'s are $d$-dimensional mixed states.\\
\\
\noindent {\bf Algorithm:}\\
\noindent Initialize the online learner $\sigma_0 \leftarrow \mathbf{OnlineLearn}(d)$\\
\noindent In each round $t \leftarrow 1,2 \ldots, m$, when receiving
two-outcome measurement $E_t$, do the following:
\begin{enumerate}
\item \label{step:check}
Apply the two-outcome $\mathbf{CheckForUpdate}$ measurement:
\begin{enumerate}
\item Apply the Laplace measurement with noise magnitude $n \mu$ to the $n$ registers to compute (but {\em do not measure}) a noisy estimate $a_t$ of $\operatorname{Tr} (E_t \rho)$
\item Compute {\em and measure} the decision bit $u_t$, which is 1 if and only if $\left|{a_t} - \operatorname{Tr} (E_t \sigma_{t-1})\right| > \frac{\varepsilon}{2}$
\item Uncompute the noise and intermediate values computed in the above measurement
\end{enumerate}
\item \label{step:noupdate}
If $u_t=0$ (no update), then set $\sigma_t \leftarrow \sigma_{t-1}$ and output the answer $b_t \leftarrow \operatorname{Tr} (E_t \sigma_{t-1})$
\item \label{step:update}
Otherwise ($u_t=1$, i.e.\ an ``update round''):
\begin{enumerate}
\item If there have already been $\ell(d,\varepsilon)$ prior update rounds (see Theorem \ref{achknthm}), then abort
\item Apply the Laplace measurement with noise magnitude $n \mu$ to the $n$ registers to compute a noisy estimate $b_t$ of $\operatorname{Tr} (E_t \rho)$, uncompute the noise and intermediate values computed within this measurement
\item Run a round of online learning: $\sigma_t \leftarrow \mathbf{OnlineUpdate}(\sigma_{t-1}, b_t)$
\item output the answer $b_t$
\end{enumerate}
\end{enumerate}
\end{boxedminipage}
\end{center}
\par
\vspace{-3mm}\caption{QPMW Algorithm}%
\label{fig:OST}%
\end{figure}

\begin{theorem}
\label{ostthm}Let $\alpha,\beta,\varepsilon,\delta>0$ be gentleness and
accuracy parameters. \ There exists a setting for the noise magnitude $\mu$
for which the online shadow tomography algorithm presented in Figure
\ref{fig:OST} is $(\alpha,\delta)$-gentle. \ Moreover, given sufficiently many
copies $n$, where
\[
n=O\left(  \frac{\left(  \log^{2}m + \log\frac{1}{\delta}\right)  \cdot
\log^{2}d \cdot\log\frac{1}{\beta}}{\varepsilon^{6}\min\left\{  \alpha
,\varepsilon\right\}  ^{2}}\right)  ,
\]
the algorithm's error is bounded by $\varepsilon$ with probability at least
$1-\beta$\ over its coins and its measurements.
\end{theorem}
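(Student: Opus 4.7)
The plan is to control the per-round damage to the $n$ input registers, organize the execution into epochs in which the cumulative update probability in an ``ideal'' (undamaged) execution is $\Theta(1)$, use Lemma \ref{facepalmlemma} to transport these ideal-world bounds to the real world, and finally trade off the noise magnitude $\mu$ against both the total gentleness budget $\alpha$ and the per-round accuracy $\varepsilon$.

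I would start with the per-round DP guarantee. Both step~1(a) and step~3(b) apply a Laplace measurement of scale $n\mu$ to a product state; since swapping any single register changes the Hamming weight by at most $1$, each is $\tfrac{1}{n\mu}$-DP on product states by the same convexity argument as in Proposition \ref{lsigma}. Thresholding is classical post-processing, so the whole $\mathbf{CheckForUpdate}$ block is also $\tfrac{1}{n\mu}$-DP. It is a product measurement, so part~(2) of Theorem \ref{main} yields an implementation of each block that is $O(\tfrac{1}{\mu\sqrt{n}})$-gentle on product states -- precisely the coherent implementation sketched in Section \ref{GENTLE}. This immediately bounds each \emph{update} round's damage to the current product state by $O(\tfrac{1}{\mu\sqrt{n}})$.

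The main obstacle is controlling the overwhelming majority of rounds, which are \emph{no-update} rounds: a worst-case bound of $O(\tfrac{1}{\mu\sqrt{n}})$ per round summed over $m$ rounds would force $n$ to be polynomial in $m$. The fix is a finer estimate (the claim referred to as Claim~\ref{claim:noupdate_gentle} in Section~\ref{TECHNIQUES}): if $p_t$ denotes the probability that $\mathbf{CheckForUpdate}$ would trigger an update on a \emph{fresh} copy of $\rho^{\otimes n}$, then conditioning on $u_t=0$ changes the input product state by at most $O\!\bigl(p_t/(\mu\sqrt{n})\bigr)$ in trace distance. I would prove this by writing the unconditional post-$\mathbf{CheckForUpdate}$ state as a convex combination of the two branches, applying the unconditional $O(\tfrac{1}{\mu\sqrt{n}})$-gentleness bound and the triangle inequality, and using the fact that on a product state the Hamming-weight distribution of $E_t$'s classical outcomes is a Gaussian wavepacket of width $O(\sqrt{n})$ that is dominated by the added noise whenever $\mu\gtrsim 1/\sqrt{n}$, so that the ``leakage'' on the $u_t=0$ branch is of order $p_t$ rather than $1$.

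I would then divide the execution into epochs, closing an epoch whenever either an update is triggered or $\sum_{t\text{ in epoch}} p_t$ exceeds a fixed constant. Inside one epoch the ideal-world damage telescopes to $O(1/(\mu\sqrt{n}))$ (a constant times the per-round no-update bound plus the terminating update). Theorem \ref{achknthm} caps the update-terminated epochs at $\ell(d,\varepsilon)=O(\log d/\varepsilon^{2})$, and a standard geometric tail bound caps the probability-terminated epochs at $O(\ell(d,\varepsilon)+\log(1/\delta))$ except with probability $\delta/2$. Hence the ideal-world damage is $O\!\bigl((\ell(d,\varepsilon)+\log(1/\delta))/(\mu\sqrt{n})\bigr)$. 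Lemma \ref{facepalmlemma}, applied to the sequence of conditional quantum operations indexed by the observed $u_t$'s, lifts this bound to the real world up to a factor of $2/q$, where $q$ is the joint probability of the observed $u_t$ sequence; I would control $q$ by an inductive invariant that keeps the real-world update probability within a multiplicative constant of $p_t$ via the same Damage Lemma, closing the loop.

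Finally I would set the parameters. Standard Laplace tail bounds give that all $2m$ noise samples have magnitude $O(n\mu\log(m/\beta))$ except with probability $\beta/2$; for the answers in steps~2 and 3(d) to lie within $\varepsilon$ of $\operatorname{Tr}(E_t\rho)$, and for Theorem \ref{achknthm}'s $(\varepsilon/3,\varepsilon)$ update rule to hold, it suffices to take $\mu\le c\varepsilon^{2}/\log(m/\beta)$ for a small absolute constant $c$. Imposing the gentleness requirement $(\ell(d,\varepsilon)+\log(1/\delta))/(\mu\sqrt{n})\le \alpha$ and solving for $n$ yields
\[
n \;=\; O\!\left(\frac{\bigl(\log^{2}m+\log\tfrac{1}{\delta}\bigr)\,\log^{2}d\,\log\tfrac{1}{\beta}}{\varepsilon^{6}\min\{\alpha,\varepsilon\}^{2}}\right),
\]
which also precludes the abort in step~3(a), since by the accuracy guarantee updates only occur when the true hypothesis error exceeds $\varepsilon/3$, and Theorem \ref{achknthm} caps the number of such events by $\ell(d,\varepsilon)$.
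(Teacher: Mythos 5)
Your high-level plan matches the paper's: per-round DP from the Laplace mechanism, Theorem~\ref{main} to get per-round gentleness, a finer bound for no-update rounds scaling with the update probability, epochs closed at constant cumulative update probability, Lemma~\ref{facepalmlemma} to transport ideal-world bounds to the real execution, and the final parameter trade-off. But there is a genuine gap at the single most important step: the finer bound for no-update rounds (the paper's Claim~\ref{claim:noupdate_gentle}), which you claim to establish by a ``convex combination of the two branches, the unconditional gentleness bound, and the triangle inequality.'' That argument does not give the stated bound. If $\rho' = (1-\lambda_t)\rho_{0} + \lambda_t \rho_{1}$ is the unconditional post-measurement state with $\|\rho' - \rho\|_{\operatorname*{tr}} \le O(\tfrac{1}{\sqrt{n}\mu})$, then writing $\rho_{0}-\rho = \tfrac{1}{1-\lambda_t}\bigl((\rho'-\rho)+\lambda_t(\rho-\rho_{1})\bigr)$ and applying the triangle inequality only yields $\|\rho_{0}-\rho\|_{\operatorname*{tr}} \le O\bigl(\tfrac{1}{\sqrt{n}\mu} + \lambda_t\bigr)$. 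This is additive in $\lambda_t$, not multiplicative, and summing it over an epoch of length $s$ gives $O(\tfrac{s}{\sqrt{n}\mu}+1)$, which is useless for $s$ as large as $m$.

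The bound you actually need, $\|\rho_{0}-\rho\|_{\operatorname*{tr}}=O(\tfrac{\lambda_t}{\sqrt{n}\mu})$, is fundamentally a second-order statement: the trace distance must go to zero as the conditioning event becomes overwhelmingly likely, and at a rate proportional to the small probability of the complementary event. The paper obtains it by bounding $\operatorname{KL}(\mathcal{D},\mathcal{D}_0)=O(\lambda_t^{2}/(n\mu^{2}))$ (note the $\lambda_t^2$) via a delicate decomposition of the input space by Hamming-weight deviation slices $\Delta_j$, combining refined per-slice DP bounds (log-ratio of order $\min\{1, e^{O(j/(\sqrt{n}\mu))}\lambda_t\}/(n\mu)$) with subgaussian tail bounds on the slice probabilities, and only then applying Pinsker. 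Your ``Gaussian wavepacket dominated by the noise'' remark points at the right intuition, but nothing in your proposal produces the quadratic-in-$\lambda_t$ KL bound that the rest of your epoch accounting silently depends on; without it, the total damage across the many low-$\lambda_t$ rounds is not controlled. Relatedly, your treatment of accuracy is a one-line appeal to Laplace tails, whereas the paper needs a full ideal/hybrid/real three-way argument precisely because accuracy must be measured against the original $\rho$, not the partially damaged state in the registers; your ``inductive invariant'' gesture points in the right direction but again hides the part that actually requires proof.
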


\begin{proof}
We first prove gentleness and then turn our attention to bounding the error
(the accuracy proof builds on the algorithm's gentleness).\bigskip

\textbf{Gentleness.} \ Note that we argue gentleness for \textit{any} product
state provided as input (i.e., for gentleness, we don't assume that the input
is $n$ copies of a single state). \ By Proposition \ref{mixedgentle}, it
suffices to consider the case where the input is a pure product state
$\left\vert \psi\right\rangle =\left\vert \psi_{1}\right\rangle \otimes
\cdots\otimes\left\vert \psi_{n}\right\rangle $. \ It is straightforward to
see that the update rounds are gentle: we run two DP measurements in each
update round, and their outcomes are gentle by Theorem \ref{main}. \ This is
stated below in Claim \ref{claim:update_gentle}. \ The non-update rounds are
certainly no less gentle than the update rounds (after all, we only run the
first measurement), but we expect to have a very large number of no-update
rounds, and so we need a much better bound. \ We obtain such a bound by
restricting our attention to the damage that can be caused by the conditioned
superoperator $\mathbf{CheckForUpdate}$, conditioned on the output being $0$
(no update). \ One important challenge is showing that the damage (conditioned
on this particular outcome) is tightly related to the probability of an
update. \ Thus, it will be highly unlikely for a sequence of rounds (even a
very long sequence!) to cause significant damage before it triggers an update.
\ The second challenge is bounding the damage that can be caused by a sequence
of \textit{conditioned} superoperators. This is done via a delicate accounting
argument, which relies on Lemma \ref{facepalmlemma}

We begin by fixing some notation. \ First, given a superoperator and a fixed
output $y$, we use the term \textit{conditioned superoperator} to refer to
running the superoperator conditioned on the output being $y$.\footnote{In the
terminology of Section \ref{MIXED}, a conditioned superoperator is a quantum
operation but where we normalize the output state.} \ The QPMW algorithm's
output in any run can be specified by $m^{\prime}\leq m$, the number of rounds
before an \textquotedblleft abort\textquotedblright\ (if any) occurs, and by a
vector $\bar{y}$ of outcomes, where for each $t\in\left\{  1,\ldots,m^{\prime
}\right\}  $, the outcome in round $t$ is $y_{t}\in\left[  0,1\right]
\cup\{\perp\}$. \ In no-update rounds the outcome is $\perp$, while in update
rounds the outcome $y_{t}=b_{t}$ is the noisy answer returned by the
algorithm. \ Note that $m^{\prime}$ and the vector $\bar{y}$ of outcomes
indeed specify all outputs of the algorithm. \ For an intermediate round
$t\in\lbrack1,m^{\prime}]$, we can also consider the vector $\bar{y}_{\leq t}$
of outcomes in the first $t$ rounds. \ Taking $\left\vert \psi\right\rangle $
to be the initial state of the algorithm, we take $\left\vert \psi_{\leq
t}\right\rangle $ to be the state after round $t$, conditioned on the outcomes
$\bar{y}_{\leq t}$ (and given the measurements $E_{1},\ldots,E_{t}$). \ The
initial state is thus $\left\vert \psi\right\rangle =\left\vert \psi_{\leq
0}\right\rangle $, and the final state is $\left\vert \psi_{\leq m^{\prime}%
}\right\rangle $.

Consider an execution of the algorithm at the beginning of the $t^{th}$ round.
\ The outcomes in previous rounds are given by $\bar{y}_{\leq t-1}$, which
determines the learned state $\sigma_{t-1}$. \ Let $E_{t}$ be the $t^{th}$
measurement. \ We define $\lambda_{t}$ to be the probability that the
$\mathbf{CheckForUpdate}\left(  E_{t},\left\vert \psi\right\rangle
,\sigma_{t-1}\right)  $ measurement returns $1$, i.e.\ the probability of an
update on the \textit{original} state $\left\vert \psi\right\rangle $.
\ Similarly, we take $\kappa_{t}$ to be the probability that
$\mathbf{CheckForUpdate}(E_{t},\left\vert \psi_{\leq t-1}\right\rangle
,\sigma_{t-1})$ returns $1$, i.e.\ the probability of an update on the real
state in the registers at the beginning of the $t^{th}$ round.

The following claims bound the damage that can occur if we run the $t^{th}$
round with a fresh copy of the original state in the registers.

\begin{claim}
\label{claim:update_gentle}Every round of the algorithm is an $O\left(
\frac{1}{\sqrt{n}\mu}\right)  $-gentle superoperator.
\end{claim}

\begin{claim}
\label{claim:noupdate_gentle} Take $n$ and $\mu$ to be set as in Equations
\ref{eq:DB_size} and \ref{eq:noise_magnitude}. \ Let $\left\vert
\psi_{\text{no}}\right\rangle $ be the state after we run
$\mathbf{CheckForUpdate}(E_{t},\left\vert \psi\right\rangle ,\sigma_{t-1})$,
and condition on the output $0$ (\textquotedblleft no
update,\textquotedblright\ which occurs with probability $1-\lambda_{t}$).
\ We have:
\[
\left\Vert |\psi\rangle\langle\psi|-|\psi_{\text{no}}\rangle\langle
\psi_{\text{no}}|\right\Vert _{\operatorname*{tr}}=O\left(  \frac{\lambda_{t}%
}{\sqrt{n}\mu}\right)  .
\]

\end{claim}

Claim \ref{claim:update_gentle} follows immediately from the differential
privacy of the Laplace measurement and from Theorem \ref{main}. \ We defer the
proof of Claim \ref{claim:noupdate_gentle}, which is technically involved and
lengthy (see below). We first show that, given this claim, the algorithm
(taken as a whole) is gentle.\bigskip

\textbf{Epoch superoperators.} \ For the analysis, we divide an execution of
the algorithm into epochs, where each epoch is comprised of one or more
rounds. \ The $k^{th}$ epoch begins in round $t_{k}$ (where $t_{1}=1$). \ The
$k^{th}$ epoch ends on the first round $t^{\prime}\geq t_{k}$ where one of the
following occurs:

\begin{enumerate}
\item[(1)] An update happens (or the epoch reaches the last round $m$).

\item[(2)] The probability of an update, if each round was run on the original
state, becomes too large:
\begin{equation}
\prod_{j=t_{k}}^{t^{\prime}}\left(  1-\lambda_{j}\right)  \leq\frac{1}{2}.
\label{condition:p}%
\end{equation}

\end{enumerate}

Naturally, the last epoch always ends on the last round $m^{\prime}$. \ The
crux of the gentleness analysis is bounding the damage done to the state
within any single epoch. \ A separate argument shows that the number of epochs
cannot be too large.

Viewing each epoch as a superoperator, it is specified by a list of
measurements $E_{t_{k}},E_{t_{k}+1},\ldots$ that would be chosen so long as no
updates occurred. \ Note that this list is indeed fixed: while the strategy
that chooses the actual measurements $E_{t_{k}},E_{t_{k}+1},\ldots$ can be
adaptive, it specifies a fixed sequence of measurements (known in advance)
that will be chosen so long as the outputs are \textquotedblleft no
update.\textquotedblright\ \ Let $t^{\prime\prime}\geq t_{k}$ be the first
round that meets Condition (\ref{condition:p}). \ The epoch processes the list
of measurements $E_{t_{k}},E_{t_{k}+1},\ldots,E_{t^{\prime\prime}}$ until an
update occurs (or the last measurement in this list is processed). \ Note that
$t^{\prime\prime}$ depends on the initial state $\left\vert \psi\right\rangle
$, but it is fixed in advance. \ Given the list of measurements $E_{t_{k}%
},E_{t_{k}+1},\ldots$, the output of the epoch superoperator is the length a
list of \textquotedblleft no update\textquotedblright\ decisions of length
$s\in\left[  0,t^{\prime\prime}-t_{k}\right]  $, followed (if an update occurs
in the final round) by the output $b_{t_{k}+s}$ of the Laplace measurement
used to approximate the value of $E_{t_{k}+s}$.

We bound the damage that can be caused to the original (product) state
$\left\vert \psi\right\rangle $ by running the epoch superoperator. We also
show that running the epoch superoperator on $\left\vert \psi\right\rangle $
triggers an update with constant probability, but with constant probability no
update occurs before round $t^{\prime\prime}$.

\begin{claim}
\label{claim:epoch}There exists a noise magnitude $\mu=O\left(  \frac{1}%
{\sqrt{n}\varepsilon^{2}}\right)  $ such that the following holds. \ Fixing
any round $t_{k}\in\left[  m\right]  $, prior measurements $E_{1}%
,\ldots,E_{t_{k}-1}$, and a history of outputs $\bar{y}_{\leq t_{k}-1}$ in the
previous rounds, define the epoch superoperator as above. \ Then:

\begin{enumerate}
\item[(1)] When we run the epoch superoperator on the state $\left\vert
\psi\right\rangle $, the probability that an update occurs is at least $0.15$.

\item[(2)] When we run the epoch superoperator on the state $\left\vert
\psi\right\rangle $, the probability that \emph{no} update occurs before the
round $t^{\prime\prime}$ is at least $0.4$.

\item[(3)] Let $\left\vert \psi^{\prime}\right\rangle $ be the state in the
registers after running this superoperator on the original state $\left\vert
\psi\right\rangle $ (including observing the epoch's outputs).
\ The damage is bounded by:
\[
\left\Vert |\psi\rangle\langle\psi|-|\psi^{\prime}\rangle\langle\psi^{\prime
}|\right\Vert _{\operatorname*{tr}}=O\left(  \frac{1}{\sqrt{n}\mu}\right)  .
\]

\end{enumerate}
\end{claim}

\begin{proof}
For any possible last round $t^{\prime}\geq t_{k}$, and any possible output
$y$ of the epoch superoperator (comprised of a sequence of no-updates, which
may or may not end with an update), we bound the damage as follows. \ We take
$t^{\prime\prime}\geq t^{\prime}$ to be the round on which the epoch always
ends (unless there is an earlier update). \ Since Condition (\ref{condition:p}%
) did not hold at the beginning of round $t^{\prime}-1$, we have:
\[
p=\prod_{j=t_{k}}^{t^{\prime}-1}\left(  1-\lambda_{j}\right)  \geq\frac{1}%
{2}.
\]
Using the fact that for any $\xi\in\left[  0,1\right]  $, we have that
$0\leq1-\xi\leq e^{-\xi}$:
\[
\frac{1}{2}\leq p=\prod_{j=t_{k}}^{t^{\prime}-1}\left(  1-\lambda_{j}\right)
\leq\prod_{j=t_{k}}^{t^{\prime}-1}e^{-\lambda_{j}}=e^{-\sum_{j=t_{k}%
}^{t^{\prime}-1}\lambda_{j}}.
\]
By taking logarithms on both sides of the above inequality we get:
\begin{equation}
\sum_{j=t_{k}}^{t^{\prime}-1}\lambda_{j}<1. \label{eq:product_to_sum}%
\end{equation}

Claim \ref{claim:noupdate_gentle} gives a bound $\varepsilon_{j}$ on the
damage when running the $j^{th}$ $\mathbf{CheckForUpdate}$ conditioned
superoperator (on the original state), conditioned on output $0$. \ Recall
that this bound $\varepsilon_{j}$ is linear in the update probability
$\lambda_{j}$. \ Claim \ref{claim:update_gentle} gives a bound $\varepsilon
_{t_{k}+s}$ on the damage caused by the conditioned superoperator run in the
last round, conditioned on any possible outcome in that round. \ Combining
these bounds with inequality (\ref{eq:product_to_sum}), we get:
\[
\sum_{j=t_{k}}^{t^{\prime}}\varepsilon_{j}=O\left(  \frac{1}{\sqrt{n}\mu
}\left(  1+\sum_{j=t_{k}}^{t^{\prime}-1}\lambda_{j}\right)  \right)  =O\left(
\frac{1}{\sqrt{n}\mu}\right)  .
\]
Define $q$ to be the probability of no update in rounds $t_{k},\ldots
,t^{\prime}-1$ in a \textquotedblleft real\textquotedblright\ execution of the
epoch superoperator on the state $\left\vert \psi\right\rangle $ (and note
that $q>0$, because we are considering an output $y$ that can actually occur).
\ Applying Lemma \ref{facepalmlemma} to the conditioned superoperator's run in
the first $t^{\prime}-t_{k}-1$ rounds, and using also the bound on $\mu$ in
the claim's statement, we get:
\begin{equation}
\left\vert p-q\right\vert \leq\sum_{j=t_{k}}^{t^{\prime}-1}\varepsilon
_{j}<0.1, \label{eq:p_and_q}%
\end{equation}
which in particular implies that $q\geq0.4$, proving item (2) above.\ \ By
Lemma \ref{facepalmlemma} (see also the remark following that lemma about
composing with a final superoperator---in our case, the $\left(  t^{\prime
}\right)  ^{th}$ round), we conclude that:
\[
\left\Vert \left\vert \psi\right\rangle \left\langle \psi\right\vert
-\left\vert \psi^{\prime}\right\rangle \left\langle \psi^{\prime}\right\vert
\right\Vert _{\operatorname*{tr}}\leq5\sum_{j=t_{k}}^{t^{\prime}}%
\varepsilon_{j}=O\left(  \frac{1}{\sqrt{n}\mu}\right)  .
\]
\bigskip

\textbf{Bounding the update probability.} \ To lower-bound the probability of
an update, observe first that if the probability $\lambda_{t^{\prime\prime}}$
of an update in the last round, when we run it on a fresh copy of the state
$\left\vert \psi\right\rangle $, satisfies $\lambda_{t^{\prime\prime}}%
\geq\frac{1}{4}$, then by gentleness of the epoch superoperator as a whole
(see above), when we run it on $\left\vert \psi\right\rangle $, the
probability of an update in the last round (run on the state $\left\vert
\psi_{\leq t^{\prime\prime}}\right\rangle $) is greater than $0.15$.

Thus, we restrict our attention to the case that $\lambda_{t^{\prime\prime}%
}<\frac{1}{4}$.\ \ Since $t^{\prime\prime}$ is the first round where Condition
(\ref{condition:p}) is violated, we know that $\prod_{j=t_{k}}^{t^{\prime
\prime}}\left(  1-\lambda_{j}\right)  \leq\frac{1}{2}$. \ I.e.,\ we have a
lower bound on the probability of an update if each round was run on a fresh
copy of $\left\vert \psi\right\rangle $. \ Since we assume $\lambda
_{t^{\prime\prime}}<\frac{1}{4}$, we in fact have an upper bound on the
probability of no update in the first $t^{\prime\prime}-1$ rounds of such an
execution:%
\[
p=\prod_{j=t_{k}}^{t^{\prime\prime}-1}\left(  1-\lambda_{j}\right)  \leq
\frac{3}{4}.
\]
By equation (\ref{eq:p_and_q}) (restricted to the case $t^{\prime}%
=t^{\prime\prime}$), we deduce a similar bound on the probability $q$ of no
update in the first $t^{\prime\prime}-1$ rounds of the actual execution (an
execution that does not get fresh copies of $\left\vert \psi\right\rangle $).
\ In particular, the probability of no update in this \textquotedblleft
actual\textquotedblright\ execution is at most $0.85$.\bigskip
\end{proof}

\textbf{Accumulated damage.} \ By Claim \ref{claim:epoch}, running each epoch
superoperator on the initial state only results in bounded damage, and
triggers an update with constant probability. \ By Lemma \ref{damagelem}
(additivity of damage), when we run a sequence of $k$ epochs, the total damage
is at worst multiplied by $k$. \ Moreover, so long as this accumulated damage
is smaller than $0.05$, each epoch still triggers an update with probability
at least $0.1$ (because the trace distance between the original state and the
state in the registers when we run the epoch is bounded). \ Under these
conditions, by Azuma's inequality, with all but $\frac{\delta}{2}$
probability, the number of epochs that occur before $\ell\left(
d,\varepsilon\right)  $ updates are triggered (and the QPMW algorithm aborts)
is at most:
\[
k=O\left(  \ell\left(  d,\varepsilon\right)  +\sqrt{\ell\left(  d,\varepsilon
\right)  \log\frac{1}{\delta}}\right)  .
\]
By Theorem \ref{achknthm} we have that $\ell\left(  d,\varepsilon\right)
=O\left(  \frac{\log d}{\varepsilon^{2}}\right)  $. \ Note that the choice of
noise parameter $\mu$ guarantees that the accumulated damage over $k$ such
rounds is indeed less than $0.05$ (in fact it is less than $\alpha$; see
equation (\ref{eq:noise_lowerbound})). \ We conclude that in this random
process, the probability that each epoch triggers an update stays above $0.1$
for the first $k$ epochs.

By Claim \ref{claim:epoch} and Lemma \ref{damagelem} (additivity of damage),
we can bound the total damage by the number of epochs times the damage per
epoch, and we get that with all but $\frac{\delta}{2}$ probability over the
coins and measurements made by QPMW:
\begin{align*}
\left\Vert |\psi\rangle\langle\psi|-|\psi_{\leq m^{\prime}}\rangle\langle
\psi_{\leq m^{\prime}}|\right\Vert _{\operatorname*{tr}}  &  =O\left(
\sum_{j=1}^{k}\frac{1}{\sqrt{n}\mu}\right) \\
&  =O\left(  \frac{\log d}{\sqrt{n}\mu\varepsilon^{2}}+\frac{\sqrt{\log
d\log\frac{1}{\delta}}}{\sqrt{n}\mu\varepsilon}\right)  .
\end{align*}
\bigskip

\textbf{Accuracy.} \ For given gentleness and accuracy parameters
$\alpha,\delta,\varepsilon>0$, we fix the noise parameter $\mu$ and then
analyze the number of copies needed to guarantee accuracy with high
probability. \ We assume without loss of generality that $\alpha
\leq\varepsilon/100$ (if a larger $\alpha$ is specified, we simply run the
algorithm with $\alpha^{\prime}=\varepsilon/100$). \ We set the parameters so
that in an \textquotedblleft ideal\textquotedblright\ run of the algorithm,
where each round is run on a fresh copy of the state $\rho^{\otimes n}$, the
algorithm is $\varepsilon/10$-accurate with all but a small constant
probability. \ We then use the algorithm's gentleness to show that this
implies accuracy in \textquotedblleft real\textquotedblright\ runs of the
algorithm: namely, we show that in a real run, the algorithm is $\varepsilon
$-accurate with all but a small constant probability. \ The error probability
can be reduced by independent repetitions.

We begin by setting the parameters so that with high probability, the total
damage to the state is bounded by $\alpha$, and recall also that we assume
$\alpha\leq\frac{\varepsilon}{100}$. \ This imposes a constraint on $\mu$:
\[
\min\left(  \alpha,\varepsilon\right)  =O\left(  \frac{\log d}{\sqrt{n}%
\mu\varepsilon^{2}}+\frac{\sqrt{\log d\log\frac{1}{\delta}}}{\sqrt{n}%
\mu\varepsilon}\right)  ,
\]
or equivalently:
\begin{equation}
n=\Omega\left(  \left(  \frac{1}{\mu\varepsilon\min\left(  \alpha
,\varepsilon\right)  }\left(  \frac{\log d}{\varepsilon}+\sqrt{\log d\log
\frac{1}{\delta}}\right)  \right)  ^{2}\right)  . \label{eq:noise_lowerbound}%
\end{equation}
Note that this setting also satisfies the conditions of Claim
\ref{claim:epoch}.

We also want to guarantee that with high probability, an ideal run of the
algorithm would give accurate answers. \ This imposes an upper bound on the
noise magnitude $\mu$. \ We analyze the accuracy by dividing the execution
into epochs, as was done in the gentleness analysis above.

\begin{claim}
[Ideal run accuracy]\label{claim:idealaccuracy} Consider an ideal run of the
algorithm (where each round is run on a fresh copy of $\rho^{\otimes n}$)
where we set:
\begin{equation}
\mu=O\left(  \frac{\varepsilon}{\log m}\right)  . \label{eq:noise_magnitude}%
\end{equation}
Consider an epoch that can run for at most $s$ rounds. The following all hold:

\begin{enumerate}
\item[(1)] With all but $\frac{s}{1000m}$ probability, there will \emph{not}
be an update in any round $t$ of the epoch where $\left\vert \operatorname{Tr}%
(E_{t}\rho)-\operatorname{Tr}(E_{t}\sigma_{t-1})\right\vert \leq
\frac{\varepsilon}{3}$.

\item[(2)] If in any round $t$ of the epoch it is the case that $\left\vert
\operatorname{Tr}(E_{t}\rho)-\operatorname{Tr}(E_{t}\sigma_{t-1})\right\vert
\geq\varepsilon$, then an update occurs in that round with all but $\frac
{1}{1000m}$ probability (note this condition can only hold on the round that
always ends the epoch).

\item[(3)] If the epoch ends in an update round, then the noisy answer $b_{t}$
is $\left(  \frac{\varepsilon}{10}\right)  $-accurate with all but $\frac
{1}{1000m}$ probability.
\end{enumerate}
\end{claim}
\end{proof}

\begin{proof}
The claim follows immediately from the exponential tails of the Laplace
distribution: in each round, for each draw of Laplace noise, with all but
$\frac{1}{1000m}$ probability, the noise magnitude is at most $\frac
{\varepsilon n}{10}$.

Recall that an epoch can end before reaching its last round. \ However, the
probability of each epoch reaching its final round is at least $1/2$ (by the
definition of the epoch superoperator). \ Thus, if an epoch can run for at
most $s$ rounds, then the \emph{expected} number of rounds is at least $s/2$.
\ We conclude that with probability at least $0.9$, the sum, over all epochs,
of the number of rounds for which each epoch can run, is at most $10m$ (by
Markov's inequality). \ By Claim \ref{claim:idealaccuracy}, taking a union
bound over all epochs, and taking $\mu$ as set as in Equation
\ref{eq:noise_magnitude}, we deduce that with all but a small constant
probability over the noise choices, the conditions of the online learning
theorem for quantum states (Theorem \ref{achknthm}) all hold in all rounds
simultaneously. \ By that theorem, we conclude that with all but small
constant probability over its coins, the QPMW algorithm does not abort, and
its answers are all $\varepsilon$-accurate.\bigskip

\textbf{How many copies do we need?} \ Before proceeding to prove that a real
run of the algorithm is also accurate, we specify the number of copies needed
to simultaneously satisfy the constraints in equations
(\ref{eq:noise_lowerbound}) and (\ref{eq:noise_magnitude}) by taking $n$ to be
large enough. \ We can do so while still guaranteeing the upper bound:
\begin{equation}
n=O\left(  \frac{\log^{2}m}{\varepsilon^{4}\min\left(  \alpha,\varepsilon
\right)  ^{2}}\cdot\left(  \frac{\log^{2}d}{\varepsilon^{2}}+\left(  \log
d\right)  \left(  \log\frac{1}{\delta}\right)  \right)  \right)  .
\label{eq:DB_size}%
\end{equation}
Note that this setting of $n$, which we use in the proof of Claim
\ref{claim:noupdate_gentle}, also guarantees that $\sqrt{n}\mu$ is a
sufficiently large constant. \ Further, this number of copies guarantees
accuracy with all but small constant probability. The error probability can be
reduced to $\beta$ by running $O(\log(1/\beta))$ independent copies of the
algorithm, and outputting the median answer in each round.

For simplicity, in the statement of Theorem \ref{ostthm} we claim a slightly
more relaxed bound of:
\[
n=O\left(  \frac{\left(  \log^{2}m+\log\frac{1}{\delta}\right)  \cdot\log
^{2}d\cdot\log\frac{1}{\beta}}{\varepsilon^{6}\min\left(  \alpha
,\varepsilon\right)  ^{2}}\right)  .
\]
\bigskip

\textbf{A hybrid execution.} \ Consider a hybrid execution, where each epoch
superoperator (see above) is run on the \textquotedblleft
real\textquotedblright\ state (with no substitutions), but after each
superoperator completes its operation, we replace the resulting state with a
fresh copy of $\rho^{\otimes n}$ before proceeding to the next epoch
superoperator. \ Since each epoch is $\alpha_{e}=O\left(  \frac{1}{\sqrt{n}%
\mu}\right)  $-gentle (Claim \ref{claim:epoch}), we can apply the Damage Lemma
(Lemma \ref{facepalmlemma}) to conclude accuracy properties for the epoch:

\begin{claim}
[Hybrid run accuracy]\label{claim:hybridaccuracy} Consider a hybrid run of the
algorithm (where each epoch is run on a fresh copy of $\rho^{\otimes n}$),
with the parameters set as in Equations \eqref{eq:noise_lowerbound},
\eqref{eq:noise_magnitude}, and \eqref{eq:DB_size}. \ Let $\alpha_{e}$ be the
bound on the gentleness of each epoch. \ Consider an epoch that can run for at
most $s$ rounds. The following all hold:

\begin{enumerate}
\item[(1)] With all but $\frac{s}{1000m}+\alpha_{e}$ probability, there will
\emph{not} be an update in any round $t$ of the epoch where $\left\vert
\operatorname{Tr}(E_{t}\rho)-\operatorname{Tr}(E_{t}\sigma_{t-1})\right\vert
\leq\frac{\varepsilon}{3}$.

\item[(2)] If in the final round of the epoch it is the case that $\left\vert
\operatorname{Tr}(E_{t}\rho)-\operatorname{Tr}(E_{t}\sigma_{t-1})\right\vert
\geq\varepsilon$, then an update occurs in that round with all but $\frac
{1}{1000m}+\alpha_{e}$ probability.

\item[(3)] If the epoch ends in an update round, then the noisy answer $b_{t}$
is $\frac{\varepsilon}{10}$-accurate with all but $\frac{1}{1000m}+\alpha_{e}$ probability.
\end{enumerate}
\end{claim}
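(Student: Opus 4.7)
The plan is to deduce each item of Claim \ref{claim:hybridaccuracy} from the corresponding item of Claim \ref{claim:idealaccuracy}, by arguing that the hybrid epoch's joint distribution over outputs lies within total variation distance $O(\alpha_e)$ of the ideal epoch's joint distribution. Once this is established, any event defined on the epoch's output sequence---including those in items (1)--(3)---has its probability perturbed by at most $O(\alpha_e)$ in passing between the two worlds, and the ideal-run bounds from Claim \ref{claim:idealaccuracy} immediately yield the hybrid-run bounds by addition.

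First I would observe that the hybrid epoch is a sequence of conditioned superoperators $S_1, \ldots, S_s$ applied in order to a single copy of $\rho^{\otimes n}$, whereas the ideal epoch applies each $S_t$ to its own fresh copy. For a fixed output sequence $\bar{y}$, let $p_{\bar{y}}$ denote the ideal probability and $q_{\bar{y}}$ the hybrid probability. Lemma \ref{facepalmlemma}, applied to the no-update outcomes of $S_1, \ldots, S_{s-1}$ and composed with the final round $S_s$ via the remark following that lemma, bounds $|p_{\bar{y}} - q_{\bar{y}}|$ in terms of the accumulated per-round damage $\sum_{j=1}^{s}\varepsilon_j$. Just as in the proof of Claim \ref{claim:epoch}, combining Claim \ref{claim:noupdate_gentle} with the within-epoch bound $\sum_j \lambda_j < 1$ gives $\sum_j \varepsilon_j = O(\alpha_e)$, which in turn controls the total variation distance between the two joint output distributions.

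With this total variation bound in hand, the three items follow from the corresponding items of Claim \ref{claim:idealaccuracy}. For item (1), a union bound over the at most $s$ rounds of the ideal execution yields failure probability $s/(1000m)$, so the hybrid-run failure probability is at most $s/(1000m) + O(\alpha_e)$. For items (2) and (3), the relevant event concerns only the epoch's final round, where the ideal failure probability is at most $1/(1000m)$, giving the hybrid bound $1/(1000m) + O(\alpha_e)$ in each case.

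The main obstacle, as I see it, is ensuring that the total variation between the joint distributions is truly $O(\alpha_e)$ rather than $O(s \cdot \alpha_e)$. A naive hybrid argument that swaps one round's input state at a time from damaged to fresh would cost up to $\alpha_e$ per swap and hence $s \cdot \alpha_e$ overall. Lemma \ref{facepalmlemma} avoids this blow-up by exploiting the sequential-conditioning structure of the real execution, together with the fact that a no-update round damages the state only in proportion to its update probability $\lambda_j$; since $\sum_j \lambda_j < 1$ within an epoch, the per-round damages telescope into a single $O(\alpha_e)$ contribution. If there is any residual slack, it can always be absorbed into $\alpha_e = O(1/(\sqrt{n}\mu))$ by taking the noise magnitude $\mu$ a constant factor larger.
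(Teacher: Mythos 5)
Your high-level plan — establish that the hybrid epoch's and the ideal epoch's joint output distributions are within total variation distance $O(\alpha_e)$, and then transfer each ideal-run bound of Claim~\ref{claim:idealaccuracy} to the hybrid run by adding $O(\alpha_e)$ — is valid, and it is arguably cleaner than the paper's proof. The paper does not prove a single uniform TV bound; instead it handles each of the three items by a separate, event-specific application of Lemma~\ref{facepalmlemma} (for item (1), restricting the subset $T$ to the rounds in $I$ and conditioning on ``no update'' there; for items (2) and (3), using the accumulated damage to bound how far the final round's conditional input state has drifted from $\rho^{\otimes n}$). Once you have the uniform TV bound, all three items do indeed fall out as you describe.

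The gap is in how you establish the TV bound. You invoke Lemma~\ref{facepalmlemma} to bound $|p_{\bar{y}} - q_{\bar{y}}|$ for a fixed output sequence $\bar{y}$, and then assert that this ``controls the total variation distance between the two joint output distributions.'' It does not: a pointwise bound $|p_{\bar{y}} - q_{\bar{y}}| \le \sum_j \varepsilon_j$ for each $\bar{y}$ only gives $\|P - Q\|_{TV} = \frac{1}{2}\sum_{\bar{y}}|p_{\bar{y}} - q_{\bar{y}}| \le \frac{1}{2}|\mathrm{supp}| \cdot \sum_j\varepsilon_j$, and the support (ranging over which round the first update occurs in and the discretized noisy value returned) is not of constant size. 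Worse, you explicitly \emph{reject} the argument that would actually close this gap. The hybrid telescoping you dismiss (``cost $\alpha_e$ per swap, hence $s\cdot\alpha_e$ overall'') does not in fact cost $\alpha_e$ per swap. Consider the interpolating processes $Q_0, \ldots, Q_s$ where $Q_i$ uses fresh copies of $\rho^{\otimes n}$ for rounds $1,\ldots,i$ and runs sequentially thereafter. Conditioned on the first $i-1$ (identically distributed) outcomes, $Q_{i-1}$ and $Q_i$ differ only in round $i$'s input: the no-update post-state of round $i-1$ versus a fresh copy. By Claim~\ref{claim:noupdate_gentle} that trace distance is $\varepsilon_{i-1} = O(\lambda_{i-1}/(\sqrt{n}\mu))$, \emph{not} $\alpha_e$; and since everything downstream of round $i$ is the same channel, contractivity gives $\|Q_{i-1} - Q_i\|_{TV} \le \varepsilon_{i-1}$. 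Telescoping and using $\sum_j\lambda_j < 1$ within the epoch gives $\|Q_0 - Q_s\|_{TV} \le \sum_j\varepsilon_j = O(1/(\sqrt{n}\mu)) = O(\alpha_e)$. So the argument you describe as the ``main obstacle'' is in fact the correct and elementary proof of the TV bound you want, and it does not even require the more delicate Lemma~\ref{facepalmlemma}. With that substitution, your proof goes through and is a pleasant alternative to the paper's.
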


\begin{proof}
Consider the set $I$ of rounds where $\left\vert \operatorname{Tr}(E_{t}%
\rho)-\operatorname{Tr}(E_{t}\sigma_{t-1})\right\vert \leq\frac{\varepsilon
}{3}$. \ By Claim \ref{claim:idealaccuracy}, the probability that in an ideal
execution an update occurs in one of the rounds in $I$ is at most $\frac
{s}{1000m}$. \ Applying Lemma \ref{facepalmlemma} to the epoch superoperator,
we conclude that the probability an update occurring in one of the rounds in
$I$ is at most $\frac{s}{1000m}+\alpha_{e}$. We note that in this application
of Lemma \ref{facepalmlemma}, we restrict to the subset of quantum operations
corresponding to rounds in $I$ (and condition on the \textquotedblleft no
update\textquotedblright\ outcome in those rounds). \ Claim
\ref{claim:idealaccuracy} further bounds the ideal-execution probability of no
update if in the last round $\left\vert \operatorname{Tr}(E_{t}\rho
)-\operatorname{Tr}(E_{t}\sigma_{t-1})\right\vert \geq\varepsilon$, and the
probability that the update ends in an update round but the noisy answer is
not $\frac{\varepsilon}{10}$-accurate. \ By $\alpha_{e}$-gentleness of the
epoch superoperator, we conclude that the probabilities of these two events
occurring in the hybrid execution are both bounded by $\frac{1}{1000m}%
+\alpha_{e}$.
\end{proof}

By Claim \ref{claim:epoch}, the probability that there is no update until the
last ($s^{th}$) round of an epoch is at least $0.4$. \ Thus, in the hybrid
execution, the \emph{expected} number of rounds for which an $s$-round epoch
will run is at least $0.4s$. \ Similarly to the analysis of the ideal
execution, taking a union bound over $k$ epoch superoperators and taking
$\tau>0$ to be a small constant, we conclude that with all but $\tau
+O(\alpha_{e}\cdot k)$ probability, the conditions of the online learning
theorem all hold and the answers returned are all $\varepsilon$-accurate.
Further, by the choice of parameters in Equation \eqref{eq:noise_lowerbound},
we know that with high probability, when we run QPMW and take $k^{\ast}$ to be
the number of epochs needed to process all $m$ measurements, we have
$O(\alpha_{e}\cdot k^{\ast})=O(\alpha)$. \ We conclude that with all but a
small constant probability, a hybrid execution of QPMW does not terminate
prematurely, and is $\varepsilon$-accurate on every measurement.\bigskip

\textbf{The real execution.} \ Lastly, we consider the real execution, where
the epoch superoperators are run in sequence, without any refreshing of the
state in the registers. \ We use the gentleness of the epoch superoperator to
conclude that the algorithm remains accurate in its real execution.

\begin{claim}
[Real run accuracy]\label{claim:realaccuracy}Consider a real run of the
algorithm, with the parameters set as in Equations
\eqref{eq:noise_lowerbound}, \eqref{eq:noise_magnitude}, and
\eqref{eq:DB_size}. \ With all but small constant probability over the
algorithm's coins, the following hold in every round $t$ of the algorithm (simultaneously):

\begin{enumerate}
\item[(1)] If $\left\vert \operatorname{Tr}(E_{t}\rho)-\operatorname{Tr}%
(E_{t}\sigma_{t-1})\right\vert \leq\frac{\varepsilon}{3}$, then there is no update.

\item[(2)] If $\left\vert \operatorname{Tr}(E_{t}\rho)-\operatorname{Tr}%
(E_{t}\sigma_{t-1})\right\vert \geq{\varepsilon}$, then there is an update.

\item[(3)] If $t$ is an update round, then the noisy answer $b_{t}$ is
$\frac{\varepsilon}{10}$-accurate.
\end{enumerate}
\end{claim}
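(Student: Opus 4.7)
My plan is to reduce Claim \ref{claim:realaccuracy} to the hybrid-run accuracy of Claim \ref{claim:hybridaccuracy}, using Lemma \ref{facepalmlemma} (the Damage Lemma) to transfer event probabilities from the hybrid execution to the real execution without paying a quadratic cost in the number of epochs. The setup is to view the real execution as a sequence of at most $k^\ast = O(\log d / \varepsilon^2 + \sqrt{(\log d)\log(1/\delta)})$ epoch superoperators, each applied to the (slightly damaged) state left by its predecessors, and to compare it to the hybrid execution in which each epoch is instead applied to a fresh copy of $\rho^{\otimes n}$.

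First, I would define, for each epoch $k$, the post-selected quantum operation $S_k$ that runs the $k^{th}$ epoch superoperator and ``accepts'' exactly those transcripts in which none of the three bad events of Claim \ref{claim:realaccuracy} occurs in any round of the epoch. By Claim \ref{claim:epoch} the epoch superoperator is $\alpha_e = O(1/(\sqrt{n}\mu))$-gentle on $\rho^{\otimes n}$; since gentleness holds per-outcome, convexity of trace distance gives $\|S_k(\rho^{\otimes n})/\operatorname{Tr}(S_k(\rho^{\otimes n})) - \rho^{\otimes n}\|_{\operatorname*{tr}} \leq \alpha_e$, so we may take $\varepsilon_k \leq \alpha_e$ in the notation of Lemma \ref{facepalmlemma}. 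Summing the per-epoch bad-event probability $s_k/(1000m) + \alpha_e$ from Claim \ref{claim:hybridaccuracy} over all $k^\ast$ epochs, and using $\sum_k s_k \leq m$ together with the parameter setting from Equation \ref{eq:noise_lowerbound}, the product $p_{[k^\ast]} := \prod_k p_k$ of per-epoch acceptance probabilities on $\rho^{\otimes n}$ is at least $1 - O(\alpha)$. Since the hybrid execution runs the epochs independently on fresh $\rho^{\otimes n}$, this product is exactly the hybrid probability that every bad event is avoided.

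Next, I would apply Lemma \ref{facepalmlemma} with $T = [k^\ast]$ (so that $q_{[m]\setminus T} = 1$) to the operations $S_1, \ldots, S_{k^\ast}$. The lemma yields $|p_{[k^\ast]} - q_{[k^\ast]}| \leq \sum_k \varepsilon_k \leq k^\ast \alpha_e = O(\alpha)$, where the last equality again invokes Equation \ref{eq:noise_lowerbound}. Hence $q_{[k^\ast]} \geq 1 - O(\alpha)$, which is exactly the statement that the real execution avoids every bad event across every epoch with all but a small constant probability. The residual $\delta/2$ probability that the execution proceeds through more than $k^\ast$ epochs, or that it aborts after $\ell(d,\varepsilon)$ updates, is already bounded in the gentleness analysis (via Azuma's inequality) and is absorbed into the failure budget; the error can be driven down to $\beta$ by the usual trick of running $O(\log(1/\beta))$ independent repetitions and outputting the round-wise median.

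The main obstacle, and the reason a naive epoch-by-epoch hybrid swap would fail, is that replacing the damaged state with a fresh $\rho^{\otimes n}$ one epoch at a time costs $(k-1)\alpha_e$ per swap, accumulating to $\Theta((k^\ast)^2 \alpha_e)$ overall. Since our parameter setting only guarantees $k^\ast \alpha_e = O(\alpha)$ rather than $(k^\ast)^2 \alpha_e = O(\alpha)$, such a loss would exceed the accuracy budget by a factor of $k^\ast$. Lemma \ref{facepalmlemma} circumvents this quadratic blowup by bounding the deviation between product-form ``ideal'' probabilities and sequential ``real'' probabilities by the \emph{sum} $\sum_k \varepsilon_k$, exploiting the telescoping of multiplicative errors in its proof. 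This is precisely the regime for which the Damage Lemma was designed, and it is the crucial ingredient that propagates accuracy from the hybrid execution to the real execution without blowup.
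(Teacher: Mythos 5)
Your high-level goal is the right one---transfer accuracy from the hybrid execution of Claim~\ref{claim:hybridaccuracy} to the real execution---but there are two substantive problems.

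First, the ``obstacle'' you describe does not exist. You claim that the naive epoch-by-epoch hybrid swap costs $(k-1)\alpha_e$ per swap and thus $\Theta((k^\ast)^2\alpha_e)$ overall, but this misreads how the hybrid chain works. Define $\mathcal{H}_i$ to be the execution in which each of epochs $1,\ldots,i$ is run on a fresh copy of $\rho^{\otimes n}$, with no refreshing afterward; then $\mathcal{H}_1$ is the real execution and $\mathcal{H}_{k^\ast}$ is the hybrid execution. In $\mathcal{H}_i$, epoch $i$ ran on a fresh copy, so by Claim~\ref{claim:epoch} its output is within $\alpha_e$ of $\rho^{\otimes n}$ (conditioned on any outcome). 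The only difference between $\mathcal{H}_i$ and $\mathcal{H}_{i+1}$ is whether epoch $i+1$ receives this $\alpha_e$-damaged state or a fresh copy; by contractivity of trace distance (Proposition~\ref{contractive}), the distribution of all subsequent outcomes changes by at most $\alpha_e$. The accumulated damage from epochs $1$ through $i-1$ never enters, precisely because $\mathcal{H}_i$ has already refreshed them. Summing over the $k^\ast$ swaps gives $O(k^\ast\alpha_e)$, with no quadratic blowup. This is in fact the paper's proof, and it is simpler than what you propose.

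Second, your application of Lemma~\ref{facepalmlemma} as a replacement has a gap. That lemma is stated for a \emph{fixed} sequence $S_1,\ldots,S_m$ of quantum operations, where $p_i = \operatorname{Tr}(S_i(\rho))$ is well-defined in advance and $p_T$ is a genuine product. Here the epoch superoperator $S_k$ is chosen \emph{adaptively}: its constituent measurements $E_{t_k}, E_{t_k+1},\ldots$ depend on the outcomes of epochs $1,\ldots,k-1$. So $p_k$ is a random variable rather than a number, $p_{[k^\ast]}$ is not a fixed product, and the lemma does not apply as stated. The hybrid-execution success probability is a sum over outcome trajectories of products of conditional probabilities, not an independent product. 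One could try to rescue this by bundling an ancilla register recording prior outcomes into the state, but then the post-selected state is no longer close to $\rho^{\otimes n}$ tensored with a fixed ancilla, and the hypothesis $\|S_k(\rho)/\operatorname{Tr}(S_k(\rho))-\rho\|_{\operatorname*{tr}}\le\varepsilon_k$ fails. The hybrid argument sidesteps all of this because it conditions on the shared outcome prefix, applies gentleness pointwise, and uses contractivity to propagate, which handles adaptivity for free.

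Your remaining ingredients (the convexity observation that post-selecting on ``no bad event'' preserves $\alpha_e$-closeness, the $\delta/2$ budget from Azuma for the epoch count, the median-of-$O(\log(1/\beta))$ amplification) are correct and match the paper, but the central step needs to be the conditioning-plus-contractivity hybrid argument, not the Damage Lemma.
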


\begin{proof}
Let $B$ be the (\textquotedblleft bad\textquotedblright) event that in some
round $t$ of QPMW it is either the case that:

\begin{enumerate}
\item[(i)] an update occurs even though $\left\vert \operatorname{Tr}%
(E_{t}\rho)-\operatorname{Tr}(E_{t}\sigma_{t-1})\right\vert \leq
\frac{\varepsilon}{3}$, or

\item[(ii)] no update occurs even though $\left\vert \operatorname{Tr}%
(E_{t}\rho)-\operatorname{Tr}(E_{t}\sigma_{t-1})\right\vert \geq\varepsilon$, or

\item[(iii)] $t$ is an update round, and the noisy answer $b_{t}$ is
\emph{not} $\frac{\varepsilon}{10}$-accurate.
\end{enumerate}

By the foregoing analysis, the probability of the event $B$ in the hybrid
execution is bounded by a small constant, say $\tau$. \ We would like to now
make a similar argument for a real execution, where the state is not
\textquotedblleft refreshed\textquotedblright\ between epoch superoperators.

Towards this, let $k$ be a bound on the number of epoch superoperators in a
run of QPMW, and let $\alpha_{e}$ be the bound on the gentleness of each epoch
superoperator. \ We consider further hybrids, where in the $i^{th}$ hybrid
$\mathcal{H}_{i}$, the first $i$ epochs are each run on fresh copies of
$\rho^{\otimes n}$, but there is no further refreshing after the $i^{th}$
epoch. \ Thus the first hybrid $\mathcal{H}_{1}$ equals the real execution,
and the $k^{th}$ hybrid $\mathcal{H}_{k}$ equals the hybrid execution. \ By
$\alpha_{e}$-gentleness of the epoch superoperator, we have that for every
$i$:
\[
\left\vert \Pr_{\mathcal{H}_{i}}[B]-\Pr_{\mathcal{H}_{i+1}}[B]\right\vert
\leq\alpha_{e}.
\]
This is simply because the $i^{th}$ and $(i+1)^{st}$ hybrid differ only in
running the $(i+1)^{st}$ epoch: in $\mathcal{H}_{i}$ that epoch is run on the
state in the registers after the $i^{th}$ epoch, whereas in $\mathcal{H}%
_{i+1}$ that epoch is run on a fresh copy of $\rho^{\otimes n}$. \ By the
$\alpha_{e}$-gentleness of the $i^{th}$ epoch, the trace distance between
these two states is at most $\alpha_{e}$. \ So the two hybrids only differ in
the probability that the event $B$ occurs in the $(i+1)^{st}$ epoch, and this
difference in probabilities is upper-bounded by $\alpha_{e}$.

By a hybrid argument, we conclude that the probability of the event $B$
occurring in the real execution is at most $\tau+k\cdot\alpha_{e}$. \ Further,
by the choice of parameters in Equation \eqref{eq:noise_lowerbound}, we know
that we can take $k^{\ast}$ to be a bound on the number of epochs such that
with high probability, $k^{\ast}$ epochs suffice to process all $m$
measurements, and $O(\alpha_{e}\cdot k^{\ast})=O(\alpha)$. \ We conclude that
with all but a small constant probability, the real execution of QPMW does not
terminate prematurely, and is $\varepsilon$-accurate on every measurement.
\end{proof}

Finally, we reduce the error probability to $\beta$ by running $O(\log
(1/\beta))$ independent executions and outputting the median answer in each
round. \ \medskip This completes the accuracy proof for QPMW.
\end{proof}

\begin{proof}
[Proof of Claim \ref{claim:noupdate_gentle}]We begin by assuming that the
probability $\lambda_{t}$ of an update is smaller than some sufficiently small
constant. \ If this is not the case, then the claim follows immediately from
Lemma \ref{dpgentlepovm}, because $\mathbf{CheckForUpdate}$ runs a $\frac
{1}{n\mu}$-DP classical algorithm. \ Further, we assume throughout that
$\sqrt{n}\mu$ is larger than a sufficiently large constant (see the remark
following equation (\ref{eq:DB_size})).

We follow similar reasoning to the proof of Lemma \ref{dpgentlepovm}. \ We
begin with a pure product state in the registers%
\[
\left\vert \psi\right\rangle :=\left\vert \psi_{1}\right\rangle \otimes
\cdots\otimes\left\vert \psi_{n}\right\rangle .
\]
Let $\left\vert \psi_{\text{no}}\right\rangle $ be the state after applying
the conditioned superoperator $\mathbf{CheckForUpdate}$, conditioned on
$u_{t}=0$ (\textquotedblleft no update\textquotedblright). \ The
$\mathbf{CheckForUpdate}$ superoperator applies the POVM $E_{t}$ to each
$\left\vert \psi_{i}\right\rangle $, and then runs a classical DP algorithm on
the $n$ bits observed. \ To implement it, we first apply a unitary
transformation (to the state and ancilla qubits). \ This gives a new state:%
\[
\left\vert \phi_{i}\right\rangle :=\sum_{b_{i}\in\left\{  0,1\right\}
}a_{i,x}\left\vert b_{i}\right\rangle \left\vert v_{i,b_{i}}\right\rangle .
\]
Let $X\in\left\{  0,1\right\}  ^{n}$ be the values observed when measuring the
registers $\left\vert b_{i}\right\rangle $. \ We draw a noise value $\eta$
from the Laplace distribution with magnitude $n\mu$, and output $u_{t}=0$ (no
update) whenever:
\[
\left\vert \frac{1}{n}\sum_{i=1}^{n}X_{i}+\frac{\eta}{n}-\operatorname{Tr}%
\left(  E_{t}\omega_{t-1}\right)  \right\vert \leq\frac{\varepsilon}{2}.
\]

Let $\mathcal{D}$ be the distribution over $X\in\{0,1\}^{n}$ defined by
$\Pr_{\mathcal{D}}\left[  X\right]  =\left\vert \alpha_{X}\right\vert ^{2}$,
where $\alpha_{X}=a_{i,1}\cdots a_{i,n}$, and note that $\mathcal{D}$ is
indeed a product distribution. \ Let $\mathcal{D}_{0}$ be the distribution
$\mathcal{D}$ conditioned on the event when we run the above (classical)
procedure on $X$ we get $u_{t}=0$ (no update). \ Following the proofs of
Lemmas \ref{dpgentlepovm} and \ref{dpgentleprojective}, we can implement the
$\mathbf{CheckForUpdate}$ measurement so that:
\begin{equation}
\left\Vert |\psi\rangle\langle\psi|-|\psi_{\text{no}}\rangle\langle
\psi_{\text{no}}|\right\Vert _{\operatorname*{tr}}\leq\sqrt{2\operatorname{KL}%
\left(  \mathcal{D},{\mathcal{D}_{0}}\right)  }. \label{eq:KL_bound}%
\end{equation}

At this point we diverge from the proof of Lemma \ref{dpgentlepovm}. \ There,
we considered the distribution $\mathcal{D}_{y}$ obtained by conditioning the
product distribution $\mathcal{D}$ on an outcome $y$ of a $\sigma$-DP
algorithm. \ We bounded the KL-divergence between these distributions, and
used that to bound the trace distance by $O\left(  \sigma\sqrt{n}\right)  $.
\ Here, while we know that the $\mathbf{CheckForUpdate}$ measurement is
$\sigma$-DP for $\sigma=\frac{1}{n\mu}$, when the probability $\lambda_{t}$ of
an update is much smaller than $\frac{1}{\sqrt{n}\mu}$, we want to argue that
observing a \textquotedblleft no update\textquotedblright\ answer causes much
less damage to the state.\bigskip

\noindent\textbf{Improving the DP guarantee.} The intuition is that when
$\lambda_{t}$ is small, for a \textquotedblleft typical\textquotedblright%
\ input $X$ drawn from $\mathcal{D}$, the probability of no update is quite
large: $1-\lambda_{t}$. \ For an adjacent input $X^{\prime}$, this probability
of no update is at least $1-e^{\sigma}\lambda_{t}$. \ For small $\lambda_{t}$,
the log-ratio between these two probabilities is roughly $\lambda_{t}\sigma$.
\ A compelling strategy is to try to bound the KL-divergence using this
improved bound, by following a similar argument to the proof of Lemma
\ref{dpgentlepovm}. \ For observe that that proof applies even when we focus
on any particular output $y$---in this case, \textquotedblleft no
update\textquotedblright---using the log-ratio guaranteed for that particular output.

The catch, which significantly complicates the proof, is that not all inputs
drawn from $\mathcal{D}$ are \textquotedblleft typical.\textquotedblright%
\ \ Some of these inputs have much higher update probabilities than
$\lambda_{t}$, whereas the proof of Lemma \ref{dpgentlepovm} required a
worst-case bound that applies to \textit{every} input in the support of
$\mathcal{D}$. \ On the other hand, by concentration bounds on the Hamming
weights of inputs drawn from $\mathcal{D}$, the probability of drawing an $X$
for which the update probability is significantly higher than $\lambda_{t}$ is
very small.

To obtain an improved bound, we extend the proof of Lemma \ref{dpgentlepovm}
to this case, using concentration of the (generalized) binomial distribution
(a subgaussian distribution), to show that while the contribution of
\textquotedblleft far\textquotedblright\ inputs to the KL-divergence grows,
their probability shrinks more quickly than this growth. \ To do this, we
partition the inputs into disjoint sets $\Delta_{j}$, according to the
difference between their Hamming weight and the expected Hamming weight. \ We
account for the contributions of each set in this partition to the
KL-divergence to show the claimed bound. \ The details (which can get long and
technical) follow.

\medskip\noindent\textbf{The event $\Delta_{j}$.} \ For each integer
$j\in\lbrack1,\sqrt{n}]$, we define the event $\Delta_{j}\subseteq\{0,1\}^{n}$
to consist of all inputs whose Hamming weights are at least $(j-1)\sqrt{n}%
$\ and less than $j\sqrt{n}$ away from the expectation:
\[
\Delta_{j}=\left\{  X\in\{0,1\}^{m}:\left\vert \left(  \sum_{i=1}^{n}%
X_{i}\right)  -\operatorname{E}_{Y\sim\mathcal{D}}\left[  \sum_{i}%
Y_{i}\right]  \right\vert \in\left[  \left(  j-1\right)  \sqrt{n},j\sqrt
{n}\right)  \right\}  .
\]
By Azuma's inequality, a random input drawn from $\mathcal{D}$ will with high
probability be in $\Delta_{j}$ for small $j$:%
\begin{equation}
\Pr_{X\sim\mathcal{D}}\left[  X\in\Delta_{j}\right]  \leq2\exp\left(
-\frac{(j-1)^{2}}{2}\right)  . \label{eq:Delta_prob}%
\end{equation}
In particular, for a random input $X\sim\mathcal{D}$, the expected value of
the $j$ such that $X\in\Delta_{j}$ is small:%
\begin{equation}
\sum_{j=1}^{\infty}j\cdot\Pr_{\mathcal{D}}\left[  \Delta_{j}\right]  <2.
\label{eq:Delta_std_dev}%
\end{equation}
Similarly, we can also bound higher moments of this function. \ Since the
distribution over the Hamming weight of $X$ is subgaussian with standard
deviation $O\left(  \sqrt{n}\right)  $, we also have:
\begin{equation}
\sum_{j=1}^{\infty}j^{2}\cdot\Pr_{\mathcal{D}}\left[  \Delta_{j}\right]
=O(1). \label{eq:Delta_moment}%
\end{equation}

We use $\left(  \mathcal{D}|\Delta_{j}\right)  $ to denote the distribution
$\mathcal{D}$ conditioned on the event $\Delta_{j}$ (and similarly for
$\mathcal{D}_{0}$). \ We proceed with a sequence of technical propositions,
which will be used to bound the KL-divergence between $\mathcal{D}$ and
$\mathcal{D}_{0}$.

\begin{proposition}
\label{prop:good_DB_DP}Let $\Delta_{j}$ be as defined above. \ For every
$j\geq1$, every $X\in\Delta_{j}$, and every $b\in\{0,1\}$, we have:
\[
\left\vert \ln\frac{\Pr_{Y\sim\mathcal{D}}[u_{t}=b~|~Y]}{\Pr[u_{t}%
=b~|~X]}\right\vert \leq\frac{2j+5}{\sqrt{n}\mu}.
\]

\end{proposition}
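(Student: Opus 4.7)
The plan is to exploit the fact that $\Pr[u_t = b \mid X]$ depends on $X$ only through its Hamming weight $h_X := \sum_i X_i$, since $\mathbf{CheckForUpdate}$ merely compares $h_X/n + \eta/n$ against a fixed threshold, with $\eta$ drawn from a Laplace distribution of scale $n\mu$. First I would invoke the standard differential privacy of the Laplace mechanism to get, for any two inputs $X,Y$,
\[
\left|\ln\frac{\Pr_\eta[u_t=b\mid Y]}{\Pr_\eta[u_t=b\mid X]}\right| \;\leq\; \frac{|h_X - h_Y|}{n\mu}.
\]

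Next I would use the stratification into the events $\Delta_{j'}$. For $X \in \Delta_j$ and $Y \in \Delta_{j'}$, the triangle inequality about the pivot $\bar h := \operatorname{E}_{\mathcal{D}}[\sum_i Y_i]$ yields $|h_X - h_Y| < (j + j')\sqrt{n}$, so the pointwise log-ratio above is at most $(j + j')/(\sqrt{n}\mu)$. Averaging over $Y \sim \mathcal{D}$ and decomposing by $\Delta_{j'}$ then sandwiches the desired ratio:
\[
\frac{\Pr_{Y \sim \mathcal{D},\, \eta}[u_t = b]}{\Pr_\eta[u_t = b \mid X]} \;\in\; \left[\, e^{-j/(\sqrt{n}\mu)}\, S_{-},\; e^{\,j/(\sqrt{n}\mu)}\, S_{+}\,\right],
\]
where $S_{\pm} := \sum_{j' \geq 1} \Pr_{\mathcal{D}}[\Delta_{j'}]\,\exp(\pm j' / (\sqrt{n}\mu))$.

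The technical heart of the proof, and in my view the main obstacle, is showing $|\ln S_{\pm}| = O(1/(\sqrt{n}\mu))$. Using that $\sqrt{n}\mu$ is a sufficiently large constant, I would split $S_{+}$ at $j' = \lfloor\sqrt{n}\mu\rfloor$: for $j' \leq \sqrt{n}\mu$, invoke $e^x \leq 1 + 2x$ for $x \in [0,1]$ combined with the first-moment bound \eqref{eq:Delta_std_dev} to produce a contribution at most $1 + O(1/(\sqrt{n}\mu))$; for $j' > \sqrt{n}\mu$, use the sub-Gaussian tail \eqref{eq:Delta_prob} and the fact that $(j'-1)^2/2 - j'/(\sqrt{n}\mu) \geq (j'-1)^2/4$ in this regime, so the tail decays exponentially and contributes $O(1/(\sqrt{n}\mu))$ (indeed far less). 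The matching lower bound $S_{-} \geq 1 - O(1/(\sqrt{n}\mu))$ is easier, following from $e^{-x} \geq 1 - x$ and \eqref{eq:Delta_std_dev} directly.

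Combining the two ingredients gives $\left|\ln(\text{ratio})\right| \leq j/(\sqrt{n}\mu) + O(1/(\sqrt{n}\mu))$, and the stated bound $(2j+5)/(\sqrt{n}\mu)$ is obtained by absorbing the $O(1)$ Taylor and tail constants into the ``$5$''. The extra factor of $2$ on $j$ is not tight in this approach, but is consistent with a cruder bookkeeping in which one upper-bounds $\exp((j+j')/(\sqrt{n}\mu)) \leq \exp(2\max(j,j')/(\sqrt{n}\mu))$ and splits the $j'$-sum according to whether $j' \leq j$ or $j' > j$; either route delivers a bound of the claimed form, and I expect the author's own proof does essentially this.
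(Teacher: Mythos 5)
Your proposal is correct and follows essentially the same route as the paper's proof: decompose $\Pr_{Y\sim\mathcal{D}}[u_t=b\,|\,Y]$ over the strata $\Delta_{k}$, use the Laplace mechanism's DP guarantee to bound the log-ratio term-by-term by (difference in Hamming weights)$/(n\mu)$, and then control $S_{\pm}$ by splitting the sum at $k\approx\sqrt{n}\mu$, handling the head with the first-moment bound \eqref{eq:Delta_std_dev} and a Taylor expansion, and the tail with the sub-Gaussian decay \eqref{eq:Delta_prob}. Your bookkeeping is in fact marginally tighter than the paper's (you use the triangle-inequality bound $|h_X-h_Y|<(j+j')\sqrt{n}$ whereas the paper uses the looser $(j'+2j)\sqrt{n}$), but both comfortably yield the claimed $(2j+5)/(\sqrt{n}\mu)$.
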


\begin{proof}
The intuition is that the probability that $u_{t}=b$ (by $\mathcal{D}$) is
dominated by the probability that this event occurs for inputs whose Hamming
weights are close to the expectation. \ By the differential privacy of the
Laplace noise mechanism, the log-ratio of probabilities for inputs close to
the expectation and inputs in $\Delta_{j}$ is upper-bounded by $\frac{j}%
{\sqrt{n}\mu}$ in magnitude. \ We show one direction (an upper bound); the
lower bound follows similarly:
\begin{align*}
\frac{\Pr_{Y\sim\mathcal{D}}[u_{t}=b~|~Y]}{\Pr[u_{t}=b~|~X]}  &  =\sum
_{k=1}^{\sqrt{n}}\Pr_{\mathcal{D}}[\Delta_{k}]\cdot\frac{\Pr_{\mathcal{D}%
}[u_{t}=b~|~\Delta_{k}]}{\Pr[u_{t}=b~|~X]}\\
&  \leq\sum_{k=1}^{\sqrt{n}}\Pr_{\mathcal{D}}[\Delta_{k}]\cdot e^{\frac
{k+2j}{\sqrt{n}\mu}}\\
&  =e^{\frac{2j}{\sqrt{n}\mu}}\left(  \sum_{k=1}^{\sqrt{n}\mu}\Pr
_{\mathcal{D}}[\Delta_{k}]\cdot e^{\frac{k}{\sqrt{n}\mu}}+\sum_{k=\sqrt{n}%
\mu+1}^{\sqrt{n}}\Pr_{\mathcal{D}}[\Delta_{k}]\cdot e^{\frac{k}{\sqrt{n}\mu}%
}\right) \\
&  \leq e^{\frac{2j}{\sqrt{n}\mu}}\left(  \sum_{k=1}^{\sqrt{n}\mu}%
\Pr_{\mathcal{D}}[\Delta_{k}]\cdot\left(  1+\frac{2k}{\sqrt{n}\mu}\right)
+\sum_{k=\sqrt{n}\mu+1}^{\sqrt{n}}\Pr_{\mathcal{D}}[\Delta_{k}]\cdot
e^{\frac{k}{\sqrt{n}\mu}}\right) \\
&  \leq e^{\frac{2j}{\sqrt{n}\mu}}\left(  1+\frac{2}{\sqrt{n}\mu}\sum
_{k=1}^{\sqrt{n}\mu}\Pr_{\mathcal{D}}[\Delta_{k}]k+\sum_{k=\sqrt{n}\mu
+1}^{\sqrt{n}}\Pr_{\mathcal{D}}[\Delta_{k}]\cdot e^{\frac{k}{\sqrt{n}\mu}%
}\right) \\
&  \leq e^{\frac{2j}{\sqrt{n}\mu}}\left(  1+\frac{4}{\sqrt{n}\mu}%
+\sum_{k=\sqrt{n}\mu+1}^{\sqrt{n}}\Pr_{\mathcal{D}}[\Delta_{k}]\cdot
e^{\frac{k}{\sqrt{n}\mu}}\right) \\
&  <e^{\frac{2j}{\sqrt{n}\mu}}\left(  1+\frac{5}{\sqrt{n}\mu}\right)  .
\end{align*}
Here the second line follows from the differential privacy of the Laplace
noise mechanism, as well as the fact that the Hamming distance between inputs
$Y\in\Delta_{k}$ and $X\in\Delta_{j}$ is at most $(k+2j)\sqrt{n}$. \ The
second-to-last line uses equation (\ref{eq:Delta_std_dev}), while the final
line uses equation (\ref{eq:Delta_prob}), and can be seen as follows:
\begin{align*}
\sum_{k=\sqrt{n}\mu+1}^{\sqrt{n}}\Pr_{\mathcal{D}}[\Delta_{k}]\cdot
e^{k/\left(  \sqrt{n}\mu\right)  }  &  \leq\sum_{k=\sqrt{n}\mu}^{\sqrt{n}}%
\exp\left(  \frac{-k^{2}}{2}+\frac{k+1}{\sqrt{n}\mu}\right) \\
&  <4e^{-n\mu^{2}/2}\\
&  <\frac{1}{\sqrt{n}\mu},
\end{align*}
where the last two inequalities hold so long as $\sqrt{n}\mu$ is a
sufficiently large constant.
\end{proof}

\begin{proposition}
\label{prop:good_DBs}Let $\Delta_{j}$ be as defined above. \ For every
$j\geq1$, every $X\in\Delta_{j}$, and every input $X^{\prime}$ that differs
from $X$ in a single coordinate, we have
\[
\left\vert \ln\frac{\Pr\left[  u_{t}=0~|~X\right]  }{\Pr\left[  u_{t}%
=0~|~X^{\prime}\right]  }\right\vert \leq\frac{4\min\left\{  1,e^{\frac
{2j+5}{\sqrt{n}\mu}}\cdot\lambda_{t}\right\}  }{{n\mu}}.
\]

\end{proposition}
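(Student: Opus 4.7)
The plan is to establish two separate upper bounds on the log-ratio and take their minimum. The first, trivial bound comes directly from the fact that $\mathbf{CheckForUpdate}$ is $\tfrac{1}{n\mu}$-DP with respect to neighboring Boolean inputs: the noisy-sum step is the Laplace mechanism at sensitivity $1$ and scale $n\mu$, and thresholding to produce $u_t$ is post-processing. Thus for any output value of $u_t$, and in particular for $u_t=0$,
\[
\left|\ln\frac{\Pr[u_t=0\,|\,X]}{\Pr[u_t=0\,|\,X']}\right|\;\leq\;\frac{1}{n\mu},
\]
which already proves the claim in the regime where $e^{(2j+5)/(\sqrt{n}\mu)}\lambda_t \geq 1$, up to the factor of $4$. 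So from here on I may assume $e^{(2j+5)/(\sqrt{n}\mu)}\lambda_t < 1$.

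\textbf{Tighter bound in the small-$\lambda_t$ regime.} The key intuition is that when the update probability is small, a single-bit change to the input can perturb the (near-one) no-update probability by only a tiny amount. Write $p := \Pr[u_t=1\,|\,X]$ and $p' := \Pr[u_t=1\,|\,X']$. By Proposition \ref{prop:good_DB_DP} applied to the outcome $b=1$ using $X\in\Delta_j$, we have $p \leq e^{(2j+5)/(\sqrt{n}\mu)}\lambda_t$, and by the DP guarantee $p' \leq e^{1/(n\mu)}\,p$. Hence $\max(p,p')$ is at most a factor arbitrarily close to $1$ above $e^{(2j+5)/(\sqrt{n}\mu)}\lambda_t$ (given the standing assumption that $\sqrt{n}\mu$ is a large constant), and in particular $p,p' < 1/2$. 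Now I would use the identity
\[
\ln\frac{\Pr[u_t=0\,|\,X]}{\Pr[u_t=0\,|\,X']} \;=\; \ln\frac{1-p}{1-p'} \;=\; \ln\!\left(1+\frac{p'-p}{1-p'}\right)
\]
together with the DP estimate $|p-p'| \leq \max(p,p')\bigl(1 - e^{-1/(n\mu)}\bigr) \leq \max(p,p')/(n\mu)$ and the bound $1/(1-p') \leq 2$. This forces the argument of $\ln(1+\cdot)$ to have magnitude at most $2\max(p,p')/(n\mu)$, which is well under $1/2$. Applying $|\ln(1+x)| \leq 2|x|$ for $|x|\leq 1/2$ and plugging in the bound on $\max(p,p')$ then yields the desired $O(1)\cdot e^{(2j+5)/(\sqrt{n}\mu)}\lambda_t/(n\mu)$; tightening the absorbed constants gives exactly the factor $4$ in the claim.

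\textbf{Main obstacle.} There is no real conceptual difficulty here; the content is the general observation that a rare event's complement is very stable under neighboring perturbations, so that an $\varepsilon$-DP algorithm producing an output of probability $\lambda$ automatically has an $O(\varepsilon\lambda)$ bound on the log-ratio of the complementary output's probability on neighbors. The only fussy part is bookkeeping: making the constants collapse to exactly $4$ requires combining $(1-e^{-x})\leq x$, $|\ln(1+y)|\leq 2|y|$, and $1/(1-p')\leq 2$, and exploiting the fact that $\sqrt{n}\mu$ being a sufficiently large constant (a standing assumption that Claim \ref{claim:noupdate_gentle} has already invoked) makes the higher-order terms in $1/(n\mu)$ negligible.
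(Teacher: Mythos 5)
Your approach is correct in substance, and it takes a genuinely different and arguably cleaner route than the paper. The paper proves this proposition by writing out the exact Laplace tail probabilities for the update event, performing a reduction (justified in a footnote) from the two-sided threshold $\left\vert a_t - \operatorname{Tr}(E_t\sigma_{t-1})\right\vert > \varepsilon/2$ to a one-sided threshold $k + \eta > h$ for some $h > k+1$, and then manipulating the explicit expressions $q_t = \tfrac{1}{2}e^{-(h-k)/(n\mu)}$ and $q'_t = \tfrac{1}{2}e^{-(h-k-1)/(n\mu)}$. Your argument sidesteps all of this: you use only that the thresholded Laplace mechanism is $\tfrac{1}{n\mu}$-DP (by post-processing) and that $\Pr[u_t=1\,|\,X]\leq e^{(2j+5)/(\sqrt{n}\mu)}\lambda_t$ from Proposition \ref{prop:good_DB_DP}. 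The payoff is a more abstract and portable argument — as you observe, it is really a general principle that if an $\varepsilon$-DP mechanism has an output of probability at most $\lambda$, then the complementary output's probability changes by at most $O(\varepsilon\lambda)$ (in log-ratio) across neighbors — and it avoids the paper's footnoted reduction, which in any case quietly depends on the fact that the improved bound is only needed when $e^{(2j+5)/(\sqrt{n}\mu)}\lambda_t$ is small (otherwise the $\min$ caps at $1$ and the trivial $\tfrac{1}{n\mu}$-DP bound already suffices).

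There is one small but real gap in your case split. You assume from the outset only that $e^{(2j+5)/(\sqrt{n}\mu)}\lambda_t < 1$ and conclude "in particular $p, p' < 1/2$", which does not follow: if $e^{(2j+5)/(\sqrt{n}\mu)}\lambda_t$ is, say, $0.9$, then $p$ can be as large as $0.9$ and $p' = e^{1/(n\mu)}p$ can exceed $1/2$, breaking the bound $1/(1-p')\leq 2$. The fix is immediate: use the trivial $\tfrac{1}{n\mu}$ bound whenever $e^{(2j+5)/(\sqrt{n}\mu)}\lambda_t \geq \tfrac14$ (since then $\tfrac{1}{n\mu}\leq \tfrac{4}{n\mu}\min\{1,e^{(2j+5)/(\sqrt{n}\mu)}\lambda_t\}$ already holds), and invoke your tighter argument only when $e^{(2j+5)/(\sqrt{n}\mu)}\lambda_t < \tfrac14$, in which case $\max(p,p') < e^{1/(n\mu)}/4 < 1/2$ holds cleanly and the chain of inequalities goes through. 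With that adjustment — and using the fact that the steps $|\ln(1+x)|\leq 2|x|$ and $1-e^{-1/(n\mu)}\leq 1/(n\mu)$ both have considerable slack when $\sqrt{n}\mu$ is large — the constant $4$ is indeed recoverable, as you assert.
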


\begin{proof}
First, since we add Laplace noise of magnitude $n\mu$ before checking for an
update, for \textit{every} pair of adjacent inputs $X,X^{\prime}\in
\{0,1\}^{n}$, the log-ratio between the probabilities of $u_{t}=0$ is at most
$\frac{1}{n\mu}$. \ When the probability of an update is smaller, we can
improve this bound as follows. \ Define $q_{t}$ to be the probability of an
update (i.e., $u_{t}=1$) given the input $X$. \ By Proposition
\ref{prop:good_DB_DP}, we have $q_{t}\leq e^{\frac{2j+5}{\sqrt{n}\mu}}%
\cdot\lambda_{t}$.

Take the count on $X$ to be $k=\sum_{i}X_{i}$. \ An update is triggered when
the difference between the noisy count and $\operatorname{Tr}(E_{t}%
\sigma_{t-1})$ is too large---or\ equivalently, when the noisy count passes a
threshold $h>k+1$.\footnote{This is without loss of generality: the case
$h<k-1$ can be handled similarly. \ The case where $k\in\left[
h-1,h+1\right]  $ cannot occur because then $\lambda_{t}$ would be much larger
than say $\frac{1}{100}$, whereas we assumed $\lambda_{t}$ was sufficiently
small.} \ Thus, $q_{t}=\Pr\left[  k+\eta>h\right]  $. \ Similarly, the
probability $q_{t}^{\prime}$ of an update on $X^{\prime}$ is $\Pr\left[
k+\eta+1>h\right]  $. \ (The case where the count on $X^{\prime}$ is smaller
than on $X$ is handled similarly.) \ By the definition of the Laplace
distribution, these probabilities are given by:
\begin{align*}
q_{t}  &  =\frac{1}{2}\exp\left(  -\frac{h-k}{n\mu}\right)  ,\\
q_{t}^{\prime}  &  =\frac{1}{2}\exp\left(  -\frac{h-k-1}{n\mu}\right)
\end{align*}
Now by standard manipulations we get:
\begin{align*}
\frac{\Pr\left[  u_{t}=0~|~X\right]  }{\Pr\left[  u_{t}=0~|~X^{\prime}\right]
}  &  =\frac{1-q_{t}}{1-q_{t}^{\prime}}\\
&  =\frac{1-\frac{1}{2}\exp\left(  -\frac{h-k}{n\mu}\right)  }{1-\frac{1}%
{2}\exp\left(  -\frac{h-k-1}{n\mu}\right)  }\\
&  =1+\frac{\exp\left(  -\frac{h-k-1}{n\mu}\right)  -\exp\left(  -\frac
{h-k}{n\mu}\right)  }{2-\exp\left(  -\frac{h-k-1}{n\mu}\right)  }\\
&  \leq1+\exp\left(  -\frac{h-k}{n\mu}\right)  \left(  \exp\left(  \frac
{1}{n\mu}\right)  -1\right) \\
&  =1+2q_{t}\left(  \exp\left(  \frac{1}{n\mu}\right)  -1\right) \\
&  \leq\exp\left(  \frac{4q_{t}}{n\mu}\right)  .
\end{align*}
Here the last line uses the fact that $n\mu$ is a sufficiently large constant.
\ (Note also that, in the case we're analyzing, the ratio of probabilities is
larger than $1$, so we only need to prove an \textit{upper} bound.)
\ Proposition \ref{prop:good_DBs}\ follows, recalling that by its conditions
$q_{t}\leq e^{\frac{2j+5}{\sqrt{n}\mu}}\cdot\lambda_{t}$.
\end{proof}

\begin{proposition}
\label{prop:KL_j}Let $\mathcal{D}$, $\mathcal{D}_{0}$ and $\Delta_{j}$ be as
defined above. \ Then for every $j\geq1$:
\begin{align*}
\operatorname{KL}\left(  (\mathcal{D}|{\Delta_{j}}),(\mathcal{D}_{0}%
|\Delta_{j})\right)   &  \leq\frac{\min\left\{  2,32e^{\frac{4j+10}{\sqrt
{n}\mu}}\cdot\lambda_{t}^{2}\right\}  }{n\mu^{2}},\\
\operatorname{KL}\left(  (\mathcal{D}_{0}|{\Delta_{j}}),(\mathcal{D}%
|\Delta_{j})\right)   &  \leq\frac{\min\left\{  2,32e^{\frac{4j+10}{\sqrt
{n}\mu}}\cdot\lambda_{t}^{2}\right\}  }{n\mu^{2}}.
\end{align*}

\end{proposition}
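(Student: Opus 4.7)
The first step is to reduce the KL divergence to a one-dimensional problem over Hamming weights. The key observation is that $p_X := \Pr[u_t = 0 \mid X]$ depends on $X$ only through its Hamming weight $w_X$, since $\mathbf{CheckForUpdate}$ accesses $X$ only via the noisy sum $\sum_i X_i + \eta$. Consequently $\mathcal{D}|\Delta_j$ and $\mathcal{D}_0|\Delta_j$ agree on the conditional distribution of $X$ given $w_X$, so by the chain rule of KL divergence we have
\[
\operatorname{KL}((\mathcal{D}|\Delta_j),(\mathcal{D}_0|\Delta_j))=\operatorname{KL}((\mathcal{H}|\Delta_j),(\mathcal{H}_0|\Delta_j)),
\]
where $\mathcal{H}$ is the marginal law of $w_X$ under $\mathcal{D}$.

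A short Bayes' rule calculation then gives $(\mathcal{H}_0|\Delta_j)(w) = (\mathcal{H}|\Delta_j)(w)\cdot p_w/p_{\Delta_j}$ with $p_{\Delta_j} := \operatorname{E}_{w\sim \mathcal{H}|\Delta_j}[p_w]$, so the one-dimensional KL simplifies to $\ln p_{\Delta_j}-\operatorname{E}[\ln p_w]$. I would upper-bound this quantity using Hoeffding's lemma applied to the bounded random variable $\ln p_w$: if $\ln p_w$ lies in an interval of length $R$ almost surely under $\mathcal{H}|\Delta_j$, then $\ln \operatorname{E}[\exp(\ln p_w)]-\operatorname{E}[\ln p_w]\le R^2/8$, so it remains to bound $R$ appropriately. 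Proposition \ref{prop:good_DBs} shows that a unit step in $w$ within $\Delta_j$ changes $\ln p_w$ by at most $\varepsilon_j:=4\min\{1,e^{(2j+5)/(\sqrt{n}\mu)}\lambda_t\}/(n\mu)$; since each side of $\Delta_j$ is a band of Hamming weights of width $\sqrt{n}$, this gives $R\le \sqrt{n}\,\varepsilon_j$ within a single side, and $R^2/8$ is the bound claimed in the proposition up to constants. The symmetric bound on $\operatorname{KL}((\mathcal{D}_0|\Delta_j),(\mathcal{D}|\Delta_j))$ follows by running the Hoeffding argument with the roles of $p_{\Delta_j}$ and $p_w$ exchanged.

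The main obstacle is that $\Delta_j$ actually consists of two disjoint bands, one above and one below the mean Hamming weight, and one cannot walk from one to the other via unit steps while remaining in $\Delta_j$, so a single Hoeffding estimate does not control the range across both. I would handle this by applying the KL chain rule once more along the two-side partition $\Delta_j=\Delta_j^+\sqcup \Delta_j^-$: the resulting expression is a weighted average of two per-side KLs (each bounded via Hoeffding as above) plus a Bernoulli KL between the mixing weights on the two sides, which is itself small because the reweighting factor on each side is controlled by the improved DP guarantee of Proposition \ref{prop:good_DBs} applied at the boundary weights nearest the center. The first entry in the $\min$, namely $2/(n\mu^2)$, is obtained by running the same argument with the unconditional $1/(n\mu)$-DP of the Laplace mechanism in place of $\varepsilon_j$.
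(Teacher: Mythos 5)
Your reduction to the Hamming-weight marginal is correct and genuinely different from the paper's route. The paper follows the coordinate-by-coordinate chain-rule argument of Lemma \ref{dpgentleclassical}: it exposes $X$ one coordinate at a time under $\mathcal{D}|\Delta_j$, bounds each conditional log-ratio by $\varepsilon_j := 4\min\{1, e^{(2j+5)/(\sqrt n\mu)}\lambda_t\}/(n\mu)$ via Proposition \ref{prop:good_DBs}, applies Claim \ref{vadhanclaim} per coordinate, and sums over $n$ coordinates to get $2n\varepsilon_j^2$. You instead observe that $p_X := \Pr[u_t=0\mid X]$ depends only on $|X|$, collapse to the Hamming-weight law, rewrite the KL as a Jensen gap $\ln\operatorname{E}[p_w]-\operatorname{E}[\ln p_w]$, and bound it by Hoeffding's lemma with per-step increment $\varepsilon_j$; within a single band of width $\sqrt n$ this gives $R^2/8 = n\varepsilon_j^2/8$, correct and with a sharper constant than the stated $32$.

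The gap is in your treatment of the two-band structure, which you correctly identify as the obstacle but do not resolve. After the chain-rule split $\Delta_j = \Delta_j^+ \sqcup \Delta_j^-$, the residual Bernoulli KL compares $\Pr_{\mathcal{H}|\Delta_j}[\Delta_j^+]$ to $\Pr_{\mathcal{H}_0|\Delta_j}[\Delta_j^+]$, and by Bayes the log-odds shift is governed by $\ln(p_{\Delta_j^+}/p_{\Delta_j^-})$. The two bands are separated by roughly $2j\sqrt n$ in Hamming weight; accumulating the per-step bounds from Proposition \ref{prop:good_DBs} across that gap (even using the smaller $\varepsilon_k$ for the interior bands $k<j$) yields a cumulative log-ratio of order $j\,\min\{1,e^{(2j+5)/(\sqrt n\mu)}\lambda_t\}/(\sqrt n\mu)$, so the Bernoulli KL is of order $j^2\min\{1,e^{(4j+10)/(\sqrt n\mu)}\lambda_t^2\}/(n\mu^2)$. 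This exceeds the stated bound by a factor $\Theta(j^2)$, and your phrase about applying Proposition \ref{prop:good_DBs} ``at the boundary weights nearest the center'' does not escape it: that proposition is strictly a per-step estimate and gives no way to cross a $2j\sqrt n$-step gap without picking up the $j$ factor in the log-ratio (hence $j^2$ after squaring). The underlying reason the paper's coordinate-wise chain rule does not see this is that it is an increment (Doob-martingale-style) argument, which pays only in the per-step quantity and never needs to control the global range; Hoeffding, by contrast, inherently depends on the range, and over the full two-sided $\Delta_j$ that range is $\Theta(j\sqrt n\,\varepsilon_j)$, not $\Theta(\sqrt n\,\varepsilon_j)$. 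To salvage your route you would either need a genuinely tighter bound on $|\ln(p_{\Delta_j^+}/p_{\Delta_j^-})|$ (which would require structural information about $p_w$ beyond its Lipschitz constant), or accept the extra $j^2$ and restate the proposition accordingly; the latter would still suffice for the downstream sums in the proof of Claim \ref{claim:noupdate_gentle}, since those are weighted by $\Pr_{\mathcal{D}}[\Delta_j] = e^{-\Omega((j-1)^2)}$, but it would not prove the proposition as stated.
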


\begin{proof}
We employ a variant of the proof of Lemma \ref{dpgentleclassical}. \ We spell
out the bound in the first direction, the second direction follows similarly.
\ Recall that $(\mathcal{D}|{\Delta_{j}})$ is the product distribution
$\mathcal{D}$, conditioned on the event $\Delta_{j}$ (the difference between
the Hamming weight of $X$ and its expectation is in the interval
$[(j-1)\sqrt{n},j\sqrt{n})$). \ We can sample an input $X=\left(  x_{1}%
,\ldots,x_{n}\right)  $\ from this distribution by sampling $x_{1}$ from the
marginal distribution over the first entry of $\mathcal{D}$ conditioned on
$\Delta_{j}$, then drawing $x_{2}$ from the marginal distribution over the
second entry, conditioned on $x_{1}$ and $\Delta_{j}$, and so on up to $x_{n}%
$. \ \ Call the $i^{th}$ distribution $\mathcal{S}_{i}$; note that
$\mathcal{S}_{i}$ depends on $x_{1},\ldots,x_{i-1}$ (and on $\Delta_{j}$).
\ Similarly, we can also consider a conditional distribution $(\mathcal{D}%
_{0}|\Delta_{j})$, where we condition both on $u_{t}=0$ (no update) and on the
event $\Delta_{j}$ occurring. \ We can sample from this second distribution by
first drawing $x_{1}$\ from the marginal distribution over the first entry
conditioned on $u_{t}=0$ and on $\Delta_{j}$, then drawing $x_{2}$\ from the
marginal distribution over the second entry conditioned on no update, on
$\Delta_{j}$, and on $x_{1}$, and so on up to $x_{n}$. \ Call the $i^{th}$
distribution in this second process $\mathcal{T}_{i}$; note that
$\mathcal{T}_{i}$ depends on $x_{1},\ldots,x_{i-1}$ (as well as on the set
$\Delta_{j}$ and the event $u_{t}=0$). \ The marginal distributions
$\mathcal{S}_{i}$ and $\mathcal{T}_{i}$ are over $\{0,1\}$.

We note that for any setting of the first $i-1$ variables, the supports of the
random variables $\mathcal{S}_{i}$ and $\mathcal{T}_{i}$ are identical: a
given prefix might make the event $\Delta_{j}$ impossible for a certain fixing
of the $i^{th}$ variable, but in this case the forbidden fixing has weight $0$
both in $\mathcal{S}_{i}$ and in $\mathcal{T}_{i}$. \ By Proposition
\ref{prop:good_DBs} and by Bayes' rule, for every $i\in\left\{  1,\ldots
,n\right\}  $, for every setting $x_{1},\ldots,x_{i-1}\in\left\{  0,1\right\}
$ for the first $i-1$ input coordinates, and for every value $v\in\left\{
0,1\right\}  $ such that $v$ has nonzero probability by $\mathcal{S}_{i}$, the
magnitude of the log-ratio between $v$'s probabilities under $\mathcal{S}_{i}%
$\ and under $\mathcal{T}_{i}$\ is bounded as follows:%
\[
\left\vert \ln\frac{\Pr_{\mathcal{S}_{i}}\left[  v\right]  }{\Pr
_{\mathcal{T}_{i}}\left[  v\right]  }\right\vert =\left\vert \ln\frac
{\Pr\left[  u_{t}=0~|~x_{1},\ldots,x_{i-1},G\right]  }{\Pr\left[
u_{t}=0~|~x_{1},\ldots,x_{i-1},X_{i}=v,G\right]  }\right\vert \leq\frac
{\min\left\{  1,4e^{\frac{2j+5}{\sqrt{n}\mu}}\cdot\lambda_{t}\right\}  }{n\mu}%
\]

By Claim \ref{vadhanclaim}, this means that the \textit{expected} log-ratio
between $\Pr_{\mathcal{S}_{i}}\left[  v\right]  $\ and $\Pr_{\mathcal{T}_{i}%
}\left[  v\right]  $, with respect to $x_{i}$\ drawn from $\mathcal{S}_{i}$,
is upper-bounded by%
\[
2\left(  \frac{\min\left\{  1,4e^{\frac{2j+5}{\sqrt{n}\mu}}\cdot\lambda
_{t}\right\}  }{n\mu}\right)  ^{2}.
\]
As in the proof of Lemma \ref{dpgentleclassical}, we conclude:%
\[
\operatorname{KL}\left(  (\mathcal{D}|{\Delta_{j}}),(\mathcal{D}_{0}%
|\Delta_{j})\right)  =\sum_{i=1}^{n}\operatorname{E}\left[  \operatorname{KL}%
\left(  \mathcal{S}_{i},\mathcal{T}_{i}\right)  \right]  \leq\frac
{2\min\left\{  1,16e^{\frac{4j+10}{\sqrt{n}\mu}}\cdot\lambda_{t}^{2}\right\}
}{n\mu^{2}}%
\]

\end{proof}

\begin{proposition}
\label{prop:log_ratio_j}Let $\mathcal{D}$, $\mathcal{D}_{0}$ and $\Delta_{j}$
be as defined above. \ Partition the line $[1,\sqrt{n}]$ into the following
three segments:
\begin{align*}
J_{1}  &  =\left[  1,\frac{\sqrt{n}\mu-5}{2}\right) \\
J_{2}  &  =\left[  \frac{\sqrt{n}\mu-5}{2},\frac{\sqrt{n}\mu(\ln(\frac
{1}{\lambda_{t}})-1)-5}{2}\right) \\
J_{3}  &  =\left[  \frac{\sqrt{n}\mu(\ln(\frac{1}{\lambda_{t}})-1)-5}{2}%
,\sqrt{n}\right]  .
\end{align*}
Then the following hold:

\begin{itemize}
\item For every integer $j \in J_{1}$:
\begin{align*}
\left\vert \ln\frac{\Pr_{\mathcal{D}}[\Delta_{j}]}{\Pr_{\mathcal{D}_{0}%
}[\Delta_{j}]} \right\vert \leq\frac{\lambda_{t}(4j+10)}{\sqrt{n} \mu}.
\end{align*}

\item For every integer $j\in J_{2}$:
\[
\left\vert \ln\frac{\Pr_{\mathcal{D}}[\Delta_{j}]}{\Pr_{\mathcal{D}_{0}%
}[\Delta_{j}]}\right\vert \leq2\lambda_{t}e^{\frac{2j+5}{\sqrt{n}\mu}}.
\]

\item For every integer $j\in J_{3}$:
\[
\left\vert \ln\frac{\Pr_{\mathcal{D}}[\Delta_{j}]}{\Pr_{\mathcal{D}_{0}%
}[\Delta_{j}]}\right\vert \leq\frac{2j+5}{\sqrt{n}\mu}.
\]

\end{itemize}
\end{proposition}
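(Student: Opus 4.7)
The plan is to use Bayes' rule to turn the target ratio into a statement about the update probability conditioned on $\Delta_j$. First I would write
\[
\frac{\Pr_{\mathcal{D}}[\Delta_j]}{\Pr_{\mathcal{D}_0}[\Delta_j]} \;=\; \frac{\Pr_{\mathcal{D}}[u_t=0]}{\Pr_{\mathcal{D}}[u_t=0 \mid \Delta_j]} \;=\; \frac{1-\lambda_t}{1-\Pr_{\mathcal{D}}[u_t=1 \mid \Delta_j]},
\]
and use Proposition \ref{prop:good_DB_DP} applied with $b=1$ (averaged over $X\sim \mathcal{D}\mid\Delta_j$ by convexity) to get
\[
\lambda_t\, e^{-(2j+5)/(\sqrt{n}\mu)} \;\leq\; \Pr_{\mathcal{D}}[u_t=1 \mid \Delta_j] \;\leq\; \lambda_t\, e^{(2j+5)/(\sqrt{n}\mu)}.
\]
Call this upper bound $U_j := \lambda_t e^{(2j+5)/(\sqrt{n}\mu)}$. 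Everything from here on is a case analysis on how large $U_j$ is compared to $1$, matching the three intervals $J_1, J_2, J_3$.

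For $j \in J_1$, the exponent $(2j+5)/(\sqrt{n}\mu)$ is at most $1$, so the elementary inequality $e^x-1 \leq 2x$ on $[0,1]$ yields $|\Pr_{\mathcal{D}}[u_t=1\mid\Delta_j] - \lambda_t| \leq \lambda_t(4j+10)/(\sqrt{n}\mu)$, and both quantities are small. Taylor expanding $\ln(1-x)$ about $0$ then gives
\[
\left|\ln\frac{1-\lambda_t}{1-\Pr_{\mathcal{D}}[u_t=1\mid\Delta_j]}\right| \;\leq\; \frac{\lambda_t(4j+10)}{\sqrt{n}\mu}
\]
after checking that the second-order terms are absorbed by the slack coming from $\lambda_t$ being assumed very small (at the top of the proof of Claim \ref{claim:noupdate_gentle}) and from $\sqrt{n}\mu$ being a sufficiently large constant.

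For $j \in J_2$, the definition of the upper endpoint of $J_2$ is exactly chosen so that $U_j \leq 1/e$; thus $\Pr_{\mathcal{D}}[u_t=1\mid\Delta_j] \leq 1/e$ is bounded away from $1$, and on this regime we can use $|\ln((1-a)/(1-b))| \leq 2|a-b|$ for $a,b \in [0,1/e]$. Combined with $|\Pr_{\mathcal{D}}[u_t=1\mid\Delta_j] - \lambda_t| \leq U_j = \lambda_t e^{(2j+5)/(\sqrt{n}\mu)}$, this yields the bound $2\lambda_t e^{(2j+5)/(\sqrt{n}\mu)}$. Finally, for $j \in J_3$, the quantity $\Pr_{\mathcal{D}}[u_t=1\mid\Delta_j]$ may be arbitrarily close to $1$, so the Taylor-expansion route breaks down; instead I would simply apply Proposition \ref{prop:good_DB_DP} directly with $b=0$ (the ``no update'' outcome), and average the pointwise multiplicative bound over $X\sim\mathcal{D}\mid\Delta_j$ to obtain $|\ln(\Pr_{\mathcal{D}}[u_t=0]/\Pr_{\mathcal{D}}[u_t=0\mid\Delta_j])| \leq (2j+5)/(\sqrt{n}\mu)$, which is the claimed bound on $J_3$.

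The main obstacle is bookkeeping rather than insight: one must verify that the three regimes join up correctly at the endpoints of $J_1, J_2, J_3$ and that the constants in the Taylor expansions of $e^x$ and $\ln(1-x)$ fit inside the claimed bounds. The smallness hypotheses on $\lambda_t$ and the largeness hypothesis on $\sqrt{n}\mu$ (inherited from the hypotheses at the start of Claim \ref{claim:noupdate_gentle}) are used throughout to absorb the higher-order error terms without inflating the leading constants.
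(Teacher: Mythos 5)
Your proof is correct and follows essentially the same route as the paper's: Bayes' rule to convert $\Pr_{\mathcal{D}}[\Delta_j]/\Pr_{\mathcal{D}_0}[\Delta_j]$ into $\Pr_{\mathcal{D}}[u_t=0]/\Pr_{\mathcal{D}}[u_t=0\mid\Delta_j]$, Proposition~\ref{prop:good_DB_DP} (averaged over $\mathcal{D}\mid\Delta_j$) to control the conditioned update probability, and the same case split according to whether $\lambda_t e^{(2j+5)/(\sqrt{n}\mu)}$ is tiny, at most $1/e$, or unbounded. The only cosmetic difference is that you invoke the mean value theorem for $\ln(1-x)$ where the paper uses ad hoc inequalities such as $\frac{1}{1-z}\leq e^{2z}$ on $(0,1/e]$ and $1-\lambda_t\geq e^{-2\lambda_t}$; both routes absorb the second-order terms in the same way.
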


\begin{proof}
First, by Bayes' rule, for every $j$ we have:
\[
\frac{\Pr_{\mathcal{D}}[\Delta_{j}]}{\Pr_{\mathcal{D}_{0}}[\Delta_{j}]}%
=\frac{\Pr_{\mathcal{D}}[u_{t}=0]}{\Pr_{\mathcal{D}}[u_{t}=0~|~\Delta_{j}]}.
\]
Further, by Proposition \ref{prop:good_DB_DP} we have that for every integer
$j\geq1$:
\begin{equation}
\left\vert \ln\frac{\Pr_{\mathcal{D}}[u_{t}=b]}{\Pr_{\mathcal{D}}%
[u_{t}=b~|~\Delta_{j}]}\right\vert \leq\frac{2j+5}{\sqrt{n}\mu}.
\label{eq:DP_binomial}%
\end{equation}
(The Proposition asserts this for every $X\in\Delta_{j}$; the claim when
conditioning on $\Delta_{j}$ follows by a standard argument.)\bigskip

\textbf{Case analysis.} \ We proceed to analyze each of the cases separately,
beginning with the case $j\in J_{1}$. \ Recall that $\Pr_{\mathcal{D}}%
[u_{t}=1]=\lambda_{t}$. \ By equation (\ref{eq:DP_binomial}), the probability
of $u_{t}=1$ under $\Delta_{j}$ can differ from this by at most an
$e^{\frac{(2j+5)\sqrt{n}}{n\mu}}$ multiplicative factor. \ We conclude that:
\begin{align*}
\frac{\Pr_{\mathcal{D}}[u_{t}=0]}{\Pr_{\mathcal{D}}[u_{t}=0~|~\Delta_{j}]}  &
\leq\frac{1-\lambda_{t}}{1-\lambda_{t}e^{\frac{2j+5}{\sqrt{n}\mu}}}\\
&  =\frac{1-\lambda_{t}e^{\frac{2j+5}{\sqrt{n}\mu}}}{1-\lambda_{t}%
e^{\frac{2j+5}{\sqrt{n}\mu}}}+\frac{\lambda_{t}\left(  e^{\frac{2j+5}{\sqrt
{n}\mu}}-1\right)  }{1-\lambda_{t}e^{\frac{2j+5}{\sqrt{n}\mu}}}\\
&  \leq1+\frac{\lambda_{t}\left(  e^{\frac{2j+5}{\sqrt{n}\mu}}-1\right)
}{1/2}\\
&  \leq e^{\frac{\lambda_{t}(4j+10)}{\sqrt{n}\mu}}.
\end{align*}
Here the second-to-last line holds because for this range of $j$ we have
$2j+5\leq\sqrt{n}\mu$, and thus $\lambda_{t}e^{\frac{2j+5}{\sqrt{n}\mu}}%
<\frac{1}{2}$. \ The last line holds because for the same range of $j$ we have%
\[
e^{\frac{2j+5}{\sqrt{n}\mu}}-1\leq\frac{4j+10}{\sqrt{n}\mu}.
\]
To conclude the analysis of the first case, observe that for similar reasons
also in the other direction we have:
\[
\frac{\Pr_{\mathcal{D}}[u_{t}=0]}{\Pr_{\mathcal{D}}[u_{t}=0~|~\Delta_{j}]}\geq
e^{\frac{-\lambda_{t}(4j+10)}{\sqrt{n}\mu}}.
\]

For the second case, $j\in J_{2}$, we have $\lambda_{t}e^{\frac{2j+5}{\sqrt
{n}\mu}}\leq\frac{1}{e}$ and thus:
\begin{align*}
\frac{\Pr_{\mathcal{D}}[u_{t}=0]}{\Pr_{\mathcal{D}}[u_{t}=0~|~\Delta_{j}]}  &
\leq\frac{1-\lambda_{t}}{1-\lambda_{t}e^{\frac{2j+5}{\sqrt{n}\mu}}}\\
&  \leq\frac{1}{1-\lambda_{t}e^{\frac{2j+5}{\sqrt{n}\mu}}}\\
&  \leq e^{2\lambda_{t}e^{(2j+5)/\left(  \sqrt{n}\mu\right)  }}.
\end{align*}
Here the last line holds because for all $z\in(0,e^{-1}]$ we have $\frac
{1}{1-z}\leq e^{-2z}$. \ In the other direction:
\begin{align*}
\frac{\Pr_{\mathcal{D}}[u_{t}=0]}{\Pr_{\mathcal{D}}[u_{t}=0~|~\Delta_{j}]}  &
\geq\Pr_{\mathcal{D}}[u_{t}=0]\\
&  =1-\lambda_{t}\\
&  \geq e^{-2\lambda_{t}}.
\end{align*}

The third case, $j\in J_{3}$, follows immediately from equation
(\ref{eq:DP_binomial}) (or Proposition \ref{prop:log_ratio_j}), which holds
for every possible value of $j$.\bigskip

\textbf{Bounding the KL-divergence.} \ We now proceed to bound the
KL-divergence between $\mathcal{D}$ and $\mathcal{D}_{0}$
\begin{align*}
\operatorname{KL}\left(  \mathcal{D},\mathcal{D}_{0}\right)   &  =\sum
_{X\in\{0,1\}^{n}}\Pr_{\mathcal{D}}[X]\ln\frac{\Pr_{\mathcal{D}}[X]}%
{\Pr_{\mathcal{D}_{0}}[X]}\\
&  =\sum_{j}\left(  \sum_{X\in\Delta_{j}}\Pr_{\mathcal{D}}[X]\ln\frac
{\Pr_{\mathcal{D}}[X]}{\Pr_{\mathcal{D}_{0}}[X]}\right) \\
&  =\sum_{j}\left(  \sum_{X\in\Delta_{j}}\Pr_{\mathcal{D}}[\Delta_{j}%
]\Pr_{(\mathcal{D}|\Delta_{j})}[X]\ln\frac{\Pr_{\mathcal{D}}[\Delta_{j}%
]\Pr_{(\mathcal{D}|{\Delta_{j}})}[X]}{\Pr_{\mathcal{D}_{0}}[\Delta_{j}%
]\Pr_{(\mathcal{D}_{0}|\Delta_{j})}[X]}\right) \\
&  =\sum_{j}\left(  \Pr_{\mathcal{D}}[\Delta_{j}]\ln\frac{\Pr_{\mathcal{D}%
}[\Delta_{j}]}{\Pr_{\mathcal{D}_{0}}[\Delta_{j}]}+\Pr_{\mathcal{D}}[\Delta
_{j}]\sum_{X\in\Delta_{j}}\Pr_{(\mathcal{D}|\Delta_{j})}[X]\ln\frac
{\Pr_{(\mathcal{D}|{\Delta_{j}})}[X]}{\Pr_{(\mathcal{D}_{0}|\Delta_{j})}%
[X]}\right) \\
&  =\sum_{j}\left(  \Pr_{\mathcal{D}}[\Delta_{j}]\ln\frac{\Pr_{\mathcal{D}%
}[\Delta_{j}]}{\Pr_{\mathcal{D}_{0}}[\Delta_{j}]}+\Pr_{\mathcal{D}}[\Delta
_{j}]\operatorname{KL}\left(  (\mathcal{D}|{\Delta_{j}}),(\mathcal{D}%
_{0}|{\Delta_{j}})\right)  \right)
\end{align*}
And similarly:
\[
\operatorname{KL}\left(  \mathcal{D}_{0},\mathcal{D}\right)  =\sum_{j}\left(
\Pr_{\mathcal{D}_{0}}[\Delta_{j}]\ln\frac{\Pr_{\mathcal{D}_{0}}[\Delta_{j}%
]}{\Pr_{\mathcal{D}}[\Delta_{j}]}+\Pr_{\mathcal{D}_{0}}[\Delta_{j}%
]\operatorname{KL}\left(  (\mathcal{D}_{0}|{\Delta_{j}}),(\mathcal{D}%
|{\Delta_{j}})\right)  \right)
\]
Using the nonnegativity of KL-divergence, together with the bound in
Proposition \ref{prop:KL_j}, we conclude that:
\begin{align*}
\operatorname{KL}\left(  \mathcal{D},\mathcal{D}_{0}\right)  \leq &
\operatorname{KL}\left(  \mathcal{D},\mathcal{D}_{0}\right)
+\operatorname{KL}\left(  \mathcal{D}_{0},\mathcal{D}\right) \\
=  &  \sum_{j}(\Pr_{\mathcal{D}_{0}}[\Delta_{j}]-\Pr_{\mathcal{D}}[\Delta
_{j}])\ln\frac{\Pr_{\mathcal{D}_{0}}[\Delta_{j}]}{\Pr_{\mathcal{D}}[\Delta
_{j}]}\\
&  +\sum_{j}\left(  \Pr_{\mathcal{D}}[\Delta_{j}]+\Pr_{\mathcal{D}_{0}}%
[\Delta_{j}]\right)  \frac{2\min\left\{  1,16e^{\frac{4j+10}{\sqrt{n}\mu}%
}\cdot\lambda_{t}^{2}\right\}  }{n\mu^{2}}.
\end{align*}
Below, we show that each of these two sums is bounded by $O\left(
\frac{\lambda_{t}^{2}}{n\mu^{2}}\right)  $. \ We conclude that
\[
\left\Vert |\psi\rangle\langle\psi|-|\psi_{\text{no}}\rangle\langle
\psi_{\text{no}}|\right\Vert _{\operatorname*{tr}}\leq\sqrt{2\operatorname{KL}%
\left(  \mathcal{D},{\mathcal{D}_{0}}\right)  }=O\left(  \frac{\lambda_{t}%
}{\sqrt{n}\mu}\right)  ,
\]
which completes the proof of Claim \ref{claim:noupdate_gentle}.\bigskip
\end{proof}

\medskip\noindent\textbf{Bounding the first sum.} \ We divide the sum over $j$
into the three segments defined in Proposition \ref{prop:log_ratio_j}, and use
the bound on the log-ratio to bound the sum over each of the segments.
\ Starting with the first segment $J_{1}$:
\begin{align*}
\sum_{j\in J_{1}}(\Pr_{\mathcal{D}_{0}}[\Delta_{j}]-\Pr_{\mathcal{D}}%
[\Delta_{j}])\ln\frac{\Pr_{\mathcal{D}_{0}}[\Delta_{j}]}{\Pr_{\mathcal{D}%
}[\Delta_{j}]}  &  \leq\sum_{j\in J_{1}}\Pr_{\mathcal{D}}[\Delta_{j}]\left(
e^{\frac{\lambda_{t}(4j+10)}{\sqrt{n}\mu}}-1\right)  \frac{\lambda_{t}%
(4j+10)}{\sqrt{n}\mu}\\
&  \leq8\sum_{j\in J_{1}}\Pr_{\mathcal{D}}[\Delta_{j}]\frac{\lambda_{t}%
^{2}(4j+10)^{2}}{n\mu^{2}}\\
&  =8\frac{\lambda_{t}^{2}}{n\mu^{2}}\sum_{j\in J_{1}}\Pr_{\mathcal{D}}%
[\Delta_{j}](4j+10)^{2}\\
&  =\Theta\left(  \frac{\lambda_{t}^{2}}{n\mu^{2}}\right)  .
\end{align*}
Here the first line uses Proposition \ref{prop:log_ratio_j}; the second line
uses the fact that $4j+10\leq2\sqrt{n}\mu$\ for all $j\in J_{1}$; and the last
line uses a moment bound on the distribution of $j$, namely inequality
(\ref{eq:Delta_moment}).

For the segment $J_{2}$ we have:
\begin{align*}
\sum_{j\in J_{2}}(\Pr_{\mathcal{D}_{0}}[\Delta_{j}]-\Pr_{\mathcal{D}}%
[\Delta_{j}])\ln\frac{\Pr_{\mathcal{D}_{0}}[\Delta_{j}]}{\Pr_{\mathcal{D}%
}[\Delta_{j}]}  &  \leq\sum_{j\in J_{2}}\Pr_{\mathcal{D}}[\Delta_{j}]\left(
e^{2\lambda_{t}e^{(2j+5)/\left(  \sqrt{n}\mu\right)  }}-1\right)  2\lambda
_{t}e^{\frac{2j+5}{\sqrt{n}\mu}}\\
&  \leq8\lambda_{t}^{2}\sum_{j\in J_{2}}\Pr_{\mathcal{D}}[\Delta_{j}%
]e^{\frac{4j+10}{\sqrt{n}\mu}}\\
&  \leq16\lambda_{t}^{2}\sum_{j\in J_{2}}\exp\left(  \frac{4j+10}{\sqrt{n}\mu
}-\frac{(j-1)^{2}}{2}\right) \\
&  =16\lambda_{t}^{2}\sum_{j\in J_{2}}\exp\left(  -\Theta\left(  j^{2}\right)
\right) \\
&  =16\lambda_{t}^{2}\exp\left(  -\Theta\left(  n\mu^{2}\right)  \right) \\
&  =O\left(  \frac{\lambda_{t}^{2}}{n\mu^{2}}\right)  .
\end{align*}
Here the first line uses Proposition \ref{prop:log_ratio_j}, the second uses
the fact that $\frac{2j+5}{\sqrt{n}\mu}\leq\ln(\frac{1}{\lambda_{t}})-1$\ for
all $j\in J_{2}$; the third uses inequality (\ref{eq:Delta_prob})
(concentration of $j$); and the fourth and fifth use the facts that
$j=\Omega(\sqrt{n}\mu)$ for all $j\in J_{2}$\ and that$\sqrt{n}\mu$ is a
sufficiently large constant.

For the segment $J_{3}$ we have:
\begin{align*}
\sum_{j\in J_{3}}(\Pr_{\mathcal{D}_{0}}[\Delta_{j}]-\Pr_{\mathcal{D}}%
[\Delta_{j}])\ln\frac{\Pr_{\mathcal{D}_{0}}[\Delta_{j}]}{\Pr_{\mathcal{D}%
}[\Delta_{j}]}  &  \leq\sum_{j\in J_{3}}\Pr_{\mathcal{D}}[\Delta_{j}]\left(
e^{\frac{2j+5}{\sqrt{n}\mu}}-1\right)  \frac{2j+5}{\sqrt{n}\mu}\\
&  \leq2\sum_{j\in J_{3}}e^{-(j-1)^{2}/2}\left(  e^{\frac{2j+5}{\sqrt{n}\mu}%
}-1\right)  \frac{2j+5}{\sqrt{n}\mu}\\
&  =\sum_{j\in J_{3}}\exp\left(  -\Theta(j^{2})\right) \\
&  =\exp\left(  -\Theta(n\mu^{2}\ln^{2}\frac{1}{\lambda_{t}}\right) \\
&  =O\left(  \frac{\lambda_{t}^{2}}{n\mu^{2}}\right)  .
\end{align*}
Here the first line uses Proposition \ref{prop:log_ratio_j}; the second uses
inequality (\ref{eq:Delta_prob}) (concentration of $j$); and the third and
fourth use the facts that $j=\Omega(\sqrt{n}\mu\ln\frac{1}{\lambda_{t}})$ for
all $j\in J_{3}$\ and that $\sqrt{n}\mu$\ is a sufficiently large
constant.\bigskip

\medskip\noindent\textbf{Bounding the second sum.} \ Similarly to the first
sum, we divide the sum over $j$ into the three segments defined in Proposition
\ref{prop:log_ratio_j}, and use the bound on the log-ratio to bound the sum
over each of the segments. \ Starting with the first segment $J_{1}$:
\begin{align*}
\sum_{j\in J_{1}}\left(  \Pr_{\mathcal{D}}[\Delta_{j}]+\Pr_{\mathcal{D}_{0}%
}[\Delta_{j}]\right)  \frac{\min\left\{  2,32e^{\frac{4j+10}{\sqrt{n}\mu}%
}\cdot\lambda_{t}^{2}\right\}  }{n\mu^{2}}  &  \leq\sum_{j\in J_{1}}%
\Pr_{\mathcal{D}}[\Delta_{j}]\left(  1+e^{\frac{\lambda_{t}(4j+10)}{\sqrt
{n}\mu}}\right)  \frac{32e^{\frac{4j+10}{\sqrt{n}\mu}}\cdot\lambda_{t}^{2}%
}{n\mu^{2}}\\
&  =\Theta\left(  \frac{\lambda_{t}^{2}}{n\mu^{2}}\right)  .
\end{align*}
Here the first inequality uses Proposition \ref{prop:log_ratio_j}, while the
second uses the fact that $4j+10\leq2\sqrt{n}\mu$\ for all $j\in J_{1}$.

For the segment $J_{2}$ we have:
\begin{align*}
\sum_{j\in J_{2}}\left(  \Pr_{\mathcal{D}}[\Delta_{j}]+\Pr_{\mathcal{D}_{0}%
}[\Delta_{j}]\right)  \frac{\min\left\{  2,32e^{\frac{4j+10}{\sqrt{n}\mu}%
}\cdot\lambda_{t}^{2}\right\}  }{n\mu^{2}}  &  \leq\sum_{j\in J_{2}}%
\Pr_{\mathcal{D}}[\Delta_{j}]\left(  1+e^{2\lambda_{t}e^{(2j+5)/\left(
\sqrt{n}\mu\right)  }}\right)  \frac{32e^{\frac{4j+10}{\sqrt{n}\mu}}%
\cdot\lambda_{t}^{2}}{n\mu^{2}}\\
&  \leq\frac{128\lambda_{t}^{2}}{n\mu^{2}}\sum_{j\in J_{2}}\Pr_{\mathcal{D}%
}[\Delta_{j}]e^{\frac{4j+10}{\sqrt{n}\mu}}\\
&  \leq\frac{256\lambda_{t}^{2}}{n\mu^{2}}\sum_{j\in J_{2}}\exp\left(
\frac{4j+10}{\sqrt{n}\mu}-\frac{(j-1)^{2}}{2}\right) \\
&  =O\left(  \frac{\lambda_{t}^{2}}{n\mu^{2}}\right)  .
\end{align*}
Here the first line uses Proposition \ref{prop:log_ratio_j}; the second uses
the fact that $\frac{2j+5}{\sqrt{n}\mu}\leq\frac{1}{e\lambda_{t}}$\ for all
$j\in J_{2}$; the third uses equation (\ref{eq:Delta_prob}); and the last uses
the facts that $j=\Omega(\sqrt{n}\mu)$ for all $j\in J_{2}$\ and that
$\sqrt{n}\mu$ is a sufficiently large constant.

Finally, for the segment $J_{3}$ we have:
\begin{align*}
\sum_{j\in J_{3}}\left(  \Pr_{\mathcal{D}}[\Delta_{j}]+\Pr_{\mathcal{D}_{0}%
}[\Delta_{j}]\right)  \frac{\min\left\{  2,32e^{\frac{4j+10}{\sqrt{n}\mu}%
}\cdot\lambda_{t}^{2}\right\}  }{n\mu^{2}}  &  \leq\sum_{j\in J_{3}}%
\Pr_{\mathcal{D}}[\Delta_{j}]\left(  1+e^{\frac{2j+5}{\sqrt{n}\mu}}\right)
\frac{2}{n\mu^{2}}\\
&  \leq\frac{4}{n\mu^{2}}\sum_{j\in J_{3}}\exp\left(  \frac{2j+5}{\sqrt{n}\mu
}-\frac{(j-1)^{2}}{2}\right) \\
&  =\frac{4}{n\mu^{2}}\exp\left(  -\Theta\left(  n\mu^{2}\ln^{2}\frac
{1}{\lambda_{t}}\right)  \right) \\
&  =O\left(  \frac{\lambda_{t}^{2}}{n\mu^{2}}\right)  .
\end{align*}
Here the first line uses Proposition \ref{prop:log_ratio_j}; the second uses
equation (\ref{eq:Delta_prob}); and the third uses the facts that
$j=\Omega\left(  \sqrt{n}\mu\ln\frac{1}{\lambda_{t}}\right)  $\ for all $j\in
J_{3}$\ and that $\sqrt{n}\mu$ is a sufficiently large constant.
\end{proof}

\subsection{Lower Bounds for Shadow Tomography\label{LBGENTLE}}

To recap, the QPMW algorithm lets us do shadow tomography on a $d$-dimensional
state $\rho$, with respect to two-outcome measurements $E_{1},\ldots,E_{m}$
and with accuracy $\pm\varepsilon$, in a way that moreover is online and
gentle, by measuring $O\left(  \left(  \log m\right)  ^{2}\left(  \log
d\right)  ^{2}/\varepsilon^{8}\right)  $ copies of $\rho$. \ How close to
optimal is this upper bound?

The only known general lower bound for shadow tomography, due to Aaronson
\cite{aar:shadow}, says that $\Omega\left(  \frac{\min\left\{  d^{2},\log
m\right\}  }{\varepsilon^{2}}\right)  $\ copies of $\rho$\ are needed, for
information-theoretic reasons. \ Aaronson \cite{aar:shadow} also shows that,
in the special case where the states and measurements are entirely classical,
$\Theta\left(  \frac{\min\left\{  d,\log m\right\}  }{\varepsilon^{2}}\right)
$\ copies are necessary and sufficient.\footnote{The original conference
version of \cite{aar:shadow}\ proved only a weaker lower bound: namely,
$\Omega\left(  \frac{\log m}{\varepsilon^{2}}\right)  $\ when $d$ can be
arbitrarily large (including for the classical special case). \ However, the
most recent arXiv version includes the stated bounds, the ones that explicitly
incorporate dependence on the dimension $d$.} \ In the general, quantum
setting, it remains open whether there could exist a shadow tomography
procedure that used only $\left(  \log m\right)  ^{O\left(  1\right)  }$
copies, independent of the dimension $d$.

In this section, we won't resolve that problem. \ However, as yet another
application of our connection between DP and gentleness, we'll observe a lower
bound on the sample complexity of \textit{gentle} shadow tomography, which
applies even to offline algorithms---i.e., ones that see all the measurements
in advance. \ And conversely, by using the connection to adaptive data
analysis, we'll use known results in that setting to give a lower bound for
\textit{online} shadow tomography, which applies even to non-gentle algorithms.

We stress that, while these lower bounds use nontrivial recent results, they
have nothing to do with quantum mechanics: all of them apply even to the
\textquotedblleft classical special case\textquotedblright\ of shadow
tomography, wherein the input consists of i.i.d.\ samples from a single
distribution and the \textquotedblleft measurements\textquotedblright\ are all
in the computational basis. \bigskip

\noindent\textbf{Gentle shadow tomography.} \ The first result we state is a
lower bound for \emph{gentle} shadow tomography, even in the offline setting:

\begin{theorem}
[Lower Bound for Gentle Shadow Tomography]\label{sqrtlogdlb}Any shadow
tomography procedure that is $\left(  \alpha,\frac{1}{n^{1+\tau}}\right)
$-gentle\ for a constant $\tau>0$ on all product states, and is also
$\varepsilon$-accurate on states of the form $\rho^{\otimes n}$, requires
\[
n=\widetilde{\Omega}\left(  \frac{\left(  \log m\right)  \sqrt{\log d}%
}{\varepsilon^{2}\alpha}\right)
\]
samples.
\end{theorem}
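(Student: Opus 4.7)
The plan is to reduce the statement to the classical differential-privacy lower bound of Bun, Ullman, and Vadhan \cite{buv}. The first step is to apply Lemma \ref{gentledpprod}: an $(\alpha,\tfrac{1}{n^{1+\tau}})$-gentle measurement on product states is automatically $(\ln\tfrac{1+4\alpha}{1-4\alpha},\tfrac{1}{n^{1+\tau}})$-DP on product states, i.e.\ $(O(\alpha),\tfrac{1}{n^{1+\tau}})$-DP for small $\alpha$. So, at the cost of a constant factor in $\alpha$, we may assume we are given a shadow-tomography procedure that is $(O(\alpha),\tfrac{1}{n^{1+\tau}})$-DP on product states and $\varepsilon$-accurate on inputs of the form $\rho^{\otimes n}$.

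The second step is to specialize to the purely classical special case of shadow tomography. Fix a probability distribution $p=(p_1,\ldots,p_d)$ on $[d]$ and take $\rho = \operatorname{diag}(p_1,\ldots,p_d)$; then $\rho^{\otimes n}$ is precisely the classical mixture, over $X=(x_1,\ldots,x_n)\sim p^n$, of the computational-basis product states $|x_1\rangle\otimes\cdots\otimes|x_n\rangle$. Take each $E_i$ to be a diagonal accept/reject measurement corresponding to a subset $S_i\subseteq[d]$, so that $\operatorname{Tr}(E_i\rho) = \sum_{x\in S_i} p_x$ is a counting query. Restricted to these inputs and measurements, the procedure becomes a classical $(O(\alpha),\tfrac{1}{n^{1+\tau}})$-DP mechanism which, given $n$ i.i.d.\ samples from $p$, outputs $\varepsilon$-accurate estimates for all $m$ counting queries with constant success probability (by averaging the accuracy guarantee on $\rho^{\otimes n}$ over $X\sim p^n$). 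Neighboring databases in $[d]^n$ map to neighboring computational-basis product states, so the quantum DP property transfers to classical DP without loss.

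The third step is to invoke the lower bound of \cite{buv}: for $\delta = O(1/n^{1+\tau})$, any $(\alpha,\delta)$-DP mechanism releasing $\varepsilon$-accurate answers to $m$ counting queries over a universe of size $d$ requires $n = \widetilde{\Omega}\bigl((\log m)\sqrt{\log d}/(\varepsilon^{2}\alpha)\bigr)$ samples, which is exactly the bound we want after combining with Step 1.

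The main obstacle is bookkeeping at the definitional level. The paper uses a non-standard ``per-outcome'' formulation of $(\varepsilon,\delta)$-DP, and to feed the resulting algorithm into \cite{buv} as a black box one must pass through Vadhan's equivalence (\cite[Lemma 1.5]{vadhan:dp}) to the standard definition, verifying that only constant factors are lost under the hypothesis $\delta=1/n^{1+\tau}$. One also has to check that diagonal $E_i$ and computational-basis product states satisfy the product-state hypothesis of Lemma \ref{gentledpprod} so that the gentleness $\Rightarrow$ DP implication is available; this is immediate from the definitions. Both checks are routine but constitute the bulk of the careful work, since the actual information-theoretic content is imported wholesale from \cite{buv}.
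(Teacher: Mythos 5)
Your overall strategy matches the paper's, but there is a genuine gap at the point where you invoke the Bun--Ullman--Vadhan lower bound. That theorem (Theorem \ref{buvthm}) is a \emph{worst-case} statement: it says no $(\gamma, \frac{1}{10n})$-DP algorithm can accurately answer the $m$ counting queries \emph{for all databases} $X \in [d]^n$. The hard instances come from fingerprinting codes and are adversarially constructed; the theorem says nothing about algorithms that are only accurate on i.i.d.\ inputs. But your shadow tomography procedure is, by hypothesis, only $\varepsilon$-accurate on states of the form $\rho^{\otimes n}$---i.e., only when the database entries are drawn i.i.d.\ from a single distribution $p$. So when you write ``outputs $\varepsilon$-accurate estimates\ldots(by averaging the accuracy guarantee on $\rho^{\otimes n}$ over $X\sim p^n$)'' and then feed this directly into \cite{buv}, you are applying the lower bound to an algorithm that does not satisfy its accuracy hypothesis. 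This is exactly the ``identical-state versus diverse-state'' distinction the paper raises and explicitly warns must be closed.

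The paper closes it with Claim \ref{claim:arbitrary2iid}: given a classical algorithm $\mathcal{A}$ that is $(\alpha,\delta)$-DP and accurate on i.i.d.\ inputs, one constructs $\mathcal{B}$ that, on any fixed database $X$, draws $n$ samples with replacement from the empirical distribution $\mathcal{D}_X$ and runs $\mathcal{A}$ on the result. Accuracy on every $X$ follows because $\mathcal{B}$ is internally running $\mathcal{A}$ on an i.i.d.\ sample; privacy degrades only by a group-privacy factor of $O(\frac{\log n}{\log\log n})$ (the maximum load in a balls-in-bins argument), which is absorbed into the $\widetilde{\Omega}$. The $\delta$-parameter also grows to $\delta\cdot n^{o(1)} + \frac{1}{n^2}$, which is why the hypothesis $\delta = \frac{1}{n^{1+\tau}}$ (rather than, say, $\delta = \frac{1}{10n}$) is needed. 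Without this reduction, or a version of BUV proved directly for the i.i.d.\ setting (the paper's footnote points to Steinke--Ullman for such a variant, but that is a different theorem than the one you cite), the argument does not go through. The rest of your proposal---gentleness $\Rightarrow$ DP via Lemma \ref{gentledpprod}, specializing to diagonal states and diagonal measurements, and the bookkeeping on the $(\varepsilon,\delta)$-DP definition---is correct and aligns with the paper.
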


In other words, as long as we insist that our shadow tomography procedure be
$\left(  \alpha,\delta\right)  $-gentle for small $\delta$---with gentleness
applying for all product states, as usual in this paper---the sample
complexity of the QPMW algorithm is optimal up to a polynomial factor.

We'll deduce Theorem \ref{sqrtlogdlb}\ as a corollary of the following result
of Bun, Ullman, and Vadhan \cite{buv}:

\begin{theorem}
[Bun et al.\ \cite{buv}]\label{buvthm}For all $m,n,d$, there exist $m$ Boolean
functions $f_{1},\ldots,f_{m}:\left[  d\right]  \rightarrow\left\{
0,1\right\}  $, such that no $\left(  \gamma,\frac{1}{10n}\right)  $-DP
algorithm can, for all databases $X=\left(  x_{1},\ldots,x_{n}\right)
\in\left[  d\right]  ^{n}$, estimate $\operatorname{E}_{j\in\left[  n\right]
}\left[  f_{i}\left(  x_{j}\right)  \right]  $\ to within additive error
$\pm\varepsilon$, for every $i\in\left[  m\right]  $ and with success
probability at least $2/3$, unless%
\[
n=\widetilde{\Omega}\left(  \frac{\left(  \log m\right)  \sqrt{\log d}%
}{\varepsilon^{2}\gamma}\right)  .
\]

\end{theorem}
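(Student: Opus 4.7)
The plan is to prove Theorem \ref{buvthm} via a fingerprinting-code style tracing argument, following the approach of Bun, Ullman, and Vadhan. The central device is a ``score function'' $Z_i = Z_i(a, x_i)$, computed from the algorithm's output $a$ and the $i^{th}$ user's record, that exhibits a gap between two quantities: if $a$ is $\varepsilon$-accurate on a carefully chosen random instance, then $\operatorname{E}[\sum_i Z_i]$ is provably large, whereas if the algorithm is $(\gamma, \delta)$-DP then $\operatorname{E}[Z_i^\ast]$ is provably small for a ``ghost'' row $x_i^\ast$ drawn from the correct marginal but independent of $a$. Since replacing $x_i$ with $x_i^\ast$ changes only one row, a hybrid argument controlled by $\gamma$ and $\delta$ bridges these two bounds, yielding the desired lower bound on $n$.

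First I would handle the case $d = 2$ and recover the $\log m$ factor. Sample $\ell = \log m$ query biases $P_1,\ldots,P_\ell$ independently and uniformly on a lightly truncated subinterval of $[0,1]$, then draw each user's row $x_i \in \{0,1\}^\ell$ coordinatewise as Bernoulli$(P_j)$. Take the $f_j$ to be the $\ell$ coordinate projections (so $m \geq \ell$), and set $Z_i := \langle a - P,\, x_i - P\rangle$. Accuracy within $\varepsilon$ of the empirical means forces $\operatorname{E}[\sum_i Z_i] = \Omega(\varepsilon \ell)$ by a direct second-moment computation. In the opposite direction, adapting the Dwork--Smith--Steinke--Ullman--Vadhan style ``DP-to-generalization'' argument to approximate $(\gamma,\delta)$-DP gives, coordinate by coordinate, a bound $\operatorname{E}[Z_i^\ast] \leq O(\gamma) + \delta \cdot O(\ell)$, where the $\delta$ term absorbs a low-probability tail. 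With $\delta = 1/(10n)$, rearranging the two bounds yields the partial lower bound $n = \widetilde{\Omega}(\log m / (\varepsilon^2 \gamma))$.

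Second, I would upgrade to general $d$ and extract the extra $\sqrt{\log d}$ factor via a product-of-codes construction. The idea is to use each record $x_i \in [d]$ to encode a roughly $\sqrt{\log d}$-sparse ``activation pattern'' among $\log d$ parallel fingerprinting sub-instances, and to design the queries so that each query's answer aggregates contributions from the activated sub-instances only. Sparse activations let the per-user contribution to $\sum_i Z_i$ grow by a factor of $\sqrt{\log d}$ in $\ell_2$ norm, while keeping the per-row DP sensitivity $O(1)$, so the ratio between the accuracy-side lower bound and the DP-side upper bound improves by the same $\sqrt{\log d}$ factor, producing the claimed theorem.

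The hard part will be precisely this quantitative balance. A naive independent product of $\log d$ fingerprinting codes inflates both the accuracy-driven and the DP-driven bounds by the same factor and thus gains nothing. Extracting the $\sqrt{\log d}$ improvement requires a carefully tuned sparse-activation structure together with a tight coordinate-wise $(\gamma, \delta)$-DP analysis that tolerates the $\delta = 1/(10n)$ tail events and does not leak a full factor of $\sqrt{\log d}$ back into the sensitivity. Carrying out the union bound over users and queries afterwards absorbs only polylogarithmic slack, which is precisely the source of the $\widetilde{\Omega}$ in the final statement.
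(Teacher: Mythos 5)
The statement you are trying to prove is not proved in this paper. Theorem~\ref{buvthm} is cited as a black box from Bun, Ullman, and Vadhan \cite{buv}; the text immediately below it says only that the proof uses fingerprinting codes and refers the reader to Vadhan \cite[Section 5.3]{vadhan:dp} for further discussion. There is therefore no in-paper proof against which to compare your argument, and the appropriate move in the context of \emph{this} paper would have been to cite the result rather than attempt to re-derive it.

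Taken on its own terms, your reconstruction has the right skeleton (a Tardos-style tracing score $Z_i = \langle a - P, x_i - P\rangle$, completeness from accuracy, soundness from an approximate-DP hybrid argument with a $\delta$-tail), but at least two steps do not hold up quantitatively. First, the completeness claim $\operatorname{E}[\sum_i Z_i] = \Omega(\varepsilon\ell)$ points the wrong way: when $a$ is within $\varepsilon$ of the column means, $\sum_i Z_i \approx n\|\bar{x}-P\|^2$, whose expectation is $\Theta(\ell)$ \emph{independent of} $\varepsilon$, and tighter accuracy only strengthens the trace. The $1/\varepsilon^2$ in the final bound cannot be extracted from a score linear in $\varepsilon$; it ordinarily arises from a separate ``scale-down'' step (packing or group-privacy-style re-randomization, shrinking a constant-accuracy hard instance into an $\varepsilon$-accuracy one), which your sketch does not supply. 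Second, the $\sqrt{\log d}$ gain is asserted to come from a ``$\sqrt{\log d}$-sparse activation pattern over $\log d$ parallel sub-instances,'' but you acknowledge yourself that this is the hard part and give no concrete construction or sensitivity accounting that shows why the accuracy side gains a full $\sqrt{\log d}$ while the DP side does not. As written, these are genuine gaps, not merely omitted routine calculations, and the proposal would need both pieces worked out (along the lines of the actual \cite{buv} construction) before it could stand as a proof of the stated bound.
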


The proof of Theorem \ref{buvthm}\ uses so-called \textit{fingerprinting
codes} to construct the functions $f_{1},\ldots,f_{m}$. \ We omit the details;
see for example Vadhan \cite[Section 5.3]{vadhan:dp}\ for further discussion
of this technique.

Recall Lemma \ref{gentledpprod}, which said that any measurement that is
$\left(  \alpha,\delta\right)  $-gentle on product states is also $\left(
\ln\left(  \frac{1+4\alpha}{1-4\alpha}\right)  ,\delta\right)  $-DP on product
states. \ In the classical special case, the latter simply means $\left(
\ln\left(  \frac{1+4\alpha}{1-4\alpha}\right)  ,\delta\right)  $-DP in the
usual sense. \ By just combining this implication with Theorem \ref{buvthm},
we immediately obtain a lower bound on the sample complexity of \textit{some}
form of gentle shadow tomography, even in the classical special case.
\ However, there is still a difficulty. \ Namely, the lower bound that we get
will apply only to shadow tomography algorithms that remain accurate in what
we call the \textit{diverse-state setting}. \ This is the setting where the
algorithm is given a sample from a product distribution $\mathcal{D}%
=\mathcal{D}_{1}\times\cdots\times\mathcal{D}_{n}$---or in the quantum case, a
product state $\rho_{1}\otimes\cdots\otimes\rho_{n}$---and its goal is to
estimate the acceptance probability of each of the two-outcome measurements
$E_{1},\ldots,E_{m}$\ on the average state%
\[
\widehat{\rho}:=\frac{\rho_{1}+\cdots+\rho_{n}}{n}.
\]
By contrast, we defined shadow tomography for what we call the
\textit{identical-state setting}: that is, the setting where we're
additionally promised that $\rho_{1}=\cdots=\rho_{n}$, so that the input state
has the special form $\rho^{\otimes n}$. \ All of the shadow tomography
procedures that we know, including QPMW, are accurate even in the more general
diverse-state setting. \ But it's not obvious that \textit{lower bounds} in
the diverse-state setting carry over to the identical-state setting, so there
is still a gap to close. \ We close the gap using the simple claim below,
which translates accuracy in the identical-state setting to accuracy in the
diverse-state setting, with only a small loss in the differential privacy
parameters.\footnote{We note that it might be possible to obtain a lower bound
similar to that of Theorem \ref{buvthm} that directly applies to the
identical-state setting (see, e.g., Steinke and Ullman \cite[Corollary
15]{SteinkeU17a}). \ Still, the transformation we outline incurs only a small
loss in the parameters, and works more generally.}

\begin{claim}
\label{claim:arbitrary2iid}Fix a data universe $[d]$, functions $f_{1}%
,\ldots,f_{m}:\left[  d\right]  \rightarrow\left\{  0,1\right\}  $, and a
database size $n$. \ Let $\mathcal{A}$ be a classical algorithm that's
$(\alpha,\delta)$-DP in the usual classical sense, and satisfies the following
accuracy guarantee: for any distribution $\mathcal{D}$ over $[d]$, with all
but $\beta$ probability over $X$ drawn from $\mathcal{D}^{n}$ (and the
algorithm's coins), $\mathcal{A}$'s answers are all within $\pm\varepsilon$ of
the correct answers $\operatorname{E}_{j\in\left[  n\right]  }\left[
f_{i}\left(  X_{j}\right)  \right]  $.

Then there exists another algorithm $\mathcal{B}$, which runs in $O(|X|)$ time
using a single oracle call to $\mathcal{A}$, such that $\mathcal{B}$ is
$(\frac{3\alpha\log n}{\log\log n},\delta\cdot n^{o(1)}+\frac{1}{n^{2}})$-DP,
and for any fixed database $X\in\lbrack d]^{n}$, with all but $\beta$
probability over the $\mathcal{B}$'s coins, $\mathcal{B}$'s answers are all
within $\varepsilon$ of the correct answers $\operatorname{E}_{j\in\left[
n\right]  }\left[  f_{i}\left(  X_{j}\right)  \right]  $.
\end{claim}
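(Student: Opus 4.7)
The plan is to define $\mathcal{B}(X)$ as a \emph{subsample-and-release} algorithm: draw $n$ indices $i_1, \ldots, i_n \in [n]$ uniformly and independently \emph{with replacement}, form the subsample $S := (X_{i_1}, \ldots, X_{i_n})$, and output $\mathcal{A}(S)$. Since $S$ is then an i.i.d.\ draw from the empirical distribution $\widehat{\mathcal{D}}_X$ on $[d]$, applying $\mathcal{A}$'s accuracy guarantee with $\mathcal{D} := \widehat{\mathcal{D}}_X$ gives, with probability at least $1-\beta$, answers within $\pm \varepsilon$ of $\operatorname{E}_{j \in [n]}[f_i(S_j)]$ for every $i \in [m]$. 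A standard Hoeffding bound and union bound over $i$ further show that $\operatorname{E}_{j \in [n]}[f_i(S_j)]$ is within $O(\sqrt{\log m / n})$ of $\operatorname{E}_{j \in [n]}[f_i(X_j)] = \operatorname{E}_{y \sim \widehat{\mathcal{D}}_X}[f_i(y)]$ with $1/\mathrm{poly}(n)$ slack, which can be absorbed by a mild tightening of $\mathcal{A}$'s accuracy parameter (and is a non-issue in the intended application, where $n$ is large).

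For the privacy analysis, fix neighbors $X, X' \in [d]^n$ differing in position $i^\star$, and use the ``secrecy of the sample'' coupling: draw the index list $(i_1, \ldots, i_n)$ once and use it for both inputs, so that the resulting subsamples differ in exactly $K := |\{ j : i_j = i^\star \}|$ positions, with $K \sim \mathrm{Bin}(n, 1/n)$. The tail bound $\Pr[K \geq k] \leq \binom{n}{k} n^{-k} \leq 1/k!$ gives $\Pr[K > k^\star] \leq 1/n^2$ for $k^\star := \lceil 3 \log n / \log \log n \rceil$. Conditioning on $K \leq k^\star$, the standard group-privacy property of $(\alpha, \delta)$-DP yields, for every output set $T$,
\[
\Pr[\mathcal{A}(S) \in T] \;\leq\; e^{k^\star \alpha} \Pr[\mathcal{A}(S') \in T] + k^\star e^{k^\star \alpha}\, \delta,
\]
and combining with the small-probability bad event $K > k^\star$ gives
\[
\Pr[\mathcal{B}(X) \in T] \;\leq\; e^{k^\star \alpha} \Pr[\mathcal{B}(X') \in T] + k^\star e^{k^\star \alpha}\, \delta + \tfrac{1}{n^2}.
\]
With $k^\star = \Theta(\log n / \log \log n)$, the new DP parameter is $k^\star \alpha = 3\alpha \log n / \log \log n$, and $k^\star e^{k^\star \alpha} = n^{o(1)}$ provided $\alpha = o(\log \log n)$, matching the claim.

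The main obstacle will be threading the choice of $k^\star$: it must be large enough that the Binomial tail $\Pr[K > k^\star]$ falls below the target $1/n^2$ (forcing $k^\star = \Omega(\log n / \log \log n)$), and yet small enough that the group-privacy blowup $e^{k^\star \alpha}$ and its coefficient $k^\star$ remain at $n^{o(1)}$. The choice $k^\star = \Theta(\log n / \log \log n)$ threads this needle precisely because $k^\star!$ just barely exceeds $n^2$, and because $e^{k^\star \alpha}$ stays sub-polynomial in $n$ whenever $\alpha \cdot k^\star = o(\log n)$. Once this accounting is set up, the remaining steps --- verifying the $O(|X|)$ runtime and the single oracle call to $\mathcal{A}$, and verifying the accuracy reduction --- are routine.
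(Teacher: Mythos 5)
Your proof takes essentially the same approach as the paper's: subsample $n$ times with replacement to get $S\sim\mathcal{D}_X^n$, apply $\mathcal{A}$'s i.i.d.-accuracy guarantee to $S$, and establish privacy via the secrecy-of-the-sample coupling together with group privacy and a binomial tail bound at $k^\star=\Theta(\log n/\log\log n)$. Your observation that $\mathcal{A}$'s guarantee only gives closeness to the empirical averages of $S$ rather than of $X$ (leaving an extra $O(\sqrt{\log m/n})$ Hoeffding gap) is a fair point that the paper's own proof silently elides by identifying $\operatorname{E}_{j}[f_i(S_j)]$ with $\operatorname{E}_{j}[f_i(X_j)]$, but the gap is lower-order and harmless in the downstream lower bound, which carries an $\widetilde{\Omega}$.
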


\begin{proof}
Given an input database $X$, the algorithm $\mathcal{B}$ operates by taking
$n$ i.i.d.\ samples, with replacement, from the distribution $\mathcal{D}_{X}$
that is uniform over the entries of $X$ (a distribution whose support has size
at most $n$). \ It then runs $\mathcal{A}$ on the resulting database
$X^{\prime}$ and outputs the results.

Accuracy follows because we are running $\mathcal{A}$ on a sample from
$\mathcal{D}_{X}^{n}$, so with all but $\beta$ probability over the samples
and $\mathcal{A}$'s coins, the answers will all be within $\pm\varepsilon$ of
the correct expectations over $\mathcal{D}_{X}$, which are the correct answers
on the database $X$.

For privacy, fix adjacent databases $X$ and $Y$ that differ only in the
$i^{th}$ entry. \ For fixed coins used to choose i.i.d.\ samples, let
$X^{\prime}$ and $Y^{\prime}$ be the databases produced by sampling from $X$
or from $Y$ respectively. \ All entries in $X^{\prime}$ and $Y^{\prime}$ will
be identical, except those that are copies of the $i^{th}$ entry. \ By a balls
and bins argument, with all but $1/n^{2}$ probability, the number of copies of
the $i^{th}$ entry is at most $\frac{3\log n}{\log\log n}$. \ Whenever this is
the case, the group privacy guarantees that follow from the differential
privacy of $\mathcal{A}$ imply that the probability of any event differs by at
most a $\frac{3\alpha\log n}{\log\log n}$ multiplicative factor and a
$\delta\cdot n^{o(1)}$ additive error.
\end{proof}

We can now complete the proof of Theorem \ref{sqrtlogdlb}.

\begin{proof}
[Proof of Theorem \ref{sqrtlogdlb}]Let $\mathcal{P}$ be a\ shadow tomography
procedure that is $\left(  \alpha,\frac{1}{n^{1+\tau}}\right)  $-gentle\ on
product states $\rho_{1}\otimes\cdots\otimes\rho_{n}$, for small $\alpha$ and
fixed $\tau>0$. \ By Lemma \ref{gentledpprod}, this $\mathcal{P}$ is also
$\left(  \gamma,\delta\right)  $-DP on product states, for $\gamma=\ln\left(
\frac{1+4\alpha}{1-4\alpha}\right)  =O\left(  \alpha\right)  $.

Henceforth, we restrict attention to $\mathcal{P}$'s behavior on classical
inputs $X\in\left[  d\right]  ^{n}$. \ Here, being DP on product states simply
reduces to the usual notion of DP.

Now suppose further that $\mathcal{P}$ is $\varepsilon$-accurate in the
identical-state setting. \ Then by Claim \ref{claim:arbitrary2iid}, we can
obtain a new classical procedure $\mathcal{P}^{\prime}$ that is $\left(
\gamma^{\prime}=\frac{3\gamma\log n}{\log\log n},\delta^{\prime o(1)}%
=\delta+\frac{1}{n^{2}}\right)  $-DP, and that moreover is $\varepsilon
$-accurate for any given database $X\in\left[  d\right]  ^{n}$. \ But this
means that $n$ must satisfy the bound of Theorem \ref{buvthm}. \ We use here
the fact that for $\delta=\frac{1}{n^{1+\tau}}$ we get $\delta^{\prime
}=o(\frac{1}{n})$. \ We note that the $O(\frac{\log n}{\log\log n})$
deterioration in the privacy guarantee of $\mathcal{P}^{\prime}$ (compared to
$\mathcal{P}$) is accounted for by the tilde in the $\widetilde{\Omega}$.
\end{proof}

As noted above, Theorem \ref{sqrtlogdlb} applies even to the \textquotedblleft
classical special case\textquotedblright\ of shadow tomography. \ In that
special case, the Chernoff bound immediately implies a procedure with
$O\left(  \frac{\log m}{\varepsilon^{2}}\right)  $\ sample complexity. \ Thus,
one implication of Theorem \ref{sqrtlogdlb} is that such a procedure
necessarily violates gentleness---where \textquotedblleft
gentleness,\textquotedblright\ here, means a bound on the damage in variation
distance caused by classical Bayesian updating.\bigskip

\noindent\textbf{Online shadow tomography.} \ The second result we state is a
lower bound for \emph{online} shadow tomography, even without gentleness:

\begin{theorem}
[Lower Bound for Online Shadow Tomography \cite{nsssu}]\label{adatoST}Any
online shadow tomography procedure that is $\varepsilon$-accurate requires
sample complexity%
\[
n=\widetilde{\Omega}\left(  \frac{\sqrt{\min\left\{  m,\log d\right\}  }%
}{\varepsilon}\right)  .
\]

\end{theorem}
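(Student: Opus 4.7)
The strategy is a black-box reduction from the classical special case of online shadow tomography to adaptive statistical query answering, after which we invoke the known lower bound of Nissim et al.\ \cite{nsssu}. Specifically, take each $\rho_j$ to be the pure computational-basis state $|x_j\rangle\langle x_j|$ for some $x_j \in [d]$, so that $\rho = \frac{1}{n}\sum_j |x_j\rangle\langle x_j|$ encodes a database $X = (x_1,\ldots,x_n) \in [d]^n$; and let each two-outcome measurement $E_t$ be the diagonal projector onto a subset $S_t \subseteq [d]$. Then $\operatorname{Tr}(E_t \rho) = \frac{1}{n}\sum_{j=1}^n \mathbf{1}[x_j \in S_t]$ is the empirical mean of a statistical query $q_t : [d] \to \{0,1\}$ on $X$. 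Since the shadow tomography procedure is online, the adversary may choose $q_{t+1}$ adaptively as a function of the algorithm's previous answers, so any $\varepsilon$-accurate online shadow tomography procedure immediately yields an $\varepsilon$-accurate mechanism for adaptively chosen statistical queries.

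Next I will invoke \cite{nsssu}, which exhibits a hard distribution over $[d]$ such that any mechanism that $\varepsilon$-accurately answers $k$ adaptively chosen statistical queries from an i.i.d.\ sample of size $n$ must satisfy $n = \widetilde{\Omega}(\sqrt{k}/\varepsilon)$, provided $d$ is at least some fixed polynomial in $k$. Plugging in $k = \min\{m,\log d\}$ (which automatically satisfies the size constraint on $d$) yields the claimed bound. The identical-state setting of shadow tomography aligns directly with the i.i.d.-sample setting of \cite{nsssu}: in the classical special case, $\rho^{\otimes n}$ literally \emph{is} an i.i.d.\ sample of size $n$ drawn from the distribution over $[d]$ whose density matrix is the diagonal $\rho$. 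Thus, unlike in the proof of Theorem \ref{sqrtlogdlb}, no analogue of Claim \ref{claim:arbitrary2iid} is required to move between settings.

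I do not expect a substantive obstacle here: as the paper itself emphasizes, all of its shadow tomography lower bounds are obtained by porting classical results, and in this case the port is essentially just the observation that classical online shadow tomography \emph{is} adaptive statistical query answering. The only care required is in the boundary regimes---ensuring that the hard instance of \cite{nsssu} can be realized inside a universe of size $d$ when $\log d \ll m$, which is exactly why the $\min$ appears. It is worth noting for contrast that this proof uses only the online hypothesis and makes no reference to gentleness, whereas Theorem \ref{sqrtlogdlb} used gentleness (via Lemma \ref{gentledpprod}) to extract a classical DP algorithm and then invoked the fingerprinting-code lower bound of \cite{buv}.
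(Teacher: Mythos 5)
Your approach is essentially the same as the paper's, which simply observes that Theorem~\ref{adatoST} ``has nothing to do with quantum mechanics'' and is a restatement of the result of Nissim et al.~\cite{nsssu}, since adaptive data analysis is a special case of online shadow tomography in the identical-state setting. Your second paragraph correctly pins down the reduction: when $\rho$ is diagonal, $\rho^{\otimes n}$ is informationally identical to $n$ i.i.d.\ samples from the distribution $\mathcal{D}$ encoded on the diagonal, the measurements $E_t$ are subset indicators, and $\operatorname{Tr}(E_t\rho)=\Pr_{x\sim\mathcal{D}}[x\in S_t]$ is exactly the population statistical query, so an $\varepsilon$-accurate online shadow tomography procedure yields an $\varepsilon$-accurate mechanism for adaptively chosen statistical queries. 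The choice $k=\min\{m,\log d\}$ to handle both boundary regimes is also the right move.

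One small inconsistency worth fixing: your first paragraph sets up the correspondence incorrectly, taking $\rho_j=|x_j\rangle\langle x_j|$ for a \emph{fixed} database $X$ and writing $\operatorname{Tr}(E_t\rho)=\frac{1}{n}\sum_j\mathbf{1}[x_j\in S_t]$, which is the \emph{empirical} mean. That describes the diverse-state (database) setting, not the identical-state setting in which the paper's shadow tomography problem is posed and in which the accuracy target is the \emph{population} quantity $\operatorname{Tr}(E_t\rho)$. Since a lower bound for the harder-to-prove identical-state setting is what is actually claimed, the first paragraph should simply be replaced by the second (or the second should subsume it); as written, your opening framing would only give a lower bound against algorithms guaranteed to be accurate on arbitrary product states, which is a weaker statement than Theorem~\ref{adatoST}. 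Once that is tidied up, the argument is exactly the paper's, just with the reduction written out explicitly rather than cited.
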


Combining Theorem \ref{adatoST} with the $\Omega\left(  \frac{\min\left\{
d,\log m\right\}  }{\varepsilon^{2}}\right)  $ lower bound of Aaronson
\cite{aar:shadow}, we can conclude that online shadow tomography requires%
\[
n=\widetilde{\Omega}\left(  \frac{\log m+\sqrt{\log d}}{\varepsilon}\right)
\]
copies of $\rho$\ unless $n=m^{\Omega\left(  1\right)  }$\ or $n=d^{\Omega
\left(  1\right)  }$. \ Hence QPMW achieves the optimal sample complexity for
online shadow tomography up to polynomial factors.

Theorem \ref{adatoST} again has nothing to do with quantum mechanics, and
follows immediately from known lower bounds for classical adaptive data
analysis. \ There, an algorithm processes a collection of states that are
drawn i.i.d.\ from an underlying distribution, and the goal is to provide
accurate answers with respect to the underlying distribution---and in
particular, to avoid overfitting to the specific sample. \ Adaptive data
analysis is thus a special case of online shadow tomography in the
identical-state setting. \ Hardt and Ullman \cite{hardtullman} and Steinke and
Ullman \cite{steinkeullman} showed sample complexity lower bounds and
computational hardness results for this setting. \ Theorem \ref{adatoST} is a
restatement, in our setting, of a recent result of Nissim et al.\ \cite{nsssu}.

\section{Computational Efficiency\label{COMPUT}}

So far, our results have been purely information-theoretic. \ When we talked,
for example, about a gentle \textquotedblleft implementation\textquotedblright%
\ of a measurement $M$, we were concerned only about whether such an
implementation existed, not about its time complexity. \ Likewise, the QPMW
procedure for shadow tomography was efficient in sample complexity, but we
weren't concerned to bound its computation time.

Now, at last, we consider to what extent our constructions are (or can be
made) computationally efficient. \ In Section \ref{COMPUTDP}, we'll explain
why gentle measurements \textit{can} be implemented in polynomial time,
provided we have an efficient way to uncompute garbage, and we'll give several
classes of examples where this can be done. \ Then, in Section \ref{COMPUTST},
we'll use our results from Section \ref{COMPUTDP}\ to examine the
computational complexity of the QPMW procedure. \ Finally, in Section
\ref{PROMISEBQP}, we'll turn things around, and observe how gentle
measurements like the ones in this paper, whether derived from DP algorithms
or not, can be applied to the safe implementation of subroutines in quantum algorithms.

\subsection{Efficiency of DP and Gentle Measurements\label{COMPUTDP}}

Let's start with Theorem \ref{main}, the connection between gentleness and DP.
\ For part (1) of the theorem, namely that $\alpha$-gentleness implies
$O\left(  \alpha\right)  $-DP for small $\alpha$, there's no issue of
computational efficiency. \ This is because the very same measurement
procedure that achieves $\alpha$-gentleness also achieves $O\left(
\alpha\right)  $-DP---the latter being solely a property of the output
probabilities, which has nothing to do with the post-measurement states.

On the other hand, for part (2) of the theorem, namely that $\varepsilon$-DP
on product states implies $O\left(  \varepsilon\sqrt{n}\right)  $%
-gentleness\ on product states for small $\varepsilon$ (and product
measurements), there \textit{is} a computational issue. \ Namely: even if our
original $\varepsilon$-DP measurement $M$\ could be implemented by a
polynomial-size circuit, the proof of Theorem \ref{main}\ might return an
implementation of $M$ that is $O\left(  \varepsilon\sqrt{n}\right)  $-gentle
but that does \textit{not} correspond to any polynomial-size circuit. \ Yet,
while this is a problem in principle, fortunately it turns out not to be a
problem for any of the measurements that have concerned us in this paper,
including the ones used in our shadow tomography procedure.

The potential computational issue occurs in the proof of Lemma
\ref{dpgentleprojective}. \ There, given a classical DP algorithm $A$, we
needed to map the state%
\[
\left\vert \psi\right\rangle =\sum_{X\in\left[  d\right]  ^{n}}\alpha
_{X}\left\vert X\right\rangle
\]
to%
\[
\sum_{X\in\left[  d\right]  ^{n}:\Pr\left[  X\right]  >0,~y}\alpha
_{X}\left\vert X\right\rangle \sqrt{\Pr\left[  y|X\right]  }\left\vert
y\right\rangle ,
\]
where the $y$'s are the possible outcomes of running $A$ on the input database
$X$. \ Assuming that $A$ itself is computationally efficient, it's easy to
prepare a state of the form%
\[
\sum_{X\in\left[  d\right]  ^{n}:\Pr\left[  X\right]  >0,~y}\alpha
_{X}\left\vert X\right\rangle \sqrt{\Pr\left[  y|X\right]  }\left\vert
y\right\rangle \left\vert g_{X,y}\right\rangle ,
\]
where $\left\vert g_{X,y}\right\rangle $\ is \textquotedblleft
garbage\textquotedblright\ entangled with the $\left\vert X\right\rangle $ and
$\left\vert y\right\rangle $\ registers (for example, the outcomes of coin
flips made by $A$). \ The entire difficulty lies in uncomputing the
$\left\vert g_{X,y}\right\rangle $\ register. \ If we fail to uncompute, then
the effect on $\left\vert \psi\right\rangle $\ might no longer be gentle.

As we mentioned in Section \ref{CONNECTION}, an equivalent way to say this is
that our reduction from DP to gentleness preserves efficiency if, and only if,
we have an efficient algorithm to \textquotedblleft QSample\textquotedblright%
\ the output distribution of the DP algorithm $A$, meaning to prepare the
superposition%
\[
\left\vert \phi_{X}\right\rangle :=\sum_{y}\sqrt{\Pr\left[  y|X\right]
}\left\vert y\right\rangle
\]
for a given input $X$. \ In practice, many fast sampling algorithms do give
rise to fast QSampling algorithms, but this need not always be the case.
\ Indeed, as pointed out by Aharonov and Ta-Shma \cite{at} in 2003, if fast
sampling always implied fast QSampling, then we'd immediately get
polynomial-time quantum algorithms for graph isomorphism, breaking
lattice-based cryptosystems, and all other problems in the class
$\mathsf{SZK}$ (Statistical Zero Knowledge). \ Closely related to that, the
collision lower bound of Aaronson \cite{aar:col}\ implies that, in the
black-box setting, fast sampling does \textit{not} imply fast QSampling.

But what about the specific measurements considered in this paper? \ Let's
start with the following observation:

\begin{proposition}
[Efficient Implementation of $L_{\sigma}$]\label{lsigmafast}There is an
$O\left(  n\right)  $-size quantum circuit to implement $L_{\sigma}$, the
Laplace noise\ measurement on $n$ qubits, to $\frac{1}{\exp\left(  n\right)
}$ accuracy, so long as $\sigma=\exp\left(  O\left(  n\right)  \right)  $.
\end{proposition}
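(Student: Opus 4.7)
The plan is to implement $L_\sigma$ essentially as described in Section \ref{GENTLE}: prepare a QSampling state $|\eta\rangle = \sum_k \sqrt{\Pr[\eta=k]}\,|k\rangle$ for the discrete Laplace distribution, coherently add the Hamming weight $|X|$ to that register, and finally measure. Since $\sigma = \exp(O(n))$ and we are allowed $1/\exp(n)$ error, we may truncate the Laplace support to $|\eta|\le n\sigma$, losing only $2^{-\Omega(n)}$ in total variation, so the register need only be $\lceil \log_2(n\sigma)\rceil + O(1) = O(n)$ qubits wide. Throughout I will assume the standard model in which an arbitrary single-qubit rotation is unit-cost (the $1/\exp(n)$ tolerance would in any case permit Solovay--Kitaev compilation with $\operatorname{polylog}(1/\varepsilon)$ overhead per gate, but the proposition is naturally stated for the arbitrary-gate model).

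The crux is QSampling the (truncated) discrete Laplace distribution in $O(n)$ gates. The key observation is that if $G,G'$ are i.i.d.\ geometric variables with parameter $p := e^{-1/\sigma}$, then a short calculation gives $\Pr[G-G'=k] \propto p^{|k|}$, which is exactly the distribution in (\ref{laplaceeq}). Moreover, the geometric distribution factorizes over the binary digits of $G = \sum_j g_j 2^j$: one verifies $\Pr[G=g] \propto \prod_j (p^{2^j})^{g_j}$, so the bits are mutually independent with $\Pr[g_j = 1] = p^{2^j}/(1+p^{2^j})$. Therefore
\[
|G\rangle = \bigotimes_{j=0}^{O(n)} \bigl(\cos\theta_j\,|0\rangle + \sin\theta_j\,|1\rangle\bigr)
\]
can be prepared by one single-qubit rotation per bit, with $\tan^2\theta_j = p^{2^j}$; bits with $2^j \gg \sigma\log(1/\varepsilon)$ have $\theta_j$ exponentially close to $0$ and may simply be fixed to $|0\rangle$. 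Preparing $|G\rangle|G'\rangle$ and then subtracting reversibly via a ripple-carry adder yields $|\eta\rangle$ in a total of $O(n)$ gates.

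With $|\eta\rangle$ in hand, we compute $|X|$ into an $O(\log n)$-qubit ancilla by a balanced binary tree of reversible $1$-bit adders; a telescoping sum $\sum_k (n/2^k)\cdot O(k) = O(n)$ shows that the whole tree uses $O(n)$ gates. A further $O(n)$-gate ripple-carry addition accumulates $|X|$ into the noise register, and a final pass uncomputes the Hamming-weight ancilla (again $O(n)$ gates) so that no garbage remains entangled with the register we are about to measure. Measurement of that register then returns $y = |X| + \eta$ with the prescribed probabilities, and by construction the post-measurement state matches the ideal implementation of $L_\sigma$ up to the truncation errors above. The main bookkeeping obstacle will be verifying that all the sources of error---Laplace-tail truncation, suppression of negligible high-order bits, and any rounding in the angles $\theta_j$---aggregate to at most $1/\exp(n)$ in trace distance even when $\sigma$ is itself exponentially large; but each contribution is individually $2^{-\Omega(n)}$, so a union bound closes the argument.
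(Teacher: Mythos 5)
Your plan --- QSample the Laplace noise, coherently accumulate the Hamming weight, measure the sum --- is the same as the paper's, and your observation that the (truncated) geometric distribution factorizes over binary digits, so that $|G\rangle$ is a tensor product of single-qubit rotations, is precisely the paper's key formula. But the step where you claim that ``subtracting reversibly via a ripple-carry adder yields $|\eta\rangle$'' is wrong. Reversible subtraction sends $|G\rangle|G'\rangle\mapsto|G\rangle|G-G'\rangle$, producing the state
\[
\sum_{g,g'}\sqrt{\Pr[G=g]\,\Pr[G'=g']}\,|g\rangle\,|g-g'\rangle ,
\]
which is \emph{entangled} across the two registers; it is not $(\text{junk})\otimes|\eta\rangle$, and the $|g\rangle$ register is garbage correlated with the noise value. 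After you add $|X|$ into the second register and measure the outcome $y$, that garbage register is left holding a state determined by $y-|X|$, and these residual states for Hamming weights $w,w'$ have overlap $p^{|w-w'|}=e^{-|w-w'|/\sigma}$. Tracing the garbage out therefore multiplies the off-diagonal entries of the post-measurement density matrix by an extra factor which, on a typical product input (Hamming weight spread over a $\Theta(\sqrt{n})$ window), is about $1-\Theta(\sqrt{n}/\sigma)$. For $\sigma=\mathrm{poly}(n)$ --- which the hypothesis $\sigma=\exp(O(n))$ permits --- this is a $1/\mathrm{poly}(n)$ departure from the canonical ``maximally gentle'' implementation of $L_\sigma$, not the claimed $1/\exp(n)$. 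So you reproduce the right output distribution but not the right channel.

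The most direct repair of your own construction: replace the plain subtraction by the reversible bijection $(g,g')\mapsto(\min(g,g'),\,g-g')$. For i.i.d.\ geometrics one checks $\Pr[\min(G,G')=m,\ G-G'=k]=(1-p)^2\,p^{2m+|k|}$, which factorizes; hence $\min(G,G')\sim\mathrm{Geom}(p^2)$ \emph{independently} of the Laplace-distributed difference. After this transformation the two registers are genuinely in tensor product, so the $\min$ register may be discarded and the other holds the clean QSampling state $|\eta\rangle$; the bijection costs only a comparator, a conditional swap, and a subtraction, all $O(n)$ gates. (The paper's own hint is instead to prepare the \emph{one-sided} geometric directly via the product formula and splice in a sign qubit; that also works.) The rest of your circuit --- the balanced-tree Hamming-weight computation with its telescoping gate count, the ripple-carry accumulation into the noise register, the ancilla uncomputation, and the tail-truncation bookkeeping --- is sound.
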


\begin{proof}
We simply use the procedure for implementing $L_{\sigma}$\ described in
Section \ref{GENTLE}: the one where, given a superposition over $\left\vert
X\right\rangle $'s, we first prepare a Laplace noise register%
\[
\left\vert \eta\right\rangle :=\frac{1}{Z}\sum_{k=-C}^{C}e^{-\left\vert
k\right\vert /\left(  2\sigma\right)  }\left\vert k\right\rangle ,
\]
for some cutoff $C=\exp\left(  O\left(  n\right)  \right)  $ and normalization
$Z$, then calculate $\left\vert \left\vert X\right\vert +k\right\rangle $, and
finally use $\left\vert X\right\rangle $\ and $\left\vert \left\vert
X\right\vert +k\right\rangle $\ together to uncompute the noise $\left\vert
k\right\rangle $. \ What makes this work is that, in $L_{\sigma}$, the noise
is entirely additive, and addition of integers is an easy operation to invert.

Also, as long as $\sigma=\exp\left(  O\left(  n\right)  \right)  $, a cutoff
of the form $C=\exp\left(  O\left(  n\right)  \right)  $\ suffices for
exponential accuracy. \ Moreover, one can check that DP, and hence gentleness,
still hold even after we impose the cutoff.

It remains only to verify that there are $O\left(  n\right)  $-size quantum
circuits to add and subtract $O\left(  n\right)  $-bit integers, and to
prepare $\left\vert \eta\right\rangle $. \ The one interesting part is
preparing $\left\vert \eta\right\rangle $. \ Omitting normalization and
restricting to $k\geq0$ for simplicity, we observe that%
\[
\sum_{k=0}^{2^{n}-1}e^{-\gamma k}\left\vert k\right\rangle =\left(  \left\vert
0\right\rangle +e^{-\gamma2^{n-1}}\left\vert 1\right\rangle \right)
\otimes\left(  \left\vert 0\right\rangle +e^{-\gamma2^{n-2}}\left\vert
1\right\rangle \right)  \otimes\cdots\otimes\left(  \left\vert 0\right\rangle
+e^{-\gamma}\left\vert 1\right\rangle \right)  ,
\]
from which a linear-size circuit to prepare $\left\vert \eta\right\rangle $ follows.
\end{proof}

Note that the algorithm from Proposition \ref{lsigmafast} is \textquotedblleft
maximally gentle,\textquotedblright\ in the sense that for every possible
state $\rho$ of the $n$ input registers (including non-product states), the
\textit{only} damage that running the algorithm causes to $\rho$, is the
damage that necessarily results from learning the desired output.

We now prove a much more general result, though one that's formally
incomparable to Proposition \ref{lsigmafast}. \ We start with a
trivial-seeming proposition.

\begin{proposition}
\label{twopart}Suppose we have two polynomial-time quantum algorithms: an
algorithm $A$ that, given a classical string $X$, prepares a state $\left\vert
\zeta_{X}\right\rangle $, and an algorithm $B$ that, for some $k=n^{O\left(
1\right)  }$, maps $\left\vert \zeta_{X}\right\rangle ^{\otimes k}$ to
$\left\vert \zeta_{X}\right\rangle ^{\otimes k}\left\vert \phi_{X}%
\right\rangle $, to $\frac{1}{n^{O\left(  1\right)  }}$\ accuracy. \ Then
there's also a polynomial-time quantum algorithm $Q$ that maps $\left\vert
X\right\rangle $\ to $\left\vert X\right\rangle \left\vert \phi_{X}%
\right\rangle $, to $\frac{1}{n^{O\left(  1\right)  }}$\ accuracy.
\end{proposition}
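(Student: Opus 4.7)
The plan is to build $Q$ as a ``prepare--use--uncompute'' sandwich around the subroutines $A$ and $B$. On input $\left\vert X\right\rangle$, I would first append $k$ fresh ancilla blocks initialized to $\left\vert 0\right\rangle$, then apply the coherent version of $A$ to each block (using $\left\vert X\right\rangle$ as the classical ``control'' register, which we can freely copy via CNOTs before each invocation since $X$ is classical). This yields a state within negligible error of $\left\vert X\right\rangle \left\vert \zeta_{X}\right\rangle ^{\otimes k}$. Next, append one more fresh ancilla block and apply $B$ coherently to the $k$ copies, producing a state within trace distance $\frac{1}{n^{O(1)}}$ of $\left\vert X\right\rangle \left\vert \zeta_{X}\right\rangle ^{\otimes k}\left\vert \phi_{X}\right\rangle$ by hypothesis. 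Finally, apply $A^{\dagger}$ coherently to each of the $k$ ancilla blocks, again using $\left\vert X\right\rangle$ as control, and discard the resulting $\left\vert 0\right\rangle ^{\otimes k}$ registers.

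The accuracy analysis is immediate from Proposition \ref{contractive} (contractivity of trace norm). In the ``ideal'' world where $B$ acts exactly as promised, the procedure produces $\left\vert X\right\rangle \left\vert \phi_{X}\right\rangle$ exactly. In the ``real'' world, the only inaccuracy is introduced by $B$, bounded by $\frac{1}{n^{O(1)}}$; the subsequent $A^{\dagger}$ steps and partial trace form a quantum operation, which can only decrease trace distance. Thus $Q$ is $\frac{1}{n^{O(1)}}$-accurate. Linearity of quantum operations extends this guarantee from basis states $\left\vert X\right\rangle$ to arbitrary superpositions $\sum_{X}\alpha_{X}\left\vert X\right\rangle$; the target map $\left\vert X\right\rangle \mapsto \left\vert X\right\rangle \left\vert \phi_{X}\right\rangle$ is an isometry on the span of the $\left\vert X\right\rangle$'s, so this makes sense. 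Efficiency is clear: we make $k = n^{O(1)}$ calls to $A$ and $A^{\dagger}$ and one call to $B$, each itself polynomial-sized.

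The only real subtlety --- and it is more a bookkeeping issue than an obstacle --- is that everything must be implemented reversibly and with $\left\vert X\right\rangle$ held fixed throughout, so that the final $A^{\dagger}$ invocations actually uncompute the $\left\vert \zeta_{X}\right\rangle$ copies back to $\left\vert 0\right\rangle$ coherently across the $X$-superposition. This is precisely why we need $B$'s promise to \emph{restore} the $k$ copies of $\left\vert \zeta_{X}\right\rangle$ rather than merely consume them: without that restoration guarantee, the uncomputation step would leave residual garbage entangled with $\left\vert \phi_{X}\right\rangle$ and with $\left\vert X\right\rangle$, breaking the output. Given the hypothesis as stated, however, this issue does not arise, and the construction works essentially by inspection.
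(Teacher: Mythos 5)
Your proof is correct and takes essentially the same approach as the paper: run $A$ coherently $k$ times, apply $B$, then uncompute with $A^{\dagger}$. The additional remarks on contractivity of trace norm, on why $B$ must restore the $\left\vert \zeta_{X}\right\rangle^{\otimes k}$ register, and on extension to superpositions are fine elaborations of what the paper leaves implicit.
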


\begin{proof}
We first run $A$ sequentially $k$ times, to map $\left\vert X\right\rangle
$\ to $\left\vert X\right\rangle \left\vert \zeta_{X}\right\rangle ^{\otimes
k}$. \ We next run $B$, to map $\left\vert X\right\rangle \left\vert \zeta
_{X}\right\rangle ^{\otimes k}$\ to $\left\vert X\right\rangle \left\vert
\zeta_{X}\right\rangle ^{\otimes k}\left\vert \phi_{X}\right\rangle $ (to
$\frac{1}{n^{O\left(  1\right)  }}$\ accuracy). \ Finally we run $A^{\dagger}%
$\ sequentially $k$ times, to map $\left\vert X\right\rangle \left\vert
\zeta_{X}\right\rangle ^{\otimes k}\left\vert \phi_{X}\right\rangle $\ to
$\left\vert X\right\rangle \left\vert \phi_{X}\right\rangle $.
\end{proof}

Despite its simplicity, Proposition \ref{twopart}\ lets us efficiently
implement a large class of gentle measurements: namely, any gentle measurement
that admits an efficient \textquotedblleft two-part
algorithm,\textquotedblright\ wherein the first part prepares states
$\left\vert \zeta_{X}\right\rangle $ (which might include unwanted garbage),
and the second part maps the $\left\vert \zeta_{X}\right\rangle $\ states to a
desired output state $\left\vert \phi_{X}\right\rangle $%
\ that---crucially---is nearly unentangled with the $\left\vert \zeta
_{X}\right\rangle $'s, depending only on the original input $X$.

Let's give an example.

\begin{theorem}
[Fast QSampling of Sparse Distributions]\label{qsamplethm}For each input $X$,
suppose the state $\left\vert \phi_{X}\right\rangle $\ has the form%
\[
\left\vert \phi_{X}\right\rangle =\sum_{y\in S_{X}}\sqrt{\Pr\left[
y|X\right]  }\left\vert y\right\rangle ,
\]
where the support sets $S_{X}\subset\left\{  0,1\right\}  ^{m}$\ all satisfy
$\left\vert S_{X}\right\vert \leq\ell$, for some $\ell=n^{O\left(  1\right)
}$ (i.e., the $S_{X}$'s\ are sparse). \ Suppose also that there's an efficient
quantum algorithm $A$\ that, for each $X$, samples---but does \textit{not}
necessarily QSample---the distribution $\mathcal{D}_{X}$\ over $y$ conditional
on $X$. \ Then there's also an efficient quantum algorithm $Q$ that QSamples
$\mathcal{D}_{X}$: that is, maps $\left\vert X\right\rangle \left\vert
0\cdots0\right\rangle $\ to $\left\vert X\right\rangle \left\vert \phi
_{X}\right\rangle $ for each $X$ (up to $\frac{1}{n^{O\left(  1\right)  }}%
$\ error in trace distance).
\end{theorem}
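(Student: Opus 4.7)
The plan is to invoke Proposition \ref{twopart}. Let $A$ be the quantum algorithm that, on input $X$, prepares the ``coherent sampler state''
\[
|\zeta_X\rangle := \frac{1}{\sqrt{2^R}}\sum_r |r\rangle\,|A_{\mathrm{samp}}(X;r)\rangle,
\]
obtained by running the classical sampler in superposition over its random coins $r$; this is efficient by hypothesis. What remains is to build an efficient coherent algorithm $B$ taking $k=\mathrm{poly}(n,\ell,1/\varepsilon)$ copies of $|\zeta_X\rangle$ to $|\zeta_X\rangle^{\otimes k}|\phi_X\rangle$ to inverse-polynomial error, after which Proposition \ref{twopart} produces the desired $|X\rangle\mapsto|X\rangle|\phi_X\rangle$ unitary.

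The algorithm $B$ runs as follows, all reversibly. First, CNOT the $y$-register of each of the $k$ copies onto a fresh ancilla, yielding the state $\sum_{\vec r} 2^{-kR/2}\,|\vec r\rangle|\vec y_{\vec r}\rangle|\vec y_{\vec r}\rangle$. Second, from the copied ancilla $|\vec y_{\vec r}\rangle$, coherently compute into a new ``empirical'' register the table of distinct values and frequencies; because the data is classical and polynomial-sized, this is a reversible polynomial-time computation (with garbage that we uncompute immediately). Third, controlled on the empirical table $\mathrm{emp}(\vec y)$, use $O(\ell)$ controlled rotations to prepare on a fresh output register the ``hypothesis QSample''
\[
|\hat\phi_{\mathrm{emp}(\vec y)}\rangle := \sum_{y\in T(\vec y)}\sqrt{\hat p_y}\,|y\rangle,
\]
where $T(\vec y)$ is the set of observed outcomes and $\hat p_y$ the empirical frequency. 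Fourth, uncompute the empirical table from $|\vec y\rangle$, and then uncompute the copied-$y$ ancilla by CNOTing against the still-intact original $|\zeta_X\rangle$ copies. The result is
\[
\frac{1}{\sqrt{2^{kR}}}\sum_{\vec r}\,|\vec r\rangle|\vec y_{\vec r}\rangle\,|\hat\phi_{\mathrm{emp}(\vec y_{\vec r})}\rangle,
\]
in which the $|\zeta_X\rangle^{\otimes k}$ piece has been restored.

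Accuracy follows from a concentration argument on Hellinger distance, which is the bridge between the multi-sample classical estimate and the quantum inner product. The fidelity of the output state with the target $|\zeta_X\rangle^{\otimes k}|\phi_X\rangle$ is exactly
\[
\operatorname*{E}_{\vec y\sim \mathcal D_X^{\,k}}\!\bigl\langle\phi_X|\hat\phi_{\mathrm{emp}(\vec y)}\bigr\rangle = \operatorname*{E}_{\vec y}\bigl(1-H^2(\mathcal D_X,\hat{\mathcal D}_{\mathrm{emp}(\vec y)})\bigr),
\]
and for distributions supported on at most $\ell$ elements one has $\operatorname{E}[H^2(\mathcal D_X,\hat{\mathcal D}_{\mathrm{emp}})] = O(\ell/k)$. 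Choosing $k = \ell\cdot \mathrm{poly}(n)$ makes the fidelity at least $1 - 1/n^{O(1)}$, so by Proposition~\ref{puretd} the trace-distance error is $1/n^{O(1)}$ as required.

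The main obstacle to be careful about is the third and fourth steps: the register holding $|\hat\phi_{\mathrm{emp}}\rangle$ is, a priori, entangled with the $|\zeta_X\rangle$ copies through the sample $\vec y$, and simply uncomputing the empirical table does not break that entanglement. The essential point, which is why sparsity is used, is that concentration of the empirical distribution around $\mathcal D_X$ makes $|\hat\phi_{\mathrm{emp}}\rangle$ \emph{almost constant} across the $\vec r$-superposition---close enough in Hellinger that the residual entanglement contributes only $O(\sqrt{\ell/k})$ to the trace distance, which we can drive down with $k$. Without the sparsity hypothesis, the empirical estimator could require exponentially many samples and this coherent strategy would fail, consistent with the known obstructions of Aharonov--Ta-Shma \cite{at} to general fast QSampling.
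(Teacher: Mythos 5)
Your proof is correct and follows essentially the same route as the paper's: run $A$ coherently $k$ times, copy the $y$-registers into an ancilla, coherently compute an empirical histogram, prepare $\left\vert\hat\phi_{\mathrm{emp}}\right\rangle$ from it, uncompute the garbage, and finally invoke Proposition~\ref{twopart} (via $A^\dagger$) to strip away the $\left\vert\zeta_X\right\rangle^{\otimes k}$ scaffold. The one refinement is that you bound the fidelity loss directly as $\operatorname{E}\left[H^2\left(\mathcal{D}_X,\hat{\mathcal{D}}_{\mathrm{emp}}\right)\right]=O\left(\ell/k\right)$ rather than appealing to a Chernoff-plus-union-bound argument as the paper does, which makes the ``residual entanglement with the sample'' issue cleaner to dispose of; note only that your parametrization of $\left\vert\zeta_X\right\rangle$ as a uniform superposition over classical coins implicitly assumes $A$ is a classical randomized sampler, whereas the theorem allows a general quantum $A$ (the argument is unchanged if you instead take $\left\vert\zeta_X\right\rangle=\sum_{y}\sqrt{\Pr\left[y|X\right]}\left\vert y\right\rangle\left\vert g_{X,y}\right\rangle$ for arbitrary garbage $\left\vert g_{X,y}\right\rangle$, as the paper does).
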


\begin{proof}
As in Proposition \ref{twopart}, the algorithm $Q$\ first runs $A$
sequentially $k$ times, for some sufficiently large $k=n^{O\left(  1\right)
}$. \ It thereby produces the state $\left\vert \zeta_{X}\right\rangle
^{\otimes k}$, where%
\[
\left\vert \zeta_{X}\right\rangle =\sum_{y\in S_{X}}\sqrt{\Pr\left[
y|X\right]  }\left\vert y\right\rangle \left\vert g_{X,y}\right\rangle
\]
is a superposition over samples from $\mathcal{D}_{X}$, possibly entangled
with garbage. \ Next, $Q$ simulates a standard-basis measurement on the
$\left\vert y\right\rangle $\ registers of the $\left\vert \zeta
_{X}\right\rangle $\ states, in order to estimate an empirical frequency for
each possible output string $y\in\left\{  0,1\right\}  ^{m}$. \ (Of course,
all but $n^{O\left(  1\right)  }$\ strings will have an empirical frequency of
$0$ in the sample; for the sake of efficiency, the $0$-frequency strings are
not explicitly recorded.) \ Then, using these empirical frequencies,
$Q$\ prepares the state $\left\vert \phi_{X}\right\rangle $\ to $\frac
{1}{n^{O\left(  1\right)  }}$\ accuracy. \ The efficiency of the preparation
procedure follows from the fact that $\left\vert \phi_{X}\right\rangle $\ has
support of size $\ell=n^{O\left(  1\right)  }$.\footnote{Since we only care
about $\frac{1}{n^{O\left(  1\right)  }}$\ accuracy, in this case we do not
even need the Solovay-Kitaev Theorem (see \cite{nc}).} \ Meanwhile, accuracy
follows by a Chernoff bound and union bound, together with the assumption that
$k$\ was a sufficiently large polynomial compared to $\ell$. \ As the final
step, $Q$ uses $A^{\dagger}$\ to uncompute the $\left\vert \zeta
_{X}\right\rangle $'s.
\end{proof}

As a small special case of Theorem \ref{qsamplethm}, take $\ell=2$ and $m=1$.
\ Then each $\left\vert \phi_{X}\right\rangle $\ has the form $\alpha
_{X}\left\vert 1\right\rangle +\beta_{X}\left\vert 2\right\rangle $, so the
algorithm $A$ could be seen as a $\mathsf{P{}romiseBQP}$\ decision procedure,
which accepts an input $X$\ with probability $\left\vert \beta_{X}\right\vert
^{2}$ (not necessarily bounded away from $\frac{1}{2}$). \ We have shown that
a probabilistic oracle for this decision procedure can be safely implemented
up to $\frac{1}{p\left(  n\right)  }$\ accuracy in polynomial time, for any
polynomial $p$. \ A reasonable interpretation of this\footnote{That is, for
some reasonable definition of what it means to query a $\mathsf{P{}romiseBQP}%
$\ oracle on a superposition of inputs.} is that $\mathsf{BQP}^{\mathsf{P{}%
romiseBQP}}=\mathsf{BQP}$, generalizing the result of Bennett et
al.\ \cite{bbbv}\ that $\mathsf{BQP}^{\mathsf{BQP}}=\mathsf{BQP}$.

Note that, for some DP algorithms, given an input $X\in\left\{  0,1\right\}
^{n}$\ we can just explicitly \textit{calculate} a classical description of
the desired output state $\left\vert \phi_{X}\right\rangle $, to $\frac
{1}{\exp\left(  n\right)  }$\ precision, deterministically and in time
polynomial in $n$. \ If that description also gives rise to a small quantum
circuit to prepare $\left\vert \phi_{X}\right\rangle $, then we can
short-circuit the estimation procedure above, and can improve its accuracy
from $\frac{1}{n^{O\left(  1\right)  }}$\ to $\frac{1}{\exp\left(  n\right)
}$. \ As an example, suppose again that each desired output state $\left\vert
\phi_{X}\right\rangle $\ is a superposition over a sparse set of basis states,
$S_{X}\subset\left\{  0,1\right\}  ^{m}$\ with $\left\vert S_{X}\right\vert
=n^{O\left(  1\right)  }$. But now suppose that, given $X$, we can calculate
both $S_{X}$\ (as a list of elements), and $\left\langle \phi_{X}%
|y\right\rangle $ for each $y\in S_{X}$ to $\frac{1}{\exp\left(  n\right)  }%
$\ precision, in polynomial time. \ Then by using the Solovay-Kitaev Theorem
(see \cite{nc}), we can clearly \textit{prepare} the states $\left\vert
\phi_{X}\right\rangle $---i.e., QSample---in polynomial time as well.

It is not clear how to generalize the above techniques to superpositions
$\left\vert \phi_{X}\right\rangle $\ over exponentially many basis states (or
rather, to do so in any useful generality), even in cases where the individual
amplitudes $\left\langle \phi_{X}|y\right\rangle $\ and probabilities
$\left\vert \left\langle \phi_{X}|y\right\rangle \right\vert ^{2}$\ are
computable in polynomial time.

\subsection{Efficiency of Shadow Tomography\label{COMPUTST}}

What does all of this mean for the computational complexity of shadow
tomography? \ In the QPMW algorithm of Section \ref{SHADOW}, recall that we
needed two types of measurements: threshold measurements\ on all rounds, and
$L_{\sigma}$\ (Hamming weight plus Laplace noise) type measurements on update
rounds. \ Proposition \ref{lsigmafast} has shown that the $L_{\sigma}$
measurements can be implemented in quantum polynomial time, provided the
underlying POVMs $E_{1},\ldots,E_{m}$\ can be implemented in quantum
polynomial time. \ Since a threshold measurement just consists of an
$L_{\sigma}$\ measurement, followed by a binary threshold decision, followed
by uncomputing of garbage, it follows that the threshold measurements can be
implemented in quantum polynomial time as well, again assuming efficient
procedures for the $E_{i}$'s.

Unfortunately, this doesn't mean that QPMW runs in polynomial time overall.
\ The first issue is just the sheer number of measurements $m$. \ Since QPMW
needs one round per measurement, if $m$ is exponentially large then QPMW will
of course need exponential time.

The second issue is the need to maintain, and to do computations on, a
classical description of the current hypothesis state $\sigma_{t}$, in the
online learning procedure \cite{achkn}\ that QPMW uses as a subroutine. \ If
$\sigma_{t}$\ is stored explicitly, as a $d\times d$\ Hermitian matrix, then
this takes $d^{2}$\ space, which is prohibitive if $d$ is exponentially large.
\ However, even if $\sigma_{t}$\ is stored only implicitly, say by a list of
constraints that it satisfies, estimating expectation values
$\operatorname{Tr}\left(  E_{i}\sigma_{t}\right)  $ will still take
$d^{\Theta\left(  1\right)  }$\ time in general.

In summary, if we ignore various low-order contributions, then the running
time of QPMW is roughly $O\left(  mL\right)  +d^{O\left(  1\right)  }$, where
$L$\ is an upper bound on the time needed to implement a single measurement
$E_{i}$. \ By comparison, Aaronson's previous shadow tomography procedure
\cite{aar:shadow} used roughly $O\left(  mL\right)  +d^{O\left(  \log\log
d\right)  }$\ time. \ Thus, QPMW improves the dependence on $d$\ from
quasipolynomial to polynomial.

There is also later work by Brand\~{a}o et al.\ \cite{bkllsw}, which connects
shadow tomography to semidefinite programming and Gibbs states. \ Brand\~{a}o
et al.\ gave a shadow tomography procedure with the same sample complexity as
Aaronson's, and running time $O\left(  \sqrt{m}L\right)  +d^{O\left(
1\right)  }$. \ Here the improvement from $m$ to $\sqrt{m}$\ came from, in
essence, repeatedly doing Grover search over $E_{1},\ldots,E_{m}$ to find an
informative $E_{i}$. \ Thus, if we compare to Brand\~{a}o et al., QPMW matches
the improvement from $d^{O\left(  \log\log d\right)  }$\ to $d^{O\left(
1\right)  }$, but not the improvement from $m$ to $\sqrt{m}$. \ However, this
is to be expected: unlike Aaronson's or Brand\~{a}o et al.'s, our new shadow
tomography procedure is \textit{online}, which necessitates taking time linear
in the number of measurements.

It's natural to wonder: is there some inherent barrier ruling out a shadow
tomography procedure that runs in $\left(  \log d\right)  ^{O\left(  1\right)
}$\ time, avoiding the polynomial dependence on Hilbert space dimension $d$?
\ We now show that there \textit{is} such a barrier---at least if we insist
that the shadow tomography procedure be online, or alternatively, that it be
gentle. \ Our proof will use recent cryptographic lower bounds for
differential privacy and for answering adaptively chosen queries, as well as
our result that gentleness implies DP.\bigskip

\textbf{Hardness for gentle (even offline) shadow tomography.} \ We use a
result of Ullman \cite{Ullman16}, which shows that under plausible
cryptographic assumptions, computing differentially private answers to more
than $\widetilde{\Theta}\left(  n^{2}\right)  $ queries (where $n$ is the
database size) requires time $d^{\Omega(1)}$. \ This hardness result extends
to quantum algorithms, under plausible cryptographic assumptions about their
power. \ Moreover, the result constructs a single distribution $\mathcal{D}$
over $[d]$, such that it's hard for DP algorithms to compute accurate answers
on databases that are drawn i.i.d.\ from $\mathcal{D}$. \ Using our result that
gentleness implies DP, we derive a similar hardness result for \emph{gentle}
shadow tomography.

\begin{theorem}
[Ullman \cite{Ullman16}, quantum variant]\label{ullmanthm}Suppose there exists
a symmetric-key encryption scheme that, for keys of length $\kappa$, is
semantically secure against $2^{\Omega(\kappa)}$-time quantum adversaries.
\ Then there is no quantum algorithm $\mathcal{A}$, running in time
$d^{o(1)}\cdot\mathit{poly}(m)$, that receives as input a database $X$
comprised of $n$ items from $[d]$, and a set of $m=\widetilde{\Theta}\left(
n^{2}\right)  $ queries $E_{1},\ldots,E_{m}$, such that:

\begin{enumerate}
\item[(1)] $\mathcal{A}$ is $(1,\frac{1}{10n})$-DP.

\item[(2)] For any distribution $\mathcal{D}$ over $[d]$, if $X$'s entries are
drawn i.i.d.\ from $\mathcal{D}$, then with all but a small constant
probability over $\mathcal{A}$'s coins and the choice of $X$, for every
$j\in\lbrack m]$, the $j^{th}$ answer $a_{j}$ computed by $\mathcal{A}$
satisfies:%
\[
\left\vert a_{j}-\frac{\sum_{i\in\lbrack n]}{E}_{j}(X_{i})}{n}\right\vert
<\frac{1}{2}.
\]

\end{enumerate}

Moreover, the queries $E_{1},\ldots,E_{m}$ are each computable in
$\mathit{poly}(n,\log d)$ time.
\end{theorem}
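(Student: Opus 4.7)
The plan is to adapt Ullman's classical hardness construction to the quantum setting. Since the assumption already gives us a symmetric-key encryption scheme that is semantically secure against $2^{\Omega(\kappa)}$-time quantum adversaries (with $\kappa = \Theta(\log d)$), the main task is to verify that Ullman's classical reduction from differentially private query-answering to breaking semantic security still goes through when the DP algorithm $\mathcal{A}$ is allowed to be quantum.

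First, I would recall the classical template. Each entry $x_i \in [d]$ of the database is a secret key drawn from the encryption scheme, and each query $E_j$ is specified by a ciphertext $c_j$ together with a fixed polynomial-time decryption predicate $\mathrm{Dec}(x, c_j) \in \{0,1\}$. From a suitable traitor-tracing scheme (which Ullman builds generically out of any semantically secure symmetric-key encryption) one obtains two indistinguishable ciphertext distributions: an ``honest'' one for which the empirical mean $\frac{1}{n}\sum_i \mathrm{Dec}(x_i, c_j)$ is pinned at a known value, and a ``tracing'' distribution targeting one specific user $i^*$ whose key $x_{i^*}$ would decrypt differently. Crucially, the queries can be computed in $\mathrm{poly}(n,\log d)$ time, matching the theorem's hypothesis.

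Second, I would argue by contradiction. Suppose $\mathcal{A}$ is a quantum algorithm of running time $d^{o(1)}\cdot\mathrm{poly}(m)$ satisfying both $(1,\tfrac{1}{10n})$-DP and $\tfrac{1}{2}$-accuracy on i.i.d.\ databases. By accuracy, when we feed $\mathcal{A}$ a tracing ciphertext targeting user $i^*$, its answer must detectably reflect whether $x_{i^*}$ is present. Composing $\mathcal{A}$ with the tracing procedure therefore gives a quantum adversary $\mathcal{B}$ that, with nontrivial probability, outputs some $i^*$ whose key lies in $X$. On the other hand, the DP condition implies that the probability $\mathcal{A}$ (and hence $\mathcal{B}$) outputs any particular $i^*$ changes by at most a multiplicative $e$ factor plus $\tfrac{1}{10n}$ additive when $x_{i^*}$ is swapped for an independent fresh key --- and this holds whether $\mathcal{A}$ is classical or quantum, since quantum DP is defined purely at the level of output probabilities. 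Summing over users via a hybrid then gives $\mathcal{B}$ a noticeable advantage at distinguishing a database containing $x_{i^*}$ from one where $x_{i^*}$ is replaced, contradicting the semantic security of the encryption scheme against quantum distinguishers of size $2^{o(\kappa)} = d^{o(1)}$.

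The main obstacle is making sure every step of the tracing reduction survives quantization of $\mathcal{A}$. I would need to verify that: (a) the reduction invokes $\mathcal{A}$ only once per tracing attempt and merely reads classical outputs, so no rewinding of $\mathcal{A}$'s quantum state is required; (b) the hybrid argument over ciphertext distributions only uses classical hybrids whose indistinguishability against \emph{quantum} distinguishers is exactly what the strengthened cryptographic assumption provides; and (c) the ``group privacy'' / hybrid-over-users step relies only on the definitional DP inequality, which applies verbatim to quantum $\mathcal{A}$. Once these are checked, the parameters of \cite{Ullman16} transfer directly, yielding the $m = \widetilde{\Theta}(n^2)$ threshold and the $d^{\Omega(1)}$ time lower bound.
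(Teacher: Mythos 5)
Your proposal is correct and takes essentially the same approach as the paper, which in fact offers no proof of this theorem at all: it cites Ullman's classical result and simply asserts that the reduction transfers to quantum algorithms once the cryptographic assumption is strengthened to quantum security. Your sketch makes explicit the points that license this transfer — that Ullman's tracing reduction invokes $\mathcal{A}$ only as a black box on classical inputs and reads classical outputs, that the definitional DP inequality is basis-independent and applies verbatim to a quantum $\mathcal{A}$, and that the hybrid-over-ciphertexts step only needs indistinguishability against the quantum adversaries already ruled out by hypothesis — so the student's account is, if anything, more detailed than what the paper provides.
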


Using the fact that gentleness implies differential privacy (Theorem
\ref{main}), we conclude that gentle shadow tomography is hard.

\begin{corollary}
\label{ullmancor}Suppose there exists a symmetric-key encryption scheme that,
for keys of length $\kappa$, is semantically secure against $2^{\Omega
(\kappa)}$-time quantum adversaries. \ Then there is no quantum shadow
tomography procedure that is gentle on product states and runs in $d^{o\left(
1\right)  }\cdot\mathit{poly}\left(  m\right)  $\ time. \ Moreover, this holds
even for the classical special case of shadow tomography.
\end{corollary}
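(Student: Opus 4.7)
The plan is to reduce Ullman's cryptographic hardness (Theorem \ref{ullmanthm}) to the question of efficient gentle shadow tomography, so that any sufficiently fast gentle procedure would induce an efficient classical DP algorithm capable of answering $\widetilde{\Theta}(n^{2})$ queries, contradicting Theorem \ref{ullmanthm}. Concretely, suppose for contradiction that we had a shadow tomography procedure $\mathcal{P}$ that is $(\alpha,\tfrac{1}{10n})$-gentle on product states for some constant $\alpha$ small enough that Lemma \ref{gentledpprod} yields $(1,\tfrac{1}{10n})$-DP on product states (any $\alpha \lesssim (e-1)/(4(e+1))$ will do), that is $\tfrac{1}{8}$-accurate on $m$ queries with all but a small constant probability, and that runs in time $d^{o(1)}\cdot\mathrm{poly}(m)$. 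We derive a DP algorithm violating Ullman.

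First I specialize $\mathcal{P}$ to the classical case exactly as in the proof of Theorem \ref{sqrtlogdlb}. A classical database $X=(x_{1},\ldots,x_{n})\in[d]^{n}$ is encoded as the product state $\rho_{X}=|x_{1}\rangle\langle x_{1}|\otimes\cdots\otimes|x_{n}\rangle\langle x_{n}|$, and each Ullman function $f_{j}:[d]\to\{0,1\}$ is implemented as a two-outcome product measurement $E_{j}$ that measures a single register in the computational basis and outputs $f_{j}$ of the result; this $E_{j}$ is computable in time $\mathrm{poly}(n,\log d)$ as required. When $X$ is drawn i.i.d.\ from a distribution $\mathcal{D}$, the induced mixed input is $\rho^{\otimes n}$ for $\rho=\sum_{x}\mathcal{D}(x)|x\rangle\langle x|$, so the shadow tomography promise holds. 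By Lemma \ref{gentledpprod} (part (1) of Theorem \ref{main}), $\mathcal{P}$ is $(1,\tfrac{1}{10n})$-DP on product states, and on purely classical inputs this coincides with standard classical $(\varepsilon,\delta)$-DP, so condition (1) of Theorem \ref{ullmanthm} is met.

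For condition (2), the shadow tomography accuracy guarantee bounds the error of $\mathcal{P}$'s answer $a_{j}$ with respect to $\mathrm{Tr}(E_{j}\rho)=\mathbb{E}_{x\sim\mathcal{D}}[f_{j}(x)]$. To convert this into accuracy with respect to the sample average $\tfrac{1}{n}\sum_{i}f_{j}(x_{i})$ (which is what Ullman quantifies over), I invoke Chernoff plus a union bound: for $m=\widetilde{\Theta}(n^{2})$ and $n$ even mildly large, with all but a small constant probability over $X\sim\mathcal{D}^{n}$, the sample mean deviates from $\mathbb{E}_{\mathcal{D}}[f_{j}]$ by at most $O(\sqrt{\log m/n}) < \tfrac{1}{4}$ for every $j$ simultaneously. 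Combined with $\mathcal{P}$'s $\tfrac{1}{8}$-accuracy, this gives total error below the $\tfrac{1}{2}$ slack allowed by Ullman. Since $\mathcal{P}$ runs in $d^{o(1)}\cdot\mathrm{poly}(m)$ time, it violates Theorem \ref{ullmanthm}, contradicting the cryptographic assumption. Because the whole reduction lives in the classical special case, the hardness holds for that case as well.

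The only bookkeeping to watch is the parameter translation: the constant $\alpha$ must be chosen small enough that the multiplier $\ln((1+4\alpha)/(1-4\alpha))$ produced by Lemma \ref{gentledpprod} lands at or below Ullman's differential privacy parameter $1$, and the Chernoff slack $O(\sqrt{\log m/n})$ must fit inside the $\tfrac{1}{2}-\varepsilon$ gap. Both are comfortable because Ullman's accuracy target is a large constant and the gentleness-to-DP conversion is lossless up to a constant. I do not foresee a genuine obstacle here; the corollary follows essentially by plugging Lemma \ref{gentledpprod} into Theorem \ref{ullmanthm}, in direct parallel to how Theorem \ref{sqrtlogdlb} used Lemma \ref{gentledpprod} together with Theorem \ref{buvthm} to get the information-theoretic lower bound.
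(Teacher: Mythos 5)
Your proposal is correct and follows essentially the same route as the paper, which leaves the proof as a one-sentence deduction ("Using the fact that gentleness implies differential privacy, we conclude that gentle shadow tomography is hard"). You correctly avoid the detour through Claim \ref{claim:arbitrary2iid} that Theorem \ref{sqrtlogdlb} needs (since Theorem \ref{ullmanthm}, unlike Theorem \ref{buvthm}, is already stated for i.i.d.\ databases), and you supply the one bookkeeping step the paper elides: shadow tomography is accurate with respect to $\operatorname{Tr}(E_j\rho)=\operatorname{E}_{x\sim\mathcal{D}}[f_j(x)]$, whereas Ullman's condition is stated with respect to the empirical mean $\frac{1}{n}\sum_i f_j(x_i)$, and a Chernoff-plus-union-bound over the $m=\widetilde\Theta(n^2)$ queries (giving deviation $O(\sqrt{\log m/n})=o(1)$) bridges the two while comfortably fitting inside Ullman's $\frac12$ slack. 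The remaining parameter translations (choosing $\alpha$ so that $\ln\frac{1+4\alpha}{1-4\alpha}\le 1$, taking the gentleness failure probability $\delta\le\frac{1}{10n}$, and noting that the DP property for diagonal inputs is just classical DP) are all as you describe.
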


Corollary \ref{ullmancor} applies even to the offline setting, and to
algorithms that are accurate only in the identical-state setting where the
algorithm's input is a state of the form $\rho^{\otimes n}$. \ Moreover, it
applies even for classical data and classical queries. \ We note that Theorem
\ref{ullmanthm} and Corollary \ref{ullmancor} extend to milder cryptographic
assumptions, with a milder conclusion on the possible running time for gentle
shadow tomography. \ Essentially, symmetric key encryption that is hard for
time-$T(\kappa)$ quantum algorithms translates into hardness of differentially
private data analysis for quantum algorithms that run in time $O\left(
T(\kappa)^{\tau}\right)  $, for a fixed constant $\tau>0$. \ Similarly to
Corollary \ref{ullmancor}, the existence of such encryption schemes rules out
gentle shadow tomography in time $T(\log d)^{o(1)}\cdot\mathit{poly}\left(
m,n\right)  $.

Finally, we remark that Theorem \ref{ullmanthm} (and Corollary \ref{ullmancor}%
) do not rule out efficient gentle algorithms that are tailored to fixed
classes of queries---even for exponentially large fixed
classes.\footnote{Theorem \ref{ullmanthm} does not apply because, for the
specific queries used to instantiate the lower bound, the time needed to
compute the queries grows with the database size. In particular, Theorem
\ref{ullmanthm} does not rule out an efficient DP algorithm for answering all
queries that can be computed by $\mathit{poly}(d)$-size circuits. More
generally, for any fixed query family, it does not rule out the possibility of
obtaining an efficient algorithm that is accurate so long as the database is
large enough, and in particular larger than the representation of queries in
the family.} \ Until recently, known DP hardness results for fixed query
families, such as \cite{DworkNRRV09,BonehZ17,KowalczykMUZ16}, relied on
assumptions for which we have no quantum-secure candidate instantiation, such
as bilinear maps or indistinguishability obfuscation. \ A recent result of
Kowalczyk et al.\ \cite{KowalczykMUW18} presents a candidate query family
based on the existence of one-way functions. \ These results may also extend
to gentle shadow tomography.\bigskip

\textbf{Hardness for online (even non-gentle) shadow tomography.} \ We use a
result of Steinke and Ullman \cite{steinkeullman} (building on earlier work by
Hardt and Ullman \cite{HardtU14}), showing that under plausible cryptographic
assumptions, given $n$ i.i.d.\ samples from a distribution $\mathcal{D}$ over
$[d]$, it is computationally hard to answer more than $O(n^{2})$
adaptively-chosen queries accurately. \ Under appropriate assumptions, this
result extends to quantum algorithms, and shows hardness for time
$d^{\Omega(1)}$:

\begin{theorem}
[Steinke and Ullman \cite{steinkeullman}, quantum variant]\label{suthm}Suppose
there exists a symmetric-key encryption scheme that, for keys of length
$\kappa$, is semantically secure against $2^{\Omega(\kappa)}$-time quantum
adversaries. \ Then there is no quantum algorithm, running in $d^{o\left(
1\right)  }\cdot m^{O\left(  1\right)  }$\ time, that takes as input
$n$\ independent samples from a distribution $\mathcal{D}$\ over $\left[
d\right]  $, as well as $m=O\left(  n^{2}\right)  $\ efficiently computable
counting queries $E_{1},\ldots,E_{m}$ that are chosen adversarially and
adaptively, and correctly estimates $\operatorname{E}_{x\sim\mathcal{D}%
}\left[  E_{i}\left(  x\right)  \right]  $\ to within a fixed constant error
for each $i\in\left[  m\right]  $ in an online manner.
\end{theorem}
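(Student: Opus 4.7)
The plan is to adapt the proof of Steinke and Ullman \cite{steinkeullman} essentially verbatim, replacing each use of a classically secure cryptographic primitive by a quantum-secure analogue, and then to observe that online shadow tomography specializes in the classical case to an online adaptive query answering algorithm of exactly the kind their lower bound rules out. Concretely, I would first recall the high-level structure of the Steinke--Ullman reduction: starting from a symmetric-key encryption scheme that is semantically secure against efficient adversaries, they build an explicit distribution $\mathcal{D}$ over $[d]$ and an efficient \emph{adversary strategy} that, given black-box access to any purported query-answering mechanism $\mathcal{M}$ satisfying the hypothesized accuracy and running-time bounds, produces a sequence of $O(n^2)$ counting queries on which $\mathcal{M}$'s answers can be used to decrypt a challenge ciphertext with non-negligible advantage, contradicting semantic security. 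The output is an efficient distinguisher whose running time is polynomial in $\mathcal{M}$'s running time and in $\kappa$.

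Next I would verify that this reduction is \emph{fully black-box} in $\mathcal{M}$: the adversary never inspects internal randomness, never rewinds, and only feeds adaptively chosen queries and reads back the classical answers. Consequently, if $\mathcal{M}$ is a quantum algorithm, the reduction still produces a quantum distinguisher whose running time is $d^{o(1)}\cdot m^{O(1)} = 2^{o(\kappa)} \cdot \mathit{poly}(\kappa)$ (for an appropriate choice of parameters relating $\kappa$, $d$, and $m$, which is how Theorem \ref{ullmanthm} is typically parameterized). Under the assumed $2^{\Omega(\kappa)}$-time quantum semantic security of the encryption scheme, such a distinguisher cannot exist, yielding the claimed hardness for quantum $\mathcal{M}$.

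Finally, I would apply this to online shadow tomography. Restrict attention to the classical special case: the unknown state is $\rho = \sum_{x \in [d]} \mathcal{D}(x)\,|x\rangle\langle x|$, and each query $E_i$ is a diagonal projector encoding a Boolean function $f_i : [d] \to \{0,1\}$, so that $\operatorname{Tr}(E_i \rho) = \operatorname{E}_{x \sim \mathcal{D}}[f_i(x)]$. An online, $\varepsilon$-accurate shadow tomography procedure that uses $n$ copies of $\rho$ then behaves exactly like an online adaptive statistical query answering algorithm on $n$ i.i.d.\ samples from $\mathcal{D}$ (the copies of $\rho$ measured in the standard basis give i.i.d.\ samples). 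Plugging this specialization into the hardness argument above gives Theorem \ref{suthm} for this regime of $m = O(n^2)$.

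The main obstacle is the black-box quantum analysis in the second step: one must confirm that the specific reduction in \cite{steinkeullman} (in particular its use of interactive fingerprinting codes and of the traitor-tracing-style decoding step) nowhere relies on classical-only techniques such as rewinding, state cloning, or extracting the adversary's internal witnesses. Provided each cryptographic ingredient has a quantum-secure instantiation (which is exactly what the hypothesis on quantum-secure symmetric-key encryption provides, via standard transformations to pseudorandom functions and fingerprinting codes), no genuinely quantum subtlety arises, and the classical proof carries over. The accuracy/running-time conversion to shadow tomography is then routine.
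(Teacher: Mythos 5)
Your proposal is correct and takes essentially the same approach as the paper, which itself only cites Steinke--Ullman and observes that their fully black-box reduction goes through against quantum query-answering mechanisms once the underlying symmetric-key encryption is assumed quantum-secure; your steps (1) and (2) flesh out exactly that observation. One small point: your step (3), the specialization to online shadow tomography on diagonal states, is the content of Corollary \ref{sucor} rather than of Theorem \ref{suthm} itself, which is already stated directly as an adaptive-data-analysis lower bound; so steps (1)--(2) are the actual proof and step (3) is the separate corollary.
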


Theorem \ref{suthm} has the following as an immediate corollary.

\begin{corollary}
\label{sucor}Suppose there exists a symmetric-key encryption scheme that, for
keys of length $\kappa$, is semantically secure against $2^{\Omega(\kappa)}%
$-time quantum adversaries. \ Then there is no shadow tomography procedure
that is online and runs in $d^{o\left(  1\right)  }\cdot\mathit{poly}(m)$\ time.
\end{corollary}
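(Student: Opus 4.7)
\textbf{Proof proposal for Corollary \ref{sucor}.} The plan is to reduce the online adaptive data analysis problem of Theorem \ref{suthm} to online shadow tomography, so that any $d^{o(1)}\cdot\mathit{poly}(m)$-time online shadow tomography procedure would violate the cryptographic assumption. The reduction uses the standard ``classical special case'' embedding: probability distributions over $[d]$ correspond to diagonal mixed states in dimension $d$, and Boolean counting queries $E_i:[d]\to\{0,1\}$ correspond to diagonal two-outcome POVMs $\widehat{E}_i := \sum_{x: E_i(x)=1} |x\rangle\langle x|$, each of which can still be implemented in $\mathit{poly}(n,\log d)$ time, as guaranteed by Theorem \ref{suthm}.

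Suppose for contradiction there is an online shadow tomography procedure $\mathcal{S}$ that runs in time $d^{o(1)}\cdot\mathit{poly}(m)$ and achieves accuracy $\pm\varepsilon$ for some fixed constant $\varepsilon$. Given samples $x_1,\ldots,x_n$ drawn i.i.d.\ from an unknown distribution $\mathcal{D}$ over $[d]$, I would prepare the pure product state $|\psi\rangle := |x_1\rangle\otimes\cdots\otimes|x_n\rangle$ and feed it to $\mathcal{S}$, together with the stream of diagonal two-outcome measurements $\widehat{E}_1,\widehat{E}_2,\ldots$ obtained by lifting the adaptively chosen queries $E_1,E_2,\ldots$ received from the adversary. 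The key observation is that for $\rho := \sum_x \Pr_\mathcal{D}[x]\,|x\rangle\langle x|$, we have the identity
\[
\rho^{\otimes n} \;=\; \mathop{\mathbb{E}}_{x_1,\ldots,x_n\sim\mathcal{D}}\bigl[\,|x_1\rangle\langle x_1|\otimes\cdots\otimes|x_n\rangle\langle x_n|\,\bigr],
\]
so by linearity, the distribution over outputs of $\mathcal{S}$ on input $|\psi\rangle$ (averaged over the draw of $\vec{x}$) equals the distribution over outputs of $\mathcal{S}$ on input $\rho^{\otimes n}$. Since $\mathcal{S}$ succeeds on $\rho^{\otimes n}$ with probability $\geq 2/3$, Markov's inequality gives that with probability $\Omega(1)$ over $\vec{x}\sim\mathcal{D}^n$, the procedure $\mathcal{S}$ still returns $\varepsilon$-accurate estimates of $\operatorname{Tr}(\widehat{E}_i\rho) = \mathbb{E}_{x\sim\mathcal{D}}[E_i(x)]$ for every $i$, even when run on the single product state $|\psi\rangle$.

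Because $\mathcal{S}$ is online, its estimate for $E_i$ is returned before $E_{i+1}$ is received, so the above reduction yields an online adaptive-query algorithm for the setting of Theorem \ref{suthm}, running in time $d^{o(1)}\cdot\mathit{poly}(m)$ and succeeding with constant probability; standard amplification (by running a few independent copies of the reduction and taking a median in each round, which costs only a constant factor in $n$ and time) boosts the success probability above any threshold Theorem \ref{suthm} requires, on $m = O(n^2)$ queries. This directly contradicts Theorem \ref{suthm} under the stated cryptographic assumption, completing the proof.

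I do not expect a major obstacle: every step is either definitional or a one-line convexity argument. The only thing to be careful about is matching parameters with Theorem \ref{suthm}, in particular checking that the ratio $m$ vs.\ $n$ used by shadow tomography (where $n=\widetilde{O}((\log m)^2(\log d)^2/\varepsilon^8)$ for QPMW, but here $n$ is arbitrary) is compatible with the $m=O(n^2)$ regime where the Steinke--Ullman lower bound bites; since we are only assuming $\mathcal{S}$ runs in $d^{o(1)}\cdot\mathit{poly}(m)$ time without any constraint on its sample complexity, we can always take $n$ small enough that the hardness regime applies.
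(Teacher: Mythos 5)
Your proposal is correct and takes essentially the same approach as the paper, which treats Corollary \ref{sucor} as an immediate consequence of Theorem \ref{suthm}: the adaptive data analysis task of Theorem \ref{suthm} is literally the classical special case of online shadow tomography in the identical-state setting (diagonal $\rho$ from the samples, diagonal POVMs $\widehat{E}_i$ from the counting queries), so any online $d^{o(1)}\cdot\mathit{poly}(m)$-time shadow tomography procedure would solve it, contradicting the cryptographic hardness. One small simplification: the detour through Markov's inequality is unnecessary, since by linearity the success probability over the joint draw of $\vec{x}\sim\mathcal{D}^n$ and the procedure's randomness already equals its success probability on $\rho^{\otimes n}$, which is $\geq 2/3$, and that is directly the guarantee Theorem \ref{suthm} rules out.
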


Note that Corollary \ref{sucor} applies even to online algorithms that are not
gentle, and that work only in the \textquotedblleft identical-state
setting\textquotedblright\ (i.e., when the algorithm's input has the form
$\rho^{\otimes n}$). \ Moreover, it applies even for the classical special
case of shadow tomography. \ Finally, we note that just like Corollary
\ref{ullmancor}, Corollary \ref{sucor} extends to milder cryptographic
assumptions, albeit with milder conclusions for the complexity of gentle
shadow tomography.

\subsection{Quantum Complexity Implication\label{PROMISEBQP}}

We now observe that gentle measurements, whether or not derived from DP
algorithms, have potentially useful applications in quantum algorithms and
complexity. \ In particular, whenever we have an efficient implementation of a
gentle measurement, we can turn it into a safe and efficient way to run an
associated class of estimation subroutines on superpositions of inputs,
without generating unwanted garbage.

As an example, let's now prove Theorem \ref{promisebqpthm} from Section
\ref{APP}. \ In other words, let's show that without loss of generality, a
$\mathsf{BQP}$ machine can coherently query an oracle that takes as input a
description of a quantum circuit $C$, and that outputs an estimate of
$\Pr\left[  C\text{ accepts}\right]  $\ to within $\pm\varepsilon$, or a
superposition over such estimates, for any desired additive error
$\varepsilon=\frac{1}{n^{O\left(  1\right)  }}$. \ (In the sense that, for
every $\mathsf{BQP}$ machine that queries such an oracle, there is another
$\mathsf{BQP}$ machine that simulates the oracle on its own.) \ While this
might seem obvious, we would not know how to prove it without a gentle
measurement procedure of some kind.

\begin{proof}
[Proof of Theorem \ref{promisebqpthm}]Let%
\[
\sum_{g,C}\alpha_{g,C}\left\vert g,C\right\rangle
\]
be a state of the $\mathsf{BQP}$\ machine, where $g$ is garbage that we don't
care about and $C$\ is a description of a quantum circuit whose acceptance
probability (say, on the $\left\vert 0\cdots0\right\rangle $\ state) we'd like
to estimate. \ Then as a first step, we map the above state to%
\[
\sum_{g,C}\alpha_{g,C}\left\vert g,C\right\rangle \left(  C\left\vert
0\cdots0\right\rangle \right)  ^{\otimes\ell},
\]
for some suitable $\ell=n^{O\left(  1\right)  }$. \ Next we use the efficient
implementation of the Laplace noise measurement $L_{\sigma}$ (with $\sigma
\gg\sqrt{\ell}$), from Proposition \ref{lsigmafast}, to map the above to some
state%
\begin{equation}
\left\vert \psi\right\rangle \approx\sum_{g,C}\alpha_{g,C}\left\vert
g,C\right\rangle \left(  C\left\vert 0\cdots0\right\rangle \right)
^{\otimes\ell}\left\vert p_{C}\right\rangle . \label{psiapprox}%
\end{equation}
Here $p_{C}$\ is an estimate of $\Pr\left[  C\left\vert 0\cdots0\right\rangle
\text{ accepts}\right]  $ to within $\pm\eta$ additive error---or more
precisely, a Laplace superposition over estimates, one with the property that%
\[
\Pr\left[  \left\vert p_{C}-\Pr\left[  C\left\vert 0\cdots0\right\rangle
\text{ accepts}\right]  \right\vert >K\eta\right]  \leq\frac{1}{\exp\left(
\Omega\left(  K\right)  \right)  }%
\]
for all $K$. \ The equality (\ref{psiapprox})\ is only approximate because in
reality, the $\left\vert p_{C}\right\rangle $\ register is slightly entangled
with the $\left(  C\left\vert 0\cdots0\right\rangle \right)  ^{\otimes\ell}%
$\ registers. \ However, recall from Corollary \ref{lsigmagentle}\ that
$L_{\sigma}$\ is $\alpha$-gentle on product states for some $\alpha
=O(\sqrt{\ell}/\sigma)$. \ Thus, the damage to the $\left(  C\left\vert
0\cdots0\right\rangle \right)  ^{\otimes\ell}$\ registers in trace distance
can be upper-bounded by $\alpha$, and the equality (\ref{psiapprox}) also
holds up to error $\alpha$. \ So as a final step, we can simply uncompute the
$C\left\vert 0\cdots0\right\rangle $\ registers, to produce a state that is
$\alpha$-close in trace distance to%
\[
\sum_{g,C}\alpha_{g,C}\left\vert g,C\right\rangle \left\vert p_{C}%
\right\rangle .
\]
If we want to ensure that the above, in turn, is $\alpha$-close\ to a
superposition such that%
\[
\left\vert p_{C}-\Pr\left[  C\left\vert 0\cdots0\right\rangle \text{
accepts}\right]  \right\vert \leq\varepsilon
\]
with \textit{certainty}, where $\varepsilon$\ is our original accuracy bound,
then it suffices to choose $\eta$\ such that $K\eta\leq\varepsilon$ for some
$K=O\left(  \log\frac{1}{\alpha}\right)  $. \ Working backwards, a calculation
shows that it suffices to set%
\[
\ell=\Theta\left(  \frac{1}{\alpha^{2}\eta^{2}}\right)  =\Theta\left(
\frac{\log^{2}\frac{1}{\alpha}}{\alpha^{2}\varepsilon^{2}}\right)  .
\]
In turn, if our $\mathsf{BQP}$\ machine was going to make $T=n^{O\left(
1\right)  }$ such queries in sequence, it would suffice to set $\alpha
=\Theta\left(  \frac{1}{T}\right)  $ for each of them, to ensure that the
final output has trace distance at most (say) $\frac{1}{10}$\ from what we'd
obtain using an ideal oracle for approximating $\Pr\left[  C\left\vert
0\cdots0\right\rangle \text{ accepts}\right]  $.
\end{proof}

Though Theorem \ref{promisebqpthm} is not particularly shocking, it serves as
a model for a large number of results that could now be proven, using gentle
measurement procedures derived from DP algorithms. \ I.e., for every DP
algorithm that can be implemented coherently and in polynomial time, along the
lines of Proposition \ref{lsigmafast}, we get another way that quantum
algorithms can be safely invoked as subroutines by other quantum algorithms.

One might wonder about the difference between Theorem \ref{promisebqpthm}\ and
our results from Section \ref{COMPUTDP}. \ In particular, why was the Laplace
noise measurement $L_{\sigma}$\ needed for Theorem \ref{promisebqpthm}, but
\textit{not} needed for Theorem \ref{qsamplethm}? \ The key point is that, in
Theorem \ref{promisebqpthm}, we wanted outputs that were explicit estimates of
$\Pr\left[  C\text{ accepts}\right]  $. \ And even if two estimates $p\neq
p^{\prime}$\ are extremely close, the states $\left\vert p\right\rangle $\ and
$\left\vert p^{\prime}\right\rangle $\ will still be orthogonal. \ This is
what necessitated using a gentle measurement, to break the entanglement
between the output and computation registers, and thereby allow safe
uncomputing. \ In Section \ref{COMPUTDP}, by contrast, we were content with
outputs that were superpositions $\left\vert \phi_{X}\right\rangle $, with our
estimates of probabilities implicitly encoded in $\left\vert \phi
_{X}\right\rangle $'s amplitude vector. \ As a result, a slight error in
estimating those probabilities would yield a state $\left\vert \phi^{\prime
}\right\rangle $\ such that $\left\langle \phi^{\prime}|\phi_{X}\right\rangle
\approx1$, and gentle measurement techniques were not needed (even if the
results were \textit{useful} for efficient implementation of gentle measurements).

Here is an interesting question that we leave open. \ Suppose a quantum
algorithm has a polynomial-time quantum subroutine $C$, which on each input
$X$, generates a sample from a probability distribution $\mathcal{D}_{X}%
$\ supported on a sparse set $S_{X}\subset\left\{  0,1\right\}  ^{m}$\ with
$\left\vert S_{X}\right\vert =n^{O\left(  1\right)  }$. \ Suppose also that
the output we want, on each input $X$, is a polynomial-size approximate
\textit{description} of $\mathcal{D}_{X}$: that is, a string $z_{X}$ that
lists approximations to those $\Pr_{\mathcal{D}_{X}}\left[  y\right]
$\ values that are far from zero, or some other representation from which
$\mathcal{D}_{X}$\ could be efficiently sampled. \ Is there then, necessarily,
an efficient way to implement a mapping of the form%
\[
\sum_{X}\alpha_{X}\left\vert X\right\rangle \rightarrow\sum_{X}\alpha
_{X}\left\vert X\right\rangle \left\vert z_{X}\right\rangle ,
\]
with no garbage?

In the special case where $C$ is a classical randomized algorithm, we can do
this by first picking a single polynomial-size random string $r$, and then
using $r$ as $C$'s randomness for \textit{every} input $X$\ in the
superposition, relying on amplification and the union bound to ensure that
$C$\ succeeds on every $X$ with overwhelming probability over the choice of
$r$. \ This is an instance of the well-known \textquotedblleft Adleman's
trick\textquotedblright\ \cite{adleman:bpp} from complexity theory, as used
for example to prove the containment $\mathsf{BPP}\subset\mathsf{P/poly}$.
\ The use of a single $r$ avoids any unwanted entanglement between $r$\ and
the $\left\vert X\right\rangle $\ and $\left\vert z_{X}\right\rangle $ registers.

But what about the general case, where $C$ is a quantum algorithm? \ Here
Adleman's trick clearly won't work, so a different idea is needed: perhaps the
use of a more sophisticated DP algorithm than the Laplace algorithm used to
prove Theorem \ref{promisebqpthm}.

\section{Open Problems\label{OPEN}}

This paper established a new bridge between the fields of differential privacy
and quantum measurement. \ But we've barely begun to explore what this bridge
can carry. \ Here are a few of our favorite open problems.\bigskip

\textbf{Basic Questions}

\begin{enumerate}
\item[(1)] Can we generalize our main result, to show that $\varepsilon$-DP on
product states implies $O\left(  \varepsilon\sqrt{n}\right)  $-gentleness on
product states for \textit{any} quantum measurement, rather than only for
product measurements? \ One natural first step would be to prove this for LOCC
measurements. \ Another would be to show that $\varepsilon$-triviality on
product states implies $O\left(  \varepsilon\right)  $-gentleness (or even
just $O\left(  \varepsilon\sqrt{n}\right)  $-gentleness) on product states.
\ Note that there are two questions here: first, given a measurement $M$
that's $\varepsilon$-DP on product states, can we implement $M$ (meaning,
produce the correct output probabilities on \textit{all} states, not just
product states), in a way that happens to be $O\left(  \varepsilon\sqrt
{n}\right)  $-gentle\ when restricted to product states? \ And second, can we
implement some \textit{other} measurement $M^{\prime}$ that has essentially
the same output probabilities as $M$ on product states,\footnote{If $M$ is
$\varepsilon$-trivial, then to get a nontrivial question here, we demand
relative error on product states that's \textit{less} than $\varepsilon$.} and
that's also $O\left(  \varepsilon\sqrt{n}\right)  $-gentle\ on product states,
but that could be arbitrarily different from $M$ on entangled states?

\item[(2)] In this paper, we used our DP/gentleness connection, together with
known results from DP, to design and analyze a new quantum measurement
procedure of independent interest (namely, QPMW). \ Can we also go in the
opposite direction, and use known results from quantum measurement theory to
say anything new about classical differential privacy?

\item[(3)] Does $\alpha$-gentleness\ imply $O\left(  \alpha\right)  $-DP not
merely for all $\alpha\ll\frac{1}{4}$, but for all $\alpha\ll\frac{1}{2}$?

\item[(4)] In quantum differential privacy, how much can we do in the
\textquotedblleft local model,\textquotedblright\ wherein $n$ users\ are each
individually responsible for ensuring the privacy of their respective states
$\rho_{i}$, by submitting an obscured state $\widetilde{\rho}_{i}$\ to the
database? \ Also, how does the local model relate to the model wherein we can
only perform measurements on the $n$ states separately, for example because of
experimental limitations?\bigskip
\end{enumerate}

\textbf{Shadow Tomography}

\begin{enumerate}
\item[(5)] What is the true sample complexity of shadow tomography? \ Recall
that this paper's upper bound had the form $\left(  \log m\right)  ^{2}\left(
\log d\right)  ^{2}/\varepsilon^{O\left(  1\right)  }$, where $m$ is the
number of measurements and $d$ is the Hilbert space dimension. \ By contrast,
the best known lower bound is $\Omega\left(  \frac{\min\left\{  d^{2},\log
m\right\}  }{\varepsilon^{2}}\right)  $ \cite{aar:shadow}. \ Is any dependence
on $d$ needed? \ Theorem \ref{sqrtlogdlb}\ showed that, if a shadow tomography
procedure is also \textit{gentle} on product states, then it needs
$\widetilde{\Omega}\left(  \sqrt{\log m}\left(  \log d\right)  ^{1/4}\right)
$ samples. \ Meanwhile, Theorem \ref{adatoST}\ showed that if the procedure is
online, then it needs $\widetilde{\Omega}\left(  \sqrt{\min\left\{  m,\log
d\right\}  }\right)  $\ samples. \ But what if we drop these additional
requirements, or relax to gentleness on states of the form $\rho^{\otimes n}$?
\ We stress that any lower bound will need to be \textquotedblleft inherently
quantum,\textquotedblright\ since classically, in the offline and non-gentle
setting, an $O\left(  \frac{\log m}{\varepsilon^{2}}\right)  $\ upper bound
holds independent of $d$ \cite{aar:shadow}.

\item[(6)] Is it possible to do shadow tomography using incoherent
measurements (i.e., measuring each copy of $\rho$\ separately)? \ If so, this
would bring shadow tomography much closer to experimental feasibility.\bigskip
\end{enumerate}

\textbf{Composition}

\begin{enumerate}
\item[(7)] What can we say about the composition of quantum DP algorithms (see
Appendix \ref{CANTCOMPOSE} for further discussion)? \ In the regime where DP
implies gentleness, but where the probabilities of outcomes are too small for
Lemma \ref{facepalmlemma} to apply, can we compose DP algorithms in a way that
preserves not only accuracy,\ but also a multiplicative privacy guarantee?
\ Also, outside the regime where DP implies gentleness, is there any way to
get around the counterexample of Appendix \ref{CANTCOMPOSE}, and compose
quantum DP algorithms in a way that preserves accuracy (to say nothing about
privacy)? \ For example, what about \textquotedblleft
non-black-box\textquotedblright\ composition methods?

\item[(8)] Does an \textquotedblleft advanced composition
theorem\textquotedblright\ (see \cite{drv}) hold for gentleness, or at least
for the particular gentle measurements that arise from our connection between
gentleness and DP? \ In other words, if we perform $\alpha$-gentle
measurements $k$ times in sequence, then can we say that with high probability
over the measurement outcomes, our states have been damaged by only
$O(\alpha\sqrt{k})$\ in trace distance, rather than $O\left(  \alpha k\right)
$? \ If so, we could likely improve the sample complexity of our QPMW shadow
tomography procedure, say from $\left(  \log m\right)  ^{2}\left(  \log
d\right)  ^{2}/\varepsilon^{O\left(  1\right)  }$\ to $\left(  \log m\right)
^{2}\left(  \log d\right)  /\varepsilon^{O\left(  1\right)  }$.\bigskip
\end{enumerate}

\textbf{Computational Complexity}

\begin{enumerate}
\item[(9)] Is there any example of a polynomial-time classical randomized
algorithm that is $\varepsilon$-DP\ for some $\varepsilon\ll\frac{1}{\sqrt{n}%
}$, but does \textit{not} give rise to a gentle measurement on product states
that can be implemented in polynomial time, because of the issue with the
computational complexity of QSampling discussed in Section \ref{COMPUT}? \ If
so, are there any \textquotedblleft natural\textquotedblright\ examples of
such DP algorithms? \ It would be of interest to give such examples either
conditionally (say, based on a cryptographic assumption), or unconditionally
in the black-box model.

\item[(10)] Can we show, under some plausible cryptographic assumption, that
$d^{\Omega\left(  1\right)  }$\ computation time is needed for shadow
tomography, without the additional constraints that the procedure be online or gentle?

\item[(11)] Can we generalize Theorem \ref{promisebqpthm}, to give more
examples of how quantum algorithms can be safely invoked as subroutines by
other quantum algorithms using gentle measurement procedures? \ What about the
problem mentioned at the end of Section \ref{PROMISEBQP}?
\end{enumerate}

\section{Acknowledgments}

We thank Lijie Chen for insightful comments, including catching an error in a
previous analysis of QPMW;\ Thomas Steinke, Uri Stemmer, and Jon Ullman for
helpful conversations about lower bounds and hardness results for differential
privacy and adaptive data analysis; Andris Ambainis, Mark Bun, Dana
Moshkovitz, and Fabio Sciarrino for helpful conversations; and David Mestel
and the anonymous reviewers for their comments.

\bibliographystyle{plain}
\bibliography{thesis}

\section{Appendix: DP, Gentleness, and Triviality on Separable versus
Entangled States\label{SEPVSENT}}

What is the relationship between a measurement's being differentially private
(or trivial, or gentle) on product states, and its having those same
properties on arbitrary states? \

In this appendix, we'll give examples of measurements $M$ on $n$ qubits that are

\begin{enumerate}
\item[(1)] $\frac{1}{2^{\left(  n-1\right)  /2}}$-trivial, $\frac
{1}{2^{\left(  n-1\right)  /2}}$-DP, and $\frac{1}{2^{\left(  n-1\right)  /2}%
}$-gentle on all product states (and indeed, on all separable mixed states),
and yet

\item[(2)] extremely far from being trivial, private, or gentle on certain
entangled states.
\end{enumerate}

In some sense, this will answer our question \textquotedblleft for
complexity-theoretic purposes\textquotedblright: doing nothing whatsoever on
separable states, to some fixed exponential precision, is compatible with
enormous departures from DP, gentleness, and triviality on entangled states.

Nevertheless, we'll then show that there's \textit{some} level of triviality,
DP, and gentleness on product states that implies the same properties on
arbitrary states---but strikingly, that this would be false in quantum
mechanics over $\mathbb{R}$ rather than over $\mathbb{C}$.

\subsection{Separations\label{SEP2}}

Our first example separates DP on product states from DP on arbitrary states.

\begin{proposition}
\label{dpprodnotdp}There exists an $n$-qubit measurement $M$ that's $O\left(
2^{-n/2}\right)  $-trivial (and hence, $O\left(  2^{-n/2}\right)  $-DP) on
product states, but not $\varepsilon$-DP for any $\varepsilon$\ on arbitrary states.
\end{proposition}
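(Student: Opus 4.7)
I would construct $M$ as a two-outcome POVM that mixes the identity with the projector onto a maximally entangled state. Assume $n$ is even (the odd case follows by adding a dummy qubit). Let $|\Phi\rangle := 2^{-n/4}\sum_{x\in\{0,1\}^{n/2}}|x\rangle|x\rangle$ be the canonical maximally entangled state across the bipartition of the $n$ qubits into their first and last halves, let $\Pi := |\Phi\rangle\langle\Phi|$, and define
\[
E_1 := \frac{I+\Pi}{2}, \qquad E_2 := \frac{I-\Pi}{2}.
\]
Both are manifestly positive semidefinite and sum to $I$, so $M := (E_1,E_2)$ is a valid POVM.

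First I would verify $O(2^{-n/2})$-triviality on product states. The folklore bound $|\langle\Phi|\alpha\rangle|\beta\rangle|^2 \le 2^{-n/2}$ for any bipartite product pure state (immediate from the identity $\langle\Phi|\alpha\beta\rangle = 2^{-n/4}\langle\alpha^{\ast}|\beta\rangle$ and Cauchy--Schwarz) applies a fortiori to $n$-qubit product pure states, and hence by convexity $\operatorname{Tr}(\Pi\rho)\in[0,2^{-n/2}]$ for every product mixed state $\rho$. Thus $\operatorname{Tr}(E_1\rho)\in[1/2,\,1/2+2^{-n/2-1}]$ and $\operatorname{Tr}(E_2\rho)\in[1/2-2^{-n/2-1},\,1/2]$ on product states. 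The ratio of probabilities of either outcome between any two product states is then at most $1/(1-2^{-n/2}) = 1+O(2^{-n/2})$, giving $O(2^{-n/2})$-triviality on product states (and hence $O(2^{-n/2})$-DP on product states).

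Next I would exhibit a pair of neighboring \emph{entangled} states that separate the probability of outcome $E_2$. Take $\rho := |\Phi\rangle\langle\Phi|$ and $\sigma := (X_1 \otimes I_{n-1})\,\rho\,(X_1 \otimes I_{n-1})$, where $X_1$ is a Pauli bit-flip on the first qubit alone. Since $\sigma$ is obtained from $\rho$ by a unitary acting on a single register, $\rho$ and $\sigma$ are neighbors. A direct calculation gives $\langle\Phi|(X_1\otimes I_{n-1})|\Phi\rangle = 0$ (each term $\langle x|x\oplus e_1\rangle$ vanishes), so $\operatorname{Tr}(\Pi\rho)=1$ while $\operatorname{Tr}(\Pi\sigma)=0$. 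Consequently $\Pr[M(\rho)=2] = 0$ while $\Pr[M(\sigma)=2] = 1/2$, so $M$ cannot be $\varepsilon$-DP for any finite $\varepsilon$.

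There is no real obstacle here; the only point that requires care is that ``neighbor'' allows a general superoperator on a single qubit, so modifying just the first qubit via $X_1$ is permitted---this is precisely what lets a one-qubit perturbation rotate $|\Phi\rangle$ to an orthogonal entangled state and thereby annihilate all overlap with $\Pi$ in a single neighbor step.
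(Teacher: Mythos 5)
Your construction is correct and genuinely different from the paper's. The paper uses a sequential/conditional procedure: it projects each of $(n-1)/2$ disjoint qubit pairs onto the Bell state $\frac{|00\rangle + |11\rangle}{\sqrt 2}$, and only if all these projections succeed does it measure and output the $n^{\text{th}}$ qubit (otherwise outputting a uniformly random bit). There, near-triviality on product states comes from the product of per-pair overlap bounds $1/\sqrt 2$ yielding $2^{-(n-1)/2}$, and privacy fails on states of the form (Bell pairs)$\otimes |v\rangle$ because $v$ leaks with certainty. You instead use a single static two-outcome POVM $\left(\frac{I+\Pi}{2}, \frac{I-\Pi}{2}\right)$ with $\Pi$ the projector onto the maximally entangled state $|\Phi\rangle$ across the half-half bipartition, getting the overlap bound $\operatorname{Tr}(\Pi\rho)\le 2^{-n/2}$ for product $\rho$ in one shot from Cauchy–Schwarz, and then exhibit the unitary single-qubit neighbor $\sigma = (X_1\otimes I)\rho(X_1\otimes I)$ of $\rho = |\Phi\rangle\langle\Phi|$ with $\operatorname{Tr}(\Pi\sigma)=0$, so that outcome $2$ has probability $0$ on $\rho$ but $1/2$ on $\sigma$. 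Your version is arguably tighter and cleaner for establishing the bare existence claim: it requires no conditional branching, uses the identity-plus-projector trick to keep every outcome's probability bounded away from $0$ on product states (so the ratio bound is immediate), and the neighbor pair is a unitary neighbor (the strongest of the notions the paper later discusses in Appendix \ref{GENNEIGHBOR}), making the violation unambiguous. The paper's construction, though bulkier, has the pedagogical advantage of showing explicitly \emph{what} leaks (the $n^{\text{th}}$ qubit's value) and is structured so that the CHSH-based variant in Proposition \ref{loccbonus} (which forces the separating measurement to be mixture-of-products) falls out as a modification; your static POVM does not admit that modification as directly. Both constructions rely on the same underlying phenomenon: maximally entangled states have exponentially small overlap with any product state, yet entangling measurements can detect the difference between nearby entangled states with certainty.
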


\begin{proof}
For simplicity, let $n$ be odd,\ and group the first $n-1$ qubits into
$\frac{n-1}{2}$ pairs. \ Then the measurement $M$ will first project each of
these pairs onto the Bell pair $\frac{\left\vert 00\right\rangle +\left\vert
11\right\rangle }{\sqrt{2}}$. \ If all $\frac{n-1}{2}$ projections succeed,
then $M$ measures the $n^{th}$ qubit in the $\left\{  \left\vert
0\right\rangle ,\left\vert 1\right\rangle \right\}  $ basis and returns the
result. \ Otherwise $M$ returns a uniformly random bit.

Clearly, on states of the form%
\[
\frac{\left\vert 00\right\rangle +\left\vert 11\right\rangle }{\sqrt{2}%
}\otimes\cdots\otimes\frac{\left\vert 00\right\rangle +\left\vert
11\right\rangle }{\sqrt{2}}\otimes\left\vert v\right\rangle ,
\]
this measurement is not $\varepsilon$-DP for any $\varepsilon$, since (for
example) it completely leaks whether $\left\vert v\right\rangle =\left\vert
0\right\rangle $\ or $\left\vert v\right\rangle =\left\vert 1\right\rangle $.

On the other hand, we claim that $M$ is $O\left(  2^{-n/2}\right)  $-DP on
product states. \ To see this, observe that every $2$-qubit product state\ has
at most $\frac{1}{\sqrt{2}}$\ projection onto the Bell pair $\frac{\left\vert
00\right\rangle +\left\vert 11\right\rangle }{\sqrt{2}}$. \ So when we apply
$M$ to an $n$-qubit product state, the $\frac{n-1}{2}$ projections all succeed
with probability at most $2^{-\left(  n-1\right)  /2}$---and if at least one
projection fails, then $M$'s output is random. \ Thus, if $\rho$\ and $\sigma
$\ are any two product states, then for all $y\in\left\{  0,1\right\}  $,%
\[
\frac{\Pr\left[  M\left(  \rho\right)  \text{ outputs }y\right]  }{\Pr\left[
M\left(  \sigma\right)  \text{ outputs }y\right]  }\leq\frac{\frac{1}%
{2}+2^{-\left(  n-1\right)  /2}}{\frac{1}{2}-2^{-\left(  n-1\right)  /2}%
}=1+O\left(  2^{-n/2}\right)  .
\]

\end{proof}

As a bonus, we can adapt Proposition \ref{dpprodnotdp} to separate DP on
product states from DP on arbitrary states, even in the special case where the
measurement $M$\ is mixture-of-products.

\begin{proposition}
\label{loccbonus}There exists an $n$-qubit mixture-of-products measurement $M$
that's $\frac{1}{\exp\left(  n\right)  }$-trivial (or equivalently, $\frac
{1}{\exp\left(  n\right)  }$-DP) on product states, but is not $\varepsilon
$-DP for any $\varepsilon<\exp\left(  n\right)  $\ on arbitrary states.
\end{proposition}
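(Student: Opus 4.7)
My plan is to adapt the construction of Proposition \ref{dpprodnotdp}, replacing the Bell pair projection (which is not a product measurement) with a mixture-of-products subroutine that nevertheless distinguishes Bell pairs from product states with exponentially good bias. The two-qubit primitive I would use is: flip a coin to choose either the computational or the Hadamard basis, measure both qubits in that chosen basis, and declare ``match'' iff the two outcomes agree. This is manifestly mixture-of-products. Because $\frac{|00\rangle+|11\rangle}{\sqrt{2}} = \frac{|++\rangle+|--\rangle}{\sqrt{2}}$, the primitive matches with probability $1$ on $|\Phi^+\rangle$. On any two-qubit product state with Bloch vectors $\vec{a},\vec{b}$ (of norm at most $1$), a short calculation gives match probability $\tfrac{1}{2} + \tfrac{a_x b_x + a_z b_z}{4}$, which by Cauchy-Schwarz is at most $3/4$.

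Next, assuming $n$ is odd, I would group the first $n-1$ qubits into $(n-1)/2$ pairs and define $M$ as follows: for each pair $k$, draw an independent random basis $b_k$ and apply the primitive; separately measure the $n^{th}$ qubit in the computational basis to obtain $y_n$. Output $y_n$ if all $(n-1)/2$ pairs match, and otherwise output a uniformly random bit. The procedure is mixture-of-products because the POVMs on all registers are determined by the independent $b_k$'s chosen in advance and applied to individual qubits, with the match-check and random-bit fallback being classical post-processing with auxiliary randomness.

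For the negative direction, on the entangled states $|\Phi^+\rangle^{\otimes (n-1)/2} \otimes |v\rangle$ with $v\in\{0,1\}$, every pair matches with probability $1$ regardless of $b_k$, so $M$ deterministically outputs $v$. These two states are neighbors (differing only in the $n^{th}$ register), so $M$ is not $\varepsilon$-DP for any finite $\varepsilon$, and in particular not for any $\varepsilon < \exp(n)$. For triviality on product states, note that given the $b_k$'s, the per-pair match events act on disjoint registers of a product state and are therefore independent; since the $b_k$'s themselves are independent, the all-match probability factorizes across pairs and is bounded by $(3/4)^{(n-1)/2} =: \alpha_n$. Hence for any product states $\rho,\sigma$ and any $y\in\{0,1\}$, both $\Pr[M(\rho){=}y]$ and $\Pr[M(\sigma){=}y]$ lie in $[\tfrac12 - \tfrac12\alpha_n,\ \tfrac12+\tfrac12\alpha_n]$, giving ratio at most $(1+\alpha_n)/(1-\alpha_n) = 1 + O(\alpha_n) = 1 + 1/\exp(\Omega(n))$, which is $\tfrac{1}{\exp(n)}$-triviality as required. (The bound extends from pure to mixed product states by Proposition \ref{mixeddp}, but it also follows directly from the Bloch-vector calculation, which is valid for $|\vec{a}|,|\vec{b}|\leq 1$.)

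The main technical step is really just the $3/4$ bound for a single pair, which I would isolate as a self-contained lemma; everything else is routine tensorization together with the book-keeping that turns ``all-match probability at most $\alpha_n$'' into an $\exp(-\Omega(n))$ multiplicative bound on the output ratio. One mild care point is the parity of $n$, which is handled by the odd-$n$ assumption and absorbed into the proposition's $\exp(n)$ via a constant.
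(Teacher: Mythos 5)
Your proposal is correct and takes a genuinely different (and slightly cleaner) route than the paper. Both proofs follow the same overall plan---replace the Bell-pair projection of Proposition \ref{dpprodnotdp} by a mixture-of-products test per pair, and only measure the $n$th qubit if enough pairs pass---but they differ in the choice of per-pair test. The paper uses the CHSH game, which a Bell pair wins with probability $\cos^2(\pi/8)\approx 0.85$ while any unentangled pair wins with probability at most $3/4$; since the Bell-pair success rate is strictly below $1$, the decision rule has to be ``win on at least an $0.8$ fraction of pairs,'' and a Chernoff bound is needed on each side of the $(0.8,0.85)$ gap to conclude. Your two-basis agreement test is cleaner for exactly this reason: the Bell pair passes with probability \emph{exactly} $1$, so you may demand that \emph{all} pairs pass, and the exponential bound $(3/4)^{(n-1)/2}$ on product states then follows directly from independence of per-pair outcomes on a tensor-product input, with no concentration argument. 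This also makes the negative direction stronger: every pair matches deterministically on the Bell-pair inputs, so $M$ outputs $v$ with certainty and is not $\varepsilon$-DP for \emph{any} finite $\varepsilon$, whereas the paper's Chernoff-based version only defeats $\varepsilon$ up to order $n$. Your Bloch-vector/Cauchy-Schwarz lemma, $\Pr[\text{match}]=\tfrac12+\tfrac{a_x b_x + a_z b_z}{4}\le \tfrac34$, is indeed the right piece to isolate, and since it holds for $\lvert\vec a\rvert,\lvert\vec b\rvert\le 1$ it handles mixed single-qubit registers for free without any separate appeal to convexity.
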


\begin{proof}
We simply modify the measurement $M$ from the proof of Proposition
\ref{dpprodnotdp}, so that now $M$ tries to use each of the $\frac{n-1}{2}$
qubit pairs to violate a Bell inequality---say, by playing the so-called
\textit{CHSH game} \cite{chsh}, which can be won with probability $\cos
^{2}\frac{\pi}{8}\approx0.85$\ using\ the entangled state $\frac{\left\vert
00\right\rangle +\left\vert 11\right\rangle }{\sqrt{2}}$, but with at most
$\frac{3}{4}$\ probability using any unentangled state.

If $M$ wins at the CHSH game, on (say) at least an $0.8$ fraction of the
$\frac{n-1}{2}$\ qubit pairs, then $M$ returns the result of measuring the
$n^{th}$ qubit in the $\left\{  \left\vert 0\right\rangle ,\left\vert
1\right\rangle \right\}  $\ basis. \ Otherwise, $M$ returns a uniformly random bit.

Again, on states of the form%
\[
\frac{\left\vert 00\right\rangle +\left\vert 11\right\rangle }{\sqrt{2}%
}\otimes\cdots\otimes\frac{\left\vert 00\right\rangle +\left\vert
11\right\rangle }{\sqrt{2}}\otimes\left\vert v\right\rangle ,
\]
this measurement is not $\varepsilon$-DP for any $\varepsilon<\exp\left(
n\right)  $, since it leaks whether $\left\vert v\right\rangle =\left\vert
0\right\rangle $\ or $\left\vert v\right\rangle =\left\vert 1\right\rangle $
with all but exponentially small probability.

But again, on product states, we claim that $M$ is $\frac{1}{\exp\left(
n\right)  }$-trivial. \ For by a Chernoff bound, whenever $M$ is applied to a
product state, the $n^{th}$\ qubit is measured with at most $\frac{1}%
{\exp\left(  n\right)  }$\ probability.
\end{proof}

The measurements $M$\ from Propositions \ref{dpprodnotdp}\ and \ref{loccbonus}
don't have product form, so we can't apply Theorem \ref{main}\ to them to
conclude automatically that they're $\frac{1}{\exp\left(  n\right)  }$-gentle
on product states. \ Nevertheless, it's not hard to verify directly that they
\textit{are} $\frac{1}{\exp\left(  n\right)  }$-gentle on product states, and
even on separable mixed states.

By contrast, Corollary \ref{gentledpall} says that, if $M$ is $\alpha$-gentle
on all states, then $M$\ is $\ln\left(  \frac{1+4\alpha}{1-4\alpha}\right)
$-DP\ on all states. \ But $M$ is \textit{not} $\varepsilon$-DP on all states,
for any $\varepsilon>0$\ (in the case of Proposition \ref{dpprodnotdp}) or for
any $\varepsilon<\exp\left(  n\right)  $\ (in the case of Proposition
\ref{loccbonus}). \ So summarizing, we obtain the following corollary of
Propositions \ref{dpprodnotdp}\ and \ref{loccbonus}, which dramatically
separates \textit{gentleness} on product states from \textit{gentleness} on
all states:

\begin{corollary}
\label{gentlesep}There exists an $n$-qubit measurement $M$ that's $\frac
{1}{\exp\left(  n\right)  }$-gentle (and indeed, $\frac{1}{\exp\left(
n\right)  }$-trivial) on product states and indeed on separable mixed states,
but not $\alpha$-gentle for any $\alpha<\frac{1}{4.01}$\ on arbitrary states.
\ We can even take this measurement to be mixture-of-products.
\end{corollary}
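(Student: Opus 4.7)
The plan is to take the mixture-of-products measurement $M$ constructed in Proposition \ref{loccbonus}, which is already shown to be $\frac{1}{\exp(n)}$-trivial on product states but not $\varepsilon$-DP on arbitrary states for any $\varepsilon < \exp(n)$, and upgrade both halves of that guarantee. The positive side needs to be strengthened from triviality on product states to gentleness on all separable states; the negative side needs to be converted from failure of DP to failure of gentleness on arbitrary states. The negative direction is almost immediate from the contrapositive of Corollary \ref{gentledpall}; the positive direction is where the real work lies.

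First I would lift $\frac{1}{\exp(n)}$-triviality from product states to arbitrary separable states by pure convexity: for separable $\rho = \sum_i p_i \rho_i^{\rm prod}$ and $\sigma = \sum_j q_j \sigma_j^{\rm prod}$, the output probabilities $\Pr[M(\rho)=y]$ and $\Pr[M(\sigma)=y]$ are convex combinations of $\Pr[M(\rho_i^{\rm prod})=y]$ and $\Pr[M(\sigma_j^{\rm prod})=y]$, whose pairwise ratios all lie within $e^{1/\exp(n)}$; hence so do the ratios of the convex combinations. For gentleness, I would exhibit an implementation via the canonical Kraus operators $A_y = \sqrt{E_y}$. The key structural computation is that for any product pure state $|\psi\rangle$, the projection onto the ``enough CHSH games won'' subspace has squared amplitude at most $1/\exp(n)$, so $|\psi\rangle$ lives almost entirely in the eigenspace of $E_y$ with eigenvalue $\tfrac12$. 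Decomposing $|\psi\rangle$ into eigen-components of $E_y$ and computing directly then gives $\sqrt{E_y}\,|\psi\rangle = \tfrac{1}{\sqrt 2}|\psi\rangle + |\eta\rangle$ with $\||\eta\rangle\| = O(1/\exp(n/2))$. Normalizing by $\sqrt{\langle\psi|E_y|\psi\rangle} = \tfrac{1}{\sqrt 2} + O(1/\exp(n))$ and applying Proposition \ref{puretd} delivers a trace-distance bound of $O(1/\exp(n/2))$ between $|\psi\rangle$ and its post-measurement state. Extension from pure product states to arbitrary separable mixed states follows from the Bayes-weighted convexity formula for post-measurement states: both the reweighting factors and the pointwise damage bounds stay exponentially small, because every output probability is within $O(1/\exp(n))$ of $\tfrac12$.

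Finally I would contrapose Corollary \ref{gentledpall}: if $M$ were $\alpha$-gentle on \emph{all} states for any $\alpha < \tfrac{1}{4.01}$, then $M$ would be $\ln\!\bigl(\tfrac{1+4\alpha}{1-4\alpha}\bigr)$-DP on all states, a bound at most $\ln(801) < 7$, an absolute constant independent of $n$. But Proposition \ref{loccbonus} says $M$ fails to be $\varepsilon$-DP on arbitrary states for any $\varepsilon < \exp(n)$, a contradiction for all sufficiently large $n$. The main obstacle is the middle step: verifying that the square-root Kraus implementation is gentle uniformly over product pure states, and then propagating this through the Bayesian-weighted convexity to all separable mixed states. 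Conceptually this only reuses the exponentially small Bell/CHSH acceptance probability that already drove Proposition \ref{loccbonus}, so it is bookkeeping rather than a new idea.
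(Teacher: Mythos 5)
Your proposal is correct and follows essentially the same route as the paper: take the measurement from Proposition \ref{loccbonus}, argue it is exponentially gentle on separable states, then contrapose Corollary \ref{gentledpall} against the $\exp(n)$ failure of DP. The paper leaves the gentleness claim as "not hard to verify directly"; you actually verify it, which is a genuine contribution, though one phrase in your verification should be tightened. For a mixture-of-products $M$, the POVM element $E_y$ has the form $\tfrac{1}{2}\mathbb{I} + \tfrac{1}{2}\Delta$ with $\Delta = \sum_q p_q\bigl(P_{q,\mathrm{win},1}-P_{q,\mathrm{win},0}\bigr)$, and there is no fixed ``enough CHSH games won'' subspace--it depends on the question tuple $q$. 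What you actually have is not that $\lvert\psi\rangle$ lies in a $\tfrac{1}{2}$-eigenspace of $E_y$, but that $\lVert\Delta\lvert\psi\rangle\rVert = \exp(-\Omega(n))$ for product $\lvert\psi\rangle$, obtained by summing $\lVert P_{q,\mathrm{win},b}\lvert\psi\rangle\rVert = \sqrt{\langle\psi\lvert P_{q,\mathrm{win},b}\rvert\psi\rangle}$ over $q$. From there your conclusion does follow, via the scalar bound $\bigl\lvert\sqrt{(1+\mu)/2} - 1/\sqrt{2}\bigr\rvert \le \lvert\mu\rvert/\sqrt{2}$ applied on $\Delta$'s spectrum, giving $\bigl\lVert(\sqrt{E_y}-\tfrac{1}{\sqrt{2}}\mathbb{I})\lvert\psi\rangle\bigr\rVert \le \tfrac{1}{\sqrt 2}\lVert\Delta\lvert\psi\rangle\rVert$; note this is not the naive claim that $\langle\psi\lvert E_y\rvert\psi\rangle\approx\tfrac12$ implies $\sqrt{E_y}\lvert\psi\rangle\approx\tfrac{1}{\sqrt 2}\lvert\psi\rangle$, which would be false in general (consider weight split between eigenvalues $0$ and $1$). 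The extension to separable mixed states by Bayes-reweighted convexity, and the contrapositive of Corollary \ref{gentledpall} for the negative half, are both exactly right.
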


From Proposition \ref{notgentle}, together with Lemma \ref{dpgentleprojective}%
, we already get that the measurement $L_{n/2}$\ is $O\left(  1/\sqrt
{n}\right)  $-gentle on product states\ despite not being $\frac{1}{3}$-gentle
on arbitrary states. \ However, Corollary \ref{gentlesep}\ gives an
exponentially more dramatic separation between gentleness on product states
and gentleness on arbitrary states.

It will follow from Corollary \ref{prodtoalleps}, proved in Section
\ref{COMPLEX}, that these exponential separations, between triviality, DP, and
gentleness on product states and the same parameters on arbitrary states,\ are
the largest separations possible, up to the exact value of the exponential
scaling factor.

Note also that the following is an immediate consequence of convexity and of
Proposition \ref{convextd}:

\begin{proposition}
\label{sepmixed}If $M$ is $\varepsilon$-trivial or $\varepsilon$-DP on all
product states, then $M$ is also $\varepsilon$-trivial or $\varepsilon$-DP
respectively on all separable mixed states.
\end{proposition}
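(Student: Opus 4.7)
The plan is to exploit two facts: (i) the measurement probability $\Pr[M(\rho) = y] = \operatorname{Tr}(E_y \rho)$ is linear in $\rho$, and (ii) by definition every separable mixed state is a convex combination of product states, while every superoperator applied to a single register preserves this convex-combination structure. I will handle triviality first, since it does not involve the notion of neighbors, and then handle DP, which requires a little bookkeeping about neighbors.

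For $\varepsilon$-triviality on separable states, I will fix any two separable states and write them as $\rho = \sum_i p_i \rho_i$ and $\sigma = \sum_j q_j \sigma_j$ with each $\rho_i, \sigma_j$ a product state. By the assumed product-state triviality, $\Pr[M(\rho_i) = y] \leq e^\varepsilon \Pr[M(\sigma_j) = y]$ for every $i, j, y$. Averaging the right-hand side against $\{q_j\}$ and using linearity gives $\Pr[M(\rho_i) = y] \leq e^\varepsilon \Pr[M(\sigma) = y]$ for every $i$; averaging against $\{p_i\}$ gives the desired bound $\Pr[M(\rho) = y] \leq e^\varepsilon \Pr[M(\sigma) = y]$.

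For $\varepsilon$-DP on separable states, the extra ingredient is that ``neighbor'' is now defined via a single-register superoperator rather than by a single differing register. So let $\rho$ and $\sigma$ be neighboring separable states; without loss of generality, $\sigma = (I^{\otimes k-1} \otimes S \otimes I^{\otimes n-k})(\rho)$ for some superoperator $S$ acting on register $k$. Fixing any product-state decomposition $\rho = \sum_i p_i \rho_{i,1} \otimes \cdots \otimes \rho_{i,n}$, linearity of $S$ gives
\[
\sigma = \sum_i p_i \, \rho_{i,1} \otimes \cdots \otimes S(\rho_{i,k}) \otimes \cdots \otimes \rho_{i,n},
\]
which is a convex combination of product states, each of which differs from the corresponding summand of $\rho$ only on register $k$. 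Hence each pair is a neighbor in the product-state sense, so product-state $\varepsilon$-DP applies termwise, and averaging with weights $p_i$ via linearity yields $\Pr[M(\rho) = y] \leq e^\varepsilon \Pr[M(\sigma) = y]$.

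There is no real obstacle here; the only point worth flagging is that one must use the superoperator form of ``neighbor'' rather than the ``single differing register'' form, since the latter is defined only for product states. As noted in the text, the argument is essentially immediate from convexity and Proposition \ref{convextd}, with the DP case requiring the above additional observation that the single-register superoperator respects any fixed product-state decomposition.
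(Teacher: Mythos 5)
Your proof is correct and takes the same route the paper intends: the paper dispatches this proposition in a single line as ``an immediate consequence of convexity,'' and your argument is exactly the careful unpacking of that claim, using linearity of $\rho \mapsto \Pr[M(\rho)=y]$ to average over a product-state decomposition. Your explicit handling of the DP case is a worthwhile addition that the paper omits: you correctly observe that applying the single-register superoperator termwise to a fixed product decomposition of $\rho$ yields a matched decomposition of $\sigma$ into product-state neighbors, which is precisely what lets the product-state DP bound be averaged.
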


Beware that $\alpha$-gentleness on product states does \textit{not}
automatically imply $\alpha$-gentleness on separable mixed states (even though
in the examples above the two happened to go together); the measurement
$L_{\sigma}$\ is a counterexample.

As a final remark, one might wonder whether the counterexamples of
Propositions \ref{dpprodnotdp}\ and \ref{loccbonus} and Corollary
\ref{gentlesep} have classical probabilistic analogues. \ In other words, is
there a separation between DP on product distributions, and DP on arbitrary
distributions? \ Or the analogous question for triviality? \ We now observe
that the answer is no. \ Indeed, this is just a special case of Proposition
\ref{sepmixed} above. \ Every probability distribution can be written as a
convex combination of product distributions (indeed, point distributions), and
DP and triviality are both closed under convex combinations.\footnote{Again,
gentleness is the outlier, failing to be closed under convex combinations.
\ It's not hard to show, by a classical analogue of Lemma \ref{mixtrick}, that
the only classical algorithms that are gentle on arbitrary distributions
$\mathcal{D}$ are close to trivial. \ But \emph{every} algorithm is, or can be
made, gentle on classical computational basis states.}

Why is the quantum case different? \ Because, while DP is closed under convex
combinations, it's \textit{not} closed under superpositions. \ The CHSH game
provides one example of this: a certain measurement\ has a behavior on the
Bell pair $\frac{\left\vert 00\right\rangle +\left\vert 11\right\rangle
}{\sqrt{2}}$\ that's \textit{not} a convex combination of its behaviors on the
components $\left\vert 00\right\rangle $\ and $\left\vert 11\right\rangle
$---so that the measurement can fail to be DP on the superposition, despite
being DP on the components. \ Thus, the separation between DP on product
states and DP on arbitrary states is a quantum phenomenon.

\subsection{Relationships\label{COMPLEX}}

We'll now show that, despite the separating examples in the last section, a
measurement's being $\varepsilon$-trivial\ on product states for
\textit{extremely} small values of $\varepsilon$\ (say, $\varepsilon\ll
\frac{1}{\left(  2\sqrt{2}\right)  ^{n}}$), really does imply its being nearly
trivial on arbitrary states (and hence DP and gentle as well). \ Intriguingly,
we'll also show that this\ depends on the fact that amplitudes in quantum
mechanics can be complex rather than only real.

Our first claim is that \textit{any measurement }$M$\textit{\ that accepts
every product state with the same probability }$p$\textit{, in fact accepts
every state with probability }$p$\textit{.} \ We do not know whether this was
known before; in any case, we cannot resist including a strikingly simple
proof for completeness. \ Our proof uses the following result of Braunstein et
al. \cite{bcjlps}:

\begin{theorem}
[Braunstein et al.\ \cite{bcjlps}]\label{densitythm}In any finite-dimensional
tensor product Hilbert space (on any number of registers), the separable mixed
states have positive density within the set of all mixed states.
\end{theorem}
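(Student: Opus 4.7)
The plan is to show that the convex set of separable density matrices has nonempty interior within the affine hyperplane of trace-one Hermitian matrices on the full tensor product Hilbert space. Since any finite-dimensional convex set with nonempty interior in its ambient affine space has positive Lebesgue measure there, this suffices to establish the theorem.

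The key step is the observation that the product rank-one projectors $\left\{ |\psi_{1}\rangle\langle\psi_{1}| \otimes \cdots \otimes |\psi_{n}\rangle\langle\psi_{n}| \right\}$ affinely span the real vector space of Hermitian operators on $\left(\mathbb{C}^{d}\right)^{\otimes n}$. To see this, first note that on a single register the rank-one projectors $\{|\psi\rangle\langle\psi|\}$ linearly span all $d \times d$ Hermitian matrices, since by the spectral theorem every Hermitian matrix is a real linear combination of rank-one projectors. Taking $n$-fold tensor products of such local spanning sets yields a real spanning set for Hermitian matrices on the entire tensor product space, because tensor products of spanning sets span the tensor product. Restricting to the (codimension-one) affine hyperplane of trace-one Hermitian matrices, the product rank-one projectors (all of which have trace one) affinely span this hyperplane.

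Now I would invoke the standard convex-geometry fact that the convex hull of a set that affinely spans a finite-dimensional affine space has nonempty interior in that affine space. By definition the separable density matrices are exactly the convex hull of the product rank-one projectors, so they contain a relatively open ball within the hyperplane of trace-one Hermitian matrices, which is the claim of positive density.

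The main obstacle, such as it is, is merely the careful translation between the different notions of ``dimension'' in play (real Hermitian matrices, complex $d^n \times d^n$ matrices, the affine trace-one slice) and the verification that one genuinely gets a full-dimensional interior rather than only interior relative to some proper affine subspace; the spanning argument takes care of this. A more quantitative variant, which the applications in this paper do not need, would exhibit an explicit ball of separability around the maximally mixed state $\mathbb{I}/d^{n}$: writing $\mathbb{I}/d = \frac{1}{d^{2}}\sum_{U\in \mathcal{T}} U|0\rangle\langle 0|U^{\dagger}$ for a unitary $1$-design $\mathcal{T}$, tensoring to get an explicit product-projector decomposition of $\mathbb{I}/d^n$, and then showing that the weights can absorb any sufficiently small trace-zero Hermitian perturbation while remaining nonnegative. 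For the present qualitative density statement, though, the nonconstructive spanning argument is enough.
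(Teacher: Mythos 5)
Your proof is correct, and it gives a self-contained, more abstract route to the qualitative statement than the cited reference. The paper does not supply its own proof of Theorem \ref{densitythm}; it cites Braunstein et al.\ \cite{bcjlps}, whose argument (like the later Gurvits--Barnum bound quoted in Theorem \ref{gurvitsthm}) is constructive: one exhibits an explicit ball of separable states centered at the maximally mixed state $\mathbb{I}/d^{n}$, by showing that $(1-\delta)\mathbb{I}/d^{n} + \delta\rho$ remains separable for all $\rho$ and all sufficiently small $\delta$. Your argument instead proceeds via affine spanning: the single-register rank-one projectors linearly span the Hermitian matrices (spectral theorem), tensor products of spanning sets span the tensor product, all product projectors lie in the trace-one affine hyperplane, and a linear spanning set lying in a hyperplane off the origin must affinely span that hyperplane; the convex hull of an affinely spanning set contains a full-dimensional simplex, hence has nonempty interior and positive measure. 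Each step here is sound. The trade-off is that your argument is nonconstructive and gives no control on where the ball sits or how large it is, whereas the quantitative statement is exactly what the paper later needs in Corollary \ref{prodtoalleps}; you correctly flag this and correctly observe that the qualitative claim is all that Theorem \ref{prodtoall} requires. One small stylistic note: the phrase ``positive density'' in the paper is used to license the conclusion that a linear functional constant on the separable states is constant everywhere; your argument in fact yields the stronger conclusion that the separable states contain an open ball in the trace-one hyperplane, which makes that downstream step immediate.
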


We observe the following consequence.

\begin{theorem}
\label{prodtoall}Suppose a measurement $M$ is $0$-trivial (or equivalently,
$0$-DP or $0$-gentle) on all product states. \ Then $M$ is $0$-trivial on all states.
\end{theorem}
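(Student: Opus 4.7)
\textbf{Proof proposal for Theorem \ref{prodtoall}.} My plan is to combine the convexity argument already encapsulated in Proposition \ref{sepmixed} with the linearity of POVM outcome probabilities and the Braunstein et al.\ density theorem. First, I would observe that $0$-trivial is the special case $\varepsilon = 0$ of Proposition \ref{sepmixed}, so $M$ being $0$-trivial on product states immediately upgrades to $0$-triviality on all separable mixed states: there exist constants $p_y$ such that $\Pr[M(\sigma) = y] = p_y$ for every outcome $y$ and every separable mixed state $\sigma$. It therefore suffices to extend this equality from the separable states to all (possibly entangled) mixed states.

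The key observation is that for any POVM element $E_y$ of $M$, the map $\rho \mapsto \Pr[M(\rho) = y] = \operatorname{Tr}(E_y \rho)$ is a real-affine (in fact real-linear) function on the ambient vector space of Hermitian matrices. Define $f_y(\rho) := \operatorname{Tr}(E_y \rho) - p_y$. Then $f_y$ is an affine function of $\rho$ which, by the previous paragraph, vanishes identically on the set $\mathcal{S}$ of separable mixed states.

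Now I invoke Theorem \ref{densitythm}: $\mathcal{S}$ has positive density within the set of all mixed states. This means $\mathcal{S}$ contains a relatively open neighborhood (inside the affine hyperplane $\{\rho : \operatorname{Tr}(\rho) = 1\}$ of trace-one Hermitian matrices) of some fully-mixed point, say $\mathbb{I}/d^n$. An affine function that vanishes on a set containing an open ball of the ambient affine hyperplane must vanish on the entire affine span of that hyperplane. Hence $f_y(\rho) = 0$ for every trace-one Hermitian $\rho$, and in particular for every mixed state $\rho$. Repeating the argument for each outcome $y$ shows that $M$ is $0$-trivial on all states.

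The main conceptual point, and the place I would be most careful, is the invocation of the density theorem: I need it to imply not merely that $\mathcal{S}$ is nonempty, but that $\mathcal{S}$ spans (in the affine sense) the same hyperplane as the full set of density matrices. Once that is in hand, the rest is just the elementary fact that a linear form which is zero on a full-dimensional subset of a hyperplane is zero on the whole hyperplane. Everything else---reducing from arbitrary outcomes to POVM elements, and from $0$-DP or $0$-gentleness to $0$-triviality---is already packaged in earlier results of the paper, so no additional quantum machinery is needed.
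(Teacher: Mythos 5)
Your proof is correct and matches the paper's proof in every essential respect: convexity lifts $0$-triviality from product states to separable mixed states, $\operatorname{Tr}(E_y\rho)$ is linear in $\rho$, and Braunstein et al.'s density theorem shows separable states are full-dimensional in the trace-one hyperplane, so a linear functional constant on them is constant on all states. You spell out more carefully what ``positive density'' buys (a relatively open ball around $\mathbb{I}/d^n$ in the trace-one hyperplane), but this is the same argument the paper makes more tersely.
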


\begin{proof}
If $M$\ is $0$-trivial on product states, then for each possible outcome $y$,
there is some constant $p$\ such that, for all product states $\rho=\rho
_{1}\otimes\cdots\otimes\rho_{n}$,%
\[
\Pr\left[  M\left(  \rho\right)  \text{ outputs }y\right]  =p.
\]
So by convexity, the above holds as well for all convex combinations of
product states: i.e., separable mixed states. \ Now%
\[
\Pr\left[  M\left(  \rho\right)  \text{ outputs }y\right]  =\operatorname{Tr}%
\left(  E\rho\right)
\]
for some Hermitian operator $E$. \ By Theorem \ref{densitythm}, this means
that the linear function $f\left(  \rho\right)  :=\operatorname{Tr}\left(
E\rho\right)  $ equals $p$ on a subset of positive density. \ But any linear
function that's constant on a subset of positive density is constant
everywhere, so $\operatorname{Tr}\left(  E\rho\right)  =p$\ for all $\rho$.
\end{proof}

Why did this depend on amplitudes being complex numbers? \ In quantum
mechanics over $\mathbb{R}$, the result of Braunstein et al.\ \cite{bcjlps} is
known to be false. \ Let us now show that Theorem \ref{prodtoall} is false as
well. \ Consider the $2$-outcome measurement on $2$ \textquotedblleft
rebits\textquotedblright\ (i.e., real-amplitude qubits) that accepts $\rho
$\ with probability $\operatorname{Tr}\left(  E\rho\right)  $, where%
\[
E=\frac{1}{2}\left(
\begin{array}
[c]{cccc}%
1 & 0 & 0 & -1\\
0 & 1 & 1 & 0\\
0 & 1 & 1 & 0\\
-1 & 0 & 0 & 1
\end{array}
\right)  .
\]
One can check that, for every $2$-rebit pure product state $\left\vert
\psi\right\rangle =\left\vert \psi_{A}\right\rangle \otimes\left\vert \psi
_{B}\right\rangle $, we have%
\[
\operatorname{Tr}\left(  E|\psi\rangle\langle\psi|\right)  =\left\langle
\psi|E|\psi\right\rangle =\frac{1}{2},
\]
and hence the same is true for every $2$-rebit separable mixed state.
\ Nevertheless, this measurement accepts the entangled rebit state
$\frac{\left\vert 01\right\rangle +\left\vert 10\right\rangle }{\sqrt{2}}%
$\ with certainty, and rejects $\frac{\left\vert 00\right\rangle +\left\vert
11\right\rangle }{\sqrt{2}}$\ with certainty. \ This is a rare example of a
quantum information phenomenon that's fundamentally different for qubits and
rebits.\footnote{In the same spirit: in complex quantum mechanics, one can
recover the POVM $E$ if one knows $\operatorname{Tr}\left(  E\rho\right)
$\ for all product states $\rho$; but in real quantum mechanics, one
can't---by the same counterexample $E$, which the product states $\rho$\ of
$2$ rebits fail to distinguish from the $\mathbb{I}/2$ POVM that accepts every
state with probability $1/2$. \ This fact is a \textquotedblleft
dual\textquotedblright\ to the well-known fact that a mixed state $\rho$\ is
uniquely determined by the values of $\operatorname{Tr}\left(  E\rho\right)
$\ on all product measurements $E$ (i.e., Hardy's \textquotedblleft local
tomography axiom\textquotedblright\ \cite{hardy}\ holds), in complex quantum
mechanics but not in real quantum mechanics. \ The \textquotedblleft
duality\textquotedblright\ between the two facts can be seen by interchanging
the roles of the Hermitian matrices $E$\ and $\rho$\ in the expression
$\operatorname{Tr}\left(  E\rho\right)  $.}

In ordinary (complex) quantum mechanics, we can even obtain a weak
\textit{quantitative} connection between DP, gentleness, and triviality on
product states and the same notions on arbitrary states, by using the
following result due to Gurvits and Barnum \cite{gurvitsbarnum}.

\begin{theorem}
[\cite{gurvitsbarnum}]\label{gurvitsthm}Let $\rho$\ be any mixed state on $n$
registers, each $d$-dimensional. \ Then the state $\left(  1-\delta\right)
\frac{\mathbb{I}}{d^{n}}+\delta\rho$ is separable, for all $\delta\leq
\frac{2^{-n/2}}{d^{n}}$.
\end{theorem}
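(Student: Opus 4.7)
\textbf{Proof proposal for Theorem \ref{gurvitsthm}.} The plan is to reduce the problem to bounding the radius of a Frobenius-norm ball of separable states around the maximally mixed state $\mathbb{I}/d^n$, and then note that any perturbation of $\mathbb{I}/d^n$ in the direction of a density matrix lies inside that ball when $\delta$ is small enough.

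First I would compute the Frobenius (Hilbert--Schmidt) distance of $\tau := (1-\delta)\mathbb{I}/d^n + \delta\rho$ from $\mathbb{I}/d^n$. Since $\mathbb{I}/d^n$ and $\rho - \mathbb{I}/d^n$ are orthogonal with respect to the Hilbert--Schmidt inner product (the latter is traceless), and since a density matrix $\rho$ obeys $\|\rho\|_F^2 = \sum_i \lambda_i^2 \leq \sum_i \lambda_i = 1$, Pythagoras gives
\[ \|\tau - \mathbb{I}/d^n\|_F \;=\; \delta \, \|\rho - \mathbb{I}/d^n\|_F \;\leq\; \delta\sqrt{1 - 1/d^n} \;<\; \delta. \]
So it suffices to show that \emph{every} Hermitian operator $\sigma$ with $\operatorname{Tr}(\sigma)=1$ and $\|\sigma - \mathbb{I}/d^n\|_F \leq 2^{-n/2}/d^n$ is separable; this is the content of the Gurvits--Barnum separable-ball theorem.

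To establish the separable-ball claim, the plan is to expand the traceless part $H := \sigma - \mathbb{I}/d^n$ in a tensor-product orthonormal basis of Hermitian operators. On each factor $\mathbb{C}^d$, choose a Hilbert--Schmidt orthonormal basis $\{S_0, S_1, \ldots, S_{d^2-1}\}$ of Hermitian matrices with $S_0 = \mathbb{I}/\sqrt{d}$, so that $\{S_{\alpha_1} \otimes \cdots \otimes S_{\alpha_n}\}$ is an orthonormal basis for the Hermitian operators on $(\mathbb{C}^d)^{\otimes n}$. Expand $H = \sum_{\vec\alpha \neq \vec 0} r_{\vec\alpha}\, S_{\alpha_1}\otimes\cdots\otimes S_{\alpha_n}$. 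The core lemma to prove is that each tensor $S_{\alpha_1}\otimes\cdots\otimes S_{\alpha_n}$ can be written as a signed combination of product pure-state projectors $\sum_j c_j\, |\psi_j^{(1)}\rangle\langle\psi_j^{(1)}| \otimes\cdots\otimes |\psi_j^{(n)}\rangle\langle\psi_j^{(n)}|$ with $\sum_j |c_j|$ bounded by $C^n$ for some constant $C$ depending only on the chosen basis, which yields (after a Cauchy--Schwarz over $\vec\alpha$) an operator of the form $\mathbb{I}/d^n + H$ that is a nonnegative combination of product projectors precisely when $\|H\|_F$ is at most a constant times $2^{-n/2}/d^n$.

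\textbf{The main obstacle} will be pinning down the exact constants so that the tensor-product ``price'' comes out to $2^{-n/2}$ rather than some other exponential. The delicate step is representing each traceless single-site Hermitian $S_\alpha$ as a difference of two positive operators whose product structure, when taken $n$-fold, admits an expansion into product pure states with good $\ell_1$ control. I expect that the cleanest realization of this strategy uses the generalized Gell-Mann or generalized Pauli basis and a careful count of how many product projectors are needed to realize each tensor factor; the combination of $n$ such decompositions via multilinearity is what produces the $2^{-n/2}$ factor in the final radius, while the extra $1/d^n$ arises from normalizing $\sigma$ against $\mathbb{I}$ rather than $\mathbb{I}/d^n$.
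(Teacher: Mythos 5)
Theorem \ref{gurvitsthm} is quoted from \cite{gurvitsbarnum} and not reproved in the paper, so there is no in-paper proof to compare against; this is a review of your sketch on its own terms.

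Your opening reduction is fine: $\tau-\mathbb{I}/d^n = \delta(\rho-\mathbb{I}/d^n)$ trivially, the Pythagoras computation $\|\rho-\mathbb{I}/d^n\|_F^2 = \|\rho\|_F^2 - 1/d^n \le 1-1/d^n$ is correct, and so it does suffice to show that the Hilbert--Schmidt ball of radius $2^{-n/2}/d^n$ around $\mathbb{I}/d^n$ (intersected with the unit-trace Hermitian affine slice) consists of separable states.

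The basis-expansion plan for the ball estimate, however, has a genuine gap and will not produce the claimed radius. Two independent losses arise and neither is a matter of ``pinning down constants.'' First, the passage from the $\ell_2$ quantity $\|H\|_F = (\sum_{\vec\alpha}|r_{\vec\alpha}|^2)^{1/2}$ to the $\ell_1$ quantity $\sum_{\vec\alpha}|r_{\vec\alpha}|$ by Cauchy--Schwarz over the $d^{2n}$ basis coefficients costs a factor of $d^n$. Second, the per-factor ``price'' $C$ of writing a Frobenius-normalized traceless Hermitian $S_\alpha$ as a signed sum $\sum_j c_j|\psi_j\rangle\langle\psi_j|$ with $\sum_j|c_j|$ small is \emph{not} a dimension-free constant: that price is exactly the trace norm $\|S_\alpha\|_1$, which for a traceless Hermitian with $\|S_\alpha\|_F=1$ can be as large as $\sqrt{d}$ (take $d/2$ eigenvalues $+1/\sqrt{d}$ and $d/2$ eigenvalues $-1/\sqrt{d}$; the diagonal/Cartan sector of any Hermitian operator basis necessarily contains such elements). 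The absorbing capacity of $\mathbb{I}/d^n$ when rewriting $\mathbb{I}/d^n + H$ as a nonnegative combination of product projectors is $O(1)$ in total weight, so this scheme requires $d^n \cdot (\sqrt{d})^n \cdot \|H\|_F \lesssim 1$, i.e. a ball of radius $\lesssim d^{-3n/2}$ --- strictly smaller than the desired $2^{-n/2}/d^n$ whenever $d>2$. The $2^{-n/2}$ in the Gurvits--Barnum bound does not come from an $\ell_1$ basis expansion; their argument works directly with the Hilbert--Schmidt geometry. For the bipartite base case they use the dual characterization by entanglement witnesses (block-positive operators $W$) together with a Cauchy--Schwarz bound $|\operatorname{Tr}(WH)| \le \|W\|_F\|H\|_F$ and an inequality relating $\|W\|_F$ to $\operatorname{Tr}(W)$ for block-positive $W$; the multipartite version is then obtained by induction on the number of parties, peeling off one tensor factor at a time, which is where the $\sqrt{2}$-per-party degradation enters. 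That inductive, witness-based route avoids both the $\ell_1/\ell_2$ loss and the $\sqrt{d}$-per-site price, and is what makes the $d$-dependence come out to exactly $d^{-n}$.
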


Theorem \ref{gurvitsthm}\ has the following corollary.

\begin{corollary}
\label{prodtoalleps}Suppose the measurement $M$, on $n$ registers of $d$
dimensions each, is $\varepsilon$-trivial on product states, for some
$\varepsilon\leq\frac{1}{2(\sqrt{2}d)^{n}}$. \ Then $M$ is $O\left(  (\sqrt
{2}d)^{n}\varepsilon\right)  $-trivial on all states.
\end{corollary}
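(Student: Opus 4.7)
The plan is to exploit Theorem \ref{gurvitsthm} to write an arbitrary mixed state $\rho$ as an affine (not convex) combination of two separable states, and then transfer the $\varepsilon$-triviality bound from separable states to $\rho$ via linearity. Specifically, set $\delta := \frac{1}{(\sqrt{2}d)^n} = \frac{2^{-n/2}}{d^n}$, which by Theorem \ref{gurvitsthm} is exactly the regime in which $\sigma_{\rho} := (1-\delta)\frac{\mathbb{I}}{d^n} + \delta\rho$ is separable for every mixed state $\rho$. Solving for $\rho$ yields the key identity
\[
\rho \;=\; \tfrac{1}{\delta}\sigma_{\rho} \;-\; \tfrac{1-\delta}{\delta}\cdot\tfrac{\mathbb{I}}{d^n},
\]
exhibiting $\rho$ as an affine combination of the separable states $\sigma_{\rho}$ and $\mathbb{I}/d^n$.

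Next I would fix any measurement outcome $y$ and let $p_0 := \Pr[M(\mathbb{I}/d^n)\text{ outputs }y]$. Since both $\sigma_{\rho}$ and $\mathbb{I}/d^n$ are separable, Proposition \ref{sepmixed} (triviality on products lifts to all separable states by convexity) gives
\[
e^{-\varepsilon}\,p_0 \;\leq\; \Pr[M(\sigma_{\rho})\text{ outputs }y] \;\leq\; e^{\varepsilon}\,p_0.
\]
Now, because the outcome probability $\operatorname{Tr}(E_y \cdot)$ is a linear functional on Hermitian matrices, the affine decomposition above lets me compute
\[
\Pr[M(\rho)\text{ outputs }y] \;=\; \tfrac{1}{\delta}\Pr[M(\sigma_{\rho})\text{ outputs }y] \;-\; \tfrac{1-\delta}{\delta}\,p_0,
\]
which, combined with the two-sided bound on $\Pr[M(\sigma_{\rho})\text{ outputs }y]$, yields
\[
\frac{p_0(e^{-\varepsilon}-1+\delta)}{\delta} \;\leq\; \Pr[M(\rho)\text{ outputs }y] \;\leq\; \frac{p_0(e^{\varepsilon}-1+\delta)}{\delta}.
\]
The hypothesis $\varepsilon \leq \delta/2$ guarantees $e^{-\varepsilon} > 1-\delta$, so the lower bound stays positive; this is the only place where the quantitative threshold $\varepsilon \leq \frac{1}{2(\sqrt{2}d)^n}$ is used.

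Finally, to obtain $\varepsilon'$-triviality on all states I would apply the bracketing bound to two arbitrary states $\rho,\rho'$ and form the ratio
\[
\frac{\Pr[M(\rho)\text{ outputs }y]}{\Pr[M(\rho')\text{ outputs }y]} \;\leq\; \frac{e^{\varepsilon}-1+\delta}{e^{-\varepsilon}-1+\delta}.
\]
Using $e^{\pm\varepsilon} = 1 \pm \varepsilon + O(\varepsilon^2)$ and $\varepsilon \leq \delta/2$, the right-hand side simplifies to $1 + O(\varepsilon/\delta) = 1 + O((\sqrt{2}d)^n\varepsilon)$, which is exactly the desired $O((\sqrt{2}d)^n\varepsilon)$-triviality on all states. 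I do not foresee a serious obstacle here: the main content is packaged into Theorem \ref{gurvitsthm} and Proposition \ref{sepmixed}, and the only care needed is to track the sign in the affine combination and to ensure the lower bound on $\Pr[M(\rho)\text{ outputs }y]$ remains nonnegative, which is precisely what the threshold on $\varepsilon$ secures.
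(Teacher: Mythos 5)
Your proposal is correct and follows essentially the same route as the paper's proof: both set $\delta = 1/(\sqrt{2}d)^n$, apply Theorem \ref{gurvitsthm} to make $(1-\delta)\mathbb{I}/d^n + \delta\rho$ separable, transfer $\varepsilon$-triviality from product to separable states, and then isolate $\operatorname{Tr}(E\rho)$ by linearity to get the same two-sided bound in terms of $p_0$. Writing the affine decomposition of $\rho$ explicitly and citing Proposition \ref{sepmixed} is a small presentational polish, but the algebra and the final ratio bound are identical to the paper's.
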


\begin{proof}
Fix some measurement outcome $y$ corresponding to the POVM element $E$. \ Then
let $p=\operatorname{Tr}\left(  E\frac{\mathbb{I}}{d^{n}}\right)  $ be the
probability that $M$ outputs $y$ on the maximally mixed state. \ Set
$\delta:=\frac{1}{\left(  \sqrt{2}d\right)  ^{n}}$, so that $\varepsilon
\leq\frac{\delta}{2}$. \ Let $\rho$ be an arbitrary state, and let%
\[
\sigma:=\left(  1-\delta\right)  \frac{\mathbb{I}}{d^{n}}+\delta\rho.
\]
Then $\sigma$\ is separable by Theorem \ref{gurvitsthm}. \ So since $M$ is
$\varepsilon$-trivial on product states,%
\[
pe^{-\varepsilon}\leq\operatorname{Tr}\left(  E\sigma\right)  \leq
pe^{\varepsilon}.
\]
Now,%
\begin{align*}
\operatorname{Tr}\left(  E\sigma\right)   &  =\operatorname{Tr}\left(
E\left(  \left(  1-\delta\right)  \frac{\mathbb{I}}{d^{n}}+\delta\rho\right)
\right) \\
&  =\left(  1-\delta\right)  p+\delta\operatorname{Tr}\left(  E\rho\right)  .
\end{align*}
Solving for $\operatorname{Tr}\left(  E\rho\right)  $, we find that%
\[
\left(  1-\frac{1-e^{-\varepsilon}}{\delta}\right)  p\leq\operatorname{Tr}%
\left(  E\rho\right)  \leq\left(  1+\frac{e^{\varepsilon}-1}{\delta}\right)
p
\]
This implies that $M$ is $\beta$-trivial\ on all states, for%
\[
\beta=\ln\left(  \frac{1+\frac{e^{\varepsilon}-1}{\delta}}{1-\frac
{1-e^{-\varepsilon}}{\delta}}\right)  =O\left(  \frac{\varepsilon}{\delta
}\right)  =O\left(  (\sqrt{2}d)^{n}\varepsilon\right)  .
\]
Here we've used the fact that $\varepsilon\leq\frac{\delta}{2}$.
\end{proof}

\section{Appendix: General Neighbor Relations\label{GENNEIGHBOR}}

Given two states $\rho,\sigma$\ on $n$ registers each, we called $\rho$\ and
$\sigma$\ \textit{neighbors} if it's possible to reach $\sigma$\ from $\rho$,
or $\sigma$\ from $\rho$, by applying some superoperator to a single register
only. \ In the special case where $\rho=\rho_{1}\otimes\cdots\otimes\rho_{n}%
$\ and $\sigma=\sigma_{1}\otimes\cdots\otimes\sigma_{n}$\ are both product
states, this is simply equivalent to saying that we can reach $\sigma$\ from
$\rho$ by changing\ a single $\rho_{i}$. \ For correlated or entangled states,
by contrast, it's not obvious that we should favor this definition over
various alternatives.

Thus, call $\rho$\ and $\sigma$\ \textit{superoperator neighbors} if they're
neighbors in the sense above. \ Call them \textit{unitary neighbors} if it's
possible to reach $\sigma$\ from $\rho$, or equivalently $\rho$\ from $\sigma
$, by applying some unitary transformation $U$ to a single register only.
\ And call them \textit{conditioned neighbors} if it's possible to reach one
from the other by applying a conditioned superoperator (i.e., a normalized
quantum operation) to a single register. \ Clearly, all unitary neighbors are
also superoperator neighbors, and all superoperator neighbors are also
conditioned neighbors. \ But for general states, the three notions are easily
seen to form a strict hierarchy. \ For example, $\frac{1}{\sqrt{2}}\left(
\left\vert 0^{n}\right\rangle +\left\vert 1^{n}\right\rangle \right)  $\ and
$\frac{1}{2}\left(  |0^{n}\rangle\langle0^{n}|+|1^{n}\rangle\langle
1^{n}|\right)  $\ are superoperator neighbors but not unitary neighbors, while
$\frac{1}{\sqrt{2}}\left(  \left\vert 0^{n}\right\rangle +\left\vert
1^{n}\right\rangle \right)  $\ and $\left\vert 0^{n}\right\rangle $\ are
conditioned neighbors but not superoperator neighbors.

Nevertheless, we now prove that, for the task of defining $\varepsilon$-DP,
switching from superoperator neighbors to unitary neighbors would change
nothing of substance, while switching to conditioned neighbors would collapse
our framework to triviality.

\begin{proposition}
If $M$ is $\varepsilon$-DP with respect to unitary neighbors,\ then $M$ is
also $2\varepsilon$-DP with respect to superoperator neighbors (regardless of
whether we mean DP on product states or on all states).
\end{proposition}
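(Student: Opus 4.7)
My plan is to exhibit a common state $\mu$, the Haar twirl of $\rho$ (equivalently, of $\sigma$) on register $i$, that mediates between $\rho$ and $\sigma$ via two applications of the unitary-neighbor hypothesis. The key observation is that if $\sigma = S_i(\rho)$ for a superoperator $S_i$ on register $i$ with Kraus operators $\{K_\alpha\}$, then
\[
\operatorname{Tr}_i(\sigma) = \sum_\alpha \operatorname{Tr}_i\bigl[(K_\alpha \otimes I)\rho(K_\alpha^\dagger \otimes I)\bigr] = \operatorname{Tr}_i\Bigl[\bigl(\textstyle\sum_\alpha K_\alpha^\dagger K_\alpha\bigr) \otimes I \cdot \rho\Bigr] = \operatorname{Tr}_i(\rho),
\]
so $\rho$ and $\sigma$ share a common $(-i)$-marginal $\rho_{-i}$. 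By Schur's lemma, the state $\mu := (\mathbb{I}_i/d_i) \otimes \rho_{-i}$ equals both the Haar twirl $\int dU_i\, (U_i \otimes I)\rho(U_i^\dagger \otimes I)$ of $\rho$ on register $i$ and the corresponding Haar twirl of $\sigma$.

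With $\mu$ identified, both halves of the bound will follow by Haar-averaging the unitary-neighbor hypothesis. For every $U_i$, the pair $\rho$ and $(U_i \otimes I)\rho(U_i^\dagger \otimes I)$ are unitary neighbors lying in $S$, so by hypothesis $\Pr[M(\rho) = y] \leq e^\varepsilon \Pr[M((U_i \otimes I)\rho(U_i^\dagger \otimes I)) = y]$. Integrating both sides against Haar measure on $U(d_i)$ and using the linearity $\Pr[M(\cdot) = y] = \operatorname{Tr}(E_y\, \cdot)$ to pull the integral inside the measurement probability, the right-hand side becomes $e^\varepsilon \Pr[M(\mu) = y]$. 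An identical argument applied to $\sigma$, starting instead from the reverse inequality $\Pr[M((U_i \otimes I)\sigma(U_i^\dagger \otimes I)) = y] \leq e^\varepsilon \Pr[M(\sigma) = y]$, yields $\Pr[M(\mu) = y] \leq e^\varepsilon \Pr[M(\sigma) = y]$. Chaining the two inequalities gives $\Pr[M(\rho) = y] \leq e^{2\varepsilon} \Pr[M(\sigma) = y]$, which is the desired bound.

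Finally, I will check that the argument respects either choice of $S$. When $S$ is all states there is nothing to verify. When $S$ is the product states, $\rho_{-i}$ is itself a product state, so $\mu = (\mathbb{I}_i/d_i) \otimes \rho_{-i}$ is again a product state, and each intermediate $(U_i \otimes I)\rho(U_i^\dagger \otimes I) = \rho_1 \otimes \cdots \otimes (U_i \rho_i U_i^\dagger) \otimes \cdots \otimes \rho_n$ stays a product state, so the hypothesis is legitimately applicable at every step. I do not anticipate any real obstacle beyond spotting the mediating state $\mu$: once one notices that a superoperator on register $i$ leaves the $(-i)$-marginal untouched, $\mu$ is essentially forced, and the factor of $2$ in $2\varepsilon$ corresponds precisely to the two hops $\rho \rightsquigarrow \mu$ and $\mu \rightsquigarrow \sigma$ in this Haar-averaged sense. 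The only minor subtlety is the exchange of the Haar integral with $\Pr[M(\cdot) = y]$, which is immediate from the POVM description.
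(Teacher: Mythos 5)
Your proof is correct and takes essentially the same route as the paper: both pass through the common twirled state $\mu = \xi$ obtained by applying a Haar-random unitary to register $i$, then average the unitary-neighbor DP inequality over $U$ and chain the two resulting bounds to pick up the factor $e^{2\varepsilon}$. The only difference is one of exposition: you spell out why the twirls of $\rho$ and $\sigma$ coincide (via the invariance of the $(-i)$-marginal under a superoperator on register $i$) and why the argument respects the product-state restriction, both of which the paper leaves implicit.
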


\begin{proof}
Let $\rho$\ and $\sigma$\ be superoperator neighbors, which differ only on the
$i^{th}$\ register. \ Let $\xi$ be the state obtained by starting from either
$\rho$\ \textit{or} $\sigma$, and then applying a Haar-random unitary
transformation $U$ to the $i^{th}$\ register (which has the effect of putting
that register into the maximally mixed state, $\mathbb{I}/d$). \ Then
averaging over the possible $U$'s and applying convexity, we have%
\[
\Pr\left[  M\left(  \rho\right)  =y\right]  \leq e^{\varepsilon}\Pr\left[
M\left(  \xi\right)  =y\right]
\]
and likewise%
\[
\Pr\left[  M\left(  \sigma\right)  =y\right]  \geq e^{-\varepsilon}\Pr\left[
M\left(  \xi\right)  =y\right]  .
\]
Hence%
\[
\Pr\left[  M\left(  \rho\right)  =y\right]  \leq e^{2\varepsilon}\Pr\left[
M\left(  \sigma\right)  =y\right]  .
\]

\end{proof}

\begin{proposition}
If $M$ is $\varepsilon$-DP on all states with respect to postselected
neighbors,\ then $M$ is $4\varepsilon$-trivial.
\end{proposition}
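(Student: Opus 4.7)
The plan is to exhibit, for any two states $\rho$ and $\sigma$ on the same $n$ registers, a chain of only four postselected-neighbor relations connecting them, and then to invoke the $\varepsilon$-DP hypothesis once per link, yielding $\Pr[M(\rho)=y] \leq e^{4\varepsilon}\Pr[M(\sigma)=y]$ for every outcome $y$. The key idea that lets the chain have constant length, independent of $n$, is that postselection on a single-register measurement can ``branch'' a carefully chosen mixed state into two very different post-measurement states, which shuttles us between $\rho$-related and $\sigma$-related endpoints without touching registers $2,\ldots,n$ one at a time.

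Concretely, assuming register $1$ has dimension at least two (otherwise the claim is vacuous), I would introduce the mixed state
\[
\tau := \tfrac{1}{2}|0\rangle\langle 0|_{1} \otimes \operatorname{Tr}_{1}(\rho) + \tfrac{1}{2}|1\rangle\langle 1|_{1} \otimes \operatorname{Tr}_{1}(\sigma),
\]
where $|0\rangle,|1\rangle$ are any two orthonormal vectors in the first register, and build the chain
\[
\rho ~\longleftrightarrow~ |0\rangle\langle 0|_{1}\otimes\operatorname{Tr}_{1}(\rho) ~\longleftrightarrow~ \tau ~\longleftrightarrow~ |1\rangle\langle 1|_{1}\otimes\operatorname{Tr}_{1}(\sigma) ~\longleftrightarrow~ \sigma.
\]
The outer two arrows are witnessed by the trace-preserving ``reset register $1$ to $|b\rangle$'' superoperator with Kraus operators $\{|b\rangle\langle k|\}_{k}$, applied to $\rho$ and to $\sigma$ respectively, which produces exactly $|b\rangle\langle b|_{1}\otimes\operatorname{Tr}_{1}(\cdot)$. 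The inner two arrows go in the reverse direction: I project register $1$ of $\tau$ onto $|0\rangle$ or $|1\rangle$ and condition on success, and by the block-diagonal form of $\tau$ the normalized post-measurement state is exactly $|0\rangle\langle 0|_{1}\otimes\operatorname{Tr}_{1}(\rho)$ or $|1\rangle\langle 1|_{1}\otimes\operatorname{Tr}_{1}(\sigma)$; since the postselected-neighbor relation is symmetric by definition, these count as valid links.

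Composing $\varepsilon$-DP across the four links then yields the $e^{4\varepsilon}$ ratio and hence $4\varepsilon$-triviality, for every outcome $y$ and every pair $\rho,\sigma$. There is no real obstacle---the construction is uniform in the dimensions of registers $2,\ldots,n$ and in the internal entanglement of $\rho$ and $\sigma$, which is exactly what allows a constant factor $4$ rather than an $n$-dependent one---so the only thing to be careful about in the writeup is that two of the four arrows are traversed against the direction of the conditioned superoperator that witnesses them, which is permitted because the paper's definition of neighbor requires reachability in \emph{either} direction.
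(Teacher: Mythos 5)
Your proof is correct and takes essentially the same route as the paper: both build a four-link chain through a bridge state that is an equal mixture of two register-$1$-labeled blocks, using postselection of register $1$ of the bridge to reach either block. The only difference is cosmetic and, if anything, a small improvement: you reach $|b\rangle\langle b|_1\otimes\operatorname{Tr}_1(\rho)$ via a trace-preserving reset superoperator, whereas the paper reaches $|1\rangle\langle1|_1\otimes\rho'$ by postselecting register $1$ of $\rho$ itself (which implicitly requires that outcome to have positive probability); your version avoids that degenerate-case caveat.
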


\begin{proof}
Let $\rho,\sigma$\ be any two mixed states on $n$ registers each. \ Let
$|1\rangle\langle1|\otimes\rho^{\prime}$\ be the result of measuring the first
register of $\rho$ in the $\left\vert 1\right\rangle ,\ldots,\left\vert
d\right\rangle $\ basis and getting the outcome $\left\vert 1\right\rangle $,
and let $|2\rangle\langle2|\otimes\sigma^{\prime}$\ be the result of measuring
the first register of $\sigma$ in the $\left\vert 1\right\rangle
,\ldots,\left\vert d\right\rangle $\ basis and getting the outcome $\left\vert
2\right\rangle $. \ Then by assumption, for all possible outcomes $y$ of $M$,%
\begin{align*}
\Pr\left[  M\left(  \rho\right)  =y\right]   &  \leq e^{\varepsilon}\Pr\left[
M\left(  |1\rangle\langle1|\otimes\rho^{\prime}\right)  =y\right]  ,\\
\Pr\left[  M\left(  \sigma\right)  =y\right]   &  \geq e^{-\varepsilon}%
\Pr\left[  M\left(  |2\rangle\langle2|\otimes\sigma^{\prime}\right)
=y\right]  .
\end{align*}
But the state%
\[
\xi:=\frac{|1\rangle\langle1|\otimes\rho^{\prime}+|2\rangle\langle
2|\otimes\sigma^{\prime}}{2}%
\]
is a postselected neighbor of both $|1\rangle\langle1|\otimes\rho^{\prime}%
$\ and $|2\rangle\langle2|\otimes\sigma^{\prime}$,\ since measuring the first
register of $\xi$\ in the $\left\vert 1\right\rangle ,\ldots,\left\vert
d\right\rangle $\ basis can yield either. \ Hence%
\begin{align*}
\Pr\left[  M\left(  |1\rangle\langle1|\otimes\rho^{\prime}\right)  =y\right]
&  \leq e^{\varepsilon}\Pr\left[  M\left(  \xi\right)  =y\right]  ,\\
\Pr\left[  M\left(  |2\rangle\langle2|\otimes\sigma^{\prime}\right)
=y\right]   &  \geq e^{-\varepsilon}\Pr\left[  M\left(  \xi\right)  =y\right]
.
\end{align*}
Chaining together the inequalities now yields%
\[
\Pr\left[  M\left(  \rho\right)  =y\right]  \leq e^{4\varepsilon}\Pr\left[
M\left(  \sigma\right)  =y\right]  .
\]

\end{proof}

\section{Appendix: Differential Privacy Beyond Product and LOCC
Measurements\label{NOTPRODMEAS}}

In this appendix, we'll give an example of a measurement $M$ on $n$ qubits,
which is differentially private on all states, but which is provably
\textit{not} a product measurement, or even a mixture-of-products measurement.
\ In other words, there's no way to implement $M$ (even approximately) by
measuring each qubit in a separately chosen basis, with none of the bases
depending on the outcomes of measuring previous qubits. \ This rules out the
possibility of a \textquotedblleft structure theorem\textquotedblright%
\ showing that all DP measurements can be put into the restricted form that we
mainly studied in the body of this paper.

Going further, we'll also give a second measurement $M^{\prime}$ that's
differentially private on all $n$-qubit states, but which we conjecture is not
even LOCC. \ That is, we conjecture that there's no way to implement
$M^{\prime}$ using local operations and classical communication (even allowing
adaptivity), and that entangling measurements on the qubits are needed.

To construct $M$, we'll use the following lemma.

\begin{lemma}
\label{noprod}There is no $2$-qubit mixture-of-products measurement that
accepts the states $\left\vert 0\right\rangle \left\vert 0\right\rangle $ and
$\left\vert 1\right\rangle \left\vert +\right\rangle $ with certainty, and
that rejects $\left\vert 0\right\rangle \left\vert 1\right\rangle $ and
$\left\vert 1\right\rangle \left\vert -\right\rangle $\ with certainty.
\end{lemma}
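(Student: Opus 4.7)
My plan is to encode the condition ``mixture-of-products'' algebraically as a statement about the two-qubit accept POVM element, and then derive a contradiction from the four stated accept/reject conditions using essentially only the facts that $0 \preceq A \preceq I$ and that $|0\rangle$ and $|+\rangle$ span $\mathbb{C}^2$.

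The first step is to observe that for any mixture-of-products measurement, the accept POVM element can be written as
\[
E \;=\; \sum_j p_j\, A_j \otimes B_j,
\]
where $\{p_j\}$ is a probability distribution and each $A_j, B_j$ is a single-qubit POVM element with $0 \preceq A_j, B_j \preceq I$. This is essentially by definition: one randomizes over $j$, applies the corresponding product POVM, post-processes classically, and groups accept branches.

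Next I would take traces of $E$ against the four projectors $|00\rangle\langle 00|$, $|01\rangle\langle 01|$, $|1+\rangle\langle 1+|$, $|1-\rangle\langle 1-|$, turning the hypothesis into four scalar equations of the form $\sum_j p_j \langle v|A_j|v\rangle\langle w|B_j|w\rangle \in \{0,1\}$. Since each factor lies in $[0,1]$ and $\sum_j p_j = 1$, the two ``accept with certainty'' equations must saturate termwise: for every $j$ with $p_j>0$, one has $\langle 0|A_j|0\rangle = \langle 0|B_j|0\rangle = \langle 1|A_j|1\rangle = \langle +|B_j|+\rangle = 1$.

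The key finishing observation is the standard fact that if $B \preceq I$ and $\langle v|B|v\rangle = 1$ for a unit vector $|v\rangle$, then $(I-B)|v\rangle = 0$, so $B|v\rangle = |v\rangle$. Applied to $B_j$ with $|v\rangle = |0\rangle$ and then $|v\rangle = |+\rangle$, and using that these two vectors span $\mathbb{C}^2$, this forces $B_j = I$; by the analogous argument $A_j = I$. But then $E = I$, making the accept probability on $|01\rangle$ equal to $1$, contradicting the ``reject $|01\rangle$ with certainty'' condition. I do not expect a real obstacle here: the only subtle ingredient is the eigenvector-at-$1$ lemma, and the rest is bookkeeping on the convex decomposition.
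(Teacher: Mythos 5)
Your eigenvector-at-$1$ lemma and the saturation step are fine on their own terms, but the decomposition in your first step is false, and the gap is not cosmetic. For a product or mixture-of-products measurement, the accept POVM element does \emph{not} in general have the form $E = \sum_j p_j A_j \otimes B_j$ with $\sum_j p_j = 1$ and $0 \preceq A_j, B_j \preceq I$: the classical post-processing ``return a function of the two local outcomes'' may group accept branches across \emph{different} pairs of local outcomes, producing a sum of products per random coin rather than a single product. A concrete counterexample is the parity measurement (measure both qubits in the standard basis, accept iff the outcomes agree), whose accept element is $E = |00\rangle\langle 00| + |11\rangle\langle 11|$. If that had your posited form, your own saturation argument applied to the accept-with-certainty states $|00\rangle$ and $|11\rangle$ would force every $A_j$ and $B_j$ in the support to equal $I$, giving $E = I$ and contradicting $\langle 01|E|01\rangle = 0$; so this is a product measurement whose accept element provably is not of the form your proof requires. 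Nor does dropping $\sum_j p_j = 1$ and allowing a general separable decomposition $E = \sum_k c_k A_k \otimes B_k$ with $c_k \geq 0$ rescue the argument: chasing your four constraints through that form pins $E$ down to $|00\rangle\langle 00| + |1+\rangle\langle 1+|$, which is a perfectly legitimate rank-$2$ projector, separable and satisfying all four accept/reject conditions. Separability together with your four constraints is simply not contradictory.

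The missing ingredient is that in a product measurement the second-register POVM $\{F_j\}$ is \emph{fixed once and for all}; only the accept classification of its outcomes may depend on the first-register outcome $i$. Write $E = \sum_i E_i \otimes G_i$ where $G_i := \sum_j q_{ij} F_j$ with $q_{ij} \in [0,1]$, so $0 \preceq G_i \preceq I$. Your eigenvector-at-$1$ lemma \emph{does} apply usefully to the $G_i$'s: it shows $G_i = |0\rangle\langle 0|$ whenever $\langle 0|E_i|0\rangle > 0$ and $G_i = |+\rangle\langle +|$ whenever $\langle 1|E_i|1\rangle > 0$, hence each nonzero $E_i$ is a multiple of $|0\rangle\langle 0|$ or of $|1\rangle\langle 1|$, and then the single POVM $\{F_j\}$ would need one coarse-graining equal to $\{|0\rangle\langle 0|, |1\rangle\langle 1|\}$ and another equal to $\{|+\rangle\langle +|, |-\rangle\langle -|\}$. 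The first forces every $F_j$ to be a multiple of $|0\rangle\langle 0|$ or of $|1\rangle\langle 1|$, which is incompatible with the second. That incompatibility, not a termwise product-saturation argument, is the actual contradiction, and it is the one the paper's proof establishes; the reduction from mixtures of products to a single product measurement is then handled by convexity exactly as you intended.
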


\begin{proof}
It suffices to show that there's no product measurement; the lemma then
follows by convexity.

A product measurement can be written $\left\{  E_{i}\otimes F_{j}\right\}
_{i\in\left[  k\right]  ,j\in\left[  \ell\right]  }$, for some one-qubit POVMs
$E_{1}+\cdots+E_{k}=\mathbb{I}$\ and $F_{1}+\cdots+F_{\ell}=\mathbb{I}$.

Suppose we knew that measuring the first qubit in the $\left\{  \left\vert
0\right\rangle ,\left\vert 1\right\rangle \right\}  $\ basis yielded the
outcome $\left\vert 0\right\rangle $. \ Then we'd need to accept with
certainty if the second qubit was $\left\vert 0\right\rangle $, and reject
with certainty if the second qubit was $\left\vert 1\right\rangle $. \ But
since the $F_{i}$'s must be Hermitian and positive semidefinite, the only
POVMs $\left\{  F_{1},\ldots,F_{\ell}\right\}  $\ on the second qubit that
achieve that objective are equivalent under trivial changes (i.e., relabelings
and adding \textquotedblleft dummy\textquotedblright\ POVM elements) to%
\[
F_{1}=\left(
\begin{array}
[c]{cc}%
1 & 0\\
0 & 0
\end{array}
\right)  ,~~~~~F_{2}=\left(
\begin{array}
[c]{cc}%
0 & 0\\
0 & 1
\end{array}
\right)
\]
---in other words, simply measuring the second qubit in the $\left\{
\left\vert 0\right\rangle ,\left\vert 1\right\rangle \right\}  $\ basis.
\ Likewise, if we knew that measuring the first qubit in the $\left\{
\left\vert 0\right\rangle ,\left\vert 1\right\rangle \right\}  $\ basis
yielded the outcome $\left\vert 1\right\rangle $, then we'd need to accept
with certainty if the second qubit was $\left\vert +\right\rangle $, and
reject with certainty if the second qubit was $\left\vert -\right\rangle $.
\ The only POVMs that achieve \textit{that} objective are equivalent under
trivial changes to%
\[
F_{1}^{\prime}=\frac{1}{2}\left(
\begin{array}
[c]{cc}%
1 & 1\\
1 & 1
\end{array}
\right)  ,~~~~~F_{2}^{\prime}=\frac{1}{2}\left(
\begin{array}
[c]{cc}%
1 & -1\\
-1 & 1
\end{array}
\right)
\]
---in other words, measuring the second qubit in the $\left\{  \left\vert
+\right\rangle ,\left\vert -\right\rangle \right\}  $\ basis. \ But since we
don't get to choose $\left\{  F_{1},\ldots,F_{\ell}\right\}  $\ based on the
outcome of measuring the first qubit, we can't achieve both objectives simultaneously.

Finally, if the first qubit was \textit{not} measured in the $\left\{
\left\vert 0\right\rangle ,\left\vert 1\right\rangle \right\}  $\ basis, but
in some other basis, then the situation is \textquotedblleft even
worse,\textquotedblright\ since some outcome $E_{i}$\ of measuring the first
qubit will be compatible with the first qubit having been $\left\vert
0\right\rangle $\ \textit{or} with its having been $\left\vert 1\right\rangle
$. \ So even fixing $E_{i}$, we'll again need POVM elements equivalent to
$F_{1},F_{2}$\ \textit{and} POVM elements equivalent to $F_{1}^{\prime}%
,F_{2}^{\prime}$, which contradicts $F_{1}+\cdots+F_{\ell}=\mathbb{I}$.
\end{proof}

By compactness considerations, a corollary of Lemma \ref{noprod} is that there
must be some constant $\eta>0$\ (we have not worked out its value) such that
no mixture-of-products measurement can distinguish $\left\vert 0\right\rangle
\left\vert 0\right\rangle $ and $\left\vert 1\right\rangle \left\vert
+\right\rangle $\ from $\left\vert 0\right\rangle \left\vert 1\right\rangle $
and $\left\vert 1\right\rangle \left\vert -\right\rangle $\ even with success
probability $1-\eta$.

Using Lemma \ref{noprod}, we now prove the main result.

\begin{theorem}
[Existence of Non-Product Quantum DP Measurements]\label{noprodthm}There
exists a measurement $M$ on $n$ qubits that's $O\left(  \frac{\log n}%
{n}\right)  $-DP on all states, but that cannot be approximated (say, to
$\frac{1}{3}$\ variation distance in the distribution over measurement
outcomes) by any mixture-of-products measurement.
\end{theorem}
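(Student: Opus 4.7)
The plan is to construct $M$ as a Laplace-noised aggregate of the two-qubit entangling measurement $P$ of Lemma \ref{noprod}. Partition the $n$ qubits into $n/2$ pairs; on pair $i$ apply $P$ (with accept projector $E=|00\rangle\langle 00|+|1+\rangle\langle 1+|$) to obtain $X_i\in\{0,1\}$, then output $\sum_i X_i+\eta$ where $\eta$ is drawn from a Laplace distribution of scale $\sigma=Cn/\log n$ for a suitable constant $C$. Differential privacy on all $n$-qubit states follows the template of Proposition \ref{lsigma}: a superoperator acting on a single register inside pair $k$ cannot change the reduced state on any other pair, so the joint marginal of $(X_j)_{j\neq k}$ is invariant under the change; the count $\sum_i X_i$ therefore has classical sensitivity $1$, and convolution with Laplace noise of scale $\sigma$ yields $(1/\sigma)$-DP $=O(\log n/n)$-DP on all states.

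For non-approximability I would first establish an operator-theoretic strengthening of Lemma \ref{noprod}: $E$ is not in the convex hull $\mathcal{M}$ of mixture-of-products two-qubit accept operators. Indeed, any element of $\mathcal{M}$ has the form $\sum_k w_k S_k$ where each $S_k$ is a sum of projectors from a single product basis; since $|00\rangle$ and $|1+\rangle$ cannot simultaneously be basis states of any common product basis, no $S_k$ can satisfy both $\langle 00|S_k|00\rangle=1$ and $\langle 1+|S_k|1+\rangle=1$, and the two constraints cannot be jointly met by any convex combination. A compactness argument (or a linear-programming duality) then yields a universal constant $\eta>0$ such that every $N\in\mathcal{M}$ satisfies $\|N-E\|_\infty\ge\eta$, witnessed by some 2-qubit pure state $|\psi^\ast\rangle$ depending on $N$.

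Scaling up, consider inputs of the form $|\psi\rangle^{\otimes n/2}$. On such an input, $M$'s expected output is the linear form $\mu_M=(n/2)\langle\psi|E|\psi\rangle$. For a mixture-of-products $M'$ with per-pair marginal accept operator $\bar N\in\mathcal{M}$ (which is the natural object even when $M'$'s post-processing spreads across pairs), the per-pair mean equals $\langle\psi|\bar N|\psi\rangle$. Choosing $|\psi\rangle=|\psi^\ast\rangle$ that witnesses the operator-norm gap between $\bar N$ and $E$ forces a mean discrepancy of at least $\eta n/2$ between $\mu_M$ and $\mu_{M'}$, which asymptotically dwarfs the Laplace scale $\sigma=n/\log n$; the standard exponential-tail estimate for Laplace distributions with separated means then gives trace distance between the two output distributions exceeding $1/3$ for all sufficiently large $n$.

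The main obstacle is that mixture-of-products post-processing need not be additive across pairs, so $\mu_{M'}(|\psi\rangle^{\otimes n/2})$ is in general a degree-$n/2$ polynomial in the entries of $|\psi\rangle\langle\psi|$ rather than a linear form, and moreover $M'$ can post-compose with classical randomness to try to reproduce the shape of $M$'s Laplace output distribution. To address this, I would combine the operator-norm gap with a Taylor-type analysis around (say) the maximally mixed state: the linear coefficient of $\mu_{M'}$ at this base point is itself a MoP-feasible Hermitian operator, and therefore differs from $E$ by at least $\eta$ in operator norm, so any attempt by $M'$ to uniformly match $\mu_M$ over a rich family of $|\psi\rangle$ must cancel the resulting linear mismatch using higher-degree terms---but those higher-degree terms are controlled by the range of $M'$'s post-processing and cannot compensate on all pure $|\psi\rangle$ simultaneously. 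Making this reduction quantitative, and identifying the specific $|\psi\rangle$ on which the linear gap beats the higher-order corrections at both the mean and the distributional level, is the technically delicate step of the argument.
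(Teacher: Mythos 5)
Your construction has a genuine gap, and the obstacle you flag at the end is fatal, not merely technically delicate: the per-pair measurement $M$ that outputs $\sum_i X_i + \mathrm{Lap}(\sigma)$ (with no parity aggregation) \emph{can} be approximated by a mixture-of-products measurement, so the non-approximability claim fails. The reason is that although $E = |00\rangle\langle 00| + |1{+}\rangle\langle 1{+}|$ lies outside $\mathcal{M}$, the \emph{linear functional} $\rho\mapsto\operatorname{Tr}(E\rho)$ admits an unbiased local estimator. Expanding in the Pauli basis,
\[
E \;=\; \tfrac{1}{2}\,I\otimes I \;+\; \tfrac{1}{4}\bigl(I\otimes Z + Z\otimes Z + I\otimes X - Z\otimes X\bigr),
\]
a MoP procedure can, on each pair, pick one of the four nonidentity Pauli terms uniformly at random, measure it via single-qubit Pauli measurements, and output $\tfrac{1}{2}\pm s$ with the appropriate sign; this is an unbiased estimator of $\operatorname{Tr}(E\rho_i)$ with $O(1)$ variance. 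Summing over all $n/2$ pairs gives a random variable with the \emph{same mean} as $\sum_i X_i$ and fluctuations of only $O(\sqrt{n})$. After convolving with Laplace noise of scale $\sigma = \Theta(n/\log n) \gg \sqrt{n}\,\mathrm{polylog}(n)$, the two output distributions are $o(1)$-close in total variation on every input state that is product across pairs—so the separating witness you seek does not exist. Your proposed fix via a Taylor expansion around the maximally mixed state cannot rescue this, because the ``linear coefficient'' of the MoP estimator can literally be made equal to $E$; the operator-norm gap from Lemma~\ref{noprod} bounds only the \emph{accept operator of a two-outcome POVM}, not the expectation operator of an arbitrary real-valued estimator built from local Pauli data.

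The paper avoids exactly this attack by inserting a nonlinear aggregation step—parity—between the per-pair measurements and the noisy sum. Qubits are grouped into blocks of size $k=\Theta(\log n)$; within a block, each pair is measured in the basis $\{|00\rangle,|01\rangle,|1{+}\rangle,|1{-}\rangle\}$; the block outputs the \emph{parity} $b_i$ of the number of pairs landing in $\{|00\rangle,|1{+}\rangle\}$; and only then are the $b_i$'s summed and noised. Lemma~\ref{noprod} plus compactness guarantees a constant per-pair noise floor $\eta>0$ for any mixture-of-products read-out, and the classical fact about noisy parities (bias decays like $(1-2\eta)^{k/2}=1/\mathrm{poly}(n)$ for $k=C\log n$) then makes each $b_i$ essentially unreadable. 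Crucially, parity is not a linear functional of the state, so no clever unbiased local estimator can recover it—this is the mechanism your sum-of-bits construction is missing. If you want to pursue your route, the right move is to replace the raw sum $\sum_i X_i$ with the sum of block parities, at which point the argument collapses to the paper's.
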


\begin{proof}
Set $k:=C\log n$\ for some constant $C$. \ Then the measurement $M=M_{\sigma}$
does the following:

\begin{enumerate}
\item[(1)] Group the $n$ qubits into $n/k$ blocks $B_{1},\ldots,B_{n/k}$, each
of size $k$

\item[(2)] Within each block $B_{i}$:

\begin{itemize}
\item Group the qubits into pairs

\item Measure each pair in the basis $\left\{  \left\vert 0\right\rangle
\left\vert 0\right\rangle ,\left\vert 0\right\rangle \left\vert 1\right\rangle
,\left\vert 1\right\rangle \left\vert +\right\rangle ,\left\vert
1\right\rangle \left\vert -\right\rangle \right\}  $

\item Count the number of these measurements that return either $\left\vert
0\right\rangle \left\vert 0\right\rangle $ or $\left\vert 1\right\rangle
\left\vert +\right\rangle $, and calculate the parity\ of this number,
$b_{i}\in\left\{  0,1\right\}  $
\end{itemize}

\item[(3)] Return the sum $\Gamma=b_{1}+\cdots+b_{n/k}$, across all $n/k$
blocks, plus Laplace noise with average magnitude $\sigma$
\end{enumerate}

Our first claim is that $M$ is $1/\sigma$-DP on all states. \ This is a simple
consequence of Proposition \ref{lsigma}.

Our second claim is that there exists a probability distribution $\mathcal{D}%
$\ over $n$-qubit states (in fact, product states), such that given a state
$\rho$\ drawn from $\mathcal{D}$, no mixture-of-products measurement can
return a nontrivial estimate of $\Gamma$. \ This $\mathcal{D}$\ is defined as
follows: first set $\Gamma:=0$\ or $\Gamma:=n/k$, both with equal probability
$1/2$. \ Then let $\rho$\ be a tensor product of pairs of qubits of the form
$\left\vert 0\right\rangle \left\vert 0\right\rangle ,\left\vert
0\right\rangle \left\vert 1\right\rangle ,\left\vert 1\right\rangle \left\vert
+\right\rangle ,\left\vert 1\right\rangle \left\vert -\right\rangle $, which
is chosen uniformly at random among all such tensor products that are
consistent with the chosen value of $\Gamma$.

To prove the claim, it suffices to show that no mixture-of-products
measurement can guess even a \textit{single} parity $b_{i}$, by measuring the
$i^{th}$ block, with bias more than (say) $1/n^{3}$ over chance. \ For this we
appeal to Lemma \ref{noprod}, which says that for each pair of qubits, the
measurement \textit{cannot} perfectly distinguish whether the pair is in the
state $\frac{|00\rangle\langle00|+|1+\rangle\langle1+|}{2}$ (which flips the
parity $b_{i}$),\ or the state $\frac{|01\rangle\langle01|+|1-\rangle
\langle1-|}{2}$\ (which has no effect on $b_{i}$). \ Rather, it can only
distinguish these two mixed states only with some constant rate of noise
$\eta>0$. \ So the situation is equivalent to the following: we are trying to
guess the parity $\left\vert x\right\vert \operatorname{mod}2$\ of an
arbitrary $k/2$-bit string $x$, but each bit of $x$ can be read only noisily,
and has either an $\eta$\ probability of appearing as $1$ despite being $0$ or
vice versa (with the errors independent across the bits). \ In such a
situation, regardless of the value of $\eta>0$, it is well-known that our bias
in guessing the parity of $x$\ falls off like $1/\exp\left(  k\right)  $. \ By
simply setting $k:=C\log n$ for some sufficiently large constant $k$, we can
make this bias less than $1/n^{3}$.

Finally, we just need to choose (say) $\sigma:=\frac{n}{10k}$. \ In that case,
the measurement $M=M_{\sigma}$ is $\varepsilon$-DP\ for $\varepsilon=\frac
{1}{\sigma}=O\left(  \frac{\log n}{n}\right)  $, and it returns a nontrivial
estimate of $\Gamma$. \ By contrast, no mixture-of-products measurement
returns a nontrivial estimate of $\Gamma$, or even distinguishes the case
$\Gamma=0$\ from the case $\Gamma=n/k$\ with bias (say) $\Omega\left(
1/n^{2}\right)  $.
\end{proof}

The proof of Theorem \ref{noprodthm} exploited the fact that, even though
differential privacy is clearly associated with a lack of \textquotedblleft
sensitivity\textquotedblright\ on the measurement's part (i.e., changing a
single register can't change the output by much), this is still compatible
with \textit{local subproblems} solved by the measurement being exquisitely
sensitive to local changes. \ That's what happens with the noisy sum of
parities example: each parity is maximally sensitive to local changes, even
though a noisy sum of them is not.

Now suppose we want to show something stronger: namely, that there's an
$n$-qubit measurement $M$ that's differentially private, but that isn't even
LOCC (that is, cannot be implemented using separate measurements on each
qubit, even with adaptivity). \ We now propose a modification $M^{\prime}$ of
the measurement $M$ from the proof of Theorem \ref{noprodthm}, which we
conjecture has the required property.

Set $k:=C\log n$\ for some constant $C$. \ Then the measurement $M^{\prime
}=M_{\sigma}^{\prime}$ does the following.

\begin{enumerate}
\item[(1)] Group the $n$ qubits into $n/k^{3}$ blocks $B_{1},\ldots
,B_{n/k^{3}}$, each of size $k^{3}$

\item[(2)] Within each block $B_{i}$:

\begin{itemize}
\item Group the qubits into $k$ sub-blocks $S_{1},\ldots,S_{k}$, each of size
$k^{2}$

\item Within each sub-block $S_{j}$:

\begin{itemize}
\item Group the qubits into pairs

\item Perform the swap test on each pair (note: the swap test accepts the
product state $\left\vert v\right\rangle \left\vert w\right\rangle $ with
probability equal to $\frac{1+\left\vert \left\langle v|w\right\rangle
\right\vert ^{2}}{2}$)

\item Call $S_{j}$ \textquotedblleft accepting\textquotedblright\ if
\textit{every} swap test accepts, or \textquotedblleft
rejecting\textquotedblright\ if at least one of them rejects
\end{itemize}

\item Count the number of accepting sub-blocks, and let $b_{i}$\ be the parity
of this number
\end{itemize}

\item[(3)] Return the sum $\Gamma=b_{1}+\cdots+b_{n/k^{3}}$, across all
$n/k^{3}$ blocks, plus Laplace noise with average magnitude $\sigma$
\end{enumerate}

Just like in the proof of Theorem \ref{noprodthm}, it's easy to see that
$M^{\prime}$\ is $1/\sigma$-DP on all states. \ Our conjecture is that
$M^{\prime}$\ cannot be implemented, even approximately, using LOCC
measurements on the qubits. \ The intuition is that, if we're restricted to
LOCC, then at best we can simulate each swap test imperfectly: for example,
using a measurement that accepts the product state $\left\vert v\right\rangle
\left\vert w\right\rangle $ with probability equal to $\frac{1+\left\vert
\left\langle v|w\right\rangle \right\vert ^{2}}{3}$, rather than
$\frac{1+\left\vert \left\langle v|w\right\rangle \right\vert ^{2}}{2}$.
\ This would imply that we can't reliably distinguish the following two cases:

\begin{enumerate}
\item[(1)] within a given sub-block, \textit{every} swap test accepts with
probability $1$, versus

\item[(2)] within that sub-block, $10\log n$ swap tests accept with
probability $1/2$ (i.e., the two qubits are in orthogonal states), while the
remaining swap tests accept with probability $1$.
\end{enumerate}

For the difference between these two cases will get \textquotedblleft lost in
the Gaussian noise,\textquotedblright\ which is of order $\sqrt{k}\gg10\log n$
if the constant $C$ was sufficiently large. \ By contrast, if we take an AND
of \textquotedblleft true\textquotedblright\ swap tests, then we accept with
probability $1$ in case (1), versus with probability $\left(  \frac{1}%
{2}\right)  ^{10\log n}=\frac{1}{n^{10}}$ in case (2).

But if we can't reliably distinguish these cases using LOCC, then certainly we
can't guess the parity, across all $k$ sub-blocks within a given block, with
bias more than $1/\exp\left(  k\right)  $ over random. \ (Whereas by contrast,
using \textquotedblleft true\textquotedblright\ swap tests, we can compute the
parity across the sub-blocks with success probability $1-\frac{1}{n^{O\left(
1\right)  }}$, given the promise that every sub-block satisfies either (1) or
(2) above.)

If so, then the end result is that, using LOCC measurements, we can't compute
the sum of the parities across the $n/k^{3}$ blocks even noisily, whereas
using true swap tests, we can.

\section{Appendix: On Composition of Quantum DP Algorithms\label{CANTCOMPOSE}}

One of the central properties of classical differential privacy is that it
nicely composes: that is, if we run an $\varepsilon_{1}$%
-DP\ algorithm\ followed by an $\varepsilon_{2}$-DP algorithm on the same
database, then the resulting algorithm is $\left(  \varepsilon_{1}%
+\varepsilon_{2}\right)  $-DP. \ Furthermore, \textit{advanced composition}
\cite{drv} shows that, with overwhelming probability, the loss in privacy when
we compose $k$ algorithms is even slower than linear, growing only like
$\sqrt{k}$.

This immediately raises a question: does quantum differential privacy
similarly compose? \ Here we face a new difficulty, not present in classical
case: namely, when we compose quantum DP algorithms, each algorithm will in
general damage our state. \ And this might cause not only a catastrophic loss
in privacy, but even a catastrophic loss in \textit{accuracy}.

Fortunately, we can use our connection between DP and gentleness address the
concern about accuracy, at least in the regime where that connection applies.
\ For certainly \textit{gentleness} composes. \ That is, if we apply an
$\alpha_{1}$-gentle measurement $M_{1}$\ followed by an $\alpha_{2}$-gentle
measurement $M_{2}$, then the result will be $\left(  \alpha_{1}+\alpha
_{2}\right)  $-gentle, by the triangle inequality for trace distance. \ And by
Corollary \ref{damagecor}, this is true even if $M_{2}$\ is guaranteed to be
gentle only on the \textit{original} state (for example, because it's a
product state), and not necessarily on the post-measurement states that result
from applying $M_{1}$. \ We even conjecture that an \textquotedblleft advanced
composition\textquotedblright\ property holds for gentleness (see Section
\ref{OPEN}).

Thus, suppose we want to compose product measurements $M_{1},\ldots,M_{k}%
,$\ that are each $\varepsilon$-DP on product states, for some $\varepsilon
\ll\frac{1}{k\sqrt{n}}$. \ Then by Theorem \ref{main}, these measurements are
each $O\left(  \varepsilon\sqrt{n}\right)  $-gentle. \ So we can compose them
while preserving good accuracy.

Even here, though, there's a potential issue with privacy. \ The issue arises
because the later $M_{i}$'s are applied not to our original state $\rho$, but
to damaged versions of the state. \ And particularly if this damage is
additive rather than multiplicative, we have no guarantee that the later
$M_{i}$'s will preserve DP (with respect to the \textit{original} state $\rho
$) when applied to the damaged versions. \ Indeed, ensuring privacy would
require saying at least \textit{something} about the post-measurement states.
\ If, for example, we implemented some $M_{i}$ in a way that gratuitously
\textquotedblleft amplified\textquotedblright\ the information in (say) the
first register, copying it into the other $n-1$\ registers as a byproduct of
the measurement procedure, then privacy need not be preserved when we apply
$M_{i+1}$. \ On the other hand, it seems plausible to us that $\varepsilon$-DP
measurements, for $\varepsilon\ll\frac{1}{\sqrt{n}}$, can always be
implemented in such a way that privacy is preserved under composition.

By using Lemma \ref{facepalmlemma}, the following proposition confirms that
quantum DP composition works at least in the special case where we're
composing a small number of quantum DP algorithms that are gentle, and all of
whose outputs have appreciably large probabilities on all states.

\begin{proposition}
[Limited Composition for Quantum DP]\label{composedp}Let $\mathcal{M}$\ be the
sequential composition of $k$ measurements $M_{1},\ldots,M_{k}$, where $M_{i}%
$\ is $\varepsilon_{i}$-DP on product states and $\alpha_{i}$-gentle\ on
product states. \ Suppose that for all product states $\rho$\ and all possible
sequences $Y=\left(  y_{1},\ldots,y_{k}\right)  $\ of measurement outcomes, we
have%
\[
\Pr\left[  M_{1}\left(  \rho\right)  \text{ outputs }y_{1}\right]  \cdots
\Pr\left[  M_{k}\left(  \rho\right)  \text{ outputs }y_{k}\right]  \geq p,
\]
where $p\gg\alpha_{1}+\cdots+\alpha_{k}$. \ Then $\mathcal{M}$ achieves a
relative accuracy of $\frac{\alpha_{1}+\cdots+\alpha_{k}}{p}$, in the sense
that%
\[
\left\vert \frac{\Pr\left[  \mathcal{M}\left(  \rho\right)  \text{ outputs
}Y\right]  }{\Pr\left[  M_{1}\left(  \rho\right)  \text{ outputs }%
y_{1}\right]  \cdots\Pr\left[  M_{k}\left(  \rho\right)  \text{ outputs }%
y_{k}\right]  }-1\right\vert \leq\frac{\alpha_{1}+\cdots+\alpha_{k}}{p}%
\]
for all product states $\rho$\ and all $Y$, and in addition is $\varepsilon
$-DP on product states for%
\begin{align*}
\varepsilon &  =\varepsilon_{1}+\cdots+\varepsilon_{k}+\ln\left(
\frac{p+\left(  \alpha_{1}+\cdots+\alpha_{k}\right)  }{p-\left(  \alpha
_{1}+\cdots+\alpha_{k}\right)  }\right) \\
&  =\varepsilon_{1}+\cdots+\varepsilon_{k}+O\left(  \frac{\alpha_{1}%
+\cdots+\alpha_{k}}{p}\right)  .
\end{align*}

\end{proposition}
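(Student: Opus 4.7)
The approach is to apply the Damage Lemma (Lemma \ref{facepalmlemma}) with $T = [k]$ to each product state $\rho$ separately, and then combine this with the DP bounds on the individual $M_i$'s. First I will fix a product state $\rho$ and an output sequence $Y = (y_1, \ldots, y_k)$. For each $i$, let $S_i$ be the quantum operation corresponding to obtaining outcome $y_i$ from $M_i$ (using the $\alpha_i$-gentle implementation). The $\alpha_i$-gentleness of $M_i$ on product states is exactly the hypothesis of Lemma \ref{facepalmlemma}: $\|S_i(\rho)/\operatorname{Tr}(S_i(\rho)) - \rho\|_{\operatorname{tr}} \leq \alpha_i$. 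With $p_i = \operatorname{Tr}(S_i(\rho)) = \Pr[M_i(\rho) = y_i]$ and $q_i$ the conditional probability of $y_i$ in the actual sequential execution, Lemma \ref{facepalmlemma} applied with $T = [k]$ yields
\[
|p_{[k]} - q_{[k]}| \leq \alpha_1 + \cdots + \alpha_k,
\]
where $p_{[k]} = \prod_i p_i$ and $q_{[k]} = \Pr[\mathcal{M}(\rho) \text{ outputs } Y]$.

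From here, the relative accuracy bound is immediate: dividing through by $p_{[k]} \geq p$ gives
\[
\left|\frac{\Pr[\mathcal{M}(\rho) = Y]}{\prod_i \Pr[M_i(\rho) = y_i]} - 1\right| \leq \frac{\alpha_1 + \cdots + \alpha_k}{p}.
\]
So the composed measurement's output probabilities are multiplicatively close to the product of the ``ideal'' probabilities, within a factor of $1 \pm (\sum_i \alpha_i)/p$.

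For the DP claim, I will take two neighboring product states $\rho, \sigma$ and any output sequence $Y$. Since each $M_i$ is $\varepsilon_i$-DP on product states, we immediately obtain
\[
\prod_{i=1}^k \Pr[M_i(\rho) = y_i] \leq e^{\varepsilon_1 + \cdots + \varepsilon_k} \prod_{i=1}^k \Pr[M_i(\sigma) = y_i].
\]
Combining this with the relative accuracy bound applied in both directions (upper bound for $\rho$, lower bound for $\sigma$) gives
\[
\frac{\Pr[\mathcal{M}(\rho) = Y]}{\Pr[\mathcal{M}(\sigma) = Y]} \leq \frac{1 + (\sum_i \alpha_i)/p}{1 - (\sum_i \alpha_i)/p} \cdot e^{\sum_i \varepsilon_i} = \frac{p + \sum_i \alpha_i}{p - \sum_i \alpha_i} \cdot e^{\sum_i \varepsilon_i},
\]
which on taking logarithms yields the claimed DP parameter.

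The main obstacle is conceptual rather than technical: making sure one applies the DP hypothesis correctly. The $\varepsilon_i$-DP of $M_i$ only bounds probabilities when $M_i$ is applied to the \emph{original} product state, not to the damaged state that arises mid-composition. This is exactly why the Damage Lemma is the right tool: it lets us pass from the sequential ``real'' probabilities (to which DP does not directly apply, since intermediate states are not product) to the ideal product $p_{[k]}$ (to which classical composition of the per-round DP guarantees applies cleanly), with a controlled additive error of $\sum_i \alpha_i$. The remaining computations are just arithmetic rearrangement.
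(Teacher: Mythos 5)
Your proof is correct and follows essentially the same route as the paper's: you apply the Damage Lemma with $T = [k]$ to obtain $|p_{[k]} - q_{[k]}| \leq \sum_i \alpha_i$, divide by $p_{[k]} \geq p$ for the relative-accuracy claim, and then sandwich $\Pr[\mathcal{M}(\rho) = Y]/\Pr[\mathcal{M}(\sigma) = Y]$ between the product bounds using the per-round DP guarantee together with the relative-accuracy bound in both directions. Your presentation is in fact slightly cleaner than the paper's (the paper's chain of inequalities contains a sign typo in its first line, writing $1 - \frac{\alpha_1 + \cdots + \alpha_k}{p}$ where $1 + \frac{\alpha_1 + \cdots + \alpha_k}{p}$ is intended), but the underlying argument is identical.
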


\begin{proof}
The relative accuracy part follows immediately from the first part of Lemma
\ref{facepalmlemma}, which tells us that%
\[
\left\vert \Pr\left[  M_{1}\left(  \rho\right)  \text{ outputs }y_{1}\right]
\cdots\Pr\left[  M_{k}\left(  \rho\right)  \text{ outputs }y_{k}\right]
-\Pr\left[  \mathcal{M}\left(  \rho\right)  \text{ outputs }Y\right]
\right\vert \leq\alpha_{1}+\cdots+\alpha_{k}.
\]
For the $\varepsilon$-DP part, for all neighboring product states $\rho
,\sigma$\ and all $y_{1},\ldots,y_{k}$\ we have%
\begin{align*}
\Pr\left[  \mathcal{M}\left(  \rho\right)  \text{ outputs }Y\right]   &
\leq\left(  1-\frac{\alpha_{1}+\cdots+\alpha_{k}}{p}\right)  \Pr\left[
M_{1}\left(  \rho\right)  \text{ outputs }y_{1}\right]  \cdots\Pr\left[
M_{k}\left(  \rho\right)  \text{ outputs }y_{k}\right] \\
&  \leq\frac{e^{\varepsilon_{1}+\cdots+\varepsilon_{k}}}{1-\frac{\alpha
_{1}+\cdots+\alpha_{k}}{p}}\Pr\left[  M_{1}\left(  \sigma\right)  \text{
outputs }y_{1}\right]  \cdots\Pr\left[  M_{k}\left(  \sigma\right)  \text{
outputs }y_{k}\right] \\
&  \leq e^{\varepsilon_{1}+\cdots+\varepsilon_{k}}\frac{1+\frac{\alpha
_{1}+\cdots+\alpha_{k}}{p}}{1-\frac{\alpha_{1}+\cdots+\alpha_{k}}{p}}%
\Pr\left[  \mathcal{M}\left(  \sigma\right)  \text{ outputs }Y\right]  .
\end{align*}

\end{proof}

In Proposition \ref{composedp}, product states could have been replaced by any
other set of states that's closed under the neighbor relation. \ For the
special case of product states, though, we can combine Proposition
\ref{composedp}\ with part (2) of Theorem \ref{main} to get the following
corollary, which does not need the gentleness of the underlying measurements
as a separate assumption.

\begin{corollary}
\label{composecor}Let $\mathcal{M}$\ be the sequential composition of $k$
product measurements $M_{1},\ldots,M_{k}$, each on $n$ registers. \ Suppose
that each $M_{i}$\ is $\varepsilon_{i}$-DP on product states,\ where
$\varepsilon:=\varepsilon_{1}+\cdots+\varepsilon_{k}$ is at most $\frac
{1}{10\sqrt{n}}$. \ Suppose also that for all product states $\rho$, all
$i\in\left[  k\right]  $, and all measurement outcomes $y$, we have%
\[
\Pr\left[  M_{i}\left(  \rho\right)  \text{ outputs }y\right]  \geq p_{i},
\]
where $p:=p_{1}\cdots p_{k}$ is at least $10\varepsilon\sqrt{n}$. \ Then
$\mathcal{M}$\ achieves a relative accuracy of $O\left(  \frac{\varepsilon
\sqrt{n}}{p}\right)  $, in the sense that%
\[
\left\vert \frac{\Pr\left[  \mathcal{M}\left(  \rho\right)  \text{ outputs
}y_{1},\ldots,y_{k}\right]  }{\Pr\left[  M_{1}\left(  \rho\right)  \text{
outputs }y_{1}\right]  \cdots\Pr\left[  M_{k}\left(  \rho\right)  \text{
outputs }y_{k}\right]  }-1\right\vert =O\left(  \frac{\varepsilon\sqrt{n}}%
{p}\right)
\]
for all product states $\rho$\ and outcomes $y_{1},\ldots,y_{k}$, and in
addition is $O\left(  \frac{\varepsilon\sqrt{n}}{p}\right)  $-DP on product states.
\end{corollary}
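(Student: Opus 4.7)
The plan is to deduce this corollary by combining the two pieces already in hand: part (2) of Theorem \ref{main}, which upgrades each DP measurement into a gentle one, and Proposition \ref{composedp}, which does the actual composition analysis for gentle-and-DP product measurements. Since differential privacy depends only on output probabilities and not on the post-measurement states, for each $M_i$ we are free to replace its original implementation by any implementation of the same POVM without affecting the $\varepsilon_i$-DP\ guarantee. This is the point on which the whole reduction hinges.

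First, I would invoke Theorem \ref{main}(2): because each $M_i$ is a product measurement that is $\varepsilon_i$-DP on product states, and because $\varepsilon_i\le\varepsilon\le\frac{1}{10\sqrt{n}}$ is safely in the regime where part (2) applies (so $O(\varepsilon_i\sqrt{n})$ is a small constant, bounded away from $\frac14$), there exists an implementation of $M_i$ which is $\alpha_i$-gentle on product states for some $\alpha_i=O(\varepsilon_i\sqrt{n})$. Fix such an implementation for each $i$, and let $\mathcal{M}$ denote the sequential composition with respect to these chosen implementations. Then $\sum_i\alpha_i=O\bigl(\sqrt{n}\sum_i\varepsilon_i\bigr)=O(\varepsilon\sqrt{n})$.

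Second, I would verify the hypotheses of Proposition \ref{composedp} for this $\mathcal{M}$. The per-outcome product lower bound is exactly $p_1\cdots p_k=p$, which by hypothesis satisfies $p\ge 10\varepsilon\sqrt{n}$; this is much larger than $\sum_i\alpha_i=O(\varepsilon\sqrt{n})$, so the condition $p\gg\alpha_1+\cdots+\alpha_k$ of Proposition \ref{composedp} holds (with the constants absorbed into the big-$O$). Plugging in, Proposition \ref{composedp} immediately gives the relative accuracy bound
\[
\left|\frac{\Pr[\mathcal{M}(\rho)\text{ outputs }y_1,\ldots,y_k]}{\prod_i\Pr[M_i(\rho)\text{ outputs }y_i]}-1\right|\le\frac{\sum_i\alpha_i}{p}=O\!\left(\frac{\varepsilon\sqrt{n}}{p}\right),
\]
as well as the DP guarantee with parameter $\sum_i\varepsilon_i+O\bigl(\sum_i\alpha_i/p\bigr)=\varepsilon+O(\varepsilon\sqrt{n}/p)$. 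Finally, since $p\le 1$ and $\sqrt{n}\ge 1$, the first term $\varepsilon$ is itself at most $\varepsilon\sqrt{n}/p$, so the overall DP parameter collapses to $O(\varepsilon\sqrt{n}/p)$, matching the statement.

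The main obstacle, such as it is, is almost entirely bookkeeping rather than any new idea: one must check that $\varepsilon\le\frac{1}{10\sqrt{n}}$ really does keep every $\alpha_i$ small enough for Theorem \ref{main}(2) to be usable, that the constants in the $O(\varepsilon_i\sqrt{n})$ bound sum cleanly to $O(\varepsilon\sqrt{n})$, and that the factor $p$ in Proposition \ref{composedp} is the joint-output lower bound (it is, because outputs in distinct rounds are treated as a single joint outcome, and the product $p_1\cdots p_k$ lower-bounds the joint probability of any fixed $(y_1,\ldots,y_k)$ under each factor being drawn from the original, un-composed measurement on $\rho$). Nothing quantum-mechanically new needs to be done; everything of substance has been packaged into Theorem \ref{main}(2) and Proposition \ref{composedp}.
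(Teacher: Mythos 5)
Your proof is correct and is precisely the route the paper takes: the paper states Corollary \ref{composecor} as an immediate consequence of combining part (2) of Theorem \ref{main} (DP implies gentleness for product measurements) with Proposition \ref{composedp}, which is exactly what you do. The final observation that $\varepsilon \leq \varepsilon\sqrt{n}/p$ (since $p\leq 1$ and $n\geq 1$), which collapses the DP parameter to a single $O(\varepsilon\sqrt{n}/p)$ term, is the right way to tidy the bound and is implicit in the paper's statement.
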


In the remainder of this appendix, we will show that, when $\varepsilon$\ is
large compared to $\frac{1}{\sqrt{n}}$, so that we're outside the range where
DP implies gentleness, the composition of $\varepsilon$-DP\ measurements need
not even preserve \textit{accuracy}.

Recall the \textquotedblleft randomized response\textquotedblright\ algorithm
$R_{\beta}$\ from Section \ref{GDSEP}, which for each $i\in\left[  n\right]  $
independently, simply measures the $i^{th}$ qubit in the $\left\{  \left\vert
0\right\rangle ,\left\vert 1\right\rangle \right\}  $ basis, and returns the
measurement outcome with probability $\frac{1}{2}+\beta$, or its complement
with probability $\frac{1}{2}-\beta$. \ (Thus, the output of $R_{\beta}$\ is
an $n$-bit string.) \ We now give our example:

\begin{theorem}
[Failure of Composition for Quantum DP]\label{cantcompose}There exist
$n$-qubit measurements $M_{1}$\ and $M_{2}$\ that are individually
$\varepsilon$-DP on product states for $\varepsilon=O\left(  \frac{1}{n^{1/4}%
}\right)  $, but such that no implementation of $M_{1}$\ leaves us with a
post-measurement state allowing an accurate result to be returned if we later
run $M_{2}$ (even supposing that we don't condition on the outcome of $M_{1}$).
\end{theorem}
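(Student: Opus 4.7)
The plan is to take $M_1 = L_\sigma$ measuring Hamming weight in the computational basis and $M_2 = L_\sigma$ measuring Hamming weight in the Hadamard basis, both with $\sigma = n^{1/4}$. By Proposition \ref{lsigma}, each is $1/\sigma = n^{-1/4}$-DP on all $n$-qubit states, in particular on product states. For the baseline, on the product state $|+\rangle^{\otimes n}$ the Hadamard-basis Hamming weight is identically $0$, so the direct $M_2$ output is a Laplace variable of scale $n^{1/4}$ centered at $0$; ``$M_2$ accurate after $M_1$'' should mean the output after $M_1$ is similarly centered near $0$ with spread $O(n^{1/4})$.

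The key reduction is to parametrize any implementation of $M_1$ by Kraus operators $A_y = U_y\sqrt{E_y}$ with $\{E_y\}$ the fixed (diagonal-in-computational-basis) POVM of $M_1$ and $\{U_y\}$ an arbitrary family of unitaries. Running $M_2$ after this implementation then corresponds to the joint POVM $\{A_{y_1}^\dagger F_{y_2} A_{y_1}\}$ on the original input, where $\{F_{y_2}\}$ is $M_2$'s POVM. For the $y_2$-marginal to equal $F_{y_2}$ on every input, the channel $\mathcal{E}^U_{M_1}(\rho) = \sum_y U_y\sqrt{E_y}\rho\sqrt{E_y}U_y^\dagger$ would have to fix each $F_{y_2}$ in the Heisenberg sense, i.e.\ $\sum_y \sqrt{E_y} U_y^\dagger F_{y_2} U_y \sqrt{E_y} = F_{y_2}$. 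Equivalently, we would need a joint POVM with marginals equal to the two Laplace-smeared Hamming-weight observables simultaneously, which I would rule out via the noncommutativity $[W_c,W_h] = \tfrac{i}{2}\sum_i Y_i$ (operator norm $n/2$): a Heisenberg-type joint-measurability bound for noisy smearings with scales $\sigma_c,\sigma_h$ forces $\sigma_c\sigma_h$ to be polynomially large, whereas our setting gives $\sigma_c\sigma_h = \sqrt{n}$, violating the bound by a polynomial factor.

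The main obstacle will be handling the freedom in the unitaries $U_y$: a clever family can rotate the canonical post-measurement state $|\psi_y^+\rangle$ on $|+\rangle^{\otimes n}$ back to $|+\rangle^{\otimes n}$ itself, preserving $M_2$'s accuracy on that particular input. I would defeat this trick by playing a contrasting product state against $|+\rangle^{\otimes n}$, e.g.\ $|0^n\rangle$, whose typical $M_1$ outcome lies near $y\approx 0$ rather than near $y\approx n/2$, and whose canonical post-measurement state is already $|0^n\rangle$. Any $\{U_y\}$ that rescues $|+\rangle^{\otimes n}$ must act non-trivially on a distinct $y$-region to rescue $|0^n\rangle$, and the joint-measurability obstruction applied to this two-input ensemble (or, if needed, to a broader family of product states) forces at least one input on which $M_2$'s output after $M_1$ is shifted by $\Omega(\sqrt{n})$ from the direct output $M_2(\rho)$ — well beyond the $O(n^{1/4})$ Laplace noise tolerance, and hence inaccurate. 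The delicate step is quantifying this trade-off uniformly enough that no choice of $\{U_y\}$ can simultaneously preserve $M_2$'s output on the full ensemble.
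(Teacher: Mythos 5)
Your choice of $M_1$ is different from the paper's and this changes the difficulty of the argument substantially. The paper takes $M_1$ to be randomized response $R_\varepsilon$, which measures each qubit \emph{separately} and outputs one bit per qubit; because the POVM is local, any implementation must decohere each qubit in the $\{|+\rangle,|-\rangle\}$ basis, independently, by $\Theta(\varepsilon^2)$ in trace distance, producing $\Theta(\varepsilon\sqrt{n})$ uncertainty in the Hadamard-basis Hamming weight — a direct, local damage accounting. You instead take $M_1=L_\sigma$ in the computational basis, a \emph{global} measurement with a single numeric output. This makes the implementation freedom far larger: after $\sqrt{E_{y_1}}$ you are free to apply an arbitrary $2^n\times 2^n$ unitary $U_{y_1}$, and there is no local structure to pin down the damage per qubit. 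Your instinct to then invoke a joint-measurability obstruction is the right sort of thing, but that is exactly the hard part.

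Two concrete gaps. First, the ``Heisenberg-type joint-measurability bound for noisy smearings'' that you lean on is not proved and is not a result you can cite in the form needed. The commutator $[W_c,W_h]=\tfrac{i}{2}\sum_i Y_i$ has operator norm $n/2$, but its \emph{expectation} vanishes on every computational-basis and every Hadamard-basis product state, so Robertson/Ozawa/Branciard-type noise-disturbance relations — which are state-dependent and scale with $|\langle[A,B]\rangle|$ — give nothing here. You would need a genuinely operator-norm (state-independent) noise-noise relation tailored to Laplace smearings of these block observables, and establishing it is essentially the whole theorem; the bound you assert, $\sigma_c\sigma_h=\Omega(\mathrm{poly}(n))$, is plausible but carries the burden. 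Second, the countermeasure you propose against the unitary freedom — playing $|0^n\rangle$ against $|+\rangle^{\otimes n}$ — does not close. For typical outcomes $y_1\approx n/2$ on $|+\rangle^{\otimes n}$, the conditional post-measurement state $|\psi^+_{y_1}\rangle$, $|+\rangle^{\otimes n}$, and $|0^n\rangle$ are all mutually nearly orthogonal (overlaps $\lesssim 2^{-n/2}$), so a single unitary $U_{y_1}$ can simultaneously map $|\psi^+_{y_1}\rangle\to|+\rangle^{\otimes n}$ \emph{and} fix $|0^n\rangle$ to exponential precision. The joint-measurability obstruction would have to be applied uniformly over a much richer ensemble of inputs, and the ``delicate step'' you flag at the end — quantifying the trade-off uniformly in $\{U_y\}$ — is genuinely the missing proof, not a detail. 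The paper sidesteps all of this by using a local $M_1$ whose damage is forced qubit by qubit.
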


\begin{proof}
Let $M_{1}$\ be the randomized response algorithm $R_{\varepsilon}$, which is
$O\left(  \varepsilon\right)  $-DP by Proposition \ref{rbeta}. \ Also, let
$M_{2}$ be the variant of the $L_{\sigma}$\ mechanism from before, but in the
$\left\{  \left\vert +\right\rangle ,\left\vert -\right\rangle \right\}
$\ basis. \ In other words, $M_{2}$\ returns the number of $\left\vert
+\right\rangle $'s plus Laplace noise of mean $\sigma$. \ We've seen that
$L_{\sigma}$, and hence $M_{2}$, is $\frac{1}{\sigma}$-DP.

Now suppose that in reality, each qubit is either $\left\vert +\right\rangle $
or $\left\vert -\right\rangle $. \ Then by a straightforward calculation,
$M_{1}$ damages each qubit by $\Theta\left(  \varepsilon^{2}\right)  $ in
trace distance, even if we average over both possible measurement outcomes, by
decreasing the magnitudes of the off-diagonal density matrix entries by
$\Theta\left(  \varepsilon^{2}\right)  $. \ In more detail, the effect is
simply: every $\left\vert +\right\rangle $ qubit flips to $\left\vert
-\right\rangle $, and every $\left\vert -\right\rangle $ qubit flips to
$\left\vert +\right\rangle $, with independent probability $\Theta\left(
\varepsilon^{2}\right)  $.

So now consider what happens when we run $M_{2}$. \ If we have an $n$-bit
string $x$, and every bit of $x$ gets flipped with independent probability
$\delta$, then from the corrupted string $x^{\prime}$, we can estimate the
Hamming weight of the original string $x$ to within an additive error of
$\Theta\left(  \sqrt{n\delta\left(  1-\delta\right)  }\right)  $. \ In our
case, $\delta=\Theta\left(  \varepsilon^{2}\right)  $, so this additive error
is $\Theta\left(  \varepsilon\sqrt{n}\right)  $.

But recall that $M_{2}$ needs to estimate the number $\left\vert
+\right\rangle $ qubits to within an additive error of $\Theta\left(
1/\varepsilon\right)  $. \ If $\frac{1}{\varepsilon}\ll\varepsilon\sqrt{n}$,
or equivalently $\varepsilon\gg\frac{1}{n^{1/4}}$, then this is impossible.
\end{proof}

Of course, the above leaves many further questions that one could explore: for
example, what happens for $\varepsilon$ in the range between $\frac{1}%
{\sqrt{n}}$\ and $\frac{1}{n^{1/4}}$? \ Also, what if we restrict our
attention to quantum DP algorithms with only a few possible outcomes (thus
ruling out randomized response applied to each qubit separately)? \ Finally,
what if we allow our \textquotedblleft composed\textquotedblright\ algorithm
to do anything it likes to obtain the desired information, including violating
the specified order (e.g., applying $L_{\sigma}^{+}$ before $R_{\varepsilon}$)
and even more radical changes?

\end{document}